\documentclass[acmsmall,screen,manuscript]{acmart}

\AtBeginDocument{%
  \providecommand\BibTeX{{%
    \normalfont B\kern-0.5em{\scshape i\kern-0.25em b}\kern-0.8em\TeX}}}

\usepackage{xspace}
\usepackage[all]{xy} 
\usepackage[capitalise]{cleveref} 

\usepackage{pkg/math-gen}
\usepackage{pkg/is}
\usepackage{pkg/sem}
\usepackage{pkg/languages}
\usepackage{pkg/monoid}

\usepackage{listings}
\lstset{
morekeywords={PassOptionsToPackage,selectlanguage},
keywordstyle=\ttfamily\bfseries, 
mathescape=true,
basicstyle=\small\ttfamily, 
commentstyle=\color{Green}\ttfamily, 
stringstyle=\rmfamily, 
showstringspaces=false, 
breaklines=true, 
belowcaptionskip=.75\baselineskip 
}

\newif\ifsubmit
\submittrue
\ifsubmit
\newcommand{\FDComm}[1]{} 
\newcommand{\EZComm}[1]{} 
\newcommand{\DAComm}[1]{} 
\newcommand{\DA}[1]{#1} 
\def\bfd{} \def\efd{}
\def\bez{} \def\eez{}
\else 
\newcommand{\FDComm}[1]{{\scriptsize\color{orange} [\textbf{Francesco}: #1]}}
\newcommand{\EZComm}[1]{{\scriptsize\color{blue} [\textbf{Elena}: #1]}}
\newcommand{\DAComm}[1]{{\scriptsize\color{red} [\textbf{Davide}: #1]}}
\newcommand{\DA}[1]{#1} 
\def\bfd{\begin{color}{orange}}
\def\efd{\end{color}} 
\def\bez{\begin{color}{blue}} 
\def\eez{\end{color}} 

\fi

\setcopyright{acmcopyright}
\copyrightyear{2020}
\acmYear{2020}
\acmDOI{10.1145/1122445.1122456}

\acmJournal{TOCL}
\acmVolume{37}
\acmNumber{4}
\acmArticle{111}
\acmMonth{8}



\def\eg{{\em e.g.},\xspace} 
\def\cf{cf.~}


\newcommand{\maxElem}[2]{\mathsf{maxElem}(#1,#2)}





\theoremstyle{acmplain} 
\newtheorem{theorem}{Theorem}[section]

\theoremstyle{acmdefinition}
\newtheorem{rmk}[theorem]{Remark}

\begin{document}

\title{A meta-theory for big-step semantics}

\author{Francesco Dagnino}
\email{francesco.dagnino@dibris.unige.it}
\orcid{0000-0003-3599-3535}
\affiliation{%
  \department{DIBRIS} 
  \institution{Universit\` a di Genova}
  \city{Genova}
  \country{Italy}
}


\begin{abstract}

It is well-known that big-step semantics is not able to distinguish stuck and non-terminating computations.
This is a strong limitation as it makes very difficult to reason about properties involving infinite computations, such as type soundness, which cannot even be expressed. 

We show that this issue is only apparent: 
the distinction between stuck and diverging computations is implicit in any big-step semantics and it just needs to be uncovered. 
To achieve this goal, 
we develop a systematic study of big-step semantics: 
we introduce an abstract definition of what a big-step semantics is, 
we define a notion of computation by formalising the evaluation algorithm implicitly associated with any big-step semantics, and 
we show how to canonically extend a big-step semantics to characterise stuck and diverging computations. 

Building on these notions, 
we describe a general proof technique to show that a predicate is sound, that is, it prevents stuck computation, with respect to a big-step semantics. 
One needs to check three properties relating the predicate and the semantics and, if they hold, the predicate is sound. 
The extended semantics are essential to establish this meta-logical result, but are of no concerns to the user, who only needs to prove the three properties of the initial big-step semantics. 
%
Finally, we illustrate the technique by several examples, showing that it is applicable also in cases where subject reduction does not hold, hence the standard technique for small-step semantics cannot be used. 

\end{abstract}

\begin{CCSXML}
<ccs2012>
<concept>
<concept_id>10003752.10010124.10010131.10010134</concept_id>
<concept_desc>Theory of computation~Operational semantics</concept_desc>
<concept_significance>500</concept_significance>
</concept>
</ccs2012>
\end{CCSXML}

\ccsdesc[500]{Theory of computation~Operational semantics}  

\keywords{big-step semantics, type soundness}  

\maketitle

  \section{Introduction}
\label{sect:intro} 

The operational semantics of programming languages or software systems specifies, for each program/system configuration, its final result, if any. 
In the case of non-existence of a final result,  there are  two possibilities:
\begin{itemize}
\item either the computation stops with no final result: \emph{stuck computation},
\item or the computation never stops: \emph{non-termination}.
\end{itemize}

There are two main styles to define operationally  a semantic relation:  the \emph{small-step} style \citep{Plotkin81,Plotkin04}, on top of a transition relation representing single computation steps, or directly by a set of rules as in the \emph{big-step} style \citep{Kahn87}. 
Within a small-step semantics it is straightforward to make the distinction between stuck and non-terminating computations, while a typical drawback of the big-step style is that  they  are not distinguished (no judgement is derived in both cases).
Actually, in big-step style, it is not even clear \emph{what a computation is}, because the only available notion is derivability of judgements, which does not convey the dynamics of computation. 

For  this reason, even  though  big-step semantics is  generally more abstract, and sometimes  more intuitive to design and therefore to debug and extend, in the literature much more effort has been devoted to study the meta-theory of small-step semantics, providing properties, and related proof techniques. Notably, the \emph{soundness} of a type system (typing prevents stuck computation) can be proved by  \emph{progress} and \emph{subject reduction}, also called \emph{type preservation}, \citep{WrightF94}. 
Note that soundness cannot even be \emph{expressed} with respect to a big-step semantics, since non-termination and stuckness are confused, as they are both modelled by the absence of a final result. 

Our quest in this paper is to develop a meta-theory of big-step operational semantics, to enable formal reasoning also on non-terminating computations. 
More precisely, we will address the following problems: 
\begin{enumerate}
\item\label{it:bss:1} Defining, in a formal way, \emph{computations} in a given arbitrary  big-step semantics.
\item\label{it:bss:2} According to this definition, describing \emph{extensions of a given arbitrary big-step semantics}, where the difference  between stuckness and non-termination is made explicit.
\item\label{it:bss:3} Providing a general proof technique by  identifying  \emph{three sufficient conditions} on the original big-step rules to prove soundness of a predicate.
\end{enumerate}
All these points rely on the same fundamental cornerstone: 
a general definition of big-step semantics.
Such a definition captures the essential features of a big-step semantics, independently from the particular language or system. 

To address \cref{it:bss:1}, we rely on the intuition that every big-step semantics implicitly defines an evaluation algorithm. 
Then, we identify computations in the big-step semantics with computations of such algorithm. 
Formally, we extend the big-step semantics to  model \emph{partial evaluations}, representing intermediate states of the evaluation process, 
and we formalise the evaluation algorithm by a transition relation between such intermediate states. 
Then, computations are just sequences of transition steps. 
Note that the use of a transition relation is somehow necessary to define computations since they are related to the dynamics of the evaluation and it cannot be captured by derivability in big-step semantics, as it is too abstract. 
In this way, we get a reference model of computations in big-step semantics, 
where we can easily distinguish stuck and non-terminating computations, thus showing that this distinction is actually present, but hidden, in any big-step semantics. 

To deal with  \cref{it:bss:2}, we describe extensions of a given big-step semantics capable to distinguish between stuck and non-terminating computations, as defined in  \cref{it:bss:1}, but abstracting away single computation steps. 
In this way, we show that such a distinction can be made directly in a big-step style. 
More in detail, 
  starting from an arbitrary big-step judgment $\eval{\conf}{\res}$ that evaluates \emph{configurations} $\conf$ into \emph{results} $\res$, the first construction produces an enriched judgement $\evaltr{\conf}{\trres}$ where $\trres$ is either a pair \ple{\bstr,\res} consisting of a finite trace $\bstr$ and a result $\res$, or an infinite trace $\bstrinf$. 
Finite and infinite traces model  the (finite or infinite) sequences of all the configurations encountered during the evaluation. 
In this way, by interpreting coinductively the rules of the extended semantics,  an infinite trace models divergence (whereas no result corresponds to stuck computation). 
Furthermore, we will show that, by using \emph{coaxioms} \cite{AnconaDZ@esop17,Dagnino19}, we can get rid of traces, modelling divergence just by a judgmeent $\eval{\conf}{\divres}$.   
The second construction is in a sense dual. 
It is the general version of the well-known technique presented in Exercise 3.5.16 by \citet{Pierce02} of adding a special result $\Wrong$ explicitly modelling stuck computations (whereas no result corresponds to divergence). 
We will show that these constructions are correct, proving that they represent the intended class of computations as defined in \cref{it:bss:1}. 

Three sufficient conditions in \cref{it:bss:3} are \emph{local preservation}, \emph{$\exists$-progress}, and \emph{$\forall$-progress}. 
For \emph{proving} the result that the three conditions actually ensure soundness, we crucially rely on the extended big-step semantics of \cref{it:bss:2},  since otherwise, as said above, we could not even express the property. 

\emph{However, the three conditions deal only with  the original rules of the given big-step semantics. }  
This means that, practically, in order to use the technique there is no need to deal with the meta-theory (computations and extended semantics). 
This implies, in particular, that our approach does \emph{not} increase the original number of rules.  
  Moreover, the sufficient conditions are  checked only  
on  \emph{single rules}, hence neither induction nor coinduction is needed. 
In a sense, 
they make explicit elementary fragments of the soundness proof, embedding  such semantic-dependent fragments in a semantic-independent (co)inductive proof, which we carry out once and for all (\cf \cref{thm:sound-wrong,thm:sound-div}). 

We support our approach by presenting  several  examples, demonstrating that: on the one hand, soundness proofs can be easily rephrased in terms of our technique, that is, by directly reasoning on big-step rules; 
on the other hand, our technique works also when the property to be checked (for instance, well-typedness) is \emph{not preserved} by intermediate computation steps, whereas it holds for the final result. 
On a side note, our examples concern type systems, but the meta-theory we present in this work holds for any predicate.

Actually, 
we can express two flavours of soundness, depending on whether we make explicit stuckness or non-termination. 
In the former case we express \emph{soundness-must}, which is the notion of soundness we have considered so far, preventing all stuck computations, while 
in the latter case we express \emph{soundness-may}, a weaker notion only ensuring \emph{the existence of} a non-stuck computation. 
Of course, this distinction is relevant only in presence of non-determinism, otherwise the two notions coincide. 
We define a proof technique for soundness-may as well, showing it is correct. 
In the end, it should be noted that \emph{we define soundness  with respect to  a big-step semantics within a big-step formulation}, without resorting to a small-step style (indeed, the extended semantics are themselves big-step). 

\bfd This paper is extracted from the PhD \bez thesis of the author \eez \cite{Dagnino21} and extends the work presented in \citet{DagninoBZD20} in several ways: 
first, we consider a more natural and general notion of big-step semantics;
we provide a detailed analysis of computations in big-step semantics;
we define an additional construction based on coaxioms generalising the approach in \citet{AnconaDZ@oopsla17};
finally, we improve examples considering also imperative languages. \efd 

The rest of the paper is organised as follows. 
\cref{sect:is} recalls basic notions about inference systems and corules. 
\cref{sect:bss-def} provides a definition of big-step semantics. 
\cref{sect:bss-pev} 
\bez defines computations in big-step semantics as possibly infinite sequences of steps in a transition relation on partial evaluation trees. \eez
In this way we get a reference semantic model. 
\cref{sect:constructions} defines two constructions extending  a given big-step semantics: 
one, based on traces, which explicitly models diverging computations and another which explicitly models stuck computations. 
\cref{sect:bss-div} defines a third construction, modelling divergence just as a special result, by using  appropriate corules. 
\cref{sect:bss-soundness} shows how we can express two flavours of soundness against big-step semantics and provides proof techniques to show this property. 
\cref{sect:bss-ex} \bez illustrates \eez the proof technique on several examples. 
Finally, \cref{sect:conclu} concludes the paper, discussing related and future work.

\section{Preliminaries on inference systems and corules}
\label{sect:is} 

In this section, we recall standard notions about (co)inductive definitions by inference systems \cite{Aczel77,LeroyG09,Sangiorgi11}, which are used throughout the paper, and 
also their generalisation by corules, introduced by \citet{AnconaDZ@esop17,Dagnino19,Dagnino21},
which enable more flexible coinductive definitions. 
Corules will be only used in \cref{sect:bss-div,sect:bss-soundness} to properly model and reason about diverging computations in a big-step semantics. 

Assume a set $\universe$, named \emph{universe}, whose elements are called \emph{judgements}. 
An \emph{inference system} $\is$ 
is a set of \emph{(inference) rules}, which are pairs $\RulePair{\prem}{\conclu}$, where $\prem\subseteq\universe$ the set of \emph{premises} and $\conclu\in\universe$ the \emph{conclusion} (a.k.a. \emph{consequence}). 
As it is customary, rules are often written as fractions $\Rule{\prem}{\conclu}$. 
A rule with an empty set of premises is an \emph{axiom}. 
A \emph{proof tree} (a.k.a. \emph{derivation}) for a judgement $\judg$ in $\is$ is a tree whose nodes are (labeled with) judgements in $\universe$, $\judg$ is the root, and there is a node $\conclu$ with set of children $\prem$ only if there is a rule $\RulePair{\prem}{\conclu}$ in $\is$. 
The \emph{inductive} and the \emph{coinductive interpretations} of $\is$, denoted $\Ind{\is}$ and $\CoInd{\is}$, respectively,  are the sets of judgements with, respectively a well-founded\footnote{It is finite when sets of premises are finite.} and an arbitrary (well-founded or not)  proof tree.
We will write $\validInd{\is}{\judg}$ and $\validCo{\is}{\judg}$ when $\judg\in\Ind{\is}$ and $\judg\in\CoInd{\is}$, respectively. 
Set-theoretically, we say that a subset $X\subseteq\universe$ is 
\emph{($\is$-)closed} if, for every rule $\RulePair{\prem}{\judg}\in\is$, $\prem \subseteq X$ implies $\judg\in X$, and 
\emph{($\is$-)consistent} if, for every $\judg\in X$, there is a rule $\RulePair{\prem}{\judg}\in \is$ such that $\prem \subseteq X$.
Then, it can be proved that $\Ind{\is}$ is the least closed subset and $\CoInd{\is}$ is the largest consistent subset and this provides us with the following proof principles:  
\begin{description}
\item [induction principle] if $X\subseteq\universe$ is closed then $\Ind{\is}\subseteq X$
\item [coinduction principle] if $X\subseteq\universe$ is consistent then $X\subseteq \CoInd{\is}$
\end{description}

We recall now the notion of inference system with corules \cite{AnconaDZ@esop17,Dagnino19,Dagnino21},
which mixes induction and coinduction in a specific way.

For a set $X \subseteq \universe$, let 
 $\Restrictis{\is}{X}$ denote the inference system obtained from $\is$ by keeping only rules with  conclusion  in $X$.

\begin{definition}[Inference system with corules] \label{def:cois} 
An \emph{inference system with corules}, or \emph{generalised inference system}, is a pair $\Pair{\is}{\cois}$ where $\is$ and $\cois$ are inference systems, whose elements are called \emph{rules} and \emph{corules}, respectively.
A corule with empty set of premises is a \emph{coaxiom}. 
The \emph{interpretation} $\FlexCo{\is}{\cois}$ of such a pair is defined by
$\FlexCo{\is}{\cois}=\CoInd{\Restrictis{\is}{\Ind{\is\cup\cois}}}$.
\end{definition}

Thus, the interpretation $\FlexCo{\is}{\cois}$ is basically \emph{coinductive}, but restricted to
a universe of judgements which is \emph{inductively defined} by the (potentially) larger system $\is \cup \cois$. 
In proof-theoretic terms, $\FlexCo{\is}{\cois}$ is the set of judgements which have an arbitrary (well-founded or not)  proof tree in $\is$ whose nodes all have a well-founded proof tree in $\is\cup\cois$, that is, the (standard) inference system consisting of both rules and corules. 
We will write $\validFCo{\is}{\cois}{\judg}$ when $\judg$ is derivable in $\Pair{\is}{\cois}$, that is, $\judg\in\FlexCo{\is}{\cois}$.

We illustrate these notions by a simple example. 
As usual, sets of rules are expressed by \emph{meta-rules} with side conditions, and analogously sets of corules are expressed by \emph{meta-corules} with side conditions.
(Meta-)corules will be written with thicker lines, to be distinguished from (meta-)rules.
The following inference system defines the maximal element of a list of natural numbers, where $\EList$ is the empty list, and $\x\cons u$ the list with head $\x$ and tail $u$.
\[
\Rule{}{\maxElem{x\cons \EList}{x}}\BigSpace\Rule{\maxElem{u}{y}}{\maxElem{x\cons u}{z}}z=\max(x,y) 
\]
The inductive interpretation is defined only on finite lists, since for infinite lists an infinite proof is needed. 
However, the coinductive interpretation allows the derivation of wrong judgements. 
For instance, let $L = 1:2:1:2:1:2:\ldots$. 
Then, any judgement $\maxElem{L}{x}$ with $x\geq 2$ can be derived, as illustrated by the following examples.
\[
\Infer{
  \Infer{
    \Infer{\ldots}{\maxElem{L}{2}}{} 
  }{\maxElem{2\cons L}{2}}{}
}{\maxElem{1\cons 2\cons L}{2}}{}
\BigSpace
\Infer{
  \Infer{
    \Infer{\ldots}{\maxElem{L}{5}}{}
  }{\maxElem{2\cons L}{5}}{} 
}{\maxElem{1\cons 2\cons L}{5}}{} 
\]
By adding a corule (in this case a coaxiom), we add a constraint which forces the greatest element
 to belong to the list,  so that wrong results are ``filtered out'':
\[
\Rule{}{\maxElem{x\cons \EList}{x}}\BigSpace\Rule{\maxElem{u}{y}}{\maxElem{x\cons u}{z}}z=\max(x,y) 
\BigSpace\CoRule{}{\maxElem{x\cons u}{x}}
\]
Indeed, the judgement $\maxElem{1\cons 2\cons L}{2}$ has the infinite proof tree shown above, and each node has a finite proof tree in the inference system extended by the corule:
\[
\Infer{
  \Infer{
    \Infer{\ldots}{\maxElem{L}{2}}{} 
  }{\maxElem{2\cons L}{2}}{} 
}{\maxElem{1\cons 2\cons L}{2}}{} 
\BigSpace
\Infer{
  \CoInfer{ }{\maxElem{2\cons L}{2}}{} 
}{\maxElem{1\cons 2\cons L}{2}}{} 
\]
On the other hand, the judgement $\maxElem{1\cons 2\cons L}{5}$ has the infinite proof tree shown above, but has \emph{no finite proof tree} in the inference system extended by the corule. Indeed, since $5$ does not belong to the list, the corule can never be applied. Hence, this judgement cannot be derived in the inference system with corules.
Finally, note that the judgement $\maxElem{1\cons 2\cons L}{1}$ has a finite proof tree in the inference system extended by the corule, but has no proof tree in the system with no corules, as $1$ is not an upper bound of the list. 
We refer to \cite{AnconaDZ@esop17,AnconaDZ@oopsla17,AnconaDZ@ecoop18,Dagnino19,DagninoAZ20,Dagnino21} for other examples.

Let $\Pair{\is}{\cois}$ be a generalised inference system. 
The interpretation $\FlexCo{\is}{\cois}$ can be characterised as the largest $\is$-consistent subset of $\Ind{\is\cup\cois}$, and this provides us with 
the \emph{bounded coinduction principle}, a generalisation of the standard coinduction principle.  

\begin{theorem}[Bounded coinduction]\label{thm:bcoind}
Let $X\subseteq\universe$. 
If  $X$ is $\is$-consistent and ${X\subseteq\Ind{\is\cup\cois}}$, 
then $X \subseteq\FlexCo{\is}{\cois}$.
\end{theorem}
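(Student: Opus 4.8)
The plan is to prove the bounded coinduction principle by reducing it to the ordinary coinduction principle applied to a suitably restricted inference system, exploiting the definition $\FlexCo{\is}{\cois}=\CoInd{\Restrictis{\is}{\Ind{\is\cup\cois}}}$. Recall that the coinduction principle says: if $X$ is consistent with respect to some inference system $\mathcal{J}$, then $X\subseteq\CoInd{\mathcal{J}}$. So the natural strategy is to take $\mathcal{J}=\Restrictis{\is}{\Ind{\is\cup\cois}}$ and show that the two hypotheses on $X$ — namely $\is$-consistency and $X\subseteq\Ind{\is\cup\cois}$ — together guarantee that $X$ is $\mathcal{J}$-consistent. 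If that holds, then $X\subseteq\CoInd{\mathcal{J}}=\FlexCo{\is}{\cois}$ by definition, and we are done.

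**The key step** is thus the following implication: assuming $X$ is $\is$-consistent and $X\subseteq\Ind{\is\cup\cois}$, show $X$ is $\Restrictis{\is}{\Ind{\is\cup\cois}}$-consistent. I would unfold the definition of consistency. Fix an arbitrary $\judg\in X$. By $\is$-consistency of $X$, there is a rule $\RulePair{\prem}{\judg}\in\is$ with $\prem\subseteq X$. I must upgrade this to a rule of the restricted system $\Restrictis{\is}{\Ind{\is\cup\cois}}$. By construction, $\Restrictis{\is}{\Ind{\is\cup\cois}}$ keeps exactly those rules of $\is$ whose conclusion lies in $\Ind{\is\cup\cois}$. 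Since $\judg\in X\subseteq\Ind{\is\cup\cois}$, the conclusion $\judg$ does lie in the inductive interpretation, so the very same rule $\RulePair{\prem}{\judg}$ is retained in the restricted system. Its premises satisfy $\prem\subseteq X$, which is precisely what consistency requires. As $\judg\in X$ was arbitrary, $X$ is consistent for the restricted system.

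**To close the argument**, I would then simply apply the coinduction principle to the restricted inference system $\Restrictis{\is}{\Ind{\is\cup\cois}}$: since $X$ is consistent for it, we conclude $X\subseteq\CoInd{\Restrictis{\is}{\Ind{\is\cup\cois}}}=\FlexCo{\is}{\cois}$, which is the desired conclusion.

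**The main obstacle** is conceptual rather than technical: the crux is recognizing that the hypothesis $X\subseteq\Ind{\is\cup\cois}$ is exactly what lets an $\is$-rule with conclusion in $X$ survive the restriction to $\Ind{\is\cup\cois}$, so that ordinary $\is$-consistency transfers verbatim to consistency for the restricted system. Once this is seen, the proof is a short unfolding of definitions with no inductive or coinductive machinery beyond the already-established coinduction principle. One subtle point worth checking is the characterisation asserted just before the statement — that $\FlexCo{\is}{\cois}$ equals the largest $\is$-consistent subset of $\Ind{\is\cup\cois}$; if one wishes to derive the theorem directly from that characterisation, one need only observe that $X$ is an $\is$-consistent subset of $\Ind{\is\cup\cois}$ and hence contained in the largest such subset, which is a one-line argument. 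I would present the proof via the explicit reduction to the coinduction principle, as it is the most self-contained route.
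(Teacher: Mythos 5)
Your proof is correct and follows essentially the same route the paper (implicitly) takes: the paper justifies the theorem via the characterisation of $\FlexCo{\is}{\cois}$ as the largest $\is$-consistent subset of $\Ind{\is\cup\cois}$, and your explicit reduction — showing that $\is$-consistency plus $X\subseteq\Ind{\is\cup\cois}$ yields consistency for the restricted system $\Restrictis{\is}{\Ind{\is\cup\cois}}$, then applying the ordinary coinduction principle — is exactly the argument underlying that characterisation.
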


In other words, to prove that every judgement in $X$ is derivable in \ple{\is,\cois}, 
we have to prove that 
every judgement in $X$ has a well-founded proof tree in $\is\cup\cois$ and 
every judgement in $X$ is the conclusion of a rule whose premises are all in $X$.

  \section{Defining big-step semantics}
\label{sect:bss-def}

As mentioned in the introduction, the corner stone of this paper is 
a formalisation of what a big-step semantics is, that captures its essential features, subsuming 
a large class of examples. \FDComm{riferimento a sezione}  
This 
 enables a general formal reasoning on an arbitrary big-step semantics. 

\begin{definition} \label{def:bss}
A \emph{big-step semantics} is a triple $\Triple{\ConfSet}{\ResSet}{\RuleSet}$ where:
\begin{itemize} 
\item $\ConfSet$ is a set of \emph{configurations} $\conf$.
\item $\ResSet$ is a set of \emph{results} $\res$. 
A \emph{judgment} $\judg$ is a pair written  $\eval{\conf}{\res}$, meaning that  configuration $\conf$ evaluates to result $\res$.
 Set  $\ConfSet(\judg) = \conf$ and $\ResSet(\judg) = \res$. 
\item $\RuleSet$ is a set of \emph{(big-step) rules} $\rho$ of shape
\begin{center}
\begin{tabular}{cl}
$\Rule{ \judg_1\Space\ldots\Space \judg_n}{ \eval{\conf}{\res}}$ & 
also written in \emph{inline format}: $\inlinerule{\judg_1\ldots\judg_n}{\conf}{\res}$
\end{tabular}
\end{center}
where $\judg_1\ldots\judg_n$, with $n\ge 0$, is a sequence of premises. 
Set  $\ConfSet(\rho) =\conf$, $\ResSet(\rho) = \res$  and, for $i \in 1..n$, $\ConfSet(\rho, i) = \ConfSet(\judg_i)$ and $\ResSet(\rho, i) = \ResSet(\judg_i)$. 

We require $\RuleSet$ to satisfy the \emph{bounded premises} condition: 
\begin{description} 
\item[\BP] for every $\conf\in\ConfSet$, there exists $b_\conf \in \N$ such that, for each $\rho = \inlinerule{\judg_1\ldots\judg_n}{\conf}{\res}$, $n \le b_\conf$.
\end{description} 
\end{itemize}
\end{definition}
We will use the inline format, more concise and manageable, for the development of the meta-theory, e.g., in constructions.

Big-step rules, as defined above, are very much like inference rules (\cf \cref{sect:is}),
but they carry slightly more structure with respect to them. 
Notably, premises are a sequence rather than a set, that is, they are ordered and there can be repeated premises. 
Such additional structure, however, does not affect derivability, namely, the inference operator and so the interpretations of such rules. 
Therefore, given a big-step semantics $\Triple{\ConfSet}{\ResSet}{\RuleSet}$, slightly abusing the notation, 
we denote by $\RuleSet$ the inference system obtained by forgetting such additional structure, and
define, as usual, the \emph{semantic relation} as the inductive interpretation of $\RuleSet$. 
Then,  we write $\validInd{\RuleSet}{\eval{\conf}{\res}}$ when the judgment $\eval{\conf}{\res}$ is derivable in $\RuleSet$.  

Even though the additional structure of big-step rules does not affect the semantic relation they define, 
it is crucial to develop the meta-theory, allowing abstract reasoning about an arbitrary big-step semantics. 
It will be used in all results in this paper: to define computations in big-step semantics, then to provide constructions yelding extended semantics able to distinguish stuck and diverging computations and, finally, to define proof techniques for soundness. 
Indeed, as premises are a sequence, we know in which order configurations in the premises should be evaluated. 

In practice, the (infinite) set of rules $\RuleSet$ is described by a finite set of meta-rules, each one with a finite number of premises.
As a consequence, for each configuration, the number of premises of rules with such a configuration in the conclusion is not only finite but \emph{bounded}. 
Since we have no notion of meta-rule, we explicitly require this feature (relevant in the following) by the bounded premises (\BP) condition. 

We end this section by illustrating the above definitions and conditions on a simple example: a $\lambda$-calculus with constants for natural numbers, successor and non-deterministic choice, shown in \cref{fig:bss-ex-lambda}.
We denote by $\x$ variables and by $\natconst$ natural number constants. 
\begin{figure}
\begin{small}
\begin{math}
\begin{array}{rcll}
\e &::=& \x \mid \val \mid \e_1\appop \e_2 \mid \SuccExp e \mid \Choice{\e_1}{\e_2}& \text{expression}\\
\val &::=& \natconst \mid \Lam{\x}\e & \text{value}
\end{array}
\end{math}

\HSep

\begin{math}
\begin{array}{c}
\MetaRule
{val}
{}
{\eval{\val}{\val}}{}
\qquad
\MetaRule
{ app }
{\eval{\e_1}{\lambda x.e}\quad\eval{\e_2}{\val_2}\quad\eval{\subst{\e}{\val_2}{\x}}{\val}}
{\eval{\AppExp{\e_1}{\e_2}}{\val}}{}\\[3ex]
\MetaRule{succ}{\eval{\e}{\natconst}}{\eval{\SuccExp\e}{\natconst+1}}{} \Space
\MetaRule{choice}{\eval{\e_i}{\val}}{\eval{\Choice{\e_1}{\e_2}}{\val}}{i=1,2}
\end{array}
\end{math}

\HSep

\begin{math} 
\begin{array}{l}
\metainlinerule{val}{\EList}{\val}{\val} \\
\metainlinerule{app}{\eval{\e_1}{\Lam{\x}\e}\ \ \eval{\e_2}{\val_2}\ \ \eval{\subst{\e}{\val_2}{\x}}{\val}}{\e_1 \appop \e_2}{\val} \\
\metainlinerule{succ}{\eval{\e}{\natconst}}{\SuccExp \e}{\natconst+1} \\
\metainlinerule{choice}{ \eval{\e_i}{\val} }{\Choice{\e_1}{\e_2}}{\val}\ i =1,2 
\end{array}
\end{math}
\end{small}
\caption{Example of big-step semantics}\label{fig:bss-ex-lambda}
\end{figure}
It is immediate to see this example as an instance of \cref{def:bss}:
\begin{itemize}
\item Configurations and results are expressions, and values, respectively.\footnote{In general, configurations may include additional components and results are not necessarily particular configurations, see, \eg \cref{sect:bss-fjl}.}
\item  To have the set of (meta-)rules in our required shape, abbreviated in inline format in the bottom section of the figure, we have only to assume an order on premises of rule \rn{app}.
\end{itemize}

\label{page:bss-strategy} 
\begin{rmk}
The order of premises chosen for rule \rn{app} in \cref{fig:bss-ex-lambda} formalises the evaluation strategy for 
an application $\e_1\appop\e_2$ where  
first (1) evaluates $\e_1$, then (2) checks that the value of $\e_1$ is a $\lambda$-abstraction, finally (3) evaluates $\e_2$. 
That is, left-to-right evaluation with early error detection.
Other strategies can be obtained by choosing a different order or by adjusting  big-step rules. 
Notably, right-to-left evaluation (3)-(1)-(2)  
can be expressed by just swapping the first two premises, that is:
\[
\metainlinerule{app-r}{\eval{\e_2}{\val_2}\ \ \eval{\e_1}{\Lam{\x}\e}\ \ \eval{\subst{\e}{\val_2}{\x}}{\val}}{\e_1\appop\e_2}{\val} 
\]
Left-to-right evaluation with late error detection  (1)-(3)-(2)  
can be expressed as follows:
\[
\metainlinerule{app-late}
  {\eval{\e_1}{\val_1}\ \ \eval{\e_2}{\val_2}\ \ \eval{\val_1}{\Lam{\x}\e}\ \ \eval{\subst{\e}{\val_2}{\x}}{\val}}
  {\e_1\appop\e_2}{\val} 
\]
We can even opt for a non-deterministic approach by taking more than one rule among \rn{app}, \rn{app-r} and \rn{app-late}. 
As said above, these different choices do not affect the semantic relation inductively defined by the inference system, which is always the same. 
However, they will affect computations and thus the extended semantics distinguishing stuck computation and non-termination. 
Indeed, if the evaluation of $\e_1$ and $\e_2$ is stuck and non-terminating, respectively, we should obtain a stuck computation with rule  \rn{app}  and non-termination with  rule \rn{app-r}; 
further, if $\e_1$ evaluates to a natural constant and $\e_2$ diverges, we should obtain a stuck computation with rule \rn{app} and non-termination with rule \rn{app-late}. 
\end{rmk} 
 
In summary, to see a typical big-step semantics as an instance of our definition, it is enough to identify \EZComm{define?} configurations and results and to assume an order (or more than one) on premises. 

  \section{Computations in big-step semantics}
\label{sect:bss-pev}

Intuitively, the evaluation of a configuration $\conf$ is a \emph{dynamic} process and, as such, it may either successfully terminate producing the final result, or get stuck, or never terminate. 
However, a big-step semantics just tells us whether a configuration $\conf$ evaluates to a certain result $\res$, without describing the dynamics of such evaluation process. 
This is nice, because it allows us to abstract away details about intermediate states in the evaluation process, but it makes quite difficult to reason about concepts like non-termination and stuckness, since they refer to computations and we do not even know what a computation is in a big-step semantics. 

In this section, we show that, given a big-step semantics as defined in \cref{def:bss}, 
we can recover the dynamics of the evaluation, by defining \emph{computations}, which, in a sense, are implicit in a big-step specification. 
To this end, we extend the big-step semantics, so that we can represent partial (or incomplete) evaluations, modelling intermediate states of the evaluation process. 
Then, we model the dynamics by a transition relation beween such partial evaluations, 
hence, as usual, a computation will be a (possibly infinite) sequence of transitions. 

Let us assume a big-step semantics $\Triple{\ConfSet}{\ResSet}{\RuleSet}$. 
As said above, the first step is to extend such semantics to model partial evaluations. 
To this end, first of all, we introduce a special result $\Unknown$, so that a judgment $\eval{\conf}{\Unknown}$ (called \emph{incomplete}, whereas a judgment $\eval{\conf}{\res}$  is \emph{complete})  means that  the evaluation of $\conf$ is not completed yet.  
Set $\UnResSet = \ResSet + \{\Unknown\}$ whose elements are ranged over by $\unres$. 
We now define an augmented set of rules $\UnRuleSet$ to properly handle the new result $\Unknown$:

\begin{definition}[Rules for partial evaluation] \label{def:pev-rules}
The set of rules $\UnRuleSet$ is obtained from $\RuleSet$ by adding the following rules: 
\begin{description}
\item [start rules] For each configuration $\conf \in \ConfSet$, define rule $\startrule{\conf}$ as $\Rule{}{\eval{\conf}{\Unknown}}$.
\item [partial rules] For each rule $\rho = \inlinerule{\judg_1\ldots\judg_n}{\conf}{\res}$ in $\RuleSet$, index $i \in 1..n$, and $\unres \in \UnResSet$,  
define rule $\partialrule{\rho}{i}{\unres}$ as 
\[
\Rule{\judg_1 \Space \ldots \Space \judg_{i-1} \Space \eval{\ConfSet(\judg_i)}{\unres} }{ \eval{\conf}{\Unknown} }
\]
\end{description}
\end{definition} 
Intuitively, start rules allow us to begin the evaluation of any configuration, while 
partial rules allow us to partially apply a rule from $\RuleSet$ to derive a partial judgement. 
Note that the last premise of a partial rule can be either complete ($\unres\in\ResSet$) or incomplete ($\unres = \Unknown$), in the latter case we also call it a \emph{$\Unknown$-propagation} rule, since it propagates $\Unknown$ from premises to the conclusion. 

It is important to observe that the construction described above yields 
a triple $\Triple{\ConfSet}{\UnResSet}{\UnRuleSet}$, which is a big-step semantics according to \cref{def:bss}. 
\footnote{The condition (\BP) is satisfied as the number of premises of the additional rules is bounded by that of a rule in the original semantics. }
In \cref{fig:bss-ex-lambda-unknown} we report rules added by the construction in \cref{def:pev-rules} to the big-step semantics of the $\lambda$-calculus in \cref{fig:bss-ex-lambda}. 

\begin{figure}
\begin{math}
\begin{array}{c}
\Rule{}{\eval{\e}{\Unknown}} 
\BigSpace
\Rule{\eval{\e}{\val_\Unknown}}{\eval{\SuccExp\e}{\Unknown}} 
\BigSpace
\Rule{\eval{\e_i}{\val_\Unknown}}{\eval{\Choice{\e_1}{\e_2}}{\Unknown}}\ i=1,2 
\\[3ex]
\Rule
{\eval{\e_1}{\val_\Unknown} }
{\eval{\AppExp{\e_1}{\e_2}}{\Unknown}} 

\Rule
{\eval{\e_1}{\Lam{\x}\e}\Space\eval{\e_2}{\val_\Unknown} }
{\eval{\AppExp{\e_1}{\e_2}}{\Unknown}} 

\Rule
{\eval{\e_1}{\Lam{\x}\e}\Space\eval{\e_2}{\val_2}\Space\eval{\subst{\e}{\val_2}{\x}}{\val_\Unknown}}
{\eval{\AppExp{\e_1}{\e_2}}{\Unknown}} 
\end{array}
\end{math}
\caption{Rules for $\Unknown$ for the $\lambda$-calculus in \cref{fig:bss-ex-lambda}.}\label{fig:bss-ex-lambda-unknown}
\end{figure}

Given a big-step semantics $\Triple{\ConfSet}{\ResSet}{\RuleSet}$, 
using rules in $\RuleSet$, we can build trees called \emph{evaluation trees}. 
Such trees are very much like proof trees for an inference system, 
with the only difference that evaluation trees are \emph{ordered} trees, because premises of big-step rules are a sequence. 
Roughly, an evaluation tree is an ordered tree with nodes labelled by semantic judgements, such that for each node labelled by $\eval{\conf}{\res}$ with sequence of children $\judg_1,\ldots,\judg_n$, there is a rule $\inlinerule{\judg_1\ldots\judg_n}{\conf}{\res}$ in $\RuleSet$. 

An evaluation tree for $\Triple{\ConfSet}{\UnResSet}{\UnRuleSet}$ is called a \emph{partial evaluation tree}, as it can contain incomplete judgements.
We say that a partial evaluation tree is \emph{complete} if it only contains complete judgments, it is \emph{incomplete} otherwise. 
Finite partial evaluation trees indeed model possibly incomplete evaluation of configurations, namely, the intermediate states of the evaluation process, because big-step rules can be partially applied. 
Hence, they are the fundamental building block, which will allow us to define computations in big-step semantics. 

%
In the next subsection we will give a formal definition of (partial) evaluation trees, similar to that of proof trees \cite{Dagnino19,Dagnino20,Dagnino21}. 
This formal definition is needed to state some results and to carry out proofs in a rigorous way, and it is not essential to follow the rest of the paper, hence the reader not interested in formal details can skip it, relying on the above semiformal definition.

\subsection{The structure of partial evaluation trees}
\label{sect:bss-pet}

We give a formal account of (partial) evaluation trees, which is useful to state and prove technical results in the next sections. 
This development is based on the definition and properties of trees provided by \citet{Courcelle83}, adjusted to our specific setting. 

Let $\NPos$ be the set of positive natural numbers, $\List{\NPos}$ the set of finite sequences of positive natural numbers and $\LabelSet$ a set of labels. 
An  \emph{ordered tree} labelled in $\LabelSet$ is a partial function $\fun{\tr}{\List{\NPos}}{\LabelSet}$ such that 
$\dom(\tr)$ is not empty, and, for each $\alpha \in \List{\NPos}$ and $n \in \NPos$, 
if $\alpha n \in \dom(\tr)$ then 
$\alpha \in \dom(\tr)$ and, for all $k \le n$, $\alpha k \in \dom(\tr)$. 
Given an ordered tree $\tr$ and $\alpha \in \dom(\tr)$, set $\trbr_\tr(\alpha)=\sup\{n \in \N \mid \alpha n \in \dom(\tr)\}$ the \emph{branching} of $\tr$ at $\alpha$, and 
 $\subtr{\tr}{\alpha}$ the \emph{subtree} of $\tr$ rooted at $\alpha$, that is, $\subtr{\tr}{\alpha}(\beta) = \tr(\alpha\beta)$, for all $\beta\in\List{\NPos}$. 
The \emph{root} of $\tr$ is $\rt(\tr) = \tr(\EList)$ and obviously we have $\tr = \subtr{\tr}{\EList}$. 
Finally, we write $\Rule{\tr_1\Space \ldots \Space \tr_n}{x}$ for the tree $\tr$ defined by $\tr(\EList)=x$, and $\tr(i\alpha) = \tr_{i}(\alpha)$  for all $i \in 1..n$.
%
Since in the following we will only deal with ordered trees, we will refer to them just as trees. 
 
Let us assume a big-step semantics $\Triple{\ConfSet}{\ResSet}{\RuleSet}$. 
Assume also that labels in $\LabelSet$ are semantic judgments $\eval{\conf}{\res}$,  
then we can define evaluation trees as follows: 

\begin{definition} \label{def:eval-tree}
A tree $\fun{\tr}{\List{\NPos}}{\LabelSet}$  is an \emph{evaluation tree}  in $\Triple{\ConfSet}{\ResSet}{\RuleSet}$, 
if, for each $\alpha \in \dom(\tr)$ with $\tr(\alpha) = \eval{\conf}{\res}$, there is \mbox{$\inlinerule{\tr(\alpha 1)\ldots\tr(\alpha\trbr_\tr(\alpha))}{\conf}{\res} \in \RuleSet$.}
\end{definition}
Note that, starting from an evaluation tree $\tr$, we can construct a proof tree for the inference system denoted by $\RuleSet$, by forgetting the order on sibling nodes and removing duplicated children. 
Therefore, if $\tr$ is a finite evaluation tree with $\rt(\tr) = \eval{\conf}{\res}$, then $\validInd{\RuleSet}{\eval{\conf}{\res}}$ holds. 

\begin{definition} \label{def:pet}
A \emph{partial evaluation tree} in $\Triple{\ConfSet}{\ResSet}{\RuleSet}$ is an evaluation tree in $\Triple{\ConfSet}{\UnResSet}{\UnRuleSet}$. 
\end{definition}

The following proposition assures two key properties of partial evaluation trees. 
First, if there is some $\Unknown$, then it is propagated to ancestor nodes. 
Second, for each level of the tree there is at most one $\Unknown$. We set $\Len{\alpha}$ the length of $\alpha \in \List{\NPos}$.

\begin{proposition} \label{prop:tree-unknown}
Let $\tr$ be a partial evaluation tree, then the following hold:
\begin{enumerate}
\item\label{prop:tree-unknown:1} for all $\alpha n \in \dom(\tr)$, if $\ResSet_\Unknown(\tr(\alpha n)) = \Unknown$ then $\ResSet_\Unknown(\tr(\alpha)) = \Unknown$.
\item\label{prop:tree-unknown:2} for all $n \in \N$, there is at most one $\alpha \in \dom(\tr)$ with $\Len{\alpha} = n$ such that $\ResSet_\Unknown(\tr(\alpha)) = \Unknown$.
\end{enumerate}
\end{proposition}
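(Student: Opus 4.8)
The plan is to prove both parts by examining the structure of the rules in $\UnRuleSet$ (\cref{def:pev-rules}), since a partial evaluation tree is by definition an evaluation tree for $\Triple{\ConfSet}{\UnResSet}{\UnRuleSet}$, and every node must be the conclusion of some rule in $\UnRuleSet$. The key observation driving both parts is the shape of the rules that can produce a result: the original rules in $\RuleSet$ always have a genuine result $\res \in \ResSet$ in the conclusion, whereas the added start rules and partial rules always have $\Unknown$ in the conclusion. Moreover, among the premises of an $\UnRuleSet$-rule, at most one can carry $\Unknown$, and when it does, it is the \emph{last} premise — this is immediate from reading off \cref{def:pev-rules}: a partial rule $\partialrule{\rho}{i}{\unres}$ has premises $\judg_1,\ldots,\judg_{i-1}$ (all drawn from the original rule $\rho$, hence complete) followed by a single possibly-incomplete premise $\eval{\ConfSet(\judg_i)}{\unres}$.

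For part \ref{prop:tree-unknown:1}, I would argue by contraposition on the rule applied at $\alpha$. Suppose $\ResSet_\Unknown(\tr(\alpha n)) = \Unknown$ for some child position $n$. Let $\rho'$ be the $\UnRuleSet$-rule whose conclusion labels $\alpha$. If $\rho'$ were an original rule from $\RuleSet$ or a start rule (which has no premises), then no child would carry $\Unknown$: original rules have only complete premises, contradicting the hypothesis. Hence $\rho'$ must be a partial rule $\partialrule{\rho}{i}{\unres}$, whose conclusion is precisely $\eval{\ConfSet(\rho)}{\Unknown}$, so $\ResSet_\Unknown(\tr(\alpha)) = \Unknown$ as required. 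The one point needing care is to confirm that the only premise of a partial rule that may equal $\Unknown$ is the last one, so that the occurrence of $\Unknown$ at any child forces the applied rule to be an $\Unknown$-producing one; this again follows by inspection of \cref{def:pev-rules}.

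For part \ref{prop:tree-unknown:2}, the natural approach is induction on the level $n = \Len{\alpha}$, using part \ref{prop:tree-unknown:1} as the engine. The base case $n=0$ is trivial since there is a unique node, the root, at length $0$. For the inductive step, suppose toward a contradiction that two distinct nodes $\alpha = \gamma k$ and $\alpha' = \gamma' k'$ of length $n+1$ both carry $\Unknown$. By part \ref{prop:tree-unknown:1}, their parents $\gamma$ and $\gamma'$ (of length $n$) both carry $\Unknown$, so by the inductive hypothesis $\gamma = \gamma'$. It then remains to show $k = k'$, i.e.\ that a single node cannot have two children both labelled $\Unknown$. This is where I expect the crux of the argument to lie: it requires that the rule applied at $\gamma$ has at most one premise carrying $\Unknown$. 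As noted above, among the rules of $\UnRuleSet$ only partial rules admit an $\Unknown$ premise, and each partial rule has exactly one such premise, in last position; an original rule or start rule admits none. Hence the node $\gamma$ can have at most one $\Unknown$-child, forcing $k = k'$ and thus $\alpha = \alpha'$, completing the induction.

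The main obstacle is not any deep technical difficulty but rather the bookkeeping of matching tree positions to rule premises via \cref{def:eval-tree}, and being precise that a partial rule carries its (unique) $\Unknown$ premise in last position, so that parts \ref{prop:tree-unknown:1} and \ref{prop:tree-unknown:2} both reduce to the single structural fact that each $\UnRuleSet$-rule has at most one $\Unknown$ in its premises.
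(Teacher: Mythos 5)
Your proposal is correct and follows essentially the same route as the paper's proof: part~\ref{prop:tree-unknown:1} by observing that the only rules in $\UnRuleSet$ with an $\Unknown$ premise are the $\Unknown$-propagation rules, whose conclusion also carries $\Unknown$, and part~\ref{prop:tree-unknown:2} by induction on the level, using part~\ref{prop:tree-unknown:1} to force a common parent and then the fact that each rule has at most one $\Unknown$ premise to conclude the two children coincide. No gaps.
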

\begin{proof}
To prove \cref{prop:tree-unknown:1}, we just have to note that the only rules having a premise $\judg$ with $\ResSet_\Unknown(\judg) = \Unknown$ are $\Unknown$-propagation rules, which also have conclusion $\judg'$ with $\ResSet_\Unknown(\judg') = \Unknown$; hence the thesis is immediate.
To prove \cref{prop:tree-unknown:2}, we proceed by induction on $n$. 
For $n = 0$, there is only one  $\alpha \in \List{\NPos}$ with $\Len{\alpha} = 0$ (the empty sequence), hence the thesis is trivial.
Consider $\alpha = \alpha' k \in \dom(\tr)$ with $\Len{\alpha} = n+1$.
If $\ResSet_\Unknown(\tr(\alpha)) = \Unknown$, then, by \cref{prop:tree-unknown:1}, $\ResSet_\Unknown(\tr(\alpha')) = \Unknown$, and, by induction hypothesis,  $\alpha'$ is the only sequence of length $n$ in $\dom(\tr)$ with this property. 
Therefore, another node $\beta \in \dom(\tr)$, with $\Len{\beta} = n+1$ and $\ResSet_\Unknown(\tr(\beta)) = \Unknown$, must satisfy $\beta = \alpha' h$ for some $h \in \NPos$;
hence, since $\tr$ is a partial evaluation tree, $\tr(\alpha)$ and $\tr(\beta)$ are two premises of the same rule with $\Unknown$ as result, thus they must coincide, since all rules in $\UnRuleSet$ have at most one premise with $\Unknown$.
\end{proof}

\begin{corollary} \label{cor:complete-tree}
Let $\tr$ be a partial evaluation tree, then 
$\ResSet_\Unknown(\rt(\tr)) \in \ResSet$ if and only if  $\tr$ is complete. 
\end{corollary}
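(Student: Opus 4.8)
The plan is to prove the two implications separately, noting that one direction is immediate from the definition of completeness, while the other follows by iterating the upward-propagation property established in \cref{prop:tree-unknown}.

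For the ``if'' direction, suppose $\tr$ is complete. By definition this means every judgement occurring in $\tr$ is complete, that is, $\ResSet_\Unknown(\tr(\alpha)) \in \ResSet$ for all $\alpha \in \dom(\tr)$. In particular this holds for $\alpha = \EList$, and since $\rt(\tr) = \tr(\EList)$ we conclude $\ResSet_\Unknown(\rt(\tr)) \in \ResSet$.

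For the ``only if'' direction I would argue by contraposition, showing that if $\tr$ is incomplete then $\ResSet_\Unknown(\rt(\tr)) = \Unknown$. Since $\tr$ is incomplete, there is some $\alpha \in \dom(\tr)$ with $\ResSet_\Unknown(\tr(\alpha)) = \Unknown$. I then establish, by induction on $\Len{\alpha}$, the auxiliary claim that for every $\alpha \in \dom(\tr)$, $\ResSet_\Unknown(\tr(\alpha)) = \Unknown$ implies $\ResSet_\Unknown(\rt(\tr)) = \Unknown$. The base case $\Len{\alpha} = 0$ is trivial, since then $\alpha = \EList$ and $\tr(\alpha) = \rt(\tr)$. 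For the inductive step, write $\alpha = \alpha' n$; by \cref{prop:tree-unknown:1} we get $\ResSet_\Unknown(\tr(\alpha')) = \Unknown$, and since $\Len{\alpha'} < \Len{\alpha}$ the induction hypothesis yields $\ResSet_\Unknown(\rt(\tr)) = \Unknown$. Applying the claim to the witness $\alpha$ found above completes the contrapositive.

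I do not expect any genuine obstacle: the corollary is essentially the observation that an $\Unknown$ occurring at any node forces an $\Unknown$ at the root, which is just the iteration of \cref{prop:tree-unknown:1} along the (unique) path from that node up to the root. The only mild care needed is to set up the induction on path length so that the propagation is applied one step at a time; part~(2) of \cref{prop:tree-unknown} is not required for this argument.
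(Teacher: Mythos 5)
Your proof is correct and is exactly the argument the paper intends (the corollary is stated without proof, as an immediate consequence of \cref{prop:tree-unknown}): the ``if'' direction is definitional, and the ``only if'' direction iterates the upward $\Unknown$-propagation of \refItem{prop:tree-unknown}{1} from any incomplete node to the root. Your observation that part~(2) of the proposition is not needed is also accurate.
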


We can define a relation\footnote{This is a slight variation of similar relations on trees considered by \citet{Courcelle83,Dagnino19}.}, denoted by $\TreeOrd$, on trees labelled by possibly incomplete judgements,  as follows: 

\begin{definition}\label{def:tree-ord}
Let $\tr$ and $\tr'$ be trees labelled by possibly incomplete semantic judgements. 
Define $\tr \TreeOrd \tr'$ if and only if 
$\dom(\tr) \subseteq \dom(\tr')$ and,
for all $\alpha \in \dom(\tr)$, $\ConfSet(\tr(\alpha)) = \ConfSet(\tr'(\alpha))$  and 
$\ResSet_\Unknown(\tr(\alpha)) \in \ResSet$ implies  $\subtr{\tr}{\alpha} = \subtr{\tr'}{\alpha}$. 
\end{definition}

Intuitively, $\tr\TreeOrd\tr'$ means that $\tr'$ can be obtained from $\tr$ by adding new branches or replacing some $\Unknown$s with results. 
We use $\StrTreeOrd$ for the strict version of $\TreeOrd$. 
\bfd Note that, if $\tr\TreeOrd\tr'$, then, for all $\alpha\in\List{\NPos}$, $\tr'(\alpha)$ is \emph{more defined} than $\tr(\alpha)$, because, 
either $\tr(\alpha)$ is undefined, or $\tr(\alpha)$ is incomplete and $\ConfSet(\tr(\alpha)) = \ConfSet(\tr'(\alpha))$, or $\tr(\alpha) = \tr'(\alpha)$.
\FDComm{spero che si capisca} \efd \EZComm{s\`i}

It is easy to check that $\TreeOrd$ is a partial order and, if $\tr\TreeOrd\tr'$, then, for all $\alpha \in \dom(\tr)$, $\subtr{\tr}{\alpha} \TreeOrd \subtr{\tr'}{\alpha}$. 
The following proposition shows some, less trivial, properties of $\TreeOrd$. 

\begin{proposition} \label{prop:tree-ord}
The following properties hold:
\begin{enumerate}
\item\label{prop:tree-ord:1} for all trees $\tr$ and $\tr'$, if $\tr \TreeOrd \tr'$ and $\ResSet_\Unknown(\rt(\tr)) \in \ResSet$, then $\tr = \tr'$
\item\label{prop:tree-ord:2} for each increasing sequence $(\tr_i)_{i \in \N}$ of trees, there is a least upper bound $\tr = \Treelub \tr_n$.
\end{enumerate}
\end{proposition}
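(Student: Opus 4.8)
The plan is to dispatch the two parts separately: \cref{prop:tree-ord:1} is essentially immediate from the definition of $\TreeOrd$, while \cref{prop:tree-ord:2} requires an explicit construction of the least upper bound, which is where all the work lies.

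For \cref{prop:tree-ord:1}, I would simply instantiate the defining clause of \cref{def:tree-ord} at the root position $\alpha = \EList$. Since $\dom(\tr)$ is non-empty we have $\EList \in \dom(\tr)$, and $\tr(\EList) = \rt(\tr)$. By hypothesis $\ResSet_\Unknown(\rt(\tr)) \in \ResSet$, so the implication ``$\ResSet_\Unknown(\tr(\alpha)) \in \ResSet$ implies $\subtr{\tr}{\alpha} = \subtr{\tr'}{\alpha}$'' fires at $\alpha = \EList$ and gives $\subtr{\tr}{\EList} = \subtr{\tr'}{\EList}$, that is $\tr = \tr'$, using $\tr = \subtr{\tr}{\EList}$ and $\tr' = \subtr{\tr'}{\EList}$.

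For \cref{prop:tree-ord:2}, I would first record a stability fact obtained directly from $\tr_i \TreeOrd \tr_j$ for $i \le j$: for every $\alpha \in \dom(\tr_i)$ we have $\ConfSet(\tr_i(\alpha)) = \ConfSet(\tr_j(\alpha))$, and if $\ResSet_\Unknown(\tr_i(\alpha)) \in \ResSet$ then $\subtr{\tr_i}{\alpha} = \subtr{\tr_j}{\alpha}$. Combined with \cref{prop:tree-unknown:1} (whose contrapositive shows that a complete node has a wholly complete subtree), this says the configuration at each position stabilises along the chain and, once the result at a position becomes complete, the entire subtree rooted there is frozen. I would then define the candidate lub $\tr$ by $\dom(\tr) = \bigcup_{i \in \N} \dom(\tr_i)$, setting $\tr(\alpha) = \tr_j(\alpha)$ whenever some $\tr_j(\alpha)$ is complete, and $\tr(\alpha) = \eval{\conf}{\Unknown}$ with $\conf$ the stable common configuration otherwise. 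Well-definedness uses stability (the configuration is index-independent) and freezing (two complete values $\tr_j(\alpha)$, $\tr_{j'}(\alpha)$ must coincide). A short check shows $\dom(\tr)$ is non-empty and closed under prefixes and left-siblings, since a union of tree domains always is, so $\tr$ is a tree labelled by possibly incomplete judgements.

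To finish I would verify that $\tr$ is the least upper bound. For the upper-bound part, given $i$ and $\alpha \in \dom(\tr_i)$, inclusion of domains and equality of configurations are clear; and if $\tr_i(\alpha)$ is complete, then $\subtr{\tr_i}{\alpha}$ is entirely complete, the domains $\dom(\subtr{\tr_k}{\alpha})$ for $k \ge i$ all coincide with $\dom(\subtr{\tr_i}{\alpha})$ and those for $k < i$ are contained in it, so $\subtr{\tr}{\alpha} = \subtr{\tr_i}{\alpha}$, giving $\tr_i \TreeOrd \tr$. For the least part, let $\tr'$ be any upper bound; then $\dom(\tr) \subseteq \dom(\tr')$ and configurations agree, and if $\tr(\alpha)$ is complete, by construction $\tr(\alpha) = \tr_j(\alpha)$ is complete for some $j$, whence $\subtr{\tr}{\alpha} = \subtr{\tr_j}{\alpha} = \subtr{\tr'}{\alpha}$, the last equality from $\tr_j \TreeOrd \tr'$. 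Thus $\tr \TreeOrd \tr'$ and $\tr = \Treelub \tr_i$. I expect the main obstacle to be precisely the bookkeeping showing that the pointwise label assignment is self-consistent with the subtree equalities it must satisfy; \cref{prop:tree-unknown:1}, ensuring completeness propagates downward through the whole subtree, is the ingredient that makes the pointwise choice agree with freezing the subtree at the first index where a position becomes complete.
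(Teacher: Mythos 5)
Your proof is correct and takes essentially the same route as the paper: part~1 by instantiating \cref{def:tree-ord} at the root, and part~2 by the same pointwise construction (domain the union of the domains, label frozen at the first index where a position becomes complete, stable incomplete label otherwise) followed by the same upper-bound and least-ness checks. One minor remark: your appeal to \refItem{prop:tree-unknown}{1} is strictly out of scope, since that proposition concerns partial evaluation trees while the present statement is about arbitrary trees labelled by possibly incomplete judgements, but it is also unnecessary --- the ``freezing'' of a whole subtree once its root becomes complete is already part of \cref{def:tree-ord} and carries every step of your verification (the paper's own proof makes the analogous slip by citing \cref{cor:complete-tree} at the corresponding point).
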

\begin{proof}
\cref{prop:tree-ord:1} is immediate by definition of $\TreeOrd$.
To prove \cref{prop:tree-ord:2}, first note that, since for all $n \in \N$, $\tr_n \TreeOrd \tr_{n+1}$, 
for all $\alpha \in \List{\NPos}$ we have that, for all $n \in \N$, if $\tr_n(\alpha)$ is defined, then, 
for all $k \ge n$, $\ConfSet(\tr_k(\alpha)) = \ConfSet(\tr_n(\alpha))$, and, 
if $\ResSet_\Unknown(\tr_n(\alpha)) \in \ResSet$, then $\tr_k(\alpha) = \tr_n(\alpha)$. 
Hence, for all $n \in \N$, there are only three possibilities for $\tr_n(\alpha)$: it is either undefined, or equal to $\eval{\conf}{\Unknown}$, or equal to $\eval{\conf}{\res}$, where $\conf$ and $\res$ are always the same. 
Let us denote by $k_\alpha$ the least index $n$ where $\tr_n(\alpha)$ is most defined, 
that is, 
if $\tr_n(\alpha)$ is always undefined, then $k_\alpha = 0$, 
if $\tr_n(\alpha)$ is eventually always equal to $\eval{\conf}{\Unknown}$, then $k_\alpha$ is the least  $n$ where $\tr_n(\alpha)$ is defined, and, 
if $\tr_n(\alpha)$ is eventually always equal to $\eval{\conf}{\res}$, then $k_\alpha$ is the least $n$ where $\tr_n(\alpha) = \eval{\conf}{\res}$. 
Therefore, for all $n \ge k_\alpha$, we have that $\tr_n(\alpha) = \tr_{k_\alpha}(\alpha)$.

Consider a tree $\tr$ defined by  $\tr(\alpha) = \tr_{k_\alpha}(\alpha)$. 
It is easy to check that $\dom(\tr) = \bigcup_{n \in \N} \dom(\tr_n)$. 
We now check that, for all $n \in \N$, $\tr_n \TreeOrd \tr$. 
For all $\alpha \in \dom(\tr_n)$, we have $\alpha \in \dom(\tr)$ and we distinguish two cases:
\begin{itemize}
\item if $\tr_n(\alpha) = \eval{\conf}{\Unknown}$, then, since either $\tr_n \TreeOrd \tr_{k_\alpha}$ or $\tr_{k_\alpha}\TreeOrd\tr_n$ and $\alpha\in\dom(\tr_{k_\alpha})$, we get $\ConfSet(\tr(\alpha))=\ConfSet(\tr_{k_\alpha}(\alpha)) = \ConfSet(\tr_n(\alpha)) = \conf$; 
\item if $\tr_n(\alpha) = \eval{\conf}{\res}$, then $k_\alpha \le n$, hence, since $\tr_{k_\alpha} \TreeOrd \tr_n$, we get $\ConfSet(\tr(\alpha)) = \ConfSet(\tr_{k_\alpha}(\alpha))  = \ConfSet(\tr_n(\alpha)) = \conf$, thus we have only to check that $\subtr{\tr_n}{\alpha} = \subtr{\tr}{\alpha}$. 
To prove this point, 
consider $\beta \in \dom(\subtr{\tr}{\alpha})$, 
then, by \cref{cor:complete-tree}, 
we have $\subtr{\tr}{\alpha}(\beta) = \tr(\alpha\beta) = \eval{\conf'}{\res'}$, 
hence, since for all $h\ge k_\alpha$ we have $\subtr{\tr_{k_\alpha}}{\alpha} = \subtr{\tr_h}{\alpha}$, we get $\alpha\beta \in \dom(\tr_h)$ and $\subtr{\tr_h}{\alpha}$ is complete, thus $k_{\alpha\beta} \le k_\alpha$. 
Therefore, $\tr_{k_{\alpha\beta}}\TreeOrd \tr_{k_\alpha} \TreeOrd \tr_n$ and so we get $\subtr{\tr_{k_{\alpha\beta}}}{\alpha\beta} = \subtr{\tr_n}{\alpha\beta}$, which implies that 
$\subtr{\tr_n}{\alpha}(\beta) = \tr_{k_{\alpha\beta}}(\alpha\beta) = \tr(\alpha\beta)$, as needed. 
\end{itemize}
This proves that $\tr$ is an upper bound of the sequence, we have still to prove that it is the least one. 
To this end, let $\tr'$ be an upper bound of the sequence: we have to show that $\tr \TreeOrd \tr'$. 
Since $\tr'$ is an upper bound, for all $n \in \N$ we have $\dom(\tr_n) \subseteq \dom(\tr')$, hence $\dom(\tr) \subseteq \dom(\tr')$, and, especially, for all $\alpha \in \List{\NPos}$ we have $\tr_{k_\alpha} \TreeOrd \tr'$. 
Hence, for all $\alpha \in \dom(\tr)$, we have $\ConfSet(\tr(\alpha)) = \ConfSet(\tr_{k_\alpha}(\alpha)) = \ConfSet(\tr'(\alpha))$, and, if $\ResSet_\Unknown(\tr(\alpha)) = \res$, since $\tr_{k_\alpha} \TreeOrd \tr$ and $\tr_{k_\alpha} \TreeOrd \tr'$, we have $\subtr{\tr_{k_\alpha}}{\alpha}  = \subtr{\tr}{\alpha}$ and $\subtr{\tr_{k_\alpha}}{\alpha} =  \subtr{\tr'}{\alpha}$, hence $\subtr{\tr}{\alpha} = \subtr{\tr'}{\alpha}$, as needed.  
\end{proof}

Obviously, this relation restricts to partial evaluation trees and, more importantly, the set of partial evaluation trees is closed  with respect to least upper bound for $\TreeOrd$, as the next proposition shows. 

\begin{proposition}\label{prop:pet-lub}
For each increasing sequence $(\tr_n)_{n \in \N}$ of partial evaluation trees, the least upper bound $\Treelub \tr_n$ is a partial evaluation tree as well. 
\end{proposition}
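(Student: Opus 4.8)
The plan is to show that the tree $\tr = \Treelub \tr_n$, which exists by \cref{prop:tree-ord:2}, meets the defining condition of an evaluation tree in $\Triple{\ConfSet}{\UnResSet}{\UnRuleSet}$, since by \cref{def:pet} this is exactly what it means for $\tr$ to be a partial evaluation tree. Concretely, I would fix an arbitrary node $\alpha \in \dom(\tr)$, write $\tr(\alpha) = \eval{\conf}{\unres}$ and $b = \trbr_\tr(\alpha)$, and exhibit a rule $\inlinerule{\tr(\alpha 1)\ldots\tr(\alpha b)}{\conf}{\unres}$ in $\UnRuleSet$, as required by \cref{def:eval-tree}.

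The main tool is the stabilisation of node values established while constructing the lub in the proof of \cref{prop:tree-ord:2}: for each $\beta \in \List{\NPos}$ there is a least index $k_\beta$ with $\tr_n(\beta) = \tr(\beta)$ for all $n \ge k_\beta$. The idea is to reduce the rule condition for $\tr$ at $\alpha$ to the corresponding condition for a single $\tr_n$, which holds for free because each $\tr_n$ is a partial evaluation tree. For this I need one finite stage at which $\tr_n$ agrees with $\tr$ simultaneously on $\alpha$ and on all of its children, and this is available precisely because $b$ is finite. To see the latter, note that $\tr_n \TreeOrd \tr$ gives $\ConfSet(\tr_n(\alpha)) = \conf$ whenever $\alpha \in \dom(\tr_n)$; since each $\tr_n$ is a partial evaluation tree, the bounded premises condition (\BP) applied to $\UnRuleSet$ yields $\trbr_{\tr_n}(\alpha) \le b_\conf$, and, as the domains increase with $\dom(\tr) = \bigcup_n \dom(\tr_n)$, we get $b = \sup_n \trbr_{\tr_n}(\alpha) \le b_\conf$, which is finite.

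With $b$ finite I would set $n = \max\{k_\alpha, k_{\alpha 1}, \ldots, k_{\alpha b}\}$, so that $\tr_n(\alpha) = \tr(\alpha)$ and $\tr_n(\alpha i) = \tr(\alpha i)$ for all $i \in 1..b$. Moreover $\trbr_{\tr_n}(\alpha) = b$: the children $\alpha 1, \ldots, \alpha b$ all lie in $\dom(\tr_n)$, giving $\trbr_{\tr_n}(\alpha) \ge b$, while $\tr_n \TreeOrd \tr$ forces $\dom(\tr_n) \subseteq \dom(\tr)$ and hence $\trbr_{\tr_n}(\alpha) \le b$. Since $\tr_n$ is a partial evaluation tree, \cref{def:eval-tree} supplies a rule $\inlinerule{\tr_n(\alpha 1)\ldots\tr_n(\alpha b)}{\conf}{\unres} \in \UnRuleSet$; rewriting each $\tr_n(\alpha i)$ as $\tr(\alpha i)$ gives exactly the rule witnessing the evaluation-tree condition for $\tr$ at $\alpha$. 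As $\alpha$ is arbitrary, $\tr$ is a partial evaluation tree.

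The step I expect to be the crux is the finiteness of the branching $b$, and it is exactly here that the bounded premises condition (\BP) is indispensable: without it a node could gain a fresh child at every stage, so that no single finite $\tr_n$ would agree with $\tr$ on all children at once, and the local rule might fail to be witnessed at any finite stage even though every $\tr_n$ is individually a partial evaluation tree. The remaining content is routine bookkeeping with the stabilisation indices $k_\beta$ already available from \cref{prop:tree-ord:2}.
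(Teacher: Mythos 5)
Your proposal is correct and follows essentially the same route as the paper's proof: establish finiteness of the branching at each node via the bounded premises condition (\BP), take the maximum of the stabilisation indices of the node and its finitely many children to find a single finite stage $\tr_n$ agreeing with $\tr$ there, and transfer the rule witnessing the evaluation-tree condition from $\tr_n$ to $\tr$. The only (welcome) difference is that you make explicit the check $\trbr_{\tr_n}(\alpha) = b$ at the chosen stage, which the paper leaves implicit.
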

\begin{proof}
Set $\tr = \Treelub \tr_n$. 
We have to show that for every node $\alpha \in \dom(\tr)$ there is a rule in $\UnRuleSet$ with conclusion $\tr(\alpha)$ and premises the (labels of) the children of $\alpha$ in $\tr$. 

Recall from \refItem{prop:tree-ord}{2} that $\tr(\alpha) = \tr_{k_\alpha}(\alpha)$, where $k_\alpha \in \N$ is the least index $n$ where $\tr_n(\alpha)$ is most defined. 
Note that, for all $\alpha \in \dom(\tr)$, $\trbr_\tr(\alpha)$ is finite.  
Indeed, 
by definition of $\tr$ and since the sequence is increasing, we have $\trbr_\tr(\alpha) = \sup\{ \trbr_{\tr_n}(\alpha) \mid n\ge k_\alpha  \}$, and $\trbr_{\tr_n}(\alpha)$ is the number of premises of a rule, for all $n\ge k_\alpha$; 
all such rules have the same configuration in the conclusion $\ConfSet(\tr_n(\alpha)) = \ConfSet(\tr(\alpha))$, hence, 
by condition (\BP) in \cref{def:bss}, there is $b \in \N$ such that $\trbr_{\tr_n}(\alpha) \le b$, thus $\trbr_\tr(\alpha) \le b$. 
Then, 
the set $K = \{k_\alpha\}\cup\{ k_{\alpha i}\mid i\in 1..\trbr_\tr(\alpha)\}$ is finite and 
$n = \max K$ is finite, hence, as $n \ge k_\alpha$ and $n\ge k_{\alpha i}$, for all $i \in 1..\trbr_\tr(\alpha)$, we have 
$\tr_n(\alpha) = \tr(\alpha)$ and $\tr_n(\alpha i) = \tr(\alpha i)$, for all $i\in 1..\trbr_\tr(\alpha)$. 
Therefore, $\bsrule{\tr(\alpha 1)\ldots\tr(\alpha \trbr_\tr(\alpha))}{\tr(\alpha)} = \bsrule{\tr_n(\alpha 1)\ldots\tr_n(\alpha\trbr_\tr(\alpha))}{\tr_n(\alpha)} \in \UnRuleSet$, since $\tr_n$ is a partial evaluation tree. 
Thus, by \cref{def:pet}, $\tr$ is a partial evaluation tree. 
\end{proof}

As already mentioned, finite partial evaluation trees model possibly incomplete evaluations. 
Then, the relation $\TreeOrd$ models refinement of the evaluation, because if $\tr\TreeOrd\tr'$, where $\tr$ and $\tr'$ are finite partial evaluation trees, 
$\tr'$ is ``more detailed'' than $\tr$. 
In a sense, $\TreeOrd$ on finite partial  evaluation trees abstracts the process of evaluation itself, as we will make precise in the next section. 

What about infinite trees? 
Similarly to what we have discussed in the introduction, 
there are many infinite partial evaluation trees which are difficult to interpret. 
For instance, using rules in \cref{fig:bss-ex-lambda} and \cref{fig:bss-ex-lambda-unknown}, we can construct the following infinite tree for all $\val_\Unknown$, where $\Omega = (\Lam{\x}\x\appop\x)\appop(\Lam{\x}\x\appop\x)$: \label{page:Omega}
\[
\Rule{
  \Rule{}{\eval{\Lam{\x}\x\appop\x}{\Lam{\x}\x\appop\x} }\Space
  \Rule{}{\eval{\Lam{\x}\x\appop\x}{\Lam{\x}\x\appop\x} }\Space
  \Rule{ \vdots  }{ \eval{\Omega = \subst{(\x\appop\x)}{\Lam{\x}\x\appop\x}{\x}}{\val_\Unknown} } 
}{ \eval{\Omega}{\val_\Unknown} }
\]
Among all such trees there are some ``good'' ones, we call them \emph{well-formed}. 
Well-formed infinite partial evaluation trees arise as limits of strictly increasing sequences of finite partial evaluation trees, 
hence, in a sense, they model the limit of the evaluation process. namely, non-termination. 

\begin{definition}\label{def:pet-wf}
An infinite partial evaluation tree  $\tr$ is \emph{well-formed} if, 
for all $\alpha \in \dom(\tr)$, if $\ResSet_\Unknown(\tr(\alpha)) \in \ResSet$, then $\subtr{\tr}{\alpha}$ is finite. 
\end{definition}

In other words, in a well-formed partial evaluation tree all complete subtrees are finite. 
The next proposition, together with \cref{prop:tree-unknown}, implies that 
a well-formed tree  
contains a unique infinite path, which is  entirely labelled by incomplete judgments. 
A similar property on infinite derivations will be enforced by corules in the semantics for divergence in \cref{sect:bss-div}. 

\begin{proposition}\label{prop:pet-wf-unknown}
If $\tr$ is a well-formed infinite partial evaluation tree then, 
for all $n \in \N$, there is $\alpha \in \dom(\tr)$ such that $\Len{\alpha} = n$ and $\ResSet_\Unknown(\tr(\alpha)) = \Unknown$. 
\end{proposition}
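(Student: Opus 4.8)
The plan is to prove the slightly stronger fact that $\tr$ contains an infinite branch all of whose nodes carry the result $\Unknown$; the stated claim then follows by reading off the node at depth $n$ along this branch. The starting observation is that $\tr$ is \emph{finitely branching}: for every $\alpha \in \dom(\tr)$ the node $\tr(\alpha)$ is the conclusion of a rule of $\UnRuleSet$ (by \cref{def:pet} and \cref{def:eval-tree}), and every such rule---whether an original rule, a start rule, or a partial rule---has a \emph{finite} sequence of premises, so $\trbr_\tr(\alpha)$ is finite. Since $\tr$ is moreover infinite, König's lemma yields an infinite path $\EList = \alpha_0, \alpha_1, \alpha_2, \ldots$ with $\alpha_{k+1} = \alpha_k i_k$ a child of $\alpha_k$ for some $i_k \in \NPos$, and with $\Len{\alpha_k} = k$.

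It then remains to show that $\ResSet_\Unknown(\tr(\alpha_k)) = \Unknown$ for every $k$. Suppose, towards a contradiction, that $\ResSet_\Unknown(\tr(\alpha_k)) \in \ResSet$ for some $k$. Since $\tr$ is well-formed (\cref{def:pet-wf}), the subtree $\subtr{\tr}{\alpha_k}$ would then be finite. But the tail $\alpha_k, \alpha_{k+1}, \alpha_{k+2}, \ldots$ of the path is an infinite sequence of distinct nodes all lying inside $\subtr{\tr}{\alpha_k}$ (formally $\alpha_{k+j} = \alpha_k\beta_j$ for suitable $\beta_j$ with $\Len{\beta_j} = j$), so $\subtr{\tr}{\alpha_k}$ is infinite---a contradiction. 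Hence every node on the path is incomplete, and taking $\alpha = \alpha_n$ proves the proposition. Note that, together with \cref{prop:tree-unknown}, this confirms the remark preceding the statement: at each level there is \emph{exactly} one incomplete node, and these nodes form the unique infinite path.

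If one prefers to avoid invoking König's lemma, the same argument can be recast as an induction on $n$ with the strengthened invariant ``there is $\alpha \in \dom(\tr)$ with $\Len{\alpha} = n$, $\ResSet_\Unknown(\tr(\alpha)) = \Unknown$, and $\subtr{\tr}{\alpha}$ infinite''. The base case uses $\alpha = \EList$: the whole tree $\tr = \subtr{\tr}{\EList}$ is infinite, so well-formedness forces its root to be incomplete. For the inductive step, finite branching plus the pigeonhole principle give a child $\alpha i$ with $\subtr{\tr}{\alpha i}$ infinite (the infinite subtree $\subtr{\tr}{\alpha}$ consists of the root $\alpha$ together with the finitely many child subtrees $\subtr{\tr}{\alpha i}$, $i \in 1..\trbr_\tr(\alpha)$), and well-formedness then yields $\ResSet_\Unknown(\tr(\alpha i)) = \Unknown$, reinstating the invariant at level $n+1$.

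I expect the only genuine point requiring care to be the justification of finite branching, on which König's lemma (equivalently, the pigeonhole step in the inductive version) relies: this is exactly where the structure of $\UnRuleSet$---rules carrying a finite sequence of premises---is used. Everything else is a direct application of well-formedness in its contrapositive form, namely that an infinite subtree must be rooted at an incomplete judgement.
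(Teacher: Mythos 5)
Your proof is correct, but it takes a genuinely different route from the paper's in the inductive step. The paper also argues by induction on $n$, but having obtained the incomplete node $\alpha$ at level $k$, it invokes \cref{prop:tree-unknown} to show that every level-$(k{+}1)$ node \emph{not} below $\alpha$ is complete (hence roots a finite subtree), and then exploits the specific structure of the partial rules in $\UnRuleSet$ — only the \emph{last} premise of a rule can be incomplete — to conclude that if the last child of $\alpha$ were complete then all children would be, making the whole tree finite; so the last child is the incomplete one. Your argument is purely local and more elementary: you never use \cref{prop:tree-unknown} or the position of the incomplete premise, only finite branching (which, as you rightly note, follows from each rule in $\UnRuleSet$ having a finite sequence of premises — the full \textbf{(BP)} condition is not even needed here) together with well-formedness in contrapositive form, via König's lemma or the equivalent pigeonhole induction with the strengthened invariant that the witness roots an infinite subtree. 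What your version buys is robustness: it would survive a change to the rule format in which an incomplete premise could occur in any position, or several could. What the paper's version buys is that it identifies \emph{which} child is incomplete (the last one), which dovetails with the surrounding narrative that the incomplete judgements form the single rightmost-growing infinite path; your closing remark recovers the uniqueness of that path from \cref{prop:tree-unknown}, so nothing is lost.
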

\begin{proof}
The proof is by induction on $n$. 
For $n=0$, we have $\ResSet_\Unknown(\rt(\tr)) = \Unknown$, since, otherwise, we would have $\UnResSet(\rt(\tr)) = \res$, hence, 
by \cref{def:pet-wf}, $\tr = \subtr{\tr}{\EList}$ would be finite, while $\tr$ is infinite by hypothesis. 

For $n = k+1$, by induction hypothesis, we know there is $\alpha \in \dom(\tr)$ such that $\Len{\alpha} = k$ and $\UnResSet(\tr(\alpha)) = \Unknown$. 
For all $\beta\in\dom(\tr)$ with $\beta=\alpha'h$, $\Len{\alpha'} = k$, $\alpha'\ne \alpha$, we have $\UnResSet(\tr(\beta)) \in\ResSet$, because, if $\UnResSet(\tr(\beta)) = \Unknown$, then also $\UnResSet(\tr(\alpha')) = \Unknown$, by \cref{prop:tree-unknown},  and, again by \cref{prop:tree-unknown}, this implies $\alpha' = \alpha$, which is absurd. 
As a consequence, for all such $\beta$, we have that $\subtr{\tr}{\beta}$ is finite, as $\tr$ is well-formed. 

Then, we focus on children of $\alpha$, splitting cases over $\trbr_\tr(\alpha)$. 
If $\trbr_\tr(\alpha) = 0$, then $\alpha$ has no children and so $\tr$ is finite, which is absurd. 
If $h = \trbr_\tr(\alpha) > 0$, then, if $\UnResSet(\tr(\alpha h)) \in \ResSet$, since $\tr$ is a partial evaluation tree, we get $\UnResSet(\tr(\alpha h')) \in \ResSet$ for all $h'\le h$, hence $\tr$ is again finite, which is absurd. 
Therefore, $\UnResSet(\tr(\alpha h)) = \Unknown$, as needed. 
\end{proof} 

The following result shows that well-formed partial evaluation trees are exactly the least upper bounds of strictly increasing sequences of finite partial evaluation trees. 

\begin{proposition} \label{prop:pt-seq}
The following properties hold:
\begin{enumerate}
\item\label{prop:pt-seq:2} for each strictly increasing sequence $(\tr_n)_{n \in \N}$ of finite partial evaluation trees, the least upper bound $\Treelub \tr_n$ is infinite and well-formed;
\item\label{prop:pt-seq:3} for each well-formed infinite partial evaluation tree $\tr$, there is a strictly increasing sequence $(\tr_n)_{n \in \N}$ of finite partial evaluation trees such that $\tr = \Treelub \tr_n$.
\end{enumerate}
\end{proposition}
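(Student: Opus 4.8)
The plan is to handle the two items separately, leaning on the domain-theoretic apparatus already developed: the least-upper-bound construction of \cref{prop:tree-ord} and its closure under partial evaluation trees (\cref{prop:pet-lub}) for \refItem{prop:pt-seq}{2}, and the spine extracted in \cref{prop:pet-wf-unknown} for \refItem{prop:pt-seq}{3}.

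For \refItem{prop:pt-seq}{2}, write $\tr = \Treelub \tr_n$, which is a partial evaluation tree by \cref{prop:pet-lub}; it remains to see it is infinite and well-formed. For infiniteness I argue by contradiction: if $\dom(\tr)$ were finite, then, as the domains $\dom(\tr_n)$ are nested with union $\dom(\tr)$, they stabilise from some index $N$ on, and for $n \ge N$ a strict step $\tr_n \StrTreeOrd \tr_{n+1}$ adds no node, so it must turn some $\Unknown$ into a result; since \cref{def:tree-ord} keeps complete nodes complete, the number of $\Unknown$-labelled nodes (bounded by $|\dom(\tr)|$) would strictly decrease at every step, which is impossible infinitely often. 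For well-formedness I use the characterisation $\tr(\alpha) = \tr_{k_\alpha}(\alpha)$ with $\tr_{k_\alpha} \TreeOrd \tr$ from the proof of \refItem{prop:tree-ord}{2}: if $\ResSet_\Unknown(\tr(\alpha)) \in \ResSet$, then $\tr_{k_\alpha}(\alpha)$ is already complete, so $\subtr{\tr_{k_\alpha}}{\alpha} = \subtr{\tr}{\alpha}$ by \cref{def:tree-ord}, and the latter is finite because $\tr_{k_\alpha}$ is.

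For \refItem{prop:pt-seq}{3}, the crux is producing the finite approximants. By \cref{prop:pet-wf-unknown} together with \cref{prop:tree-unknown}, the incomplete nodes of $\tr$ form a single infinite path $\EList = \alpha_0, \alpha_1, \ldots$, where $\alpha_{n+1}$ is a child of $\alpha_n$ and $\ResSet_\Unknown(\tr(\alpha_n)) = \Unknown$ for all $n$. I then define $\tr_n$ by cutting $\tr$ at $\alpha_n$: its domain is $\{\beta \in \dom(\tr) \mid \alpha_n \text{ is not a proper prefix of } \beta\}$ and $\tr_n(\beta) = \tr(\beta)$ there. In $\tr_n$ the node $\alpha_n$, labelled $\eval{\ConfSet(\tr(\alpha_n))}{\Unknown}$, has lost all children and is thus licensed by the start rule $\startrule{\ConfSet(\tr(\alpha_n))}$, while every other node retains exactly its $\tr$-children (the cut removes only proper descendants of $\alpha_n$); hence $\tr_n$ is a partial evaluation tree. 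It is finite because it comprises the spine nodes $\alpha_0, \ldots, \alpha_n$ and, for each $m < n$, the subtrees rooted at the off-spine children of $\alpha_m$; these children are complete by the single-$\Unknown$-per-level property \refItem{prop:tree-unknown}{2}, so their subtrees are finite by well-formedness, and each $\alpha_m$ has finitely many children since a rule has finitely many premises.

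Finally I verify that $(\tr_n)_{n\in\N}$ is strictly increasing with $\Treelub \tr_n = \tr$. Domain inclusion and strictness follow from $\alpha_n$ being a proper prefix of $\alpha_{n+1}$, with $\alpha_{n+1}$ witnessing $\dom(\tr_n) \subsetneq \dom(\tr_{n+1})$; the order clause of \cref{def:tree-ord} holds because a complete node can never be an ancestor of a spine node (by the propagation \refItem{prop:tree-unknown}{1}), so its subtree is left intact by either cut. The same remark yields $\tr_n \TreeOrd \tr$, so $\tr$ is an upper bound, and since each $\beta \in \dom(\tr)$ enters $\dom(\tr_n)$ as soon as $n \ge \Len{\beta}$ --- with value $\tr_n(\beta) = \tr(\beta)$ --- we get $\bigcup_n \dom(\tr_n) = \dom(\tr)$ and pointwise agreement, whence $\Treelub \tr_n = \tr$ via the formula of \refItem{prop:tree-ord}{2}. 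I expect the main obstacle to be exactly this last block of bookkeeping: checking that the cut tree $\tr_n$ is a genuine partial evaluation tree and that the tree order holds at complete nodes, both of which rest on combining well-formedness with the propagation and single-$\Unknown$-per-level facts of \cref{prop:tree-unknown}.
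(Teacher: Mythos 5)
Your proposal is correct and follows essentially the same route as the paper: item~\ref{prop:pt-seq:2} via \cref{prop:pet-lub} plus the observation that a strict step with unchanged domain must resolve an $\Unknown$ (so the $\Unknown$-count cannot decrease forever), and well-formedness from the $\TreeOrd$ clause at complete nodes; item~\ref{prop:pt-seq:3} by cutting $\tr$ at the unique incomplete node of each depth along the $\Unknown$-spine given by \cref{prop:pet-wf-unknown} and \refItem{prop:tree-unknown}{2}. You merely spell out some bookkeeping the paper leaves implicit (that each cut tree is a genuine finite partial evaluation tree), which is a welcome addition but not a different argument.
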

\begin{proof}
To prove \cref{prop:pt-seq:2}, set $\tr = \Treelub \tr_n$, then, by \cref{prop:pet-lub}, we have that $\tr$ is a partial evaluation tree, hence we have only to check that it is infinite and well-formed. 

Note that $\tr$ is infinite if and only if $\dom(\tr) = \bigcup_{n\in \N} \dom(\tr_n)$ is infinite. 
To prove this, it suffices to observe that, 
for all $n \in \N$, there is $h > n$ such that $\dom(\tr_n) \subset \dom(\tr_h)$, namely, there is $\alpha \in \dom(\tr_h)$ such that $\alpha\notin\dom(\tr_n)$. 
This can be proved by induction on the number of $\Unknown$ in $\tr_n$, denoted by $N_\Unknown(\tr_n)$, which is finite as $\tr_n$ is finite. 
This follows because,  
if $\dom(\tr_n) = \dom(\tr_{n+1})$, we have $N_\Unknown(\tr_{n+1}) < N_\Unknown(\tr_n)$, since $\tr_n \StrTreeOrd \tr_{n+1}$ implies that there is at least one node $\alpha \in \dom(\tr_n)$ such that $\ResSet_\Unknown(\tr_n(\alpha)) = \Unknown$ and $\ResSet_\Unknown(\tr_{n+1}(\alpha)) = \res$, thus we can apply the induction hypothesis. 

To show that $\tr$ is well-formed, first recall that, for all $\alpha \in \dom(\tr)$, 
we have $\tr(\alpha) = \tr_n(\alpha)$ for some $n\in\N$. 
Then, for all $\alpha \in \dom(\tr)$ such that $\tr(\alpha) = \eval{\conf}{\res}$, since $\tr_n \TreeOrd \tr$ and $\tr_n(\alpha) = \tr(\alpha)$, by definition of $\TreeOrd$, we get  $\subtr{\tr_n}{\alpha} = \subtr{\tr}{\alpha}$; 
hence,  $\subtr{\tr}{\alpha}$ is finite and so $\tr$ is well-formed. 

To prove \cref{prop:pt-seq:3}, for all $n \in \N$, consider the partial evaluation tree $\tr_n$ obtained by ``cutting'' $\tr$ at level $n$ and  defined as follows. 
Let $\alpha_n \in \dom(\tr)$ be the node such that $\Len{\alpha_n} = n$ and $\UnResSet(\tr(\alpha_n)) = \Unknown$ (which exists by \cref{prop:pet-wf-unknown} as $\tr$ is well-formed and it is unique thanks to \refItem{prop:tree-unknown}{2}), 
then define $\tr_n(\beta) = \tr(\beta)$ for all $\beta \ne \alpha_n \beta'$, with $\beta' \in \NEList{\NPos}$, and undefined otherwise.  
We have $\tr_n \TreeOrd \tr_{n+1}$, since, by \refItem{prop:tree-unknown}{1}, $\alpha_{n+1} = \alpha_n i$ for some $i \in \NPos$. 
Finally, by construction, we have $\tr = \Treelub \tr_n$, as needed. 
\end{proof}

This important result will be used in the next sections to prove correctness of extended big-step semantics explicitly modelling divergence.

\subsection{The transition relation}
\label{sect:bss-ts}

As already mentioned, finite partial evaluation trees nicely model intermediate states in the evaluation process of a configuration. 
We now make this precise by defining a transition relation $\ruleAr$ between them,
such that, starting from the initial partial evaluation tree $\Rule{}{\eval{\conf}{\Unknown}}$, we derive a sequence where, intuitively, at each step we detail the evaluation. 
In this way, a sequence ending with a complete tree (a tree containing no $\Unknown$)  models succesfully terminating computation, whereas an infinite sequence (tending to an infinite partial evaluation tree) models divergence, and a sequence reaching an incomplete tree which cannot further move models a stuck computation. 

The one-step transition relation $\ruleAr$  is inductively defined by the rules in \cref{fig:bss-?-rules}.  
To make the definition clearer, we explicitly annotate the tree with the rule in $\UnRuleSet$ applied to derive the root of the tree from its children. 
In the figure,  $\#\rho$ denotes the number of premises of $\rho$. 
Finally, $\sim_i$ is the \emph{equality up-to an index} of rules, defined below: 

\begin{definition}\label{def:rule-eq}
Let $\rho = \inlinerule{\judg_1\ldots\judg_n}{\conf}{\res}$ and $\rho' = \inlinerule{\judg'_1\ldots \judg'_m}{\conf'}{\res'}$ be rules in $\RuleSet$.
Then, for any index $i \in 1..\min (n, m)$, define $\rho \sim_i \rho'$ if and only if  
\begin{itemize}
\item $\conf = \conf'$, 
\item for all $k < i$, $\judg_k = \judg'_k$, and
\item $\ConfSet(\judg_i) = \ConfSet(\judg'_i)$. 
\end{itemize}
\end{definition}
In other words, this equivalence models the fact that rules $\rho$ and $\rho'$ represent the same computation until the $i$-th configuration included. 

\begin{figure}
\begin{small}
\begin{math}
\begin{array}{l}
\rn{tr-1}\Space 
\MetaRuleTree{\startrule{\conf}}{\Rule{}{\eval{\conf}{\Unknown}}}{\rho}{\Rule{}{\eval{\conf}{\res}}}{
\#\rho = 0 \\
\ConfSet(\rho) = \conf \\
\ResSet(\rho) = \res
} \\[3ex]
\rn{tr-2}\Space 
\MetaRuleTree{\startrule{\conf}}{\Rule{}{\eval{\conf}{\Unknown}}}{\partialrule{\rho}{1}{\Unknown}}{\Rule{\eval{\conf'}{\Unknown}}{\eval{\conf}{\Unknown}}}{
\#\rho > 0 \\
\ConfSet(\rho) = \conf \\
\ConfSet(\rho, 1) = \conf'
} \\[3ex]
\rn{tr-3}\Space 
\MetaRuleTree{\partialrule{\rho}{i}{\res}}{\Rule{\tr_1\ \ldots\ \tr_i}{\eval{\conf}{\Unknown}}}{\rho'}{\Rule{\tr_1\ \ldots\ \tr_i}{\eval{\conf}{\res'}}}{
\rho' \sim_i \rho  \\
\ResSet(\rho',i) = \res \\
\#\rho' = i \\
\ResSet(\rho') = \res'
}\\[3ex]
\rn{tr-4}\Space 
\MetaRuleTree{\partialrule{\rho}{i}{\res}}{\Rule{\tr_1\ \ldots\ \tr_i}{\eval{\conf}{\Unknown}}}{\partialrule{\rho'}{i+1}{\Unknown}}{\Rule{\tr_1\ \ldots \ \tr_i\ \eval{\conf'}{\Unknown}}{\eval{\conf}{\Unknown}}}{
\rho' \sim_i \rho \\
\ResSet(\rho',i) = \res \\
\ConfSet(\rho', i+1) = \conf' 
} \\[3ex]
\rn{tr-5}\Space 
\MetaRuleTree{\partialrule{\rho}{i}{\Unknown}}{\Rule{\tr_1\ \ldots\ \tr_{i-1}\ \tr_i}{\eval{\conf}{\Unknown}}}{\partialrule{\rho}{i}{\unres}}{\Rule{\tr_1\ \ldots\ \tr_{i-1}\ \tr'_i}{\eval{\conf}{\Unknown}}}{
\tr_i \ruleAr \tr'_i 
} 
\end{array}
\end{math}
\end{small}
\caption{Transition relation between partial evaluation trees.}\label{fig:bss-?-rules}
\end{figure}

Intuitively, each transition step makes ``less incomplete'' the partial evaluation tree. 
Notably, transition rules apply only to nodes labelled by incomplete judgements ($\eval{\conf}{\Unknown}$), whereas subtrees whose root is a complete judgement ($\eval{\conf}{\res}$) cannot move. 
In detail:
\begin{itemize}
\item If the last applied rule is $\startrule{\conf}$, 
we have to find a rule $\rho$ with $\conf$ in the conclusion and, if it has no premises we just return $\ResSet(\rho)$ as result, 
otherwise we  start evaluating the first premise of such rule.
\item If the last applied rule is $\partialrule{\rho}{i}{\res}$, then all subtrees are complete, hence, to continue the evaluation, we have to find another rule $\rho'$, having, for each $k \in 1..i$, as $k$-th premise the root of $\tr_k$. 
Then there are two possibilities: if there is an $i+1$-th premise, we start evaluating it, otherwise, we return $\ResSet(\rho')$ as result.
\item If the last applied rule is a propagation rule $\partialrule{\rho}{i}{\Unknown}$, then we simply propagate the step made by $\tr_i$ (the last subtree), which is necessarily incomplete. 
After the step, $\tr'_i$ may be complete, hence the last applied rule is $\partialrule{\rho}{i}{\unres}$. 
\end{itemize}
In \cref{fig:ex-pev-reduction} we report an example of evaluation of a term according to rules in \cref{fig:bss-ex-lambda}, using partial evaluation trees and $\ruleAr$.  

\begin{figure}
\begin{small}
\begin{math}
\begin{array}{l}
\Infer{ }{\eval{(\Lam{\x}\x)\appop n}{\Unknown}}{}  
\ruleAr \Infer{\eval{\Lam{\x}\x}{\Unknown}}{\eval{(\Lam{\x}\x)\appop n}{\Unknown}}{}
 \ruleAr \Infer{\eval{\Lam{\x}\x}{\Lam{\x}\x}}{\eval{(\Lam{\x}\x)\appop n}{\Unknown}}{} \\[2ex] \BigSpace
 \ruleAr \Infer{\eval{\Lam{\x}\x}{\Lam{\x}\x}\quad \eval n \Unknown}{\eval{(\Lam{\x}\x)\appop n}{\Unknown}}{} 
\ruleAr \Infer{\eval{\Lam{\x}\x}{\Lam{\x}\x}\quad \eval n n}{\eval{(\Lam{\x}\x)\appop n}{\Unknown}}{} \\[2ex] \BigSpace
 \ruleAr \Infer{\eval{\Lam{\x}\x}{\Lam{\x}\x}\quad \eval n n \quad \eval n \Unknown}{\eval{(\Lam{\x}\x)\appop n}{\Unknown}}{} 
\ruleAr \Infer{\eval{\Lam{\x}\x}{\Lam{\x}\x}\quad \eval n n \quad \eval n n}{\eval{(\Lam{\x}\x)\appop n}{\Unknown}}{} \\[2ex] \BigSpace
 \ruleAr \Infer{\eval{\Lam{\x}\x}{\Lam{\x}\x}\quad \eval n n \quad \eval n n}{\eval{(\Lam{\x}\x)\appop n}{n}}{}
\end{array}
\end{math} 
\end{small}
\caption{The evaluation  of $(\Lam{\x}\x) \appop n$ using $\ruleAr$ for rules in \cref{fig:bss-ex-lambda}. } \label{fig:ex-pev-reduction} 
\end{figure}

As mentioned above, the definition of $\ruleAr$ given in \cref{fig:bss-?-rules} nicely models as a transition system an interpreter driven by the big-step rules. 
In other words, the one-step transition relation between finite partial evaluation trees  specifies an algorithm of incremental evaluation.\footnote{Non-determinism can only be caused by intrinsic non-determinism of the big-step semantics, if any.} 
On the other hand, also the partial order relation $\TreeOrd$ (\cf \cref{def:tree-ord}) models a refinement relation between finite partial evauation trees, even if in a more abstract way. 
The next proposition formally proves that these two descriptions agree, namely, $\TreeOrd$ is indeed an abstraction of $\ruleAr$. 

\begin{proposition} \label{prop:tree-ord-arrow}
Let $\tr$ and $\tr'$ be finite partial evaluation trees, then the following hold:
\begin{enumerate}
\item\label{prop:tree-ord-arrow:1} if $\tr\ruleAr \tr'$ then $\tr \StrTreeOrd \tr'$, and
\item\label{prop:tree-ord-arrow:2} if $\tr \TreeOrd \tr'$ then $\tr \ruleArStar \tr'$.
\end{enumerate}
\end{proposition}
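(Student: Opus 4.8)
The plan is to prove the two implications separately: \cref{prop:tree-ord-arrow:1} by induction on the derivation of $\tr \ruleAr \tr'$, that is, by cases on the last transition rule of \cref{fig:bss-?-rules} applied, and \cref{prop:tree-ord-arrow:2} by exhibiting, for any $\tr \StrTreeOrd \tr'$, a single step $\tr \ruleAr \tr''$ with $\tr'' \TreeOrd \tr'$ that strictly decreases a suitable measure, so that the thesis follows by iterating this step.

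For \cref{prop:tree-ord-arrow:1} I would inspect the five rules. Rules \rn{tr-1} and \rn{tr-3} leave $\dom(\tr)$ unchanged and only turn the root from $\eval{\conf}{\Unknown}$ into a complete judgement $\eval{\conf}{\res}$; in \rn{tr-3} the subtrees $\tr_1,\ldots,\tr_i$ are complete, since the last applied rule is $\partialrule{\rho}{i}{\res}$ with $\res\in\ResSet$, hence they are preserved. In both cases $\tr\TreeOrd\tr'$ by \cref{def:tree-ord}, and strictness holds because the roots differ. Rules \rn{tr-2} and \rn{tr-4} only add a new child labelled by an incomplete judgement $\eval{\conf'}{\Unknown}$, leaving all existing nodes untouched, so again $\tr\TreeOrd\tr'$ with $\dom(\tr)\subsetneq\dom(\tr')$. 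The only inductive case is \rn{tr-5}, where $\tr_i\ruleAr\tr'_i$: by induction hypothesis $\tr_i\StrTreeOrd\tr'_i$, and since the root is incomplete and the other subtrees are unchanged, a straightforward monotonicity of the tree constructor for $\TreeOrd$ lifts this to $\tr\StrTreeOrd\tr'$.

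For \cref{prop:tree-ord-arrow:2} I would use the measure $\mu(\tr,\tr') = (|\dom(\tr')|-|\dom(\tr)|) + (c(\tr')-c(\tr))$, where $c(s)$ is the number of nodes of $s$ carrying a complete judgement. Since $\tr\TreeOrd\tr'$ we have $\dom(\tr)\subseteq\dom(\tr')$ and every complete node of $\tr$ stays complete in $\tr'$, so $\mu(\tr,\tr')\ge 0$, with $\mu(\tr,\tr')=0$ forcing $\tr=\tr'$. If $\tr\ne\tr'$, then $\tr$ is incomplete (otherwise $\tr=\tr'$ by \refItem{prop:tree-ord}{1}), so by \cref{prop:tree-unknown} its $\Unknown$-nodes form a path from the root down to a unique deepest node $\alpha$, all of whose children are complete. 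Writing $i=\trbr_\tr(\alpha)$, the children $\tr(\alpha 1),\ldots,\tr(\alpha i)$ are complete, hence equal to $\tr'(\alpha 1),\ldots,\tr'(\alpha i)$, and $\ConfSet(\tr'(\alpha))=\conf$. If $\alpha$ has a further child in $\tr'$, I fire \rn{tr-2} (when $i=0$) or \rn{tr-4} (when $i>0$) at $\alpha$, propagated to the root by \rn{tr-5}, choosing the rule witnessed by the partial evaluation tree $\tr'$ at $\alpha$; this creates the child $\eval{\ConfSet(\tr'(\alpha(i{+}1)))}{\Unknown}$, yielding $\tr''\TreeOrd\tr'$ with $|\dom|$ increased by one and $c$ unchanged. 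Otherwise $\alpha$ has the same children in $\tr'$, and since $\tr\ne\tr'$ one checks, propagating equality of subtrees up the $\Unknown$-path, that $\tr'(\alpha)$ must be complete; then I fire \rn{tr-1} (when $i=0$) or \rn{tr-3} (when $i>0$), again using the rule of $\RuleSet$ witnessed at $\alpha$ in $\tr'$, obtaining $\tr''\TreeOrd\tr'$ with $c$ increased by one and $\dom$ unchanged. In every case $\mu(\tr'',\tr')=\mu(\tr,\tr')-1$, so induction on $\mu$ gives $\tr''\ruleArStar\tr'$ and hence $\tr\ruleArStar\tr'$.

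The main obstacle I expect is the existence-of-first-step argument in \cref{prop:tree-ord-arrow:2}: one must show that the transition rule needed to move toward $\tr'$ is actually available, which requires reading off the right rule $\rho$ or $\rho'$, including the side condition $\rho'\sim_i\rho$ of \cref{def:rule-eq}, from the fact that $\tr'$ is a partial evaluation tree at node $\alpha$, and ruling out the degenerate case where $\tr'(\alpha)$ is still incomplete while the children coincide. Confirming the frontier structure via \cref{prop:tree-unknown}, propagating the step from $\alpha$ up to the root through iterated \rn{tr-5} applications, and verifying that $\mu$ drops by exactly one in each subcase are the remaining bookkeeping points.
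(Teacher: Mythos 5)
Your proof is correct, and for \refItem{prop:tree-ord-arrow}{1} it coincides with the paper's (the paper merely states that this point follows by an easy induction on the definition of $\ruleAr$; your rule-by-rule check is exactly that induction). For \refItem{prop:tree-ord-arrow}{2}, however, you take a genuinely different route. The paper argues by structural induction on the finite tree $\tr'$, working top-down at the root: after reducing to the case where both roots are incomplete, it first closes the gap on the $k$-th subtree by the induction hypothesis (yielding a whole block of steps via $\ruleArStar$), then appends the missing children $k+1,\ldots,i$ one at a time by a nested arithmetic induction on $i-k$, each time rebuilding the new subtree from $\Rule{}{\eval{\conf''}{\Unknown}}$ by the structural induction hypothesis. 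You instead induct on a global measure $\mu$ and, at each stage, locate the unique frontier node $\alpha$ of the $\Unknown$-path (whose existence and uniqueness you correctly extract from \cref{prop:tree-unknown} plus finiteness) and fire exactly one transition there, lifted to the root by iterated \rn{tr-5}. Your version buys a tighter correspondence with the operational reading: each iteration is literally one $\ruleAr$ step, the availability of that step is read off from the rule witnessed by $\tr'$ at $\alpha$ (which is precisely the delicate point, and you handle it, including the $\rho'\sim_i\rho$ side condition and the degenerate case where the children already coincide), and the measure argument makes termination explicit. The paper's version is shorter on the page because it batches many steps inside $\ruleArStar$ at each inductive stage and dispatches the root-completion case once and for all by the preliminary reduction to incomplete roots, but it needs the nested double induction. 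Both are sound; yours is arguably the more informative decomposition, at the cost of the bookkeeping you yourself flag (checking that $\mu\ge 0$, that $\mu=0$ forces $\tr=\tr'$, and that $\mu$ drops by exactly one in each subcase — all of which do go through).
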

\begin{proof}
Point 1 can be easily proved by induction on the definition of $\ruleAr$. 
The proof of point 2 is by induction on $\tr'$, denote by $IH$ the induction hypothesis. 
This is possible as $\tr'$ is finite by hypothesis. 
We can assume $\ResSet_\Unknown(\rt(\tr)) = \Unknown$, since in the other case, by \refItem{prop:tree-ord}{1}, we have $\tr = \tr'$, hence the thesis is trivial. 
We can further assume $\ResSet_\Unknown(\rt(\tr')) = \Unknown$, since, if $\tr' = \Rule{\tr'_1\Space\ldots\Space\tr'_n}{\eval{\conf}{\res}}$, then we always have $\tr'' = \Rule{\tr'_1\Space\ldots\Space\tr'_n}{\eval{\conf}{\Unknown}} \ruleAr \tr'$ and $\tr \TreeOrd \tr''$, 
because $\tr \TreeOrd \tr'$ and we have $\dom(\tr') = \dom(\tr'')$, $\tr'(\alpha) = \tr''(\alpha)$ for all $\alpha\ne\EList$, $\ConfSet(\rt(\tr')) = \ConfSet(\rt(\tr''))$ and $\UnResSet(\rt(\tr)) = \Unknown$. 
Now, if $\tr' = \Rule{}{\eval{\conf}{\Unknown}}$ (base case), then, since $\dom(\tr) \subseteq \dom(\tr')$ and $\ConfSet(\rt(\tr)) = \ConfSet(\rt(\tr'))$ by definition of $\TreeOrd$, we have $\tr = \tr'$, hence the thesis is trivial. 

Let us assume $\tr = \Rule{\tr_1\Space\ldots\tr_k}{\eval{\conf}{\Unknown}}$ and $\tr' = \Rule{\tr'_1\Space\ldots\Space\tr'_i}{\eval{\conf'}{\Unknown}}$, with, necessarily,  $k \le i$ and  $\conf = \conf'$  by definition of $\TreeOrd$. 
We have $\tr_h \TreeOrd \tr'_h$, for all $h\le k$, and 
by \refItem{prop:tree-unknown}{2}, at most $\tr_k$ is incomplete, that is, for all $h < k$, $\tr_h$ is complete, namely, $\ResSet_\Unknown(\rt(\tr_h)) \in \ResSet$, thus, by definition of $\TreeOrd$, we have $\tr_h = \tr'_h$. 
Furthermore, since $\tr_k \TreeOrd \tr'_k$, by $IH$, 
we get $\tr_k \ruleArStar \tr'_k$, hence \mbox{$\tr \ruleArStar \tr'' = \Rule{\tr'_1\Space\ldots\Space\tr'_k}{\eval{\conf}{\Unknown}} \TreeOrd \tr'$.}
We now show, concluding the proof, by arithmetic induction on $i-k$, that $\tr''\ruleArStar \tr'$. 
If $i-k = 0$, hence $i = k$, we have $\tr'' = \tr'$, hence the thesis is immediate. 
If $i-k > 0$, hence $i > k$, setting $\conf'' = \ConfSet(\rt(\tr'_{k+1}))$, by $IH$, we get $\Rule{}{\eval{\conf''}{\Unknown}}\ruleArStar \tr'_{k+1}$; 
moreover, again by \refItem{prop:tree-unknown}{2}, we have $\ResSet_\Unknown(\rt(\tr'_k)) \in \ResSet$, hence we get 
\[ \tr'' \ruleAr \Rule{\tr'_1\Space\ldots\Space\tr'_k\Space\eval{\conf''}{\Unknown}}{\eval{\conf}{\Unknown}} \ruleArStar \Rule{\tr'_1\Space\ldots\Space \tr'_k\Space\tr'_{k+1}}{\eval{\conf}{\Unknown}} =  \hat{\tr} \]
Finally, by arithmetic induction hypothesis, we get $\hat{\tr} \ruleArStar \tr'$, as needed. 
\end{proof}

We conclude this section by showing that the transition relation $\ruleAr$ agrees with the semantic relation (inductively) defined by $\RuleSet$, 
namely, the semantic relation captures exactly successful terminating computations in $\ruleAr$.

\begin{theorem} \label{thm:pev-conservative} 
$\validInd{\RuleSet}{\eval{\conf}{\res}}$ iff $\Rule{}{\eval{\conf}{\Unknown}} \ruleArStar \tr$, where $\rt(\tr) = \eval{\conf}{\res}$.
\end{theorem}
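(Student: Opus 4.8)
The plan is to prove the two implications separately, in each case reducing the claim to results already established about partial evaluation trees, principally \cref{prop:tree-ord-arrow} (relating $\ruleAr$ and $\TreeOrd$) and \cref{cor:complete-tree} (characterising completeness through the root). The key observation used throughout is that the transition relation acts between finite partial evaluation trees, so the tree $\tr$ reached from $\Rule{}{\eval{\conf}{\Unknown}}$ by $\ruleArStar$ is always a \emph{finite} partial evaluation tree.

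For the right-to-left implication, suppose $\Rule{}{\eval{\conf}{\Unknown}} \ruleArStar \tr$ with $\rt(\tr) = \eval{\conf}{\res}$ and $\res \in \ResSet$. Since $\ResSet_\Unknown(\rt(\tr)) = \res \in \ResSet$, \cref{cor:complete-tree} gives that $\tr$ is complete, i.e.\ all its nodes carry complete judgments. I would then observe that a complete partial evaluation tree is in fact an evaluation tree in the original semantics $\Triple{\ConfSet}{\ResSet}{\RuleSet}$: the start and partial rules added in \cref{def:pev-rules} all have $\Unknown$ in the conclusion, hence none of them can label a node of a complete tree, so every node is justified by a rule of $\RuleSet$. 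Being a finite evaluation tree with root $\eval{\conf}{\res}$, $\tr$ yields $\validInd{\RuleSet}{\eval{\conf}{\res}}$ by the remark following \cref{def:eval-tree}.

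For the left-to-right implication, from $\validInd{\RuleSet}{\eval{\conf}{\res}}$ I would first extract a finite evaluation tree $\tr$ of $\Triple{\ConfSet}{\ResSet}{\RuleSet}$ with $\rt(\tr) = \eval{\conf}{\res}$: a well-founded proof tree witnessing derivability in the inference system denoted by $\RuleSet$ is turned into an ordered evaluation tree by picking, at each node, a big-step rule realising the applied inference rule and arranging its subtrees, duplicating where a premise is repeated, in the order prescribed by that rule. Such a $\tr$ is a complete, hence finite, partial evaluation tree. It then suffices to show $\Rule{}{\eval{\conf}{\Unknown}} \ruleArStar \tr$, which I would obtain from \refItem{prop:tree-ord-arrow}{2} once $\Rule{}{\eval{\conf}{\Unknown}} \TreeOrd \tr$ is checked. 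The latter is immediate from \cref{def:tree-ord}: the only node of the left-hand tree is its root, its configuration is $\conf = \ConfSet(\rt(\tr))$, and its result is $\Unknown \notin \ResSet$, so the clause requiring equality of subtrees is vacuous.

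I expect the only genuinely delicate point to be the reconstruction of an ordered evaluation tree from mere derivability in the left-to-right direction, i.e.\ bridging the forgetful identification of $\RuleSet$ with an unordered inference system (premises as sets, without repetitions) and the ordered structure of evaluation trees (premises as sequences). This is routine but requires some care in choosing witnessing rules and duplicating subtrees; everything else follows directly from \cref{prop:tree-ord-arrow} and \cref{cor:complete-tree}, which already carry the main technical weight.
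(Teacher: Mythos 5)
Your proposal is correct and follows essentially the same route as the paper's proof: the right-to-left direction uses \cref{cor:complete-tree} to conclude that $\tr$ is complete and hence an evaluation tree in $\RuleSet$, and the left-to-right direction extracts a finite evaluation tree from derivability, observes $\Rule{}{\eval{\conf}{\Unknown}} \TreeOrd \tr$, and invokes \refItem{prop:tree-ord-arrow}{2}. The only difference is that you spell out the passage from unordered derivability to an ordered evaluation tree, which the paper takes as immediate by definition (with the converse direction noted in the remark after \cref{def:eval-tree}).
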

\begin{proof}
$\validInd{\RuleSet}{\eval{\conf}{\res}}$ implies $\Rule{}{\eval{\conf}{\Unknown}} \ruleArStar \tr$ where $\rt(\tr) = \eval{\conf}{\res}$. 
By definition, if $\validInd{\RuleSet}{\eval{\conf}{\res}}$ holds, then there is a finite evaluation tree $\tr$ in $\RuleSet$ such that $\rt(\tr) = \eval{\conf}{\res}$. 
Since $\RuleSet \subseteq \UnRuleSet$ by \cref{def:pev-rules},  $\tr$ is a (complete)  partial evaluation tree as well;
furthermore, $\Rule{}{\eval{\conf}{\Unknown}} \TreeOrd \tr$, hence, by \refItem{prop:tree-ord-arrow}{2}, we get the thesis. \\
$\Rule{}{\eval{\conf}{\Unknown}} \ruleArStar \tr$ where $\rt(\tr) = \eval{\conf}{\res}$ implies $\validInd{\RuleSet}{\eval{\conf}{\res}}$. 
Since $\rt(\tr) = \eval{\conf}{\res}$, by \cref{cor:complete-tree}, $\tr$ is complete, hence, 
it is an evaluation tree in $\RuleSet$, thus $\validInd{\RuleSet}{\eval{\conf}{\res}}$ holds.
\end{proof}

  \section{Extended big-step semantics: two constructions} 
\label{sect:constructions}

In \cref{sect:bss-pev}, we have just shown that, given a big-step semantics as in \cref{def:bss}, it is possible to define computations in such semantics, by deriving a transition relation which formally models the evaluation algorithm guided by the rules. 
In this way, we are able to distinguish stuck and non-terminating computations as in standard small-step semantics. 
This, in a sense, shows that such a distinction is \emph{implicit} in a big-step semantics. 

In this section, we aim at showing that we can make such distinction explicit directly by a big-step semantics, without introducing any transition relation modelling single computation steps. 
To this end, we  describe two constructions that, starting from a big-step semantics, yield extended ones where non-terminating and stuck computations are  explicitly distinguished. 
These two constructions are in some sense dual to each other, because one explicitly models non-termination, while the other one explicitly models stuckness, and 
they are based on well-know ideas: divergence is modelled by \emph{traces}, as suggested by \citet{LeroyG09}, while stuckness by an additional special result, as described, for instance, by \citet{Pierce02}. 
The novel contribution is that, thanks to the general definition of big-step semantics in \cref{sect:bss-def} (\cf \cref{def:bss}), we can provide \emph{general} constructions working on an arbitrary big-step semantics, rather than discussing specific examples, as it is customary in the literature. 

In the following, we assume a big-step semantics $\Triple{\ConfSet}{\ResSet}{\RuleSet}$.

\subsection{Adding traces}
\label{sect:bss-traces}

The set of \emph{traces} in the big-step semantics is the set $\FIList{\ConfSet}$ of finite and infinite sequences of configurations. 
Finite traces are ranged over by $\bstr$, while infinite traces by $\bstrinf$. 

The judgement of trace semantics has shape $\evaltr{\conf}{\trres}$, where $\trres \in \TrResSet = (\List{\ConfSet}\times\ResSet) + \InfList{\ConfSet}$, that is, 
$\trres$ is either a pair \ple{\bstr,\res} of a finite trace and a result, modelling a converging computation,  or an infinite trace $\bstrinf$, modelling divergence. 
Intuitively, traces $\bstr$ keep track of all the configurations visited during the evaluation, starting from $\conf$ itself. 
To define the trace semantics, we construct, starting from $\RuleSet$, a new set of rules $\TrRuleSet$ as follows: 

\begin{definition}[Rules for traces] \label{def:tr-rules}
The set of rules $\TrRuleSet$ consists of the following rules: 
\begin{description}
\item [finite trace rules] 
For each $\rho = \inlinerule{\judg_1\ldots\judg_n}{\conf}{\res}$ in $\RuleSet$ and finite traces $\bstr_1, \ldots, \bstr_n  \in \List{\ConfSet}$, define rule $\tracerule{\rho}{\bstr_1,\ldots,\bstr_n}$ as 
\[
\Rule{
	\evaltr{\ConfSet(\judg_1)}{\ple{\bstr_1,\ResSet(\judg_1)}} \Space \ldots \Space \evaltr{\ConfSet(\judg_n)}{\ple{\bstr_n,\ResSet(\judg_n)}}
}{ \evaltr{\conf}{\ple{\conf\bstr_1\cdots\bstr_n,\res}}  }
\]
\item [infinite trace rules]  
For each $\rho = \inlinerule{\judg_1 \ldots \judg_n}{\conf}{\res}$ in $\RuleSet$, index $i \in 1..n$, finite traces $\bstr_1, \ldots, \bstr_{i-1} \in \List{\ConfSet}$,  and infinite trace $\bstrinf \in \InfList{\ConfSet}$, define rule $\divtracerule{\rho}{i}{\bstr_1,\ldots,\bstr_{i-1}}{\bstrinf}$ as follows: 
\[
\Rule{
	\evaltr{\ConfSet(\judg_1)}{\ple{\bstr_1,\ResSet(\judg_1)}}
	\Space \ldots \Space
	\evaltr{\ConfSet(\judg_{i-1})}{\ple{\bstr_{i-1},\ResSet(\judg_{i-1})}} 
	\Space  
	\evaltr{\ConfSet(\judg_i)}{\bstrinf}
}{ \evaltr{\conf}{\conf\bstr_1\cdots\bstr_{i-1}\bstrinf} }
\]
\end{description}
\end{definition}
Finite trace rules enrich big-step rules in $\RuleSet$ by finite traces, thus modelling computations converging to a final result. 
On the other hand, infinite trace rules handle non-termination, modelled by infinite traces: 
they propagate divergence, that is, if a configuration in the premises of a rule in $\RuleSet$  diverges, namely, it evaluates to an infinite trace, then the subsequent premises are ignored and the configuration in the conclusion diverges as well.
Note that all these rules have a non-empty trace in the conclusion, hence only non-empty traces are derivable by such rules. 
Finally, observe that the triple \ple{\ConfSet,\TrResSet,\TrRuleSet} is a big-step semantics according to \cref{def:bss}. 

The standard inductive interpretation of big-step rules is not enough in this setting: 
it can only derive judgements of shape $\evaltr{\conf}{\ple{\bstr,\res}}$, because there is no axiom introducing infinite traces, hence they cannot be derived by finite derivations. 
In other words, the inductive interpretation of $\TrRuleSet$ can only capture converging computations. 
To properly handle divergence, we have to interpret rules \emph{coinductively}, namely, allowing both finite and infinite derivations. 
Then, we will write $\validCo{\TrRuleSet}{\evaltr{\conf}{\trres}}$ to say that $\evaltr{\conf}{\trres}$ is coinductively derivable by rules in $\TrRuleSet$. 
It is important to note the following proposition, stating that 
enabling infinite derivations does not affect the semantics of converging computations. 

\begin{lemma}\label{lem:evaltr-conv}
$\validCo{\TrRuleSet}{\evaltr{\conf}{\ple{\bstr,\res}}}$ iff $\validInd{\TrRuleSet}{\evaltr{\conf}{\ple{\bstr,\res}}}$. 
\end{lemma}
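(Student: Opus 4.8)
The plan is to prove the two inclusions separately. The inclusion $\Ind{\TrRuleSet} \subseteq \CoInd{\TrRuleSet}$ is immediate, since every well-founded proof tree is in particular an arbitrary proof tree; this gives the direction $\validInd{\TrRuleSet}{\evaltr{\conf}{\ple{\bstr,\res}}}$ implies $\validCo{\TrRuleSet}{\evaltr{\conf}{\ple{\bstr,\res}}}$ for free. All the content is in the converse: a coinductive derivation of a judgement whose result component is a pair $\ple{\bstr,\res}$ (a \emph{converging} judgement) is necessarily well-founded, hence inductive.

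First I would record the crucial structural fact separating the two families of rules in \cref{def:tr-rules}: every finite trace rule has a conclusion of shape $\evaltr{\conf}{\ple{\bstr,\res}}$, whereas every infinite trace rule has a conclusion of shape $\evaltr{\conf}{\bstrinf}$ with an infinite trace. Consequently, a node labelled by a converging judgement can only be the conclusion of a finite trace rule, and all premises of a finite trace rule are themselves converging judgements. Since in any tree each node sits at a finite depth from the root, even when the tree is infinite, a straightforward induction on this depth shows that, in any proof tree $\tr$ for a converging judgement, \emph{every} node is labelled by a converging judgement and \emph{every} rule instance occurring in $\tr$ is a finite trace rule.

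The key step is then a ranking argument on trace length. In a finite trace rule $\tracerule{\rho}{\bstr_1,\ldots,\bstr_n}$ the conclusion carries the trace $\conf\bstr_1\cdots\bstr_n$, of length $1 + \sum_{j} \Len{\bstr_j}$, while the $k$-th premise carries $\bstr_k$, of length $\Len{\bstr_k}$; since $1 + \sum_{j} \Len{\bstr_j} \ge 1 + \Len{\bstr_k} > \Len{\bstr_k}$, the conclusion trace is strictly longer than each premise trace. Combining this with the previous paragraph, along any branch of $\tr$ the natural number $\Len{\cdot}$ of the trace strictly decreases from each node to its children. As a strictly decreasing sequence of naturals is finite, $\tr$ has no infinite branch and is therefore well-founded, exhibiting a well-founded proof tree for $\evaltr{\conf}{\ple{\bstr,\res}}$, that is, $\validInd{\TrRuleSet}{\evaltr{\conf}{\ple{\bstr,\res}}}$.

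The main obstacle, and the point to state carefully, is the claim that every node of the \emph{a priori} infinite coinductive derivation is converging: this is exactly what rules out the infinite trace rules and makes the trace-length measure applicable. Once it is in place, the ranking argument is routine. A minor point worth noting is the axiom case $n = 0$, whose instances $\Rule{}{\evaltr{\conf}{\ple{\conf,\res}}}$ are simply leaves of $\tr$ terminating the branch, so they cause no difficulty.
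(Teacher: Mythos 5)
Your proof is correct, and it rests on exactly the same key observation as the paper's: a converging conclusion can only come from a finite trace rule, whose premises are again converging and whose traces are strictly shorter than the trace in the conclusion (the productivity of $\TrRuleSet$). The packaging differs slightly: the paper argues by induction on $\Len{\bstr}$, rebuilding a well-founded derivation of $\evaltr{\conf}{\ple{\bstr,\res}}$ from well-founded derivations of the premises, whereas you argue directly on the given coinductive derivation, using trace length as a ranking function to show it has no infinite branch and is therefore already well-founded. Your version in fact establishes the slightly stronger fact that the paper only states informally in the remark immediately after \cref{lem:evaltr-conv} (that any derivation of a converging judgement is necessarily finite), so nothing is missing; the one point worth keeping explicit, as you do, is the preliminary claim that every node of the tree is converging, since that is what excludes the infinite trace rules before the ranking argument can be applied.
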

\begin{proof}
The right-to-left implication is trivial, because the inductive interpretation is always included in the coinductive one. 
The proof of the other direction is by induction on the length of $\bstr$, which is a finite trace. 
By hypothesis, we know that $\evaltr{\conf}{\ple{\bstr,\res}}$ is derivable by a (possibly infinite) derivation and, by \cref{def:tr-rules}, we know that the last applied rule $\rho^\trlb$ has shape $\tracerule{\rho}{\bstr_1,\ldots,\bstr_n}$, hence $\bstr = \conf\bstr_1\cdots\bstr_n$. 
If $\Len{\bstr} = 1$, then $\bstr = \conf$, and so $n = 0$, that is, $\rho = \inlinerule{\EList}{\conf}{\res}$, because only non-empty traces are derivable, hence $\validInd{\TrRuleSet}{\evaltr{\conf}{\ple{\bstr,\res}}}$ holds by $\rho^\trlb$.  
If $\Len{\bstr} > 0$, then, for all $i\in 1..n$, $\Len{\bstr_i} < \Len{\bstr}$, hence, by induction hypothesis, we get $\validInd{\TrRuleSet}{\evaltr{\ConfSet(\rho,i)}{\ple{\bstr_i,\ResSet(\rho,i)}}}$, and so $\validInd{\TrRuleSet}{\evaltr{\conf}{\ple{\bstr,\res}}}$ holds by $\rho^\trlb$.  
\end{proof}
Note that, following the same inductive strategy  as the above proof, 
we can prove that actually  a derivation for a judgement of shape $\evaltr{\conf}{\ple{\bstr,\res}}$ is necessarily finite. 
This is essentially due to the fact that rules are \emph{productive}, meaning that the trace in the conclusion is always strictly larger than those in the premises. 

We show in \cref{fig:bss-ex-lambda-tr} the rules obtained by applying \cref{def:tr-rules}, starting from meta-rule \rn{app} of the example in \cref{fig:bss-ex-lambda} (for the other meta-rules the outcome is analogous).
\begin{figure}[t]
\begin{math}
\begin{array}{c}
\MetaRule{app-tr}
{\evaltr{\e_1}{\ple{\bstr_1,\Lam{\x}\e}}\Space\evaltr{\e_2}{\ple{\bstr_2,\val_2}}\Space\evaltr{\subst{\e}{\val_2}{\x}}{\ple{\bstr,\val}}}
{ \evaltr{\e_1\appop\e_2}{\ple{(\e_1\appop\e_2) \bstr_1\bstr_2\bstr,\val}} }
{} 
\\[3ex]
\MetaRule{div-app-1}
{\evaltr{\e_1}{\bstrinf}}
{ \evaltr{\e_1\appop\e_2}{(\e_1\appop\e_2)\bstrinf} }
{} 
\Space
\MetaRule{div-app-2}
{\evaltr{\e_1}{\ple{\bstr_1,\Lam{\x}\e}}\Space\evaltr{\e_2}{\bstrinf}}
{ \evaltr{\e_1\appop\e_2}{(\e_1\appop\e_2)\bstr_1\bstrinf}}
{} 
\\[3ex]
\MetaRule{div-app-3}
{\evaltr{\e_1}{\ple{\bstr_1,\Lam{\x}\e}}\Space\evaltr{\e_2}{\ple{\bstr_2,\val_2}}\Space\evaltr{\subst{\e}{\val_2}{\x}}{\bstrinf}}
{ \evaltr{\e_1\appop\e_2}{(\e_1\appop\e_2)\bstr_1\bstr_2\bstrinf}}
{} 
\end{array}
\end{math}
\caption{Trace semantics for application}\label{fig:bss-ex-lambda-tr}
\end{figure}

For instance,  set $\omega = \Lam{\x}\x\appop\x$, hence $\Omega = \omega\appop\omega$ (\cf page \pageref{page:Omega}), and $\bstrinf_\Omega$ the infinite trace $\Omega \omega \omega \Omega \omega \omega \ldots$,
it is easy to see that the judgment $\evaltr{\Omega}{\bstr_\Omega}$ can be derived by the following infinite derivation:\footnote{To help the reader, we add  equivalent expressions with a grey background.}\\
\[
\Rule{
  \Rule{}{\evaltr{\omega}{\ple{\omega,\omega}}}\Space
  \Rule{}{\evaltr{\omega}{\ple{\omega,\omega}}}\Space
  \Rule{\vdots}{
    \evaltr{\meta{\Omega = }\subst{(\x\appop\x)}{\omega}{\x}}{\bstrinf_\Omega}
  }
}{\eval{\Omega}{\Omega \omega \omega \bstrinf_\Omega\meta{= \bstrinf_\Omega}}}
\]

Note that \emph{only} the judgment $\evaltr{\Omega}{\bstrinf_\Omega}$ can be derived, that is, the trace semantics of $\Omega$ is uniquely determined to be $\bstrinf_\Omega$, 
since the infinite derivation forces the equation $\bstrinf_\Omega=\Omega \omega \omega \bstrinf_\Omega$.  

To check that the construction in \cref{def:tr-rules} is a correct extension of the given big-step semantics, 
we have to show it is \emph{conservative}, in the sense that it does not affect the semantics of converging computations, as formally stated below. 

\begin{theorem} \label{thm:tr-conservative}
$\validCo{\TrRuleSet}{\evaltr{\conf}{\ple{\bstr,\res}}}$ for some $\bstr \in \List{\ConfSet}$ iff $\validInd{\RuleSet}{\eval{\conf}{\res}}$.
\end{theorem}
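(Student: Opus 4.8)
The plan is to reduce the theorem to \cref{lem:evaltr-conv} together with a routine correspondence between inductive derivations in $\RuleSet$ and inductive derivations in $\TrRuleSet$ whose conclusion carries a finite trace. Since \cref{lem:evaltr-conv} states $\validCo{\TrRuleSet}{\evaltr{\conf}{\ple{\bstr,\res}}}$ iff $\validInd{\TrRuleSet}{\evaltr{\conf}{\ple{\bstr,\res}}}$ for each fixed $\bstr$ and $\res$, quantifying existentially over $\bstr$ on both sides shows that the left-hand side of the theorem is equivalent to $\validInd{\TrRuleSet}{\evaltr{\conf}{\ple{\bstr,\res}}}$ for some $\bstr \in \List{\ConfSet}$. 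Hence it suffices to prove that this inductive statement is equivalent to $\validInd{\RuleSet}{\eval{\conf}{\res}}$, and this I would establish by induction on the (necessarily finite) derivations in each direction.

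For the right-to-left direction, I would start from a finite evaluation tree witnessing $\validInd{\RuleSet}{\eval{\conf}{\res}}$ and proceed by induction on its structure. Writing the last applied rule as $\rho = \inlinerule{\judg_1\ldots\judg_n}{\conf}{\res}$, the induction hypothesis applied to each premise $\judg_i = \eval{\ConfSet(\judg_i)}{\ResSet(\judg_i)}$ supplies a finite trace $\bstr_i$ with $\validInd{\TrRuleSet}{\evaltr{\ConfSet(\judg_i)}{\ple{\bstr_i,\ResSet(\judg_i)}}}$. Applying the finite trace rule $\tracerule{\rho}{\bstr_1,\ldots,\bstr_n}$ from \cref{def:tr-rules} then yields $\validInd{\TrRuleSet}{\evaltr{\conf}{\ple{\conf\bstr_1\cdots\bstr_n,\res}}}$, which is exactly the required conclusion with witness trace $\bstr = \conf\bstr_1\cdots\bstr_n$.

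For the left-to-right direction, the crucial observation is that no infinite trace rule can be the last rule of a derivation whose conclusion carries a finite-trace-pair result: the conclusion of $\divtracerule{\rho}{i}{\bstr_1,\ldots,\bstr_{i-1}}{\bstrinf}$ carries an infinite trace in $\InfList{\ConfSet}$, which cannot coincide with an element of $\List{\ConfSet}\times\ResSet$. Consequently, in a derivation of $\evaltr{\conf}{\ple{\bstr,\res}}$ the last rule must be a finite trace rule $\tracerule{\rho}{\bstr_1,\ldots,\bstr_n}$ arising from some $\rho = \inlinerule{\judg_1\ldots\judg_n}{\conf}{\res} \in \RuleSet$. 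By the induction hypothesis each premise gives $\validInd{\RuleSet}{\judg_i}$, and applying $\rho$ itself yields $\validInd{\RuleSet}{\eval{\conf}{\res}}$.

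I do not expect a serious obstacle here: the substantive content, namely that passing to the coinductive interpretation does not enlarge the set of converging judgements, is already packaged in \cref{lem:evaltr-conv}. The only points requiring care are the bookkeeping of the existential quantifier over $\bstr$ (handled by producing an explicit witness in the right-to-left direction) and the structural exclusion of infinite trace rules from any derivation of a finite-trace judgement, which follows immediately by inspecting the shape of the conclusions in \cref{def:tr-rules}.
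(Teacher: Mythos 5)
Your proposal is correct and follows exactly the paper's route: reduce to the inductive interpretation via \cref{lem:evaltr-conv}, then establish $\validInd{\TrRuleSet}{\evaltr{\conf}{\ple{\bstr,\res}}}$ for some $\bstr$ iff $\validInd{\RuleSet}{\eval{\conf}{\res}}$ by induction on derivations in each direction. The paper dismisses this last step as a ``straightforward induction on rules''; you have merely spelled out the details (the explicit witness trace $\conf\bstr_1\cdots\bstr_n$ and the exclusion of infinite trace rules), which is sound.
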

\begin{proof}
By \cref{lem:evaltr-conv}, we know that $\validCo{\TrRuleSet}{\evaltr{\conf}{\ple{\bstr,\res}}}$ iff $\validInd{\TrRuleSet}{\evaltr{\conf}{\ple{\bstr,\res}}}$. 
Then, the thesis follows by proving $\validInd{\TrRuleSet}{\evaltr{\conf}{\ple{\bstr,\res}}}$, for some $\bstr\in\List{\ConfSet}$, iff $\validInd{\RuleSet}{\eval{\conf}{\res}}$, 
by a straightforward induction on rules. 
\end{proof}

We conclude this subsection by showing a coinductive proof principle associated with trace semantics, which allows us to prove that a predicate on configurations ensures the existence of a non-terminating computation.

\begin{lemma} \label{lem:tr-proof}
Let $\Spec \subseteq \ConfSet$ be a set. 
If, for all $\conf \in \Spec$, there are $\rho = \inlinerule{\judg_1\ldots\judg_n}{\conf}{\res} \in \RuleSet$ and $i  \in 1..n$
such that 
\begin{enumerate}
\item\label{lem:tr-proof:1} for all  $k < i$, $\validInd{\RuleSet}{\judg_k}$, and 
\item\label{lem:tr-proof:2}  $\ConfSet(\judg_i) \in \Spec$ 
\end{enumerate}
then, for all $\conf \in \Spec$, there exists $\bstrinf \in \InfList{\ConfSet}$ such that $\validCo{\TrRuleSet}{\evaltr{\conf}{\bstrinf}}$. 
\end{lemma}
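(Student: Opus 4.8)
The plan is to prove the statement by the coinduction principle for $\TrRuleSet$ (\cref{sect:is}): I would construct explicitly, for each $\conf \in \Spec$, a single infinite trace $\bstrinf_\conf$, and then exhibit a candidate set $X$ containing all the judgements $\evaltr{\conf}{\bstrinf_\conf}$ which is $\TrRuleSet$-consistent, so that the coinduction principle yields $\validCo{\TrRuleSet}{\evaltr{\conf}{\bstrinf_\conf}}$ at once. The key idea is that the infinite derivation of $\evaltr{\conf}{\bstrinf_\conf}$ is built entirely from \emph{infinite trace rules}, whose first $i-1$ premises are discharged by the (converging) derivations already guaranteed by conservativity.

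First I would fix, using the hypothesis, a choice function: for every $\conf \in \Spec$ pick a rule $\rho_\conf = \inlinerule{\judg^\conf_1\ldots\judg^\conf_{n_\conf}}{\conf}{\res_\conf} \in \RuleSet$ and an index $i_\conf \in 1..n_\conf$ with $\validInd{\RuleSet}{\judg^\conf_k}$ for all $k < i_\conf$ and $\ConfSet(\judg^\conf_{i_\conf}) \in \Spec$. For each $k < i_\conf$, since $\validInd{\RuleSet}{\judg^\conf_k}$, by \cref{thm:tr-conservative} and \cref{lem:evaltr-conv} there is a finite trace $\bstr^\conf_k \in \List{\ConfSet}$ with $\validCo{\TrRuleSet}{\evaltr{\ConfSet(\judg^\conf_k)}{\ple{\bstr^\conf_k,\ResSet(\judg^\conf_k)}}}$, and I fix one such trace. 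Writing $\nu(\conf) = \ConfSet(\judg^\conf_{i_\conf}) \in \Spec$ for the ``next'' configuration, iterating $\nu$ from $\conf_0 = \conf$ produces an infinite sequence $\conf_0, \conf_1, \conf_2, \ldots$ of elements of $\Spec$ with $\conf_{j+1} = \nu(\conf_j)$.

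Next I would define the infinite trace by concatenating the finite blocks produced along this sequence,
\[
\bstrinf_\conf = \conf_0\, \bstr^{\conf_0}_1 \cdots \bstr^{\conf_0}_{i_{\conf_0}-1}\ \conf_1\, \bstr^{\conf_1}_1 \cdots \bstr^{\conf_1}_{i_{\conf_1}-1}\ \conf_2 \cdots
\]
which is genuinely infinite since each block starts with the configuration $\conf_j$ and hence has length at least one, so infinitely many elements are prepended. By construction this satisfies the guard equation $\bstrinf_\conf = \conf\,\bstr^\conf_1 \cdots \bstr^\conf_{i_\conf-1}\,\bstrinf_{\nu(\conf)}$, which is exactly the trace in the conclusion of $\divtracerule{\rho_\conf}{i_\conf}{\bstr^\conf_1,\ldots,\bstr^\conf_{i_\conf-1}}{\bstrinf_{\nu(\conf)}}$. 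Then I would set $X = \{\,\evaltr{\conf}{\bstrinf_\conf} \mid \conf \in \Spec\,\} \cup \CoInd{\TrRuleSet}$ and verify $\TrRuleSet$-consistency: elements of $\CoInd{\TrRuleSet}$ are covered since it is the largest consistent set, while for $\evaltr{\conf}{\bstrinf_\conf}$ the above infinite trace rule has it as conclusion, its first $i_\conf-1$ premises $\evaltr{\ConfSet(\judg^\conf_k)}{\ple{\bstr^\conf_k,\ResSet(\judg^\conf_k)}}$ lie in $\CoInd{\TrRuleSet} \subseteq X$, and its last premise $\evaltr{\nu(\conf)}{\bstrinf_{\nu(\conf)}}$ lies in $X$ because $\nu(\conf) \in \Spec$. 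Applying the coinduction principle gives $X \subseteq \CoInd{\TrRuleSet}$, hence the thesis.

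I expect the main obstacle to be the well-definedness of the family $(\bstrinf_\conf)_{\conf\in\Spec}$ and the verification of the guard equation: one must argue that the concatenation of the finite blocks is a bona fide element of $\InfList{\ConfSet}$ (which rests on each block being non-empty, including the degenerate case $i_\conf = 1$ where the block is just $\conf_j$) and that the corecursive unfolding matches precisely the conclusion of the chosen infinite trace rule. Everything else — the choice of $\rho_\conf$ and $i_\conf$, and the passage from $\validInd{\RuleSet}{\judg^\conf_k}$ to a converging trace judgement — is routine given \cref{thm:tr-conservative,lem:evaltr-conv}.
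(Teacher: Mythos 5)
Your proposal is correct and follows essentially the same route as the paper's proof: both fix a choice of rule and index per configuration, extract finite traces for the derivable premises via conservativity, define the candidate infinite traces as the (unique) solution of the guard equations $\bstrinf_\conf = \conf\,\bstr^\conf_1\cdots\bstr^\conf_{i_\conf-1}\,\bstrinf_{\nu(\conf)}$, and conclude by coinduction on a candidate set that adjoins the converging judgements. Your explicit infinite concatenation of blocks is just a concrete description of the solution the paper obtains from the guarded system of equations, so the two arguments coincide.
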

\begin{proof}
First of all, for each $\conf \in \Spec$, we construct a trace $\bstrinf_\conf \in \InfList{\ConfSet}$, which will be the candidate trace to prove the thesis. 
By hypothesis (\cref{lem:tr-proof:1}), there is a rule $\rho_\conf = \inlinerule{\judg_1^\conf\ldots\judg_{n_\conf}^\conf}{\conf}{\res_\conf}$ and an index $i_\conf \in 1..n_\conf$ such that, for all $k < i_\conf$, we have $\validInd{\RuleSet}{\judg_k^\conf}$. 
Therefore, by  \cref{thm:tr-conservative}, there are finite traces $\bstr_1^\conf, \ldots, \bstr_{i_\conf-1}^\conf \in \List{\ConfSet}$ such that  for all $k < i_\conf$ we have $\validCo{\TrRuleSet}{\evaltr{\ConfSet(\judg_k^\conf)}{\ple{\bstr_k^\conf,\ResSet(\judg_k^\conf)}}}$,
and, in addition (\cref{lem:tr-proof:2}), we know that $\ConfSet(\judg_{i_\conf}^\conf) \in \Spec$. 
Then, for each $\conf \in \Spec$, we can introduce a variable $X_\conf$ and define an equation $X_\conf = \conf \cdot \bstr_1^\conf \cdot \cdots \cdot \bstr_{i_\conf - 1}^\conf \cdot X_{\ConfSet(\judg_{i_\conf}^\conf)}$. 
\bfd The set of all such equations is a guarded system of equations, which thus has a unique solution, namely, a  function $\fun{s}{\Spec}{\InfList{\ConfSet}}$ such that, for each $\conf \in \Spec$ we have $s(\conf) = \conf \cdot \bstr_1^\conf \cdot \cdots \cdot \bstr_{i_\conf -1}^\conf \cdot s(\ConfSet(\judg_{i_\conf}^\conf))$.\footnote{This argument can be made more precise using coalgebras \citep{Rutten00}, in particular the fact that $\Spec$ and $\InfList{\ConfSet}$ carry, respectively, a coalgebra and a corecursive algebra \citep{CaprettaUV09} structure for the functor $X\mapsto \List{\ConfSet}\times X$. } \efd 

We now have to prove that, for all $\conf \in \Spec$, we have $\validCo{\TrRuleSet}{\evaltr{\conf}{s(\conf)}}$. 
To this end, 
consider the set $\Spec' = \{\Pair{\conf}{s(\conf)} \mid \conf \in \Spec \} \cup \{\Pair{\conf}{\ple{\bstr,\res}} \mid \validCo{\TrRuleSet}{\evaltr{\conf}{\ple{\bstr,\res}}} \}$, 
then the proof is by coinduction. 
Let $\Pair{\conf}{\trres} \in \Spec'$, then we have to find a rule $\bsrule{\judg_1\ldots\judg_n}{\evaltr{\conf}{\trres}}\in\TrRuleSet$ such that, for all $k \in 1..n$, $\Pair{\ConfSet(\judg_k)}{\TrResSet(\judg_k)} \in \Spec'$. 
We have two cases:
\begin{itemize}
\item if $\trres = s(\conf)$, then the needed rule is $\divtracerule{\rho_\conf}{i_\conf}{\bstr_1^\conf,\ldots,\bstr_{i_\conf-1}^\conf}{s(\ConfSet(\judg_{i_\conf}^\conf))}$, and 
\item if $\trres = \ple{\bstr,\res}$, then $\validCo{\TrRuleSet}{\evaltr{\conf}{\ple{\bstr,\res}}}$, by construction of $\Spec'$,  hence $\evaltr{\conf}{\ple{\bstr,\res}}$ is the conclusion of a finite trace rule, where all premises are still derivable, thus in $\Spec'$ by construction.
\end{itemize}
\end{proof}

\subsection{Adding $\Wrong$} \label{sect:bss-wrong}

A well-known technique \citep{AbadiC96,Pierce02}  
to distinguish between stuck and diverging computations, in a sense ``dual'' to the previous one, is to  add  a special result $\Wrong$, so that $\eval{\conf}{\Wrong}$ means that the evaluation of $\conf$ goes stuck.

In this case, defining a general and ``automatic'' version of the construction, starting from an arbitrary big-step semantics $\Triple{\ConfSet}{\ResSet}{\RuleSet}$, is a non-trivial problem. 
Our solution is based on the equivalence on rules defined in \cref{def:rule-eq} (equality up to an index), which allows us to define when $\Wrong$ can be introduced. 

The extended judgement has shape $\eval{\conf}{\wrres}$ where $\wrres \in \WrResSet = \ResSet + \{\Wrong\}$, that is, it is either a result or an error. 
To define the extended semantics, we construct, starting from $\RuleSet$, an extended set of rules $\WrRuleSet$ as follows: 

\begin{definition}[Rules for $\Wrong$]\label{def:wr-rules}
The set of rules $\WrRuleSet$ is obtained by adding to $\RuleSet$ the following rules: 
\begin{description}
\item [wrong configuration rules]  
For each configuration $\conf\in\ConfSet$ such that there is no rule $\rho$ in $\RuleSet$ with $\ConfSet(\rho) = \conf$, define rule $\wrongax{\conf}$ as $\Rule{}{\eval{\conf}{\Wrong}}$. 

\item [wrong result rules]
For each rule $\rho = \inlinerule{\judg_1\ldots\judg_n}{\conf}{\res}$ in $\RuleSet$, index $i \in 1..n$, and result $\res' \in \ResSet$, 
if, for all rules $\rho'$ such that $\rho \sim_i \rho'$, $\ResSet(\rho', i) \ne \res'$, then define rule $\wrongrule{\rho}{i}{\res'}$ as 
\[
\Rule{
	\judg_1\Space  \ldots \Space \judg_{i-1}
	\Space 
	\eval{\ConfSet(\judg_i)}{\res'}
}{ \eval{\conf}{\Wrong} }
\]

\item [$\Wrong$ propagation rules] These rules propagate $\Wrong$ analogously to those for divergence propagation:
For each rule $\rho = \inlinerule{\judg_1\ldots\judg_n}{\conf}{\res}$ in $\RuleSet$ and index $i \in 1..n$, define rule $\proprule{\rho}{i}{\Wrong}$ as 
\[
\Rule{
	\judg_1\Space \ldots\Space  \judg_{i-1}
	\Space
	\eval{\ConfSet(\judg_i)}{\Wrong}
}{ \eval{\conf}{\Wrong} }
\]
\end{description}
\end{definition}
Wrong configurations rules simply say that, if there is no rule for a given configuration, then we can derive $\Wrong$. 
Wrong result rules, instead, 
derive $\Wrong$ whenever the configuration in a premise of a rule reduces to a result which is not admitted in such (and any equivalent) rule. 
We also call these two kinds of rules $\Wrong$ introduction rules, as they introduce $\Wrong$ in the conclusion without having it in the premises. 
Finally, $\Wrong$ propagation rules say that, if a configuration in a premise of some rule in $\RuleSet$ goes wrong, then
the subsequent premises are ignored and the configuration in the conclusion goes wrong as well.
Note that \ple{\ConfSet,\WrResSet,\WrRuleSet} is a big-step semantics according to \cref{def:bss}. 

In this case, the standard inductive interpretation is enough to get the correct semantics, because, intuitively, an error, if any, occurs after a finite number of steps. 
Then, we write $\validInd{\WrRuleSet}{\eval{\conf}{\wrres}}$ when the judgment $\eval{\conf}{\wrres}$ is inductively derivable by rules in $\WrRuleSet$. 

We show in \cref{fig:bss-ex-lambda-wr} the meta-rules for $\Wrong$ introduction and propagation constructed starting from  those for application and successor in \cref{fig:bss-ex-lambda}. 

\begin{figure}
\begin{small}
\begin{math}
\begin{array}{c}
\MetaRule{wrong-app}{\eval{\e_1}{\natconst}}{\eval{\e_1\appop\e_2}{\Wrong}}{}\BigSpace
\MetaRule{wrong-succ}{\eval{\e}{\Lam{\x}\e}}{\eval{\SuccExp \e}{\Wrong}}{}\\[3ex]
\MetaRule{prop-app-1}{\eval{\e_1}{\Wrong}}{ \eval{\e_1\appop\e_2}{\Wrong}}{}\BigSpace
\MetaRule{prop-app-2}{\eval{\e_1}{\Lam{\x}\e}\Space\eval{\e_2}{\Wrong}}{ \eval{\e_1\appop\e_2}{\Wrong}}{}\\[3ex]
\MetaRule{prop-app-3}{\eval{\e_1}{\Lam{\x}\e}\Space\eval{\e_2}{\val_2}\Space\eval{\subst{\e}{\val_2}{\x}}{\Wrong}}{ \eval{\e_1\appop\e_2}{\Wrong}}{}\BigSpace
\MetaRule{prop-succ}{\eval{\e}{\Wrong}}{\eval{\SuccExp \e}{\Wrong}}{}
\end{array}
\end{math} 
\end{small}
\caption{Semantics with $\Wrong$ for application and successor}\label{fig:bss-ex-lambda-wr}
\end{figure}

For instance, rule \rn{wrong-app} is introduced since in the original semantics there is rule \rn{app} with $\e_1\appop\e_2$ in the conclusion and $\e_1$ in the first premise, but 
there is no equivalent rule (that is, with $\e_1\appop\e_2$ in the conclusion and $\e_1$ in the first premise) such that the result in the first premise is $\natconst$.
Intuitively, this means that $\natconst$ is a wrong result for the evaluation of the first argument of an application. 

Like the previous construction, the $\Wrong$ construction is a correct extension of $\RuleSet$, namely, it is conservative. 

\begin{theorem} \label{thm:wr-conservative}
$\validInd{\WrRuleSet}{\eval{\conf}{\res}}$ iff $\validInd{\RuleSet}{\eval{\conf}{\res}}$.
\end{theorem}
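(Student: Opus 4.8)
The plan is to prove the two implications separately, the converse being immediate and the forward direction proceeding by a straightforward induction that stays within the judgments carrying a genuine result.

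For the right-to-left direction, I would observe that by \cref{def:wr-rules} the set $\WrRuleSet$ is obtained by \emph{adding} rules to $\RuleSet$, so $\RuleSet\subseteq\WrRuleSet$. Hence every well-founded derivation witnessing $\validInd{\RuleSet}{\eval{\conf}{\res}}$ is already a derivation in $\WrRuleSet$, giving $\validInd{\WrRuleSet}{\eval{\conf}{\res}}$.

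For the left-to-right direction, assuming $\validInd{\WrRuleSet}{\eval{\conf}{\res}}$ with $\res\in\ResSet$ (so $\res\ne\Wrong$), I would reason by induction on the finite derivation in $\WrRuleSet$. The key remark is that each of the three kinds of rules introduced by \cref{def:wr-rules} --- wrong configuration rules, wrong result rules, and $\Wrong$ propagation rules --- has $\Wrong$ as the result in its conclusion. Since the conclusion $\eval{\conf}{\res}$ has $\res\ne\Wrong$, the last rule applied cannot be any of these, and must therefore be a rule $\rho=\inlinerule{\judg_1\ldots\judg_n}{\conf}{\res}$ of the original system $\RuleSet$. By \cref{def:bss}, the premises of a rule in $\RuleSet$ are ordinary judgments $\judg_k=\eval{\conf_k}{\res_k}$ with $\res_k\in\ResSet$; each of them is derivable in $\WrRuleSet$ by a sub-derivation whose conclusion again carries a genuine result, so the induction hypothesis applies and yields $\validInd{\RuleSet}{\judg_k}$ for every $k\in 1..n$. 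Applying $\rho$ in $\RuleSet$ then gives $\validInd{\RuleSet}{\eval{\conf}{\res}}$.

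The only point demanding attention --- and the closest thing to an obstacle --- is ensuring the induction hypothesis is available at each premise: this relies precisely on the structural fact that premises of $\RuleSet$-rules never mention $\Wrong$, which keeps the argument confined to judgments with results in $\ResSet$. Note that, in contrast with the trace construction, no coinduction is needed here, reflecting the intuition that an error, if any, manifests itself after finitely many steps, so that the inductive interpretation of $\WrRuleSet$ already captures it.
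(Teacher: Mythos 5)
Your proof is correct and follows essentially the same route as the paper: the right-to-left direction is immediate from $\RuleSet\subseteq\WrRuleSet$, and the left-to-right direction is an induction on the derivation in which the added rules are dismissed because they only conclude judgments of shape $\eval{\conf}{\Wrong}$. Your additional observation that premises of $\RuleSet$-rules never mention $\Wrong$ (so the induction hypothesis always applies) just makes explicit what the paper leaves implicit.
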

\begin{proof}
The right-to-left implication is trivial, as $\RuleSet\subseteq\WrRuleSet$ by \cref{def:wr-rules}. 
The proof of the other direction is by induction on rules in $\WrRuleSet$. 
The only relevant cases are rules in $\RuleSet$, because rules in $\WrRuleSet\setminus \RuleSet$ allow only the derivation of judgements of shape $\eval{\conf}{\Wrong}$.
Hence, the thesis is immediate. 
\end{proof}

\subsection{Correctness of constructions}
\label{sect:bss-correct}

We now prove correctness of the trace and $\Wrong$ constructions, by showing they capture diverging and stuck computations, respectively, as defined by the transition relation $\ruleAr$ introduced in \cref{sect:bss-ts}. 
This provides us a coherence result for our approach. 

First of all, note that both constructions correctly capture converging computations, because, if restricted to such computations, by \cref{thm:tr-conservative,thm:wr-conservative}, the constructions are both equivalent to the original big-step semantics.
Hence, in the following, we focus only on diverging and stuck computations, respectively. 

\paragraph{Correctness of $\TrRuleSet$} 
Given a partial evaluation tree $\tr$, we write $\tr\ruleArInf$ meaning that there is an infinite sequence of $\ruleAr$-steps starting from $\tr$. 
Then, the theorem we want to prove is the following:

\begin{theorem}  \label{thm:eq-tr-pev} 
$\validCo{\TrRuleSet}{\evaltr{\conf}{\bstrinf}}$, for some $\bstrinf\in\InfList{\ConfSet}$, 
iff  $\Rule{}{\eval{\conf}{\Unknown}} \ruleArInf$.
\end{theorem}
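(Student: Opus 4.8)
The plan is to prove both directions by passing through the intermediate notion of a \emph{well-formed infinite partial evaluation tree} with root $\eval{\conf}{\Unknown}$: the results of \cref{sect:bss-pet,sect:bss-ts} bridge $\ruleArInf$ and such trees, while a coinductive argument bridges such trees and infinite-trace derivations in $\TrRuleSet$. Handling the converging side-premises uniformly via the conservativity result \cref{thm:tr-conservative} is the common glue in both directions.

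For the implication from $\Rule{}{\eval{\conf}{\Unknown}} \ruleArInf$ to the trace side, I would first observe that an infinite $\ruleAr$-sequence is, by \refItem{prop:tree-ord-arrow}{1}, a strictly increasing sequence of finite partial evaluation trees; hence by \refItem{prop:pt-seq}{2} its least upper bound $\tr$ is an infinite well-formed partial evaluation tree with $\rt(\tr) = \eval{\conf}{\Unknown}$. By \cref{prop:pet-wf-unknown} together with \refItem{prop:tree-unknown}{2}, $\tr$ contains a unique infinite path $\EList = \alpha_0, \alpha_1, \ldots$ entirely labelled by incomplete judgements $\eval{\conf_n}{\Unknown}$, where at each $\alpha_n$ a big-step rule $\rho_n$ is partially applied, the path continues into its $i_n$-th premise, and the preceding premises are complete, hence finite, subtrees. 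Each such complete subtree is an evaluation tree in $\RuleSet$, so by \cref{thm:tr-conservative} it yields a finite trace for the corresponding premise. I would then define the candidate infinite trace as the unique solution of the guarded system $\bstrinf_{\conf_n} = \conf_n \cdot \bstr_1^n \cdots \bstr_{i_n-1}^n \cdot \bstrinf_{\conf_{n+1}}$ (exactly as in the proof of \cref{lem:tr-proof}), and conclude $\validCo{\TrRuleSet}{\evaltr{\conf}{\bstrinf}}$ by coinduction, applying the infinite-trace rule $\divtracerule{\rho_n}{i_n}{\bstr_1^n,\ldots,\bstr_{i_n-1}^n}{\bstrinf_{\conf_{n+1}}}$ at each spine node while the converging side-premises are discharged by construction of the candidate set.

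For the converse, starting from a coinductive derivation of $\evaltr{\conf}{\bstrinf}$, I would read off its infinite spine: since the conclusion carries an infinite trace, the rule applied at the root must be an infinite-trace rule $\divtracerule{\rho_n}{i_n}{\ldots}{\ldots}$, whose premises of index below $i_n$ are converging (hence finite, by \cref{lem:evaltr-conv}) and whose $i_n$-th premise again carries an infinite trace, giving an infinite spine. I would build a partial evaluation tree $\tr$ by placing $\eval{\conf_n}{\Unknown}$ at the $n$-th spine node, taking as its first $i_n-1$ children the complete finite evaluation trees obtained from the converging premises via \cref{thm:tr-conservative}, and letting the $i_n$-th child continue the spine; each spine node is then justified by the $\Unknown$-propagation rule $\partialrule{\rho_n}{i_n}{\Unknown}$ of \cref{def:pev-rules}, so $\tr$ is a genuine partial evaluation tree, it is infinite with all complete subtrees finite, hence well-formed, and $\rt(\tr) = \eval{\conf}{\Unknown}$. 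By \refItem{prop:pt-seq}{3}, $\tr = \Treelub \tr_n$ for a strictly increasing sequence of finite partial evaluation trees starting from $\tr_0 = \Rule{}{\eval{\conf}{\Unknown}}$, and by \refItem{prop:tree-ord-arrow}{2} each $\tr_n \ruleArStar \tr_{n+1}$ with at least one step by strictness; concatenating these yields $\Rule{}{\eval{\conf}{\Unknown}} \ruleArInf$.

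The routine steps are the applications of the already-established results on partial evaluation trees (\cref{prop:pt-seq,prop:tree-ord-arrow,thm:tr-conservative}). The main obstacle is the precise correspondence between well-formed trees and infinite-trace derivations: in the forward direction one must set up the guarded corecursive definition of $\bstrinf$ and discharge the coinductive step (mirroring \cref{lem:tr-proof}), and in the backward direction one must verify that the tree assembled from the derivation's spine is a valid \emph{and} well-formed partial evaluation tree with the correct root, checking that the $\Unknown$-propagation rules genuinely match the children produced from the converging premises.
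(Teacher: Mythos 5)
Your proposal is correct and follows essentially the same route as the paper: both directions pass through well-formed infinite partial evaluation trees via \refItem{prop:tree-ord-arrow}{1}--\refItem{prop:tree-ord-arrow}{2} and \refItem{prop:pt-seq}{2}--\refItem{prop:pt-seq}{3}, with the tree-to-trace bridge being exactly the paper's \cref{lem:unknown-to-trace} (an application of \cref{lem:tr-proof}) and the trace-to-tree bridge being the paper's \cref{lem:trace-to-unknown}. The only cosmetic difference is that the paper obtains the partial evaluation tree from the infinite trace derivation by a uniform erasure map $\erasetr$ that forgets traces node-by-node, whereas you rebuild it spine-by-spine using \cref{thm:tr-conservative} for the converging premises; both yield the same well-formed tree.
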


To prove this result, we need to relate evaluation trees (a.k.a. derivations) in $\TrRuleSet$ (\cf \cref{def:tr-rules}) to partial evaluation trees in $\RuleSet$ (\cf \cref{def:pet}). 
To this end, we define a function $\fun{u_\Unknown}{\TrResSet}{\UnResSet}$, which essentially forgets traces, as follows: 
$u_\Unknown(\ple{\bstr,\res}) = \res$ and $u_\Unknown(\bstrinf) = \Unknown$. 
We can extend this function to judgements, mapping $\evaltr{\conf}{\trres}$ to $\eval{\conf}{u_\Unknown(\trres)}$, and to rules, mapping 
$\tracerule{\rho}{\bstr_1,\ldots,\bstr_n}$ to $\rho$ and $\divtracerule{\rho}{i}{\bstr_1,\ldots,\bstr_{i-1}}{\bstrinf}$ to $\partialrule{\rho}{i}{\Unknown}$. 
Finally, we get a function 
$\erasetr$ that transforms an evaluation tree $\tr^\trlb$ in $\TrRuleSet$ into a partial evaluation tree, defined by $\erasetr(\tr^\trlb) = u_\Unknown \circ \tr^\trlb$, 
that is, relying on the fact that a tree is a (partial) function, we postcompose $\tr^\trlb$ with $u_\Unknown$; 
in other words, 
this means that  we apply $u_\Unknown$ to all judgements labeling a node in $\tr^\trlb$, thus erasing traces. 
Since $u_\Unknown$ transforms rules in $\TrRuleSet$ into rules in $\UnRuleSet$, $\erasetr(\tr^\trlb)$ is indeed a partial evaluation tree and the following equalities between trees hold: 

\[\begin{split}
\EraseTr{\DecoratedTree{\tracerule{\rho}{\bstr_1,\ldots,\bstr_n}}{\Rule{\tr^\trlb_1\Space\ldots\Space\tr^\trlb_n}{\evaltr{\conf}{\ple{\bstr,\res}}}}} 
    &= \DecoratedTree{\rho}{\Rule{\EraseTr{\tr^\trlb_1}\Space\ldots\Space\EraseTr{\tr^\trlb_n}}{\eval{\conf}{\res}}}\\[2ex]
\EraseTr{\DecoratedTree{\divtracerule{\rho}{i}{\bstr_1,\ldots,\bstr_{i-1}}{\bstrinf}}{\Rule{\tr^\trlb_1\Space\ldots\Space\tr^\trlb_i}{\evaltr{\conf}{\bstrinf'}}}} 
    &= \DecoratedTree{\partialrule{\rho}{i}{\Unknown}}{\Rule{\EraseTr{\tr^\trlb_1}\Space\ldots\Space\EraseTr{\tr^\trlb_i}}{\eval{\conf}{\Unknown}}}
\end{split}\]

Note that, by construction, $\dom(\tr^\trlb) = \dom(\EraseTr{\tr^\trlb})$, hence, 
$\tr^\trlb$ is finite iff  $\EraseTr{\tr^\trlb}$ is finite and   
$\tr^\trlb$ is infinite iff $\EraseTr{\tr^\trlb}$ is infinite. 
Furthermore, since, as we have already observed, $\tr^\trlb$ is finite iff $\rt(\tr^\trlb) = \evaltr{\conf}{\ple{\bstr,\res}}$, we have that 
$\EraseTr{\tr^\trlb}$ is complete iff $\tr^\trlb$ is finite and 
$\EraseTr{\tr^\trlb}$ is well-formed iff $\tr^\trlb$ is infinite (\cf \cref{def:pet-wf}).

\begin{lemma} \label{lem:trace-to-unknown}
If $\validCo{\TrRuleSet}{\evaltr{\conf}{\trres}}$ holds by an infinite evaluation tree $\tr^\trlb$, then there is a sequence $(\tr_n)_{n \in \N}$ such that $\tr_n \ruleAr \tr_{n+1}$ for all $n\in \N$, $\tr_0 = \Rule{}{\eval{\conf}{\Unknown}}$,  and $\Treelub \tr_n = \EraseTr{\tr^\trlb}$.
\end{lemma}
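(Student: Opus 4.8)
The plan is to exploit the machinery already developed: the erasure map $\erasetr$ turns the infinite evaluation tree $\tr^\trlb$ into a partial evaluation tree $\tr = \EraseTr{\tr^\trlb}$, and since $\tr^\trlb$ is infinite, by the observations preceding the lemma we know that $\tr$ is \emph{well-formed} (\cf \cref{def:pet-wf}). The goal is then to produce an infinite $\ruleAr$-sequence starting from $\Rule{}{\eval{\conf}{\Unknown}}$ whose least upper bound is exactly $\tr$. The key point is that I already have, in \refItem{prop:pt-seq}{3}, a way to write any well-formed infinite partial evaluation tree $\tr$ as the least upper bound of a strictly increasing sequence $(\tr_n)_{n\in\N}$ of \emph{finite} partial evaluation trees with $\Treelub\tr_n = \tr$.

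\textbf{Turning the chain into transition steps.}
First I would apply \refItem{prop:pt-seq}{3} to $\tr = \EraseTr{\tr^\trlb}$, obtaining a strictly increasing sequence $(\hat\tr_n)_{n\in\N}$ of finite partial evaluation trees with $\Treelub\hat\tr_n = \tr$. The construction in that proposition actually cuts $\tr$ at level $n$, so $\rt(\hat\tr_n) = \eval{\conf}{\Unknown}$ for every $n$ and, in particular, $\Rule{}{\eval{\conf}{\Unknown}}\TreeOrd\hat\tr_0$. Since consecutive trees satisfy $\hat\tr_n\StrTreeOrd\hat\tr_{n+1}$, I can invoke \refItem{prop:tree-ord-arrow}{2}, which gives $\hat\tr_n\ruleArStar\hat\tr_{n+1}$ for each $n$ (and $\Rule{}{\eval{\conf}{\Unknown}}\ruleArStar\hat\tr_0$). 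Concatenating these finite $\ruleAr$-reduction sequences yields one long sequence of single $\ruleAr$-steps; re-indexing it as $(\tr_n)_{n\in\N}$ produces the required chain with $\tr_n\ruleAr\tr_{n+1}$ and $\tr_0 = \Rule{}{\eval{\conf}{\Unknown}}$.

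\textbf{Verifying the limit and the two subtleties.}
Finally I must check $\Treelub\tr_n = \EraseTr{\tr^\trlb}$. The subsequence obtained at the ``breakpoints'' of the concatenation is exactly $(\hat\tr_n)_n$ (up to including the initial tree), which is cofinal in $(\tr_n)_n$ with respect to $\TreeOrd$, since by \refItem{prop:tree-ord-arrow}{1} every intermediate step only increases the tree along $\StrTreeOrd$. Because least upper bounds are determined by any cofinal subsequence, $\Treelub\tr_n = \Treelub\hat\tr_n = \tr = \EraseTr{\tr^\trlb}$, as needed. The two points that require care, and which I expect to be the main obstacle, are: (i) ensuring the concatenation genuinely produces an \emph{infinite} sequence of steps rather than stabilising—this is guaranteed because $\hat\tr_n\StrTreeOrd\hat\tr_{n+1}$ is strict, so by \refItem{prop:tree-ord-arrow}{1} at least one genuine $\ruleAr$-step occurs between consecutive breakpoints; and (ii) confirming that the finite trees delivered by \refItem{prop:pt-seq}{3} all share the root $\eval{\conf}{\Unknown}$ so that the sequence legitimately starts from $\Rule{}{\eval{\conf}{\Unknown}}$, which follows from the explicit cutting construction there together with \cref{cor:complete-tree}.
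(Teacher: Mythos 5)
Your proposal is correct and follows essentially the same route as the paper's proof: erase the infinite trace tree to a well-formed infinite partial evaluation tree, decompose it via \refItem{prop:pt-seq}{3} into a strictly increasing chain of finite trees rooted at $\eval{\conf}{\Unknown}$, convert each strict inclusion into a non-empty $\ruleAr$-reduction by \refItem{prop:tree-ord-arrow}{2}, and concatenate. Your explicit cofinality argument for $\Treelub\tr_n = \EraseTr{\tr^\trlb}$ just spells out a step the paper leaves implicit.
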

\begin{proof}
Since $\tr^\trlb$ is infinite,  $\EraseTr{\tr^\trlb} = \tr$ is a well-formed infinite partial evaluation treee and, by \refItem{prop:pt-seq}{3}, there is a strictly increasing sequence $(\tr'_n)_{n \in \N}$ of finite partial evaluation trees  such that $\Treelub \tr'_n = \tr$ and $\tr'_0 = \Rule{}{\eval{\conf}{\Unknown}}$. 
By \refItem{prop:tree-ord-arrow}{2},  since for all $n \in \N$ we have $\tr'_n \StrTreeOrd \tr'_{n+1}$, we get $\tr'_n \ruleArStar \tr'_{n+1}$, and, since $\tr'_n \ne \tr'_{n+1}$, this sequence of steps is not empty. 
Hence, we can construct a sequence $(\tr_n)_{n \in \N}$ such that $\tr_0 = \tr'_0 =  \Rule{}{\eval{\conf}{\Unknown}}$, $\tr_n \ruleAr \tr_{n+1}$ and $\Treelub \tr_n = \tr$, as needed. 
\end{proof}

\begin{lemma}\label{lem:unknown-to-trace}
Let $\tr$ be a well-formed infinite partial evaluation tree with $\rt(\tr) = \eval{\conf}{\Unknown}$. 
Then,  $\validCo{\TrRuleSet}{\evaltr{\conf}{\bstrinf}}$ holds  for some $\bstrinf\in\InfList{\ConfSet}$. 
\end{lemma}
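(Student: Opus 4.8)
The plan is to exploit the structural facts about well-formed infinite partial evaluation trees already established in \cref{prop:tree-unknown,prop:pet-wf-unknown}, namely that such a tree has a \emph{unique} infinite path, entirely labelled by incomplete judgements, while all subtrees hanging off this path are complete and, by well-formedness, finite. First I would make this path explicit. By \cref{prop:pet-wf-unknown}, for every level $n$ there is a node whose result is $\Unknown$, and by \refItem{prop:tree-unknown}{2} it is unique; by \refItem{prop:tree-unknown}{1} these nodes form a chain, so there are positions $\alpha_0, \alpha_1, \ldots$ with $\alpha_0 = \EList$ and $\alpha_{k+1} = \alpha_k\, i_k$ for suitable $i_k \in \NPos$, such that $\tr(\alpha_k) = \eval{\conf_k}{\Unknown}$ with $\conf_0 = \conf$. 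Each such node is the conclusion of a $\Unknown$-propagation rule $\partialrule{\rho_k}{i_k}{\Unknown}$ (by \cref{def:pev-rules}, this is the only shape of a non-leaf incomplete node), where $\rho_k = \inlinerule{\judg_1^k\ldots\judg_{n_k}^k}{\conf_k}{\res_k} \in \RuleSet$; its first $i_k-1$ premises $\judg_1^k,\ldots,\judg_{i_k-1}^k$ are complete, and its $i_k$-th premise is exactly $\eval{\conf_{k+1}}{\Unknown}$, the continuation of the path.

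Next I would recover a trace. For each $k$ and each $j < i_k$, the subtree $\subtr{\tr}{\alpha_k j}$ has a complete root, hence is complete (\cref{cor:complete-tree}) and finite (\cref{def:pet-wf}), so it is a finite evaluation tree in $\RuleSet$ and thus $\validInd{\RuleSet}{\judg_j^k}$. By \cref{thm:tr-conservative} there is a finite trace $\bstr_j^k \in \List{\ConfSet}$ with $\validCo{\TrRuleSet}{\evaltr{\ConfSet(\judg_j^k)}{\ple{\bstr_j^k,\ResSet(\judg_j^k)}}}$. I then define the candidate divergent trace exactly as in \cref{lem:tr-proof}: introduce a variable $X_k$ for each node on the path and take the unique solution of the guarded system $X_k = \conf_k\cdot\bstr_1^k\cdots\bstr_{i_k-1}^k\cdot X_{k+1}$, setting $\bstrinf = X_0$. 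Guardedness (each equation prefixes $X_{k+1}$ with the non-empty block starting with $\conf_k$) guarantees a unique solution assigning to every $\conf_k$ an infinite trace $X_k \in \InfList{\ConfSet}$.

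Finally I would prove $\validCo{\TrRuleSet}{\evaltr{\conf}{\bstrinf}}$ by coinduction, considering the set $S = \{\Pair{\conf_k}{X_k} \mid k \in \N\} \cup \{\Pair{\conf'}{\ple{\bstr',\res'}} \mid \validCo{\TrRuleSet}{\evaltr{\conf'}{\ple{\bstr',\res'}}}\}$ and showing it is $\TrRuleSet$-consistent. For a divergent pair $\Pair{\conf_k}{X_k}$ the supporting rule is the infinite trace rule $\divtracerule{\rho_k}{i_k}{\bstr_1^k,\ldots,\bstr_{i_k-1}^k}{X_{k+1}}$, whose premises are the converging judgements $\evaltr{\ConfSet(\judg_j^k)}{\ple{\bstr_j^k,\ResSet(\judg_j^k)}}$ for $j < i_k$ (in $S$ by construction) together with $\evaltr{\conf_{k+1}}{X_{k+1}}$ (in $S$ as the next path pair); for a converging pair the conclusion is obtained by a finite trace rule all of whose premises remain coinductively derivable, hence lie in $S$. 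The coinduction principle then yields $S \subseteq \CoInd{\TrRuleSet}$, and in particular $\validCo{\TrRuleSet}{\evaltr{\conf}{\bstrinf}}$. I expect the main obstacle to be the first step, i.e.\ rigorously extracting the infinite path and verifying that the off-path premises are complete and finite (this is precisely where well-formedness is used); once the path and the finite off-path traces are in hand, the trace construction and the coinductive argument are routine and mirror \cref{lem:tr-proof}.
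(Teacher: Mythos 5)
Your proposal is correct and follows essentially the same route as the paper: the paper simply invokes \cref{lem:tr-proof} on the set $\Spec$ of configurations that are roots of well-formed infinite partial evaluation trees (using $\subtr{\tr}{i}$ as the witness for the next configuration), whereas you unfold that lemma's proof along the unique infinite $\Unknown$-path of the single tree $\tr$. The path extraction, the finiteness/completeness of the off-path subtrees, the guarded system of equations, and the final coinduction are all exactly the ingredients the paper uses, so no further comparison is needed.
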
 
\begin{proof}
The thesis  follows from \cref{lem:tr-proof}, applied to the set $\Spec \subseteq \ConfSet$ defined  as follows: 
$\conf\in\Spec$ iff $\ConfSet(\rt(\tr)) = \conf$, for some infinite well-formed partial evaluation tree $\tr$. 
Let $\conf\in\Spec$, then $\conf = \ConfSet(\rt(\tr))$ and the last applied rule in $\tr$ is $\partialrule{\rho}{i}{\Unknown}$, for some $\rho = \inlinerule{\judg_1\ldots\judg_n}{\conf}{\res}$ in $\RuleSet$. 
Then, we have $\validInd{\RuleSet}{\judg_k}$, for all $k < i$ and $\ConfSet(\judg_i) = \ConfSet(\rt(\subtr{\tr}{i}))$ and $\subtr{\tr}{i}$ is an infinite well-formed partial evaluation tree. 
Therefore, $\ConfSet(\judg_i) \in \Spec$, and so the hypotheses of \cref{lem:tr-proof} are satisfied. 
\end{proof}

\begin{proof}[Proof of \cref{thm:eq-tr-pev}] 
$\validCo{\TrRuleSet}{\evaltr{\conf}{\bstrinf}}$ for some $\bstrinf \in \ConfSet^\omega$ implies $  \Rule{}{\eval{\conf}{\Unknown}} \ruleArInf$. 
Since $\validCo{\TrRuleSet}{\evaltr{\conf}{\bstrinf}}$ holds and $\bstrinf$ is infinite, by (a consequence of) \cref{lem:evaltr-conv}, there is an infinite evaluation tree $\tr^\trlb$ in $\TrRuleSet$ such that $\rt(\tr^\trlb) = \evaltr{\conf}{\bstrinf}$. 
Then, by \cref{lem:trace-to-unknown} we get the thesis.  

$\Rule{}{\eval{\conf}{\Unknown}} \ruleArInf$ implies $ \validCo{\TrRuleSet}{\evaltr{\conf}{\bstrinf}}$ for some $\bstrinf \in \ConfSet^\omega$. 
By definition of $\ruleArInf$, there is an infinite sequence $(\tr_n)_{n \in \N}$ such that $\tr_0 = \Rule{}{\eval{\conf}{\Unknown}}$ and, for all $n \in \N$, $\tr_n \ruleAr \tr_{n+1}$, hence, by \refItem{prop:tree-ord-arrow}{1}, we get $\tr_n \StrTreeOrd \tr_{n+1}$. 
By \refItem{prop:pt-seq}{2}, we have that $\tr = \Treelub \tr_n$ is a well-formed infinite partial evaluation tree, hence we get the thesis by \cref{lem:unknown-to-trace}. 
\end{proof}

\paragraph{Correctness of $\WrRuleSet$} 
We now show that the construction in \cref{sect:bss-wrong} correctly models stuck computation in $\ruleAr$. 

The proof relies on the following lemma. 
We say that a (finite) partial evaluation tree $\tr$  is \emph{irreducible} if there is no $\tr'$ such that $\tr \ruleAr \tr'$, and it is \emph{stuck} if it is irreducible and $\UnResSet(\rt(\tr)) = \Unknown$. 
Note that, by \refItem{prop:tree-ord}{1} and \refItem{prop:tree-ord-arrow}{1}, a complete partial evaluation tree $\tr$  is irreducible. 

\begin{lemma} \label{lem:stuck-to-wrong}
If $\tr$ is a stuck partial evaluation tree with $\rt(\tr) = \eval{\conf}{\Unknown}$, then $\validInd{\WrRuleSet}{\eval{\conf}{\Wrong}}$ holds.
\end{lemma}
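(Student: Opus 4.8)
The plan is to proceed by induction on the (finite) structure of the stuck tree $\tr$, doing a case analysis on the rule of $\UnRuleSet$ applied at its root. Since $\rt(\tr) = \eval{\conf}{\Unknown}$, by \cref{def:pev-rules} this rule is either a start rule $\startrule{\conf}$, when the root has no children, or a partial rule $\partialrule{\rho}{i}{\unres}$ for some $\rho = \inlinerule{\judg_1\ldots\judg_n}{\conf}{\res} \in \RuleSet$, index $i$, and $\unres \in \UnResSet$. The crux of the argument is to read off, from the fact that $\tr$ is irreducible for $\ruleAr$, exactly the side conditions that enable the $\Wrong$-introduction rules of \cref{def:wr-rules}.

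In the base case the root rule is $\startrule{\conf}$ and $\tr$ is the single node $\Rule{}{\eval{\conf}{\Unknown}}$. If there were any rule $\rho$ with $\ConfSet(\rho) = \conf$, then $\tr$ could move by \rn{tr-1} (if $\#\rho = 0$) or by \rn{tr-2} (if $\#\rho > 0$), contradicting irreducibility. Hence no such rule exists, so the wrong configuration rule $\wrongax{\conf}$ belongs to $\WrRuleSet$, and $\validInd{\WrRuleSet}{\eval{\conf}{\Wrong}}$ follows at once.

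Suppose instead the root rule is a partial rule $\partialrule{\rho}{i}{\unres}$, so $\tr = \Rule{\tr_1\Space\ldots\Space\tr_i}{\eval{\conf}{\Unknown}}$ with $\rt(\tr_k) = \judg_k$ for $k < i$ and $\rt(\tr_i) = \eval{\ConfSet(\judg_i)}{\unres}$. By \cref{cor:complete-tree} each $\tr_k$ with $k < i$ is complete, hence an evaluation tree in $\RuleSet$, so $\validInd{\RuleSet}{\judg_k}$ and, by \cref{thm:wr-conservative}, $\validInd{\WrRuleSet}{\judg_k}$. I split on $\unres$. If $\unres = \res' \in \ResSet$, then $\tr_i$ is complete too, and the only transition rules whose source matches $\tr$ are \rn{tr-3} and \rn{tr-4}; both require a rule $\rho' \sim_i \rho$ with $\ResSet(\rho', i) = \res'$, and such a $\rho'$ necessarily has $\#\rho' \ge i$ by \cref{def:rule-eq}, so one of the two would fire. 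Irreducibility thus says no such $\rho'$ exists, which is precisely the premise enabling the wrong result rule $\wrongrule{\rho}{i}{\res'}$; applying it to the roots of the complete subtrees $\tr_1,\ldots,\tr_i$ gives $\validInd{\WrRuleSet}{\eval{\conf}{\Wrong}}$. If instead $\unres = \Unknown$, then the root rule is a $\Unknown$-propagation rule and $\tr_i$ is incomplete; the only transition rule whose source matches $\tr$ is \rn{tr-5}, which merely lifts a step of $\tr_i$, so irreducibility of $\tr$ forces $\tr_i$ to be irreducible, hence stuck. By the induction hypothesis $\validInd{\WrRuleSet}{\eval{\ConfSet(\judg_i)}{\Wrong}}$, and the $\Wrong$-propagation rule $\proprule{\rho}{i}{\Wrong}$ combines this with the derivable $\judg_1,\ldots,\judg_{i-1}$ to yield $\validInd{\WrRuleSet}{\eval{\conf}{\Wrong}}$.

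The main obstacle is the middle subcase ($\unres \in \ResSet$): one must check that the simultaneous non-applicability of \rn{tr-3} and \rn{tr-4} collapses to the single statement that no $\rho' \sim_i \rho$ satisfies $\ResSet(\rho', i) = \res'$, and that this condition coincides verbatim with the side condition under which $\wrongrule{\rho}{i}{\res'}$ is declared in \cref{def:wr-rules}. The remaining cases are routine once the transition rules of \cref{fig:bss-?-rules} are matched against the corresponding rules of $\WrRuleSet$.
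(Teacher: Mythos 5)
Your proof is correct and follows essentially the same route as the paper's: induction on the stuck tree with the same three-way case split on the root rule ($\startrule{\conf}$, $\partialrule{\rho}{i}{\res'}$ with a complete last premise, $\partialrule{\rho}{i}{\Unknown}$), matching irreducibility against the transition clauses to enable $\wrongax{\conf}$, $\wrongrule{\rho}{i}{\res'}$, and $\proprule{\rho}{i}{\Wrong}$ respectively, with \cref{thm:wr-conservative} lifting the complete premises into $\WrRuleSet$. The subtlety you flag about collapsing the non-applicability of \rn{tr-3} and \rn{tr-4} into the single side condition of the wrong result rule is handled correctly.
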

\begin{proof}
The proof is by induction on $\tr$, splitting cases on the last applied rule.
\begin{proofcases}
\item [$\startrule{\conf}$] Since $\tr$ is stuck, by definition of $\ruleAr$ (\cf \cref{fig:bss-?-rules} first and second clauses), there is no rule $\rho \in \RuleSet$ such that $\ConfSet(\rho) = \conf$, hence $\validInd{\WrRuleSet}{\eval{\conf}{\Wrong}}$ holds, by applying $\wrongax{\conf}$.
\item [$\partialrule{\rho}{i}{\res}$] Suppose $\rho = \inlinerule{\judg_1\ldots\judg_n}{\conf}{\res'}$ and $i \in 1..n$, by hypothesis, for all $k < i$, $\subtr{\tr}{k}$ is a complete partial evaluation tree of $\judg_k$, hence we know that $\validInd{\RuleSet}{\judg_k}$ holds. 
Since $\tr$ is stuck, by definition of $\ruleAr$ (\cf \cref{fig:bss-?-rules} third and fourth clauses), there is no rule $\rho' \sim_i \rho$ with $\ResSet(\rho', i) = \res$, hence $\wrongrule{\rho}{i}{\res} \in \WrRuleSet$. 
By \cref{thm:wr-conservative} we get $\validInd{\WrRuleSet}{\judg_k}$, for all $k < i$, hence 
applying $\wrongrule{\rho}{i}{\res}$, we get  $\validInd{\WrRuleSet}{\eval{\conf}{\Wrong}}$. 
\item [$\partialrule{\rho}{i}{\Unknown}$] Suppose $\rho = \inlinerule{\judg_1\ldots\judg_n}{\conf}{\res'}$ and $i \in 1..n$, by hypothesis, for all $k < i$, $\subtr{\tr}{k}$ is a complete partial evaluation tree of $\judg_k$, hence we know that $\validInd{\RuleSet}{\judg_k}$ holds. 
Set $\conf_i = \ConfSet(\rho, i)$, then, since $\tr$ is stuck, by definition of $\ruleAr$ (\cf \cref{fig:bss-?-rules} clause \rn{tr-5}), the subtree $\subtr{\tr}{i}$ is stuck as well and $\rt(\subtr{\tr}{i}) = \eval{\conf_i}{\Unknown}$. 
By \cref{thm:wr-conservative}, we get $\validInd{\WrRuleSet}{\judg_k}$, for all $k < i$, and, 
by induction hypothesis, we get $\validInd{\WrRuleSet}{\eval{\conf_i}{\Wrong}}$, hence, applying rule $\proprule{\rho}{i}{\Wrong}$, we get $\validInd{\WrRuleSet}{\eval{\conf}{\Wrong}}$. 
\qedhere 
\end{proofcases}
\end{proof}

\begin{lemma}\label{lem:wrong-to-stuck}
If $\validInd{\WrRuleSet}{\eval{\conf}{\Wrong}}$, then 
there is a stuck partial evaluation tree $\tr$ with $\rt(\tr) = \eval{\conf}{\Unknown}$. 
\end{lemma}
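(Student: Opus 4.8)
The statement is exactly the converse of \cref{lem:stuck-to-wrong}, so the natural plan is to proceed by induction on the derivation of $\validInd{\WrRuleSet}{\eval{\conf}{\Wrong}}$, splitting cases on the last applied rule. Since the conclusion $\eval{\conf}{\Wrong}$ carries $\Wrong$ as result, this rule must be one of the three kinds from \cref{def:wr-rules} that can conclude with $\Wrong$: a wrong configuration rule $\wrongax{\conf}$, a wrong result rule $\wrongrule{\rho}{i}{\res'}$, or a $\Wrong$ propagation rule $\proprule{\rho}{i}{\Wrong}$. In each case I would exhibit a concrete stuck partial evaluation tree with root $\eval{\conf}{\Unknown}$, essentially inverting the construction of \cref{lem:stuck-to-wrong}, and then verify irreducibility directly against the transition clauses of \cref{fig:bss-?-rules}.

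For the base case $\wrongax{\conf}$, the side condition is that no rule $\rho \in \RuleSet$ has $\ConfSet(\rho) = \conf$. I would take $\tr = \Rule{}{\eval{\conf}{\Unknown}}$, the start tree obtained by $\startrule{\conf}$. Its unique node is the root, whose last applied rule is $\startrule{\conf}$, so the only transition clauses that could fire are \rn{tr-1} and \rn{tr-2}; both require a rule with $\conf$ in the conclusion, which does not exist, hence $\tr$ is irreducible and, since $\UnResSet(\rt(\tr)) = \Unknown$, stuck. For the wrong result case, $\rho = \inlinerule{\judg_1\ldots\judg_n}{\conf}{\res}$, and the premises $\judg_1,\ldots,\judg_{i-1}$ together with $\eval{\ConfSet(\judg_i)}{\res'}$ are all derivable in $\WrRuleSet$ with genuine results; by \cref{thm:wr-conservative} they are derivable in $\RuleSet$, yielding complete finite partial evaluation trees $\tr_1,\ldots,\tr_i$. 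Then $\tr = \Rule{\tr_1\Space\ldots\Space\tr_i}{\eval{\conf}{\Unknown}}$ is a valid partial evaluation tree whose last applied rule is the partial rule $\partialrule{\rho}{i}{\res'}$; as all subtrees are complete, only \rn{tr-3} and \rn{tr-4} could fire, and both demand a rule $\rho' \sim_i \rho$ with $\ResSet(\rho', i) = \res'$, which is precisely what the side condition of $\wrongrule{\rho}{i}{\res'}$ forbids. Hence $\tr$ is stuck.

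For the propagation case $\proprule{\rho}{i}{\Wrong}$, I would apply the induction hypothesis to the subderivation of $\eval{\ConfSet(\judg_i)}{\Wrong}$ to obtain a stuck tree $\tr_i$ with root $\eval{\ConfSet(\judg_i)}{\Unknown}$, and again use \cref{thm:wr-conservative} to get complete trees $\tr_1,\ldots,\tr_{i-1}$ for the preceding result premises. Then $\tr = \Rule{\tr_1\Space\ldots\Space\tr_{i-1}\Space\tr_i}{\eval{\conf}{\Unknown}}$ is a partial evaluation tree whose last applied rule is the $\Unknown$-propagation rule $\partialrule{\rho}{i}{\Unknown}$, so the only applicable clause is \rn{tr-5}, which fires exactly when $\tr_i \ruleAr \tr'_i$ for some $\tr'_i$; since $\tr_i$ is stuck this is impossible, so $\tr$ is irreducible, and as $\UnResSet(\rt(\tr)) = \Unknown$ it is stuck.

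The main obstacle, and the only non-routine part, is the irreducibility verification: in each case one must match the side condition of the $\Wrong$ rule to the premises of the transition clauses \rn{tr-1}--\rn{tr-5} and check that none can fire. Since $\ruleAr$ is driven from the root by the inductive definition in \cref{fig:bss-?-rules}, it suffices to inspect the clause determined by the last applied rule of the constructed tree; the propagation case is the only one where this reduces, via \rn{tr-5}, to irreducibility of the subtree $\tr_i$, which is exactly what the induction hypothesis supplies.
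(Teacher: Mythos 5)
Your proposal is correct and follows essentially the same route as the paper's proof: induction on the derivation in $\WrRuleSet$ restricted to the three rule shapes concluding with $\Wrong$, building the stuck tree via $\startrule{\conf}$, $\partialrule{\rho}{i}{\res'}$, or $\partialrule{\rho}{i}{\Unknown}$ respectively, using \cref{thm:wr-conservative} for the complete subtrees and the side conditions of \cref{def:wr-rules} to rule out the applicable transition clauses. Your explicit matching of each case against the clauses of \cref{fig:bss-?-rules} is slightly more detailed than the paper's appeal to ``the definition of $\ruleAr$'', but the argument is the same.
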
 
\begin{proof}
The proof is by induction on rules in $\WrRuleSet$.
It is enough to consider only rules with $\Wrong$ in the conclusion, hence we have the following three cases: 
\begin{proofcases}
\item [\wrongax{\conf}] By \cref{def:wr-rules}, there is no rule $\rho \in \RuleSet$ such that $\ConfSet(\rho) = \conf$, thus $\Rule{}{\eval{\conf}{\Unknown}}$ is stuck.
\item [\wrongrule{\rho}{i}{\res}] By \cref{def:wr-rules}, assuming $\rho \equiv \inlinerule{\judg_1\ldots\judg_n}{\conf}{\res'}$, there is no rule $\rho' \sim_i \rho$ such that $\ResSet(\rho', i) = \res$;
then, by \cref{thm:wr-conservative}, for all $k \le i$, $\validInd{\RuleSet}{\judg_k}$ holds, hence there is a finite and complete partial evaluation tree $\tr_k$ with $\rt(\tr_k) = \judg_k$.
Therefore, applying rule \partialrule{\rho}{i}{\res} to $\tr_1,\ldots,\tr_i$, we get a partial evaluation tree, which is stuck, by definition of $\ruleAr$. 
\item [\proprule{\rho}{i}{\Wrong}] Suppose $\rho = \inlinerule{\judg_1\ldots\judg_n}{\conf}{\res}$ and $\conf_i = \ConfSet(\judg_i)$, then, by induction hypothesis, we get that there is a stuck tree $\tr'$ such that $\rt(\tr') = \eval{\conf_i}{\Unknown}$; 
then, by \cref{thm:wr-conservative}, for all $k<i$, $\validInd{\RuleSet}{\judg_k}$ holds, hence there is a finite and complete partial evaluation tree $\tr_k$ with $\rt(\tr_k) = \judg_k$. 
Therefore, applying \partialrule{\rho}{i}{\Unknown} to $\tr_1, \ldots, \tr_{i-1}, \tr'$, we get a stuck tree. 
\qedhere
\end{proofcases}
\end{proof}

\begin{theorem}   \label{thm:eq-wrong-pev}
$\validInd{\WrRuleSet}{\eval{\conf}{\Wrong}}$
iff $\Rule{}{\eval{\conf}{\Unknown}}\ruleArStar \tr$, where $\tr$ is stuck.
\end{theorem}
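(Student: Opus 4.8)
The plan is to derive the theorem essentially as a corollary of the two preceding lemmas, \cref{lem:stuck-to-wrong,lem:wrong-to-stuck}, which already relate $\validInd{\WrRuleSet}{\eval{\conf}{\Wrong}}$ to the \emph{existence} of a stuck partial evaluation tree rooted at $\eval{\conf}{\Unknown}$. The only genuinely new ingredient is the passage between this existence statement and the \emph{reachability} statement $\Rule{}{\eval{\conf}{\Unknown}} \ruleArStar \tr$ appearing in the theorem, which I would supply using \cref{prop:tree-ord-arrow}.

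For the left-to-right direction I would assume $\validInd{\WrRuleSet}{\eval{\conf}{\Wrong}}$ and invoke \cref{lem:wrong-to-stuck} to obtain a stuck partial evaluation tree $\tr$ with $\rt(\tr) = \eval{\conf}{\Unknown}$. To upgrade this to reachability from the initial tree, I would check that $\Rule{}{\eval{\conf}{\Unknown}} \TreeOrd \tr$: the initial tree has domain $\{\EList\}$, its root carries the same configuration $\conf$ as $\rt(\tr)$, and since its root result is $\Unknown \notin \ResSet$ the subtree-equality clause of \cref{def:tree-ord} is vacuous. Then \refItem{prop:tree-ord-arrow}{2} yields $\Rule{}{\eval{\conf}{\Unknown}} \ruleArStar \tr$ with $\tr$ stuck, as required.

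Conversely, assuming $\Rule{}{\eval{\conf}{\Unknown}}\ruleArStar \tr$ with $\tr$ stuck, I would first note that the configuration at the root is invariant along the reduction sequence: each step satisfies $\StrTreeOrd$ by \refItem{prop:tree-ord-arrow}{1}, and $\TreeOrd$ preserves the configuration at every node by \cref{def:tree-ord}. Hence $\rt(\tr)$ still carries configuration $\conf$; moreover, being stuck it is incomplete, so $\ResSet_\Unknown(\rt(\tr)) = \Unknown$ and therefore $\rt(\tr) = \eval{\conf}{\Unknown}$. Applying \cref{lem:stuck-to-wrong} then gives $\validInd{\WrRuleSet}{\eval{\conf}{\Wrong}}$, closing the equivalence.

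The substance of the result is carried entirely by the two lemmas, so this last step is light. I expect the only point demanding a moment's care to be the configuration-invariance observation in the right-to-left direction, since without it one cannot conclude that a stuck tree reached from $\Rule{}{\eval{\conf}{\Unknown}}$ is actually rooted at $\eval{\conf}{\Unknown}$. Once that invariant is recorded, both implications follow immediately from the lemmas together with \cref{prop:tree-ord-arrow}.
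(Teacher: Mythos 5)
Your proposal is correct and follows essentially the same route as the paper: both directions are reduced to \cref{lem:wrong-to-stuck,lem:stuck-to-wrong}, with \refItem{prop:tree-ord-arrow}{2} supplying reachability from the initial tree via the trivially verified $\Rule{}{\eval{\conf}{\Unknown}} \TreeOrd \tr$. The configuration-invariance observation you record for the right-to-left direction is a detail the paper leaves implicit (it simply asserts $\rt(\tr) = \eval{\conf}{\Unknown}$ ``by hypothesis''), so making it explicit is a harmless and slightly more careful rendering of the same argument.
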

\begin{proof}
$\validInd{\WrRuleSet}{\eval{\conf}{\Wrong}}$ implies $ \Rule{}{\eval{\conf}{\Unknown}}\ruleArStar \tr$ where $\tr$ is stuck. 
By \cref{lem:wrong-to-stuck} we get a stuck partial evaluation tree $\tr$ with $\rt(\tr) = \eval{\conf}{\Unknown}$, 
hence the thesis follows by \refItem{prop:tree-ord-arrow}{2}, as we trivially have $\Rule{}{\eval{\conf}{\Unknown}} \TreeOrd \tr$. 

$\Rule{}{\eval{\conf}{\Unknown}}\ruleArStar \tr$ where $\tr$ is stuck implies $\validInd{\WrRuleSet}{\eval{\conf}{\Wrong}}$. 
It follows immediately from \cref{lem:stuck-to-wrong}, since $\rt(\tr) = \eval{\conf}{\Unknown}$ by hypothesis. 
\end{proof}

  \section{Divergence by coaxioms}
\label{sect:bss-div} 

As we have described in \cref{sect:bss-traces}, traces allow us to explicitly model divergence, provided that we interpret rules coinductively: a configuration diverges if it evaluates to an infinite trace. 
However, the resulting semantics is somewhat redundant: traces keep track of all configurations visited during the evaluation, while we are just interested in whether there is a final result or non-termination, and a configuration may evaluate to many different infinite traces, hence divergence is modelled in many ways. 
In this section we show how coaxioms (\cf \cref{def:cois} in \cref{sect:is}) can be succesfully adopted to achieve a more abstract model of divergence, removing this redundancy.  
Basically, we present a systematic definition of the approach discussed by \citet{AnconaDZ@oopsla17}. 

The key idea is to regard divergence just as a special result $\divres$, that, like infinite traces (\cf \cref{def:tr-rules}) and $\Wrong$ (\cf \cref{sect:bss-wrong}), can only be propagated by big-step rules. 
To this end, we define yet another construction, extending a given big-step semantics. 

Let us assume a big-step semantics \ple{\ConfSet,\ResSet,\RuleSet}. 
Then, the extended judgement has shape $\eval{\conf}{\infres}$ where $\infres \in \InfResSet = \ResSet + \{\divres\}$, that is, it is either a result or divergence. 
To define the extended semantics, we construct, starting from $\RuleSet$, a new set of rules $\InfRuleSet$ as follows: 

\begin{definition}[Rules for divergence]\label{def:div-rules}
The set of rules $\InfRuleSet$ is obtained by adding to $\RuleSet$ the following rules: 
\begin{description}
\item [divergence propagation rules] 
For each rule $\rho = \inlinerule{\judg_1\ldots\judg_n}{\conf}{\res}$ in $\RuleSet$ and index $i\in 1..n$, define rule $\proprule{\rho}{i}{\divres}$ as 
\[
\Rule{
  \judg_1\Space\ldots\Space\judg_{i-1} \Space \eval{\ConfSet(\judg_i)}{\divres} 
}{ \eval{\conf}{\divres} }
\]
\end{description}
\end{definition}
These additional rules 
propagate divergence, that is, if a configuration in the premises of a rule in $\RuleSet$  diverges, then the subsequent premises are ignored and the configuration in the conclusion diverges as well.
This is very similar to infinite trace rules, but here we do not need to construct traces to represent divergence. 
Note that the triple \ple{\ConfSet,\InfResSet,\InfRuleSet} is a big-step semantics according to \cref{def:bss}. 

Now the question is: how do we interpret such rules?
The standard inductive interpretation of big-step rules, as for trace semantics,  is not enough in this setting, since there is no axiom introducing $\divres$, 
hence it cannot be derived by finite derivations. 
In other words, the inductive interpretation of $\InfRuleSet$ can only capture converging computations, hence it is equivalent to the inductive interpretation of $\RuleSet$. 
On the other hand, differently from trace semantics, even the coinductive interpretation cannot provide the expected semantics: it allows the derivation of too many judgements. 
For instance, in \cref{fig:bss-ex-lambda-div}, we report the divergence propagation rules obtained starting from meta-rule \rn{app} of the example in \cref{fig:bss-ex-lambda} (for other meta-rules the outcome is analogous); 
then, using these rules (and the original ones in \cref{fig:bss-ex-lambda}), we can build the following infinite derivation for $\Omega$, which is correct for any $\infres\in\InfResSet$. 
\[
\Rule{
  \Rule{}{\eval{\omega}{\omega}}\Space
  \Rule{}{\eval{\omega}{\omega}}\Space
  \Rule{\vdots}{
    \eval{\meta{\Omega = }\subst{(\x\appop\x)}{\omega}{\x}}{\infres}
  }
}{\eval{\Omega}{\infres} }
\]

\begin{figure}
\begin{math}
\begin{array}{c}
\MetaRule{div-app-1}
{ \eval{\e_1}{\divres} }
{ \eval{\e_1\appop\e_2}{\divres} }{}
\BigSpace 
\MetaRule{div-app-2}
{ \eval{\e_1}{\Lam{\x}\e} \Space \eval{\e_2}{\divres} }
{ \eval{\e_1\appop\e_2}{\divres} }{}
\\[3ex]
\MetaRule{div-app-3}
{ \eval{\e_1}{\Lam{\x}\e} \Space \eval{\e_2}{\val_2} \Space \eval{\subst{\e}{\val_2}{\x}}{\divres} } 
{ \eval{\e_1\appop\e_2}{\divres} }{}
\end{array}
\end{math}
\caption{Divergence propagation rules for application}\label{fig:bss-ex-lambda-div}
\end{figure}

Intuitively, we would like to allow infinite derivations only to derive divergence, namely, judgments of shape $\eval{\conf}{\divres}$. 
Inference systems with corules are precisely the tool enabling this kind of refinement. 
That is, in addition to divergence propagation rules, we can add appropriate corules $\coRuleSet$ for divergence, as defined below. 

\begin{definition}[Coaxioms for divergence] \label{def:co-div}
The set of corules $\coRuleSet$ consists of the following coaxioms: 
\begin{description}
\item [coaxioms for divergence]
for each configuration $\conf\in\ConfSet$, define coaxiom $\divcorule{\conf}$ as $\CoAxiom{\eval{\conf}{\divres}}$. 
\end{description}
\end{definition}

As described in \cref{sect:is}, coaxioms impose additional conditions on infinite derivations to be considered correct: 
a judgement $\eval{\conf}{\infres}$ is derivable in \ple{\InfRuleSet,\coRuleSet} iff it has an arbitrary (finite or infinite) derivation in $\InfRuleSet$,  whose nodes all have a finite derivation in $\InfRuleSet\cup\coRuleSet$, that is, using both rules and corules. 
We will write $\validFCo{\InfRuleSet}{\coRuleSet}{\eval{\conf}{\infres}}$ when $\eval{\conf}{\infres}$ is derivable in \ple{\InfRuleSet,\coRuleSet}. 

In the above example, $\validFCo{\InfRuleSet}{\coRuleSet}{\eval{\Omega}{\infres}}$ holds iff $\infres = \infty$, because $\eval{\Omega}{\res}$ has no finite derivation in $\InfRuleSet\cup\coRuleSet$, for any $\res\in\ResSet$. 
\bfd In the case of the trace construction (\cf \cref{sect:bss-traces}), coaxioms are not needed as rules are \emph{productive}, because the trace in the conclusion is always strictly larger than those in the premises, see \cref{def:tr-rules}. \FDComm{tentativo dispiegazione come suggerito da Elena} \efd \EZComm{OK}

To check that the construction in \cref{def:div-rules} and \cref{def:co-div}  is a correct extension of the given big-step semantics, 
as for trace semantics, we have to show it is conservative, in the sense that it does not affect the semantics of converging computations, as formally stated below. 

\begin{theorem}\label{thm:div-conservative}
$\validFCo{\InfRuleSet}{\coRuleSet}{\eval{\conf}{\res}}$ iff $\validInd{\RuleSet}{\eval{\conf}{\res}}$. 
\end{theorem}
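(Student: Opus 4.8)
The plan is to prove the two implications separately, exploiting the single structural fact underlying the whole construction: the special result $\divres$ is never produced by a rule of $\RuleSet$, and the only rules or corules that mention $\divres$ at all---the divergence propagation rules $\proprule{\rho}{i}{\divres}$ of \cref{def:div-rules} and the coaxioms $\divcorule{\conf}$ of \cref{def:co-div}---all have a conclusion of the form $\eval{\cdot}{\divres}$. Consequently, any rule of $\InfRuleSet\cup\coRuleSet$ whose conclusion is a \emph{genuine}-result judgement $\eval{\conf}{\res}$ with $\res\in\ResSet$ is necessarily a rule of $\RuleSet$, and all its premises are themselves genuine-result judgements. I would record this as the structural lemma driving both directions.

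For the implication $\validInd{\RuleSet}{\eval{\conf}{\res}}\Rightarrow\validFCo{\InfRuleSet}{\coRuleSet}{\eval{\conf}{\res}}$ I would apply bounded coinduction (\cref{thm:bcoind}) to the set $X=\Ind{\RuleSet}$. Since $\RuleSet\subseteq\InfRuleSet\cup\coRuleSet$, we have $X=\Ind{\RuleSet}\subseteq\Ind{\InfRuleSet\cup\coRuleSet}$, which gives the boundedness condition. Moreover $\Ind{\RuleSet}$ is $\RuleSet$-consistent by the set-theoretic characterisation of the inductive interpretation, and hence $\InfRuleSet$-consistent because $\RuleSet\subseteq\InfRuleSet$; this is the consistency condition. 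Bounded coinduction then yields $X\subseteq\FlexCo{\InfRuleSet}{\coRuleSet}$, which is exactly the desired implication, noting that every $\judg\in X$ has genuine-result shape so no judgement involving $\divres$ is at stake.

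For the converse I would first use the characterisation of $\FlexCo{\InfRuleSet}{\coRuleSet}$ as the largest $\InfRuleSet$-consistent subset of $\Ind{\InfRuleSet\cup\coRuleSet}$: in particular $\FlexCo{\InfRuleSet}{\coRuleSet}\subseteq\Ind{\InfRuleSet\cup\coRuleSet}$, so from $\validFCo{\InfRuleSet}{\coRuleSet}{\eval{\conf}{\res}}$ I obtain a \emph{finite} proof tree of $\eval{\conf}{\res}$ in $\InfRuleSet\cup\coRuleSet$. I would then prove, by induction on this finite tree, that $\validInd{\RuleSet}{\eval{\conf}{\res}}$: since $\res\in\ResSet$, the structural lemma forces the root rule to lie in $\RuleSet$ with all premises of genuine-result shape, each derived by a strictly smaller finite tree in $\InfRuleSet\cup\coRuleSet$, so the induction hypothesis applies and re-applying the root rule (now read in $\RuleSet$) closes the case.

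The main obstacle---and the conceptual heart of the theorem---is precisely this last induction, which only goes through because the inductive bound built into $\FlexCo{\InfRuleSet}{\coRuleSet}$ lets me extract a \emph{finite} derivation. Plain coinduction would fail here: as the $\Omega$ example before \cref{def:co-div} shows, the coinductive interpretation of $\InfRuleSet$ admits infinite derivations of $\eval{\Omega}{\res}$ for arbitrary $\res\in\ResSet$, so conservativity would break. The corules $\coRuleSet$ are exactly what rule such spurious infinite derivations out, and the induction above is where that fact is cashed in.
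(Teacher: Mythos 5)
Your proof is correct and follows essentially the same route as the paper's: the substantive (left-to-right) direction is handled identically, by using the inductive bound built into $\FlexCo{\InfRuleSet}{\coRuleSet}$ to extract a finite derivation in $\InfRuleSet\cup\coRuleSet$ and then inducting on it, observing that any rule whose conclusion has shape $\eval{\conf}{\res}$ with $\res\in\ResSet$ must already lie in $\RuleSet$ (coaxioms and divergence-propagation rules all conclude $\eval{\conf}{\divres}$). The only cosmetic difference is in the easy direction, where the paper just notes that a finite $\RuleSet$-derivation is itself a witness since $\RuleSet\subseteq\InfRuleSet\subseteq\InfRuleSet\cup\coRuleSet$, whereas you invoke bounded coinduction on $\Ind{\RuleSet}$ --- both are valid.
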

\begin{proof}
The right-to-left implication is trivial as $\RuleSet\subseteq\InfRuleSet$ by \cref{def:div-rules}. 
To get the other direction, note that if $\validFCo{\InfRuleSet}{\coRuleSet}{\eval{\conf}{\res}}$ then we have $\validInd{\InfRuleSet\cup\coRuleSet}{\eval{\conf}{\res}}$.
Hence, we prove by induction on rules in $\InfRuleSet\cup\coRuleSet$ that, if $\validInd{\InfRuleSet\cup\coRuleSet}{\eval{\conf}{\res}}$ then $\validInd{\RuleSet}{\eval{\conf}{\res}}$. 
The cases of coaxiom $\divcorule{\conf}$ and divergence propagation $\proprule{\rho}{i}{\divres}$ are both empty, as the conclusion of such rules has shape $\eval{\conf}{\divres}$. 
The only relevant case is that of a rule $\rho \in \RuleSet$, for which the thesis follows immediately. 
\end{proof}

Inference systems with corules come with the bounded coinduction principle (\cf \cref{thm:bcoind}). 
Thanks to such principle, we can define a coinductive proof principle, 
which allows us to prove that a predicate on configurations ensures the existence of a non-terminating computation.

\begin{lemma} \label{lem:div-proof}
Let $\Spec \subseteq \ConfSet$ be a set. 
If, for all $\conf \in \Spec$, there are $\rho = \inlinerule{\judg_1\ldots\judg_n}{\conf}{\res}$ in $\RuleSet$ and $i  \in 1..n$
such that 
\begin{enumerate}
\item\label{lem:div-proof:1} for all  $k < i$, $\validInd{\RuleSet}{\judg_k}$, and 
\item\label{lem:div-proof:2}  $\ConfSet(\judg_i) \in \Spec$ 
\end{enumerate}
then, for all $\conf \in \Spec$, $\validFCo{\InfRuleSet}{\coRuleSet}{\eval{\conf}{\divres}}$. 
\end{lemma}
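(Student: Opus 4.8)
The plan is to apply the bounded coinduction principle (\cref{thm:bcoind}) to a suitably chosen set $X$ of judgements, in close analogy with the proof of \cref{lem:tr-proof} for traces. Since $\FlexCo{\InfRuleSet}{\coRuleSet}$ is, by the characterisation recalled before \cref{thm:bcoind}, the largest $\InfRuleSet$-consistent subset of $\Ind{\InfRuleSet\cup\coRuleSet}$, it suffices to exhibit a set $X$ containing $\eval{\conf}{\divres}$ for every $\conf\in\Spec$ such that $X$ is $\InfRuleSet$-consistent and $X\subseteq\Ind{\InfRuleSet\cup\coRuleSet}$. The key observation is that the candidate set cannot consist only of divergence judgements: the divergence propagation rule that justifies $\eval{\conf}{\divres}$ also has converging premises $\judg_1,\ldots,\judg_{i-1}$, which must themselves belong to $X$. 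Hence I would take
\[
X = \{\, \eval{\conf}{\divres} \mid \conf\in\Spec \,\} \cup \{\, \eval{\conf}{\res} \mid \res\in\ResSet,\ \validInd{\RuleSet}{\eval{\conf}{\res}} \,\}.
\]

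First I would check $\InfRuleSet$-consistency. For a converging judgement $\eval{\conf}{\res}\in X$ this is immediate: the last rule of its finite derivation in $\RuleSet$ is a rule of $\RuleSet\subseteq\InfRuleSet$, all of whose premises are again derivable in $\RuleSet$, hence in $X$. For a divergence judgement $\eval{\conf}{\divres}$ with $\conf\in\Spec$, I would use the hypothesis to obtain $\rho=\inlinerule{\judg_1\ldots\judg_n}{\conf}{\res}\in\RuleSet$ and $i\in 1..n$, and apply the divergence propagation rule $\proprule{\rho}{i}{\divres}$ of \cref{def:div-rules}, whose premises are $\judg_1,\ldots,\judg_{i-1},\eval{\ConfSet(\judg_i)}{\divres}$: the first $i-1$ lie in $X$ because $\validInd{\RuleSet}{\judg_k}$ for $k<i$, and the last lies in $X$ because $\ConfSet(\judg_i)\in\Spec$.

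Next I would check $X\subseteq\Ind{\InfRuleSet\cup\coRuleSet}$. A converging judgement already has a finite derivation in $\RuleSet\subseteq\InfRuleSet\cup\coRuleSet$. A divergence judgement $\eval{\conf}{\divres}$ has the one-node finite derivation given by the coaxiom $\divcorule{\conf}$ of \cref{def:co-div}. With both conditions established, \cref{thm:bcoind} yields $X\subseteq\FlexCo{\InfRuleSet}{\coRuleSet}$, and in particular $\validFCo{\InfRuleSet}{\coRuleSet}{\eval{\conf}{\divres}}$ for every $\conf\in\Spec$, as required.

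The only genuinely delicate point, and the main thing to get right, is the choice of $X$: one must enlarge the set of divergence judgements with \emph{all} converging judgements so that the premises of the propagation rules are accounted for, and this is exactly where hypothesis~(\ref{lem:div-proof:1}) is used (via the finiteness of derivations in $\RuleSet$), while hypothesis~(\ref{lem:div-proof:2}) feeds the coinductive step. Everything else is a routine unfolding of the definitions of $\InfRuleSet$, $\coRuleSet$, and the two proof principles; no separate induction or coinduction beyond the single application of bounded coinduction is needed.
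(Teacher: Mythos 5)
Your proof is correct and follows essentially the same route as the paper: the paper's own proof applies bounded coinduction to the set $\Spec' = \{\ple{\conf,\divres}\mid \conf\in\Spec\} \cup \{\ple{\conf,\res}\mid \validInd{\RuleSet}{\eval{\conf}{\res}}\}$, which is exactly your $X$, and verifies boundedness via the coaxioms and consistency via the propagation rule $\proprule{\rho}{i}{\divres}$ in the divergence case and the original rules of $\RuleSet$ in the converging case, just as you do. Your closing remark correctly identifies the one non-routine step, namely enlarging the candidate set with all converging judgements so that the premises $\judg_1,\ldots,\judg_{i-1}$ of the propagation rule are covered.
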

\begin{proof}
Consider the set $\Spec' = \{\ple{\conf,\divres}\mid \conf\in\Spec\} \cup \{\ple{\conf,\res}\mid \validInd{\RuleSet}{\eval{\conf}{\res}}\}$, 
then the proof is by bounded coinduction (\cf \cref{thm:bcoind}). 
\begin{description}
\item [Boundedness] 
We have to show that, for all $\ple{\conf,\infres}\in\Spec'$, $\validInd{\InfRuleSet\cup\coRuleSet}{\eval{\conf}{\infres}}$ holds. 
This is easy because, if $\infres = \divres$, then this holds by coaxiom $\divcorule{\conf}$, otherwise $\infres \in\ResSet$ and $\validInd{\RuleSet}{\eval{\conf}{\infres}}$, hence this holds since $\RuleSet\subseteq\InfRuleSet\subseteq\InfRuleSet\cup\coRuleSet$. 

\item [Consistency] 
We have to show that, for all $\ple{\conf,\infres}\in\Spec'$, there is a rule $\bsrule{\judg_1\ldots\judg_n}{\eval{\conf}{\infres}} \in \InfRuleSet$ such that, for all $k \in 1..n$, \mbox{$\ple{\ConfSet(\judg_k),\InfResSet(\judg_k)} \in \Spec'$.} 
There are two cases:
\begin{itemize}
\item If $\infres = \divres$, then by hypothesis (\cref{lem:div-proof:1}), we have a rule $\rho = \inlinerule{\judg_1\ldots\judg_n}{\conf}{\res} \in \RuleSet$ and an index $i\in 1..n$ such that, for all $k<i$, $\validInd{\RuleSet}{\judg_k}$ and $\ConfSet(\judg_i)\in \Spec$. 
Then, the needed rule is $\proprule{\rho}{i}{\divres}$. 
\item If $\infres\in\ResSet$, then, by construction of $\Spec'$, we have $\validInd{\RuleSet}{\eval{\conf}{\infres}}$, hence, there is a rule $\rho = \inlinerule{\judg_1\ldots\judg_n}{\conf}{\infres}\in\RuleSet\subseteq\InfRuleSet$, where, for all $k\in 1..n$, $\validInd{\RuleSet}{\judg_k}$ holds, and so $\ple{\ConfSet(\judg_k),\ResSet(\judg_k)}\in\Spec'$. 
\end{itemize}
\end{description}
\end{proof}

The reader may have noticed that most definitions and results in this section are very similar to those provided for trace semantics in \cref{sect:bss-traces}. 
This is not a coincidence, indeed, we now formally prove this semantics is an \emph{abstraction} of trace semantics. 

Intuitively, if we are only interested in modelling convergence or divergence, traces are useless, in the sense that it is only relevant to know whether the trace is infinite or not and, in case it is finite, the final result. 
We can model this intuition by a (surjective) function $\fun{u}{\TrResSet}{\InfResSet}$ simply forgetting traces, that is, 
$u(\ple{\bstr,\res}) = \res$ and $u(\bstrinf) = \divres$, with $\bstr\in\List{\ConfSet}$ and $\bstrinf\in\InfList{\ConfSet}$. 

Then, we aim at proving the following result:

\begin{theorem}\label{thm:eq-div-tr}
$\validFCo{\InfRuleSet}{\coRuleSet}{\eval{\conf}{\infres}}$ iff $\validCo{\TrRuleSet}{\evaltr{\conf}{\trres}}$, for some $\trres$ such that $\infres = u(\trres)$. 
\end{theorem}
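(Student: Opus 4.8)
The plan is to split the proof according to whether $\infres$ is a proper result or divergence, since $u$ maps a pair $\ple{\bstr,\res}$ to $\res$ and an infinite trace $\bstrinf$ to $\divres$; thus $\infres = u(\trres)$ forces $\trres = \ple{\bstr,\res}$ when $\infres = \res\in\ResSet$, and $\trres = \bstrinf \in \InfList{\ConfSet}$ when $\infres = \divres$. In the convergent case $\infres = \res\in\ResSet$, the statement becomes that $\validFCo{\InfRuleSet}{\coRuleSet}{\eval{\conf}{\res}}$ holds iff $\validCo{\TrRuleSet}{\evaltr{\conf}{\ple{\bstr,\res}}}$ holds for some finite trace $\bstr$. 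This follows at once from the two conservativity results already established: by \cref{thm:div-conservative} the left-hand side is equivalent to $\validInd{\RuleSet}{\eval{\conf}{\res}}$, and by \cref{thm:tr-conservative} the right-hand side is equivalent to $\validInd{\RuleSet}{\eval{\conf}{\res}}$ as well, so the two sides coincide.

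The interesting case is divergence, $\infres = \divres$, where the claim reduces to: $\validFCo{\InfRuleSet}{\coRuleSet}{\eval{\conf}{\divres}}$ iff there is an infinite trace $\bstrinf$ with $\validCo{\TrRuleSet}{\evaltr{\conf}{\bstrinf}}$. I would treat the two directions symmetrically, each by invoking one of the coinductive proof principles \cref{lem:tr-proof} and \cref{lem:div-proof}, which share the same hypotheses on a set $\Spec\subseteq\ConfSet$ and differ only in the conclusion they produce. For the direction from coaxiom divergence to trace divergence ($\Rightarrow$), I take $\Spec = \{\conf \mid \validFCo{\InfRuleSet}{\coRuleSet}{\eval{\conf}{\divres}}\}$ and check the hypotheses of \cref{lem:tr-proof}: unfolding the flexible-coinductive interpretation, any $\conf\in\Spec$ is the conclusion of a rule of $\InfRuleSet$ with all premises again in $\FlexCo{\InfRuleSet}{\coRuleSet}$; since the result is $\divres$, this rule can only be a divergence propagation rule $\proprule{\rho}{i}{\divres}$ with $\rho = \inlinerule{\judg_1\ldots\judg_n}{\conf}{\res}$, whose first $i-1$ premises give $\validInd{\RuleSet}{\judg_k}$ for $k<i$ by \cref{thm:div-conservative}, and whose $i$-th premise witnesses $\ConfSet(\judg_i)\in\Spec$. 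Dually, for the direction from trace divergence to coaxiom divergence ($\Leftarrow$), I take $\Spec = \{\conf \mid \validCo{\TrRuleSet}{\evaltr{\conf}{\bstrinf}} \text{ for some } \bstrinf\in\InfList{\ConfSet}\}$ and check the hypotheses of \cref{lem:div-proof}: the coinductive derivation of $\evaltr{\conf}{\bstrinf}$ must end with an infinite trace rule $\divtracerule{\rho}{i}{\bstr_1,\ldots,\bstr_{i-1}}{\bstrinf'}$, since no finite trace rule carries an infinite trace in its conclusion; its first $i-1$ premises are derivable with finite traces, hence give $\validInd{\RuleSet}{\judg_k}$ by \cref{thm:tr-conservative}, and its $i$-th premise gives $\ConfSet(\judg_i)\in\Spec$. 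Applying the respective principle to the original $\conf$ then yields each implication.

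The main obstacle is the careful unfolding of the two infinitary interpretations to pin down the shape of the last applied rule. On the coaxiom side I must use that $\FlexCo{\InfRuleSet}{\coRuleSet}$ is $\InfRuleSet$-consistent, so the witnessing rule belongs to $\InfRuleSet$ rather than to $\InfRuleSet\cup\coRuleSet$; this rules out that divergence is justified by a coaxiom inside the derivation itself, and then I observe that the only rules of $\InfRuleSet$ concluding a $\divres$-judgement are the divergence propagation rules. The corresponding observation on the trace side is that exactly the infinite trace rules produce an infinite trace in their conclusion. Once these structural facts are in place, both directions of the divergent case reduce cleanly to \cref{lem:tr-proof} and \cref{lem:div-proof}, and together with the convergent case this establishes the equivalence.
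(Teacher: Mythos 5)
Your proposal is correct and follows essentially the same route as the paper's proof: the convergent case is dispatched via \cref{thm:tr-conservative} and \cref{thm:div-conservative}, and the divergent case is handled by applying \cref{lem:tr-proof} to the set of configurations satisfying $\validFCo{\InfRuleSet}{\coRuleSet}{\eval{\conf}{\divres}}$ for one direction and \cref{lem:div-proof} to the set of configurations evaluating to an infinite trace for the other. Your explicit remark that $\InfRuleSet$-consistency forces the witnessing rule to be a divergence propagation rule (rather than a coaxiom) makes precise a step the paper leaves implicit.
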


In a diagrammatic form, \cref{thm:eq-div-tr} says that the following diagram commutes:
\[\xymatrix{
\wp(\TrResSet) \ar[rr]^{\img{u}} && \wp(\InfResSet) \\
& \ar[ul]^{\sem{-}_{\trlb}} \ConfSet \ar[ur]_{\sem{-}_{\divres}} &
}\]
where $\fun{\img{u}}{\wp(\TrRuleSet)}{\wp(\InfRuleSet)}$ is the direct image of $u$, 
$\fun{\sem{-}_\trlb}{\ConfSet}{\wp(\TrResSet)}$ is defined by $\sem{\conf}_\trlb = \{\trres\in\TrResSet \mid \validCo{\TrRuleSet}{\evaltr{\conf}{\trres}} \}$, and 
$\fun{\sem{-}_{\divres}}{\ConfSet}{\wp(\InfResSet)}$ is defined by $\sem{\conf}_\trlb = \{\infres\in\InfResSet \mid \validFCo{\InfRuleSet}{\coRuleSet}{\eval{\conf}{\infres}} \}$. 

\begin{proof}
The statement can be split in the following two points: 
\begin{enumerate}
\item $\validFCo{\InfRuleSet}{\coRuleSet}{\eval{\conf}{\res}}$ iff $\validCo{\TrRuleSet}{\evaltr{\conf}{\ple{\bstr,\res}}}$, for some $\bstr\in\List{\ConfSet}$, and 
\item $\validFCo{\InfRuleSet}{\coRuleSet}{\eval{\conf}{\divres}}$ iff $\validCo{\TrRuleSet}{\evaltr{\conf}{\bstrinf}}$, for some $\bstrinf\in\InfList{\ConfSet}$.
\end{enumerate}
The first point follows immediately from \cref{thm:tr-conservative} and \cref{thm:div-conservative}, as $\validFCo{\InfRuleSet}{\coRuleSet}{\eval{\conf}{\res}}$ and $\validCo{\TrRuleSet}{\evaltr{\conf}{\ple{\bstr,\res}}}$ are both equivalent to $\validInd{\RuleSet}{\eval{\conf}{\res}}$. 
Then, we have only to prove the second point. 

The left-to-right implication follows applying \cref{lem:tr-proof} to the set $\Spec_\divres = \{ \conf \in \ConfSet \mid \validFCo{\InfRuleSet}{\coRuleSet}{\eval{\conf}{\divres}} \}$.
If $\conf\in\Spec_\divres$, then $\eval{\conf}{\divres}$ is derived by a rule $\proprule{\rho}{i}{\divres}$ for some $\rho = \inlinerule{\judg_1\ldots\judg_n}{\conf}{\res}$ in $\RuleSet$  and $i\in 1..n$, 
hence we have $\validFCo{\InfRuleSet}{\coRuleSet}{\judg_k}$, which implies $\validInd{\RuleSet}{\judg_k}$ by \cref{thm:div-conservative}, for all $k < i$, and 
$\validFCo{\InfRuleSet}{\coRuleSet}{\eval{\ConfSet(\judg_i)}{\divres}}$, that is, $\ConfSet(\judg_i)\in\Spec_\divres$, because these judgements are the premises of $\proprule{\rho}{i}{\divres}$. 
Therefore, the hypotheses of \cref{lem:tr-proof} are satisfied and we get, for all $\conf\in\Spec_\divres$,  $\validCo{\TrRuleSet}{\evaltr{\conf}{\bstrinf_\conf}}$, for some $\bstrinf_\conf\in\InfList{\ConfSet}$, hence $u(\bstrinf_\conf) = \divres$. 

Similarly, the right-to-left implication follows applying \cref{lem:div-proof} to the set $\Spec_\trlb = \{\conf\in\ConfSet \mid \validCo{\TrRuleSet}{\evaltr{\conf}{\bstrinf}}\mbox{ for some }\bstrinf\in\InfList{\ConfSet}\}$. 
If $\conf\in\Spec_\trlb$, then, for some $\bstrinf\in\InfList{\ConfSet}$, $\evaltr{\conf}{\bstrinf}$ is derived by a rule $\divtracerule{\rho}{i}{\bstr_1,\ldots,\bstr_{i-1}}{\bstrinf'}$, for some $\rho = \inlinerule{\judg_1\ldots\judg_n}{\conf}{\res}$ in $\RuleSet$  and $i\in 1..n$, 
hence we have \mbox{$\validCo{\TrRuleSet}{\evaltr{\ConfSet(\judg_k)}{\ple{\bstr_k,\ResSet(\judg_k)}}}$,} which implies $\validInd{\RuleSet}{\judg_k}$ by \cref{thm:tr-conservative}, for all $k < i$, and 
$\validCo{\TrRuleSet}{\evaltr{\ConfSet(\judg_i)}{\bstrinf'}}$, that is, $\ConfSet(\judg_i)\in\Spec_\trlb$, because these judgements are the premises of the rule $\divtracerule{\rho}{i}{\bstr_1,\ldots,\bstr_{i-1}}{\bstrinf'}$. 
Therefore, the hypotheses of \cref{lem:div-proof} are satisfied and we get, for all $\conf\in\Spec_\trlb$,  $\validFCo{\InfRuleSet}{\coRuleSet}{\eval{\conf}{\divres}}$. 
\end{proof}

As an immediate consequence of \cref{thm:eq-div-tr} and \cref{thm:eq-tr-pev}, we get the following corollary, stating that  the construction given by \cref{def:div-rules,def:co-div} correctly models diverging computations: 

\begin{corollary}\label{cor:eq-div-pev}
$\validFCo{\InfRuleSet}{\coRuleSet}{\eval{\conf}{\divres}}$ 
iff $\Rule{}{\eval{\conf}{\Unknown}} \ruleArInf$. 
\end{corollary}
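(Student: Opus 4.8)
The plan is to obtain the corollary by simply chaining the two equivalences \cref{thm:eq-div-tr} and \cref{thm:eq-tr-pev}, instantiating the former at the special result $\infres = \divres$. No new induction, coinduction, or tree manipulation is needed: all the real work has already been done in those two theorems, and here we only need to match up their statements and discharge one trivial bookkeeping observation about the forgetful map $u$.

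First I would apply \cref{thm:eq-div-tr} with $\infres$ chosen to be $\divres$. This gives that $\validFCo{\InfRuleSet}{\coRuleSet}{\eval{\conf}{\divres}}$ holds if and only if there is some $\trres \in \TrResSet$ with $\validCo{\TrRuleSet}{\evaltr{\conf}{\trres}}$ and $\divres = u(\trres)$. The key small step is to analyse the side condition $u(\trres) = \divres$: since $u(\ple{\bstr,\res}) = \res \in \ResSet$ while $u(\bstrinf) = \divres$, and since $\divres \notin \ResSet$ (as $\InfResSet = \ResSet + \{\divres\}$ is a disjoint sum), the equation $u(\trres) = \divres$ holds exactly when $\trres$ is an infinite trace $\bstrinf \in \InfList{\ConfSet}$. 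Hence \cref{thm:eq-div-tr} specialises to: $\validFCo{\InfRuleSet}{\coRuleSet}{\eval{\conf}{\divres}}$ iff $\validCo{\TrRuleSet}{\evaltr{\conf}{\bstrinf}}$ for some $\bstrinf \in \InfList{\ConfSet}$.

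Finally I would apply \cref{thm:eq-tr-pev}, whose statement is precisely that $\validCo{\TrRuleSet}{\evaltr{\conf}{\bstrinf}}$ for some $\bstrinf \in \InfList{\ConfSet}$ holds if and only if $\Rule{}{\eval{\conf}{\Unknown}} \ruleArInf$. Composing this equivalence with the specialised form of \cref{thm:eq-div-tr} from the previous step yields $\validFCo{\InfRuleSet}{\coRuleSet}{\eval{\conf}{\divres}}$ iff $\Rule{}{\eval{\conf}{\Unknown}} \ruleArInf$, which is exactly the claim. There is essentially no obstacle here; the only point requiring a moment's care is the forcing argument that $u(\trres) = \divres$ selects precisely the infinite traces, so that the intermediate quantifier ``for some $\trres$'' collapses to ``for some $\bstrinf \in \InfList{\ConfSet}$'' and the two theorems glue together cleanly.
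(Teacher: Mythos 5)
Your proposal is correct and matches the paper exactly: the corollary is stated there as an immediate consequence of \cref{thm:eq-div-tr} and \cref{thm:eq-tr-pev}, obtained by instantiating the former at $\infres = \divres$ and noting that $u(\trres) = \divres$ holds precisely when $\trres$ is an infinite trace. Your explicit check of that bookkeeping step is the only content the paper leaves implicit.
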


\paragraph{Total semantics} 
We now briefly describe how we can combine the presented constructions in order to get a semantics modelling \emph{all} computations as defined in \cref{sect:bss-ts}. 
In particular, we will use the $\Wrong$ construction to model stuck computations and the construction in this section to model divergence, because they are more similar to each other. 

Let us consider a big-step semantics \ple{\ConfSet,\ResSet,\RuleSet}. 
We add to $\ResSet$ two special values to model stuckness and divergence, defining $\TotResSet = \ResSet + \{\Wrong\} + \{\divres\}$. 
Then, we have to add appropriate rules to handle these two special results: the idea is to add ``simultanously'' rules from \cref{def:wr-rules} and from \cref{def:div-rules}, that is, 
we define $\TotRuleSet = \WrRuleSet \cup \InfRuleSet$. 
Note that, since both $\WrRuleSet$ and $\InfRuleSet$ extend $\RuleSet$, we have $\RuleSet \subseteq \TotRuleSet$. 
In addition, the triple \ple{\ConfSet,\TotResSet,\TotRuleSet} is a big-step semantics according to \cref{def:bss}. 
Finally, to properly model divergence, we have to add corules from \cref{def:co-div}, so that infinite derivations are only allowed to prove divergence. 

Since, as we have noticed, all the presented constructions yield a big-step semantics, starting from another one, we can also try to combine them ``sequentially''. 
Of course, there are two possibilities: either we first apply the $\Wrong$ construction or the divergence construction. 
Nicely, it is not difficult to check that all these possibilities yield the same big-step semantics \ple{\ConfSet,\TotResSet,\TotRuleSet}, as depicted below: 
\[\xymatrix{
\ple{\ConfSet,\ResSet,\RuleSet} \ar@{|->}[r]^{\wrlb} \ar@{|->}[d]_{\divres} \ar@{|->}[dr]^{\totlb} 
  & \ple{\ConfSet,\WrResSet,\WrRuleSet} \ar@{|->}[d]^{\divres} \\
\ple{\ConfSet,\InfResSet,\InfRuleSet} \ar@{|->}[r]_{\wrlb} 
  & \ple{\ConfSet,\TotResSet,\TotRuleSet}
}\]
\FDComm{la prova di questa cosa è noiosa e poco utile, quindi non l'ho ifnita: 
The commutativity of the above diagram follows from simple observations. 
\begin{itemize}
\item If we apply \cref{def:div-rules} to \ple{\ConfSet,\WrResSet,\WrRuleSet}\!, we trivially get $\TotResSet$ as set of results. 
Concerning rules, we start from $\WrRuleSet$ and add divergence ropagation rules, then, since $\RuleSet \subseteq \WrRuleSet$, we get all rules in $\InfRuleSet$, hence we get all rules in $\TotRuleSet$. 
To check that we do not get any further rule, note that, if $\rho_\wrlb \in \WrRuleSet\setminus\RuleSet$, we have that \proprule{\rho_\wrlb}{i}{\divres} is undefined when $\rho_\wrlb = \wrongax{\conf}$, becuase wrong configuration rules have no premises, 
and it is equal to \proprule{\rho}{i}{\divres} both when $\rho_\wrlb = \wrongrule{\rho}{k}{\res}$ and when $\rho_\wrlb = \proprule{\rho}{k}{\Wrong}$, because $i\le k$. 
\item If we apply \cref{def:wr-rules} to \ple{\ConfSet,\InfResSet,\InfRuleSet}, we trivially get $\TotResSet$ as set of results. 
Concerning rules, we start from $\InfRuleSet$ and add wrong configuration rules, wrong result rules and $\Wrong$ propagation rules, 
\end{itemize}
}

Thanks to the commutativity of the above diagram,  we can exploit results proved for the various constructions to get properties of  this last construction, as stated below. 

\begin{proposition}\label{prop:tot-conservative}
The following facts hold:
\begin{enumerate}
\item $\validFCo{\TotRuleSet}{\coRuleSet}{\eval{\conf}{\res}}$ iff $\validInd{\RuleSet}{\eval{\conf}{\res}}$,
\item $\validFCo{\TotRuleSet}{\coRuleSet}{\eval{\conf}{\Wrong}}$ iff $\validInd{\WrRuleSet}{\eval{\conf}{\Wrong}}$, 
\item $\validFCo{\TotRuleSet}{\coRuleSet}{\eval{\conf}{\divres}}$ iff $\validFCo{\InfRuleSet}{\coRuleSet}{\eval{\conf}{\divres}}$. 
\end{enumerate}
\end{proposition}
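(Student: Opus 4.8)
The plan is to exploit the commutativity of the diagram above, which exhibits $\TotRuleSet$ both as the divergence construction (\cref{def:div-rules,def:co-div}) applied to the base semantics $\ple{\ConfSet,\WrResSet,\WrRuleSet}$ and as the $\Wrong$ construction (\cref{def:wr-rules}) applied to $\ple{\ConfSet,\InfResSet,\InfRuleSet}$, combined with the conservativity results already established for each single construction. Note that the coaxioms $\divcorule{\conf}$ making up $\coRuleSet$ depend only on $\ConfSet$ (\cref{def:co-div}), hence they are the same regardless of which base we read $\TotRuleSet$ from.

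For the first two points I would read $\TotRuleSet$ as the divergence extension of $\WrRuleSet$. Since \cref{thm:div-conservative} holds for the divergence construction applied to an \emph{arbitrary} big-step semantics, instantiating it on the base $\ple{\ConfSet,\WrResSet,\WrRuleSet}$ yields, for every $\wrres\in\WrResSet$, the equivalence $\validFCo{\TotRuleSet}{\coRuleSet}{\eval{\conf}{\wrres}}$ iff $\validInd{\WrRuleSet}{\eval{\conf}{\wrres}}$. Taking $\wrres=\Wrong$ gives point~2 directly; taking $\wrres=\res\in\ResSet$ and composing with \cref{thm:wr-conservative}, which states $\validInd{\WrRuleSet}{\eval{\conf}{\res}}$ iff $\validInd{\RuleSet}{\eval{\conf}{\res}}$, gives point~1.

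Point~3 is the genuinely new part, since conservativity says nothing about the fresh result $\divres$, so I would argue it directly. The $(\Leftarrow)$ direction is immediate from monotonicity: as $\InfRuleSet\subseteq\TotRuleSet$ and both are equipped with the same corules $\coRuleSet$, any proof tree witnessing $\validFCo{\InfRuleSet}{\coRuleSet}{\eval{\conf}{\divres}}$ uses rules of $\TotRuleSet$ whose conclusions lie in $\Ind{\InfRuleSet\cup\coRuleSet}\subseteq\Ind{\TotRuleSet\cup\coRuleSet}$, hence it is also a valid witness for $\FlexCo{\TotRuleSet}{\coRuleSet}$. For the $(\Rightarrow)$ direction I would apply bounded coinduction (\cref{thm:bcoind}) to the set $X=\{\eval{\conf}{\infres}\mid \infres\in\InfResSet,\ \validFCo{\TotRuleSet}{\coRuleSet}{\eval{\conf}{\infres}}\}$, checking that $X\subseteq\Ind{\InfRuleSet\cup\coRuleSet}$ and that $X$ is $\InfRuleSet$-consistent. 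The observation driving both checks is that, in $\TotRuleSet$, a proper result can only be concluded by an original rule of $\RuleSet$ and $\divres$ can only be concluded by a divergence propagation rule $\proprule{\rho}{i}{\divres}$, whereas the $\Wrong$ rules never conclude an $\InfResSet$-result nor occur as premises of such conclusions. Hence, using the $\TotRuleSet$-consistency of $\FlexCo{\TotRuleSet}{\coRuleSet}$, every $\eval{\conf}{\infres}\in X$ is the conclusion of a rule already belonging to $\InfRuleSet$, whose premises again carry $\InfResSet$-results and thus lie in $X$, giving $\InfRuleSet$-consistency; boundedness follows since proper results in $X$ are inductively derivable in $\RuleSet$ by point~1, while $\divres$ judgements lie in $\Ind{\InfRuleSet\cup\coRuleSet}$ through the coaxiom $\divcorule{\conf}$.

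The main obstacle I anticipate is keeping the $\divres$-fragment argument of point~3 rigorous: I must ensure that the possibly infinite proof trees in $\FlexCo{\TotRuleSet}{\coRuleSet}$ for a $\divres$ judgement never secretly invoke a $\Wrong$ rule, and that the proper-result side premises of divergence propagation rules stay finitely derivable in $\RuleSet$. Both hinge on the clean separation of results ($\ResSet$, $\Wrong$, $\divres$) by the shape of rule conclusions, together with the characterisation of $\FlexCo{\TotRuleSet}{\coRuleSet}$ as the largest $\TotRuleSet$-consistent subset of $\Ind{\TotRuleSet\cup\coRuleSet}$, which is precisely what bounded coinduction is designed to leverage.
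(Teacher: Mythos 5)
Your proof is correct. For points 1 and 2 you take exactly the paper's route: read $\TotRuleSet$ as the divergence construction over the base \ple{\ConfSet,\WrResSet,\WrRuleSet}, instantiate \cref{thm:div-conservative} there to get $\validFCo{\TotRuleSet}{\coRuleSet}{\eval{\conf}{\wrres}}$ iff $\validInd{\WrRuleSet}{\eval{\conf}{\wrres}}$, and compose with \cref{thm:wr-conservative} for point 1. Where you genuinely diverge is point 3. The paper's proof disposes of it with the same one-line appeal to the conservativity theorems and the commutative diagram, which implicitly means reading $\TotRuleSet$ as the $\Wrong$-construction over \ple{\ConfSet,\InfResSet,\InfRuleSet}; but \cref{thm:wr-conservative} as stated concerns only the \emph{inductive} interpretation, and $\divres$-judgements are never inductively derivable (no axiom introduces $\divres$), so extracting point 3 this way really requires an unstated extension of the $\Wrong$-conservativity argument to the generalised interpretation with corules. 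You correctly notice this gap and replace the appeal with a direct bounded-coinduction argument: the right-to-left direction by monotonicity of proof trees under $\InfRuleSet\subseteq\TotRuleSet$ and $\Ind{\InfRuleSet\cup\coRuleSet}\subseteq\Ind{\TotRuleSet\cup\coRuleSet}$, and the left-to-right direction by checking that any $\TotRuleSet$-rule concluding a judgement with result in $\InfResSet$ already lies in $\InfRuleSet$ and has all premises with results in $\InfResSet$, with boundedness supplied by the coaxioms for $\divres$ and by point 1 for proper results. Your version costs a bit more writing but is the more rigorous of the two; the paper's buys brevity at the price of leaving the corule-level conservativity of the $\Wrong$-construction implicit.
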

\begin{proof}
All right-to-left implication are trivial, as $\RuleSet,\WrRuleSet,\InfRuleSet \subseteq \TotRuleSet$. 
The other implications follow from \cref{thm:wr-conservative,thm:div-conservative}, relying on the above commutative diagram. 
\end{proof}

\begin{corollary} \label{cor:total-sem}
For any configuration $\conf\in\ConfSet$, one of the following holds: 
\begin{itemize}
\item either $\validFCo{\TotRuleSet}{\coRuleSet}{\eval{\conf}{\res}}$, for some $\res\in\ResSet$, 
\item or $\validFCo{\TotRuleSet}{\coRuleSet}{\eval{\conf}{\divres}}$, 
\item or $\validFCo{\TotRuleSet}{\coRuleSet}{\eval{\conf}{\Wrong}}$. 
\end{itemize}
\end{corollary}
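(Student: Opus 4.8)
The plan is to read off the three disjuncts from a single \emph{maximal} $\ruleAr$-computation and then transport each outcome into the total semantics via the correctness theorems already proved. The natural starting point is the initial partial evaluation tree $\tr_0 = \Rule{}{\eval{\conf}{\Unknown}}$, whose root configuration is $\conf$ and which, as noted in \cref{sect:bss-ts}, keeps $\conf$ at the root along every $\ruleAr$-step. First I would build a maximal $\ruleAr$-sequence from $\tr_0$: at each stage, if the current tree is reducible extend by one step, otherwise stop. Such a sequence always exists and is either infinite or finite; in the finite case it terminates in an irreducible tree.

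The key observation is then a trichotomy for this sequence. If it is infinite we have $\tr_0 \ruleArInf$. If it is finite it ends in an irreducible tree $\tr$ with $\rt(\tr) = \eval{\conf}{\unres}$, and by \cref{cor:complete-tree} exactly one of two things happens: either $\unres \in \ResSet$, so that $\tr$ is complete, or $\unres = \Unknown$, so that $\tr$ is irreducible with an incomplete root, i.e.\ stuck by definition. Hence every maximal sequence lands in exactly one of the three situations: ending in a complete tree, ending in a stuck tree, or being infinite.

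Next I would convert each situation into a judgement of the total semantics, combining the correctness results of \cref{sect:constructions,sect:bss-div} with the conservativity statements of \cref{prop:tot-conservative}:
\begin{itemize}
\item if the sequence ends in a complete tree with $\rt(\tr) = \eval{\conf}{\res}$, then \cref{thm:pev-conservative} yields $\validInd{\RuleSet}{\eval{\conf}{\res}}$, and the first point of \cref{prop:tot-conservative} gives $\validFCo{\TotRuleSet}{\coRuleSet}{\eval{\conf}{\res}}$;
\item if the sequence ends in a stuck tree, then \cref{thm:eq-wrong-pev} yields $\validInd{\WrRuleSet}{\eval{\conf}{\Wrong}}$, and the second point of \cref{prop:tot-conservative} gives $\validFCo{\TotRuleSet}{\coRuleSet}{\eval{\conf}{\Wrong}}$;
\item if the sequence is infinite, i.e.\ $\tr_0 \ruleArInf$, then \cref{cor:eq-div-pev} yields $\validFCo{\InfRuleSet}{\coRuleSet}{\eval{\conf}{\divres}}$, and the third point of \cref{prop:tot-conservative} gives $\validFCo{\TotRuleSet}{\coRuleSet}{\eval{\conf}{\divres}}$.
\end{itemize}
Since the chosen maximal sequence falls into precisely one of these cases, one of the three disjuncts holds, which is the claim.

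Essentially all the work has been done upstream, so the argument is mostly bookkeeping. The only genuine content is the trichotomy of maximal $\ruleAr$-sequences, and the main (and still minor) obstacle is justifying that a finite maximal sequence necessarily terminates in a complete or a stuck tree, which is exactly the dichotomy provided by \cref{cor:complete-tree}. I would also remark that, in the presence of non-determinism, different maximal sequences may land in different cases; but since the corollary only asserts that \emph{one} of the disjuncts holds, exhibiting a single maximal sequence suffices, so no extra care is needed here.
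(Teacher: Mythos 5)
Your proposal is correct and follows essentially the same route as the paper: a trichotomy on the maximal $\ruleAr$-computation from $\Rule{}{\eval{\conf}{\Unknown}}$ (ends complete, ends stuck, or is infinite), transported into the total semantics via the correctness theorems of \cref{sect:constructions,sect:bss-div} and \cref{prop:tot-conservative}. The paper's proof is a one-line version of exactly this argument, so no further comparison is needed.
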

\begin{proof}
Straightforward from \cref{prop:tot-conservative} and \cref{thm:eq-div-tr,thm:eq-wrong-pev,thm:eq-tr-pev}, since the partial evaluation tree $\Rule{}{\eval{\conf}{\Unknown}}$, either converges to a tree, which is either complete or stuck, or diverges. 
\end{proof}

\bfd Note that these three possibilities in general are not mutually exclusive, that is, for instance, a configuration can both converge to a result and diverge. 
This is due to the fact that big-step rules can define a non-deterministic behaviour.  \efd

  \section{Expressing and proving soundness}
\label{sect:bss-soundness}

A predicate (for instance, a typing judgment) is \emph{sound} when, informally, a program satisfying such predicate (\eg a well-typed program) cannot \emph{go wrong}, following Robin Milner's slogan \citep{Milner78}.
In small-step style, as firstly formulated by \citet{WrightF94}, this is naturally expressed as follows: well-typed programs  never reduce to 
terms which neither are values, nor can be further reduced  (called \emph{stuck} terms). 
The standard technique to ensure soundness is by subject reduction (well-typedness is preserved by reduction) and progress (a well-typed term is not stuck).

In standard (inductive) big-step semantics, soundness, as described above, cannot even be expressed, because diverging and stuck computations are not distinguishable. 

Constructions presented in the previous sections make this distinction explicit, hence they allow us to reason about soundness with respect to a big-step semantics. 
In this section, we discuss how soundness can be expressed and we will provide sufficient conditions. 
In other words, we provide a proof technique to show  the  soundness of a predicate with respect to a big-step semantics. 

It is important to highlight the following about the presented approach to soundness. 
First, even though type systems are the paradigmatic example, we will consider a generic predicate on configurations, hence our approach could be instantiated with other kinds of predicates. 
Second, depending on the kind of construction considered, we can express different flavours of soundness, which will have different proof techniques. 
Finally, and more importantly, 
as  mentioned in the introduction, the extended semantics  is only needed 
to prove the correctness of the technique, whereas to \emph{apply} the technique for a given big-step semantics it is enough to reason on the original rules.

\subsection{Expressing soundness} 
\label{sect:bss-must-may}

In the following, we assume a big-step semantics  \ple{\ConfSet,\ResSet,\RuleSet}, and 
an \emph{indexed  predicate on configurations and results}, that is,  
a family $  \Pred= \ple{\ConfPred_\idx,\ResPred_\idx}_{\idx \in \IdxSet}$, for $\IdxSet$ set of \emph{indexes}, with $\ConfPred_\idx \subseteq \ConfSet$ and $\ResPred_\idx\subseteq\ResSet$. 
A representative case is that, as in the examples of \cref{sect:bss-ex}, predicates on configurations and results are typing judgments and the indexes are types; 
however, this setting is more general and so the proof technique could be applied to other kinds of predicates. 
When there is no ambiguity, we also denote by $\ConfPred$ and $\ResPred$, respectively, the corresponding predicates $\bigcup_{\idx \in \IdxSet} \ConfPred_\idx$ and $\bigcup_{\idx\in\IdxSet} \ResPred_\idx$ on $\ConfSet$ and $\ResSet$ (\eg to be well-typed with an arbitrary type).

To discuss how to express soundness of $\Pred$, first of all note that, in the non-deterministic case (that is, there is possibly more than one computation for a configuration),  we can distinguish two flavours of soundness, see, \eg  \citep{DeNicolaH84}:
\begin{description}
\item[soundness-must] (or simply soundness) no computation can be stuck
\item[soundness-may] at least one computation is not stuck
\end{description}
Soundness-must is the standard soundness in small-step semantics, and can be expressed by the $\Wrong$ construction
as follows: 
\begin{description}
\item[soundness-must]  If $\conf \in \ConfPred$, then $\notvalidInd{\WrRuleSet}{\eval{\conf}{\Wrong}}$
\end{description}
Soundness-must \emph{cannot} be expressed by the constructions  making divergence explicit, 
because stuck computations are not explicitly modelled. 
In contrast,  soundness-may can be expressed, for instance,  by the divergence  construction as follows:
\begin{description}
\item[soundness-may] If $\conf \in \ConfPred$, then $\validFCo{\InfRuleSet}{\coRuleSet}{\eval{\conf}{\infres}}$, for some $\infres\in\InfResSet$
\end{description}
whereas it cannot be expressed by the $\Wrong$ construction, since diverging computations are not modelled.
\bfd Note that, instead, using the total semantics, we can express both flavours of soundness, as it models both diverging and stuck computations. \efd 

Of course soundness-must and soundness-may coincide  in the deterministic case. 
Finally, note that indexes (\eg the specific types of configurations and results) do not play any role in the above statements. 
However, they are relevant in the notion of \emph{strong soundness}, introduced by \citet{WrightF94}.
Strong soundness holds (in must or may flavour) if soundness holds (in must or may flavour), and, moreover,  configurations satisfying $\ConfPred_\idx$ (\eg having a given type) produce results, if any, satisfying $\ResPred_\idx$ (\eg of the same type).
Note that soundness alone does not even guarantee to obtain a result satisfying $\ResPred$ (\eg a well-typed result). 
The sufficient conditions introduced in the following subsection actually ensure strong soundness.

In \cref{sect:bss-sc}, we provide sufficient conditions for soundness-must, showing that  they  ensure soundness as stated above  (\cref{thm:sound-wrong}). 
Then, in \cref{sect:bss-smc}, we provide (weaker) sufficient conditions for soundness-may, and show that they ensure soundness-may  (\cref{thm:sound-div}).

\subsection{Conditions ensuring soundness-must} 
\label{sect:bss-sc}

 The three conditions which ensure the soundness-must property  are \emph{local preservation}, \emph{$\exists$-progress}, and \emph{$\forall$-progress}. 
The names suggest that the former plays the role of the \emph{type preservation (subject reduction)} property, and the latter two of the \emph{progress} property in small-step semantics. 
However, as we will see, the correspondence is only rough, since the reasoning here is different. 

Considering the first condition more closely, we use the name \emph{preservation} rather than type preservation since, as already mentioned, the proof technique can be applied to arbitrary predicates. 
More importantly, \emph{local} means that the condition is \emph{on single rules}  rather than on the semantic relation as a whole, as standard subject reduction; 
the semantic relation is only used in the hypotheses of the condition, so that, when checking it, one can rely on stronger assumptions. 
The same holds for the  other two conditions. 


\begin{definition}[Local preservation \LPCond] \label{def:sc-lp}
\label{sound:preservation}
For each $\rho = \inlinerule{\judg_1\ldots\judg_n}{\conf}{\res}$ in $\RuleSet$, 
if $\conf \in \ConfPred_\idx$, then there exists $\idx_1,\ldots,\idx_n \in \IdxSet$ such that 
\begin{enumerate}
\item\label{def:sc-lp:1} for all $k \in 1..n$, if, for all $h < k$, $\validInd{\RuleSet}{\judg_h}$ and $\ResSet(\judg_h) \in \ResPred_{\idx_h}$, 
then $\ConfSet(\judg_k) \in \ConfPred_{\idx_k}$, and 
\item\label{def:sc-lp:2} if, for all $k\in 1..n$, $\validInd{\RuleSet}{\judg_k}$ and $\ResSet(\judg_k) \in \ResPred_{\idx_k}$, then $\res \in \ResPred_\idx$. 
\end{enumerate}
\end{definition}

Thinking to the paradigmatic case where the indexes are types, to check that this condition holds, for each rule $\rho = \inlinerule{\judg_1\ldots\judg_n}{\conf}{\res}$ where $\conf$, the conclusion, has type $\idx$, 
we have to find types $\idx_1, \ldots, \idx_n$, which can be assigned to (configurations and results in) the premises,  and,  
when all the premises satisfy the chosen type, $\res$, the result in the conclusion, must have type $\idx$, that is, the same type of $\conf$. 
More precisely,  we will proceed as follows: 
we start finding type $\idx_1$, and successively find the type $\idx_k$ for (the configuration in) the $k$-th premise assuming that  all previous premises are derivable and their results have the expected  types, and, finally, we have to check that the final result $\res$ has type $\idx$ assuming all premises are derivable and their results have the expected type. 
Indeed, if all such previous premises are derivable, then the expected type should be preserved by their results; if some premise is not derivable, the considered rule is ``useless''. 
For instance, considering (an instantiation of) meta-rule $\metainlinerule{app}{\eval{\e_1}{\Lam{\x}\e}\ \ \eval{\e_2}{\val_2}\ \ \eval{\subst{\e}{\val_2}{\x}}{\val}}{\e_1\appop\e_2}{\val}$ in \cref{fig:bss-ex-lambda}, we prove that $\subst{\e}{\val_2}{\x}$ has the type $\T$ of $\e_1\appop\e_2$ under the assumption that $\Lam{\x}\e$ has type $\funType{\tA'}{\tA}$, and $\val_2$ has type $\T'$ (see the proof example in \cref{sect:bss-simply-typed} for more details).
A counter-example to condition \LPCond is discussed at the beginning of \cref{sect:bss-iut}. 

The following lemma states that local preservation actually implies \emph{preservation} of the semantic relation as a whole.  

\begin{lemma}[Preservation] \label{lem:sr}
Let \ple{\ConfSet,\ResSet,\RuleSet} and $\Pred = \ple{\ConfPred_\idx,\ResPred_\idx}_{\idx\in\IdxSet}$ satisfy condition \LPCond. 
If $\validInd{\RuleSet}{\eval{\conf}{\res}}$ and $\conf \in \ConfPred_\idx$, then $\res \in \ResPred_\idx$.
\end{lemma}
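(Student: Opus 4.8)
The plan is to proceed by rule induction on the derivation witnessing $\validInd{\RuleSet}{\eval{\conf}{\res}}$. Since this judgment is inductively derivable, it has a finite (well-founded) proof tree, so I can induct on its structure. Let $\rho = \inlinerule{\judg_1\ldots\judg_n}{\conf}{\res}$ be the last rule applied; then each premise $\judg_k$ is itself derivable, $\validInd{\RuleSet}{\judg_k}$, by a strictly smaller sub-derivation, so the induction hypothesis applies to each $\judg_k$.

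Because $\conf \in \ConfPred_\idx$ and the semantics satisfies \LPCond, instantiating \cref{def:sc-lp} at $\rho$ yields indexes $\idx_1,\ldots,\idx_n \in \IdxSet$ satisfying both clauses. The heart of the argument is an inner induction on $k \in 1..n$ showing that $\ResSet(\judg_k) \in \ResPred_{\idx_k}$ for every $k$. Assume this holds for all $h < k$. Since moreover every premise of $\rho$ is derivable, the hypothesis of clause~\refItem{def:sc-lp}{1} is met for this $k$ (namely, for all $h < k$ we have $\validInd{\RuleSet}{\judg_h}$ and $\ResSet(\judg_h) \in \ResPred_{\idx_h}$), whence $\ConfSet(\judg_k) \in \ConfPred_{\idx_k}$. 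Now $\judg_k = \eval{\ConfSet(\judg_k)}{\ResSet(\judg_k)}$ is derivable and $\ConfSet(\judg_k) \in \ConfPred_{\idx_k}$, so the outer induction hypothesis applied to the sub-derivation of $\judg_k$ gives $\ResSet(\judg_k) \in \ResPred_{\idx_k}$, closing the inner induction.

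Having established that, for all $k \in 1..n$, the premise $\judg_k$ is derivable and $\ResSet(\judg_k) \in \ResPred_{\idx_k}$, clause~\refItem{def:sc-lp}{2} immediately delivers $\res \in \ResPred_\idx$, which is the thesis.

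The step I expect to be the main obstacle is orchestrating the two nested inductions so that the hypotheses of \LPCond are discharged in exactly the order the sequential condition demands: clause~\refItem{def:sc-lp}{1} for index $k$ requires knowing both that all earlier premises are derivable (which comes from the shape of the derivation) and that their results lie in the intended $\ResPred_{\idx_h}$ (which comes from the inner induction); only after assembling both facts may one invoke the outer induction hypothesis on $\judg_k$ to pass from $\ConfSet(\judg_k) \in \ConfPred_{\idx_k}$ to $\ResSet(\judg_k) \in \ResPred_{\idx_k}$. Everything else is routine bookkeeping tracking the indexes along the sequence of premises.
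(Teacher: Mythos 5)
Your proof is correct and follows essentially the same route as the paper's: a double induction, outer on the derivation and inner (complete arithmetic) on the premise index $k$, using clause~\refItem{def:sc-lp}{1} of \LPCond{} to advance along the premises and clause~\refItem{def:sc-lp}{2} to conclude. The only difference is cosmetic --- you phrase the inner invariant as $\ResSet(\judg_h)\in\ResPred_{\idx_h}$ whereas the paper phrases it as $\ConfSet(\judg_h)\in\ConfPred_{\idx_h}$ and recovers the result membership via the outer hypothesis, which is the same cycle entered at a different point.
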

\begin{proof}
The proof is by  a double induction
From the hypotheses, we know that $\eval{\conf}{\res}$ has a finite derivation in $\RuleSet$ and $\conf\in \ConfPred_\idx$. 
The first induction is on the derivation of $\eval{\conf}{\res}$. 
Suppose the last applied rule is $\rho = \inlinerule{\judg_1\ldots\judg_n}{\conf}{\res}$ and denote by $RH$ the induction hypothesis. 
Then, we prove by complete arithmetic induction on $k \in 1..n$ (the second induction)  that $\ConfSet(\judg_k) \in \Pred_{\idx_k}$, for all $k \in 1..n$ and for some $\idx_1, \ldots, \idx_n \in \IdxSet$.
Let us denote by $IH$ the second induction hypothesis. 
By \LPCond, there are indexes $\idx_1, \ldots, \idx_n \in \IdxSet$, \bfd satisfying \cref{def:sc-lp:1,def:sc-lp:2} of \LPCond (\cf \cref{def:sc-lp}). \efd 
Let $k\in 1..n$, then 
by $IH$ we know that $\ConfSet(\judg_h) \in \ConfPred_{\idx_h}$, for all $h < k$. 
Then, by  $RH$,  
we get that $\ResSet(\judg_h) \in \ResPred_{\idx_h}$.  
Hence,
by \LPCond (\cf \refItem{def:sc-lp}{1}), we get  $\ConfSet(\judg_k) \in \Pred_{\idx_k}$, as needed. 

Now, since $\ConfSet(\judg_k) \in \ConfPred_{\idx_k}$, for all $k\in 1..n$, as we have just proved, again by $RH$, we get that $\ResSet(\judg_k)\in\ResPred_{\idx_k}$, for all $k\in 1..n$. 
Then, by \LPCond (\cf \refItem{def:sc-lp}{2}), we conclude that $\res \in \ResPred_\idx$, as needed. 
\end{proof}

The following proposition  is  a form of local preservation  where indexes (\eg specific types) are not relevant, simpler to use in the 
 proofs of  
\cref{thm:sound-wrong,thm:sound-div}.  

\begin{proposition} \label{prop:preservation}
Let \ple{\ConfSet,\ResSet,\RuleSet} and $\Pred = \ple{\ConfPred_\idx,\ResPred_\idx}_{\idx\in\IdxSet}$ satisfy condition \LPCond. 
For each rule $\rho = \inlinerule{\judg_1\ldots\judg_n}{\conf}{\res}$ and $k \in 1..n$, 
if $\conf \in \ConfPred$ and, for all $h < k$, $\validInd{\RuleSet}{\judg_h}$, then $\ConfSet(\judg_k) \in \ConfPred$.
\end{proposition}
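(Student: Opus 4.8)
The plan is to fix an index $\idx$ with $\conf \in \ConfPred_\idx$ and apply condition \LPCond to the rule $\rho$, obtaining indexes $\idx_1,\ldots,\idx_n \in \IdxSet$ satisfying \cref{def:sc-lp:1,def:sc-lp:2} of \cref{def:sc-lp}. The goal $\ConfSet(\judg_k) \in \ConfPred$ will follow from \cref{def:sc-lp:1} once I discharge its hypotheses at position $k$, namely that for every $h < k$ both $\validInd{\RuleSet}{\judg_h}$ and $\ResSet(\judg_h) \in \ResPred_{\idx_h}$ hold. The first conjunct is exactly the hypothesis of the proposition; the real work is in establishing the second one, since \LPCond advances the configuration predicate one premise at a time and requires the earlier premises to land in the \emph{result} predicate $\ResPred_{\idx_h}$, not merely in the configuration predicate.

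I would prove, by complete arithmetic induction on $j \in 1..k$, the auxiliary claim that $\ConfSet(\judg_j) \in \ConfPred_{\idx_j}$. Fix such a $j$ and assume the claim for all $i < j$. For each $i < j$ we have $i < k$, hence $\validInd{\RuleSet}{\judg_i}$ by the hypothesis of the proposition, and $\ConfSet(\judg_i) \in \ConfPred_{\idx_i}$ by the induction hypothesis. Since $\judg_i = \eval{\ConfSet(\judg_i)}{\ResSet(\judg_i)}$, the Preservation lemma (\cref{lem:sr}) applies and yields $\ResSet(\judg_i) \in \ResPred_{\idx_i}$. Thus for all $i < j$ we have both $\validInd{\RuleSet}{\judg_i}$ and $\ResSet(\judg_i)\in\ResPred_{\idx_i}$, which are precisely the premises needed to instantiate \cref{def:sc-lp:1} at position $j$, giving $\ConfSet(\judg_j) \in \ConfPred_{\idx_j}$ and closing the induction.

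Taking $j = k$ yields $\ConfSet(\judg_k) \in \ConfPred_{\idx_k} \subseteq \ConfPred$, which is the thesis. Note that all the indexes $\idx_i, \idx_j$ invoked lie in the range $1..n$, since $i < j \le k \le n$, so they are among those produced by \LPCond.

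The only delicate point is the bridge between the two indexed predicates: \LPCond never directly hands back a configuration-predicate membership for a premise without first knowing that the results of the \emph{preceding} premises satisfy $\ResPred$. Passing from $\ConfPred_{\idx_i}$ to $\ResPred_{\idx_i}$ is exactly what \cref{lem:sr} provides, but it is applicable only because each earlier premise $\judg_i$ with $i < k$ is assumed derivable. This is why the derivability hypothesis on the premises strictly preceding $k$ is indispensable, and why the induction must stay within the range $i < k$; were the premise $\judg_k$ itself required to be derivable, the argument would be circular, so it is significant that the conclusion speaks only about $\ConfSet(\judg_k)$ being in $\ConfPred$ and not about $\judg_k$'s derivability.
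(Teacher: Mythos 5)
Your proof is correct and follows essentially the same route as the paper's: invoke \LPCond{} to obtain the indexes $\idx_1,\ldots,\idx_n$, then run a complete arithmetic induction on the premise position, using \cref{lem:sr} to convert membership in $\ConfPred_{\idx_i}$ of an earlier, derivable premise into membership of its result in $\ResPred_{\idx_i}$ so that \refItem{def:sc-lp}{1} can be applied at the next position. Your version is in fact slightly more careful than the paper's in explicitly confining the induction to positions $j \le k$, since derivability is only assumed for premises strictly before $k$.
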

\begin{proof}
By hypothesis we know that $\conf \in \ConfPred_\idx$, for some $\idx \in \IdxSet$, thus by condition \LPCond, there are indexes $\idx_1,\ldots,\idx_n \in \IdxSet$, \bfd satisfying \cref{def:sc-lp:1,def:sc-lp:2} of \LPCond (\cf \cref{def:sc-lp}). \efd 
We show by 
complete arithmetic induction that, for all $k\in 1..n$, $\ConfSet(\judg_k) \in \ConfPred_{\idx_k}$, which implies the thesis. 
Assume the thesis for all $h < k$,
then, since by hypothesis we have $\validInd{\RuleSet}{\judg_h}$ for all $ h < k$, we  get, by induction hypothesis,  $\ConfSet(\judg_h) \in \ConfPred_{\idx_h}$, for all $h < k$. 
By \cref{lem:sr}, we also get $\ResSet(\judg_h) \in \ResPred_{\idx_h}$, hence, by condition \LPCond (\cf \refItem{def:sc-lp}{1}), we get $\ConfSet(\judg_k) \in \ConfPred_{\idx_k}$, as needed. 
\end{proof}

The second condition, named \emph{$\exists$-progress}, ensures that, for configurations satisfying $\Pred$  (\eg well-typed),
we can \emph{start} the evaluation, that is, the construction of an evaluation tree. 

\begin{definition}[$\exists$-progress \ExPCond] \label{def:sc-exp} 
For each 
$\conf \in \ConfPred$, there exists a rule $\rho\in\RuleSet$ such that  $\ConfSet(\rho) = \conf$.
\end{definition}

The third condition, named \emph{$\forall$-progress}, ensures that, for configurations satisfying 
$\Pred$ (\eg well-typed), we can \emph{continue} the evaluation, that is, the construction of the evaluation tree. 
This condition uses the equivalence on rules introduced in \cref{def:rule-eq}. 

\begin{definition}[$\forall$-progress \AllPCond] \label{def:sc-allp}
For each rule $\rho = \inlinerule{\judg_1\ldots\judg_n}{\conf}{\res}$ with  $\conf \in \ConfPred$, 
for each $k\in 1..n$,
if, for all $h < k$, $\validInd{\RuleSet}{\judg_h}$ and $\validInd{\RuleSet}{\eval{\ConfSet(\judg_k)}{\res'}}$, for some $\res' \in \ResSet$, 
then there is a rule $\rho' \sim_k \rho$ such that $\ResSet(\rho', k) = \res'$. 
\end{definition}

We have to check, for each rule $\rho = \inlinerule{\judg_1\ldots\judg_n}{\conf}{\res}$, the following: 
if the configuration $\conf$ in the conclusion satisfies the predicate (\eg is well-typed), then, for each $k \in 1..n$, 
if the configuration in the $k$-th premise evaluates to some result $\res'$ (that is, $\validInd{\RuleSet}{\eval{\ConfSet(\judg_k)}{\res'}}$), 
then there is a rule ($\rho$ itself or another rule with the same  configuration in the conclusion and the same first $k-1$ premises) with such judgement as $k$-th premise.  
This check  can be done  under the assumption that all the previous premises are derivable. 
For instance, consider again (an instantiation of) the meta-rule $\metainlinerule{app}{\eval{\e_1}{\Lam{\x}\e}\ \ \eval{\e_2}{\val_2}\ \ \eval{\subst{\e}{\val_2}{\x}}{\val}}{\e_1\appop\e_2}{\val}$. 
Assuming that $\e_1$ evaluates to some $\val_1$,  we have to check that there is a rule  with first premise  $\eval{\e_1}{\val_1}$, in practice,  that $\val_1$ is a $\lambda$-abstraction; 
in general, checking \AllPCond for a (meta-)rule amounts to show that configurations in the premises evaluate to results  with  the required shape (see also the proof example in \cref{sect:bss-simply-typed}). \label{page:sc-allp}

We now prove the claim of soundness-must  expressed by means of the $\Wrong$ construction (\cf \cref{sect:bss-wrong}). 

\begin{theorem}[Soundness-must]  \label{thm:sound-wrong}
Let \ple{\ConfSet,\ResSet,\RuleSet} and $\Pred = \ple{\ConfPred_\idx,\ResPred_\idx}_{\idx\in\IdxSet}$ satisfy conditions \LPCond, \ExPCond and \AllPCond. 
If $\conf \in \ConfPred$, then $\notvalidInd{\WrRuleSet}{\eval{\conf}{\Wrong}}$. 
\end{theorem}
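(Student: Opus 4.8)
The plan is to prove the logically equivalent statement by rule induction on a derivation of $\eval{\conf}{\Wrong}$ in $\WrRuleSet$: I would show that whenever $\validInd{\WrRuleSet}{\eval{\conf}{\Wrong}}$ holds, necessarily $\conf \notin \ConfPred$. Since $\WrRuleSet$ is interpreted inductively, such a derivation is finite, so structural induction is available. The last applied rule must be one of the three kinds introduced in \cref{def:wr-rules}, because the rules inherited from $\RuleSet$ can only conclude judgements carrying a genuine result in $\ResSet$, never $\Wrong$. Hence the proof splits into the cases of wrong configuration rules, wrong result rules, and $\Wrong$ propagation rules, and in each I would assume $\conf \in \ConfPred$ towards a contradiction.

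For the wrong configuration rule $\wrongax{\conf}$, its side condition states that $\RuleSet$ contains no rule $\rho$ with $\ConfSet(\rho) = \conf$; this directly contradicts $\exists$-progress \ExPCond, which guarantees such a rule for any $\conf \in \ConfPred$. For the wrong result rule $\wrongrule{\rho}{i}{\res'}$, with $\rho = \inlinerule{\judg_1\ldots\judg_n}{\conf}{\res}$, its premises $\judg_1,\ldots,\judg_{i-1},\eval{\ConfSet(\judg_i)}{\res'}$ all carry results in $\ResSet$, so being subderivations in $\WrRuleSet$ they are derivable in $\RuleSet$ by conservativity (\cref{thm:wr-conservative}). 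I would then invoke $\forall$-progress \AllPCond at index $i$: since the earlier premises are derivable and $\validInd{\RuleSet}{\eval{\ConfSet(\judg_i)}{\res'}}$, there must exist $\rho' \sim_i \rho$ with $\ResSet(\rho',i) = \res'$, flatly contradicting the side condition that licensed $\wrongrule{\rho}{i}{\res'}$.

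The propagation case is where the induction hypothesis does the real work. For $\proprule{\rho}{i}{\Wrong}$, the premises are $\judg_1,\ldots,\judg_{i-1}$ together with a strictly smaller derivation of $\eval{\ConfSet(\judg_i)}{\Wrong}$. The first $i-1$ premises again reduce to $\RuleSet$-derivations by conservativity. Applying \cref{prop:preservation} (the index-free consequence of local preservation \LPCond) at index $i$ yields $\ConfSet(\judg_i) \in \ConfPred$. But the induction hypothesis, applied to the subderivation of $\eval{\ConfSet(\judg_i)}{\Wrong}$, gives $\ConfSet(\judg_i) \notin \ConfPred$ — a contradiction, so again $\conf \notin \ConfPred$.

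I expect the main obstacle to be the bookkeeping needed before each condition can be fired: verifying in every case that the non-$\Wrong$ premises $\judg_1,\ldots,\judg_{i-1}$ are genuinely derivable in $\RuleSet$ (so that conservativity and then \ExPCond, \AllPCond, or \cref{prop:preservation} apply with exactly their intended hypotheses), and keeping straight that the three sufficient conditions speak about the \emph{original} rules of $\RuleSet$ whereas the derivation under analysis lives in the larger system $\WrRuleSet$. Once this alignment is made explicit, the remaining reasoning is routine case analysis.
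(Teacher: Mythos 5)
Your proposal is correct and follows essentially the same route as the paper: induction on the $\WrRuleSet$-derivation of $\eval{\conf}{\Wrong}$, with the three $\Wrong$-rule cases dispatched by \ExPCond, \AllPCond (after transferring the earlier premises to $\RuleSet$ via \cref{thm:wr-conservative}), and \cref{prop:preservation} plus the induction hypothesis, respectively. The only difference is presentational — you phrase it as the contrapositive while the paper assumes both hypotheses and derives a contradiction — which is the same argument.
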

\begin{proof}
To prove the statement, we assume $\validInd{\WrRuleSet}{\eval{\conf}{\Wrong}}$  and look for a contradiction. 
The proof is by induction on the derivation of $\eval{\conf}{\Wrong}$. 
We split cases on the last applied rule in such derivation. 
\begin{proofcases}
\item [\wrongax{\conf}] 
By construction (\cf \cref{def:wr-rules}), we know that there is no rule $\rho \in \RuleSet$ such that $\ConfSet(\rho) = \conf$, and this violates  condition \ExPCond, since $\conf \in \ConfPred$, by hypothesis. 

\item [\wrongrule{\rho}{i}{\res'}]
Suppose $\rho = \inlinerule{\judg_1\ldots\judg_n}{\conf}{\res}$, hence $i\in 1..n$, 
then, by hypothesis, for all $k < i$, we have  $\validInd{\WrRuleSet}{\judg_k}$, and $\validInd{\WrRuleSet}{\eval{\ConfSet(\judg_i)}{\res'}}$, and these judgments can also be derived in $\RuleSet$ by conservativity (\cf \cref{thm:wr-conservative}). 
Furthermore, by construction (\cf \cref{def:wr-rules}), we know that there is no other rule $\rho' \sim_i \rho$ such that $\ResSet(\rho', i) = \res'$, and this violates condition \AllPCond, since $\conf \in \ConfPred$ by hypothesis. 

\item [\proprule{\rho}{i}{\Wrong}] 
Suppose $\rho = \inlinerule{\judg_1\ldots\judg_n}{\conf}{\res}$, hence $i\in 1..n$, 
then,  by hypothesis, for all $k < i$, we have $\validInd{\WrRuleSet}{\judg_k}$, 
and these judgments 
can also be derived in $\RuleSet$ by conservativity (\cf \cref{thm:wr-conservative}). 
Then, by \cref{prop:preservation} (which requires condition \LPCond), since $\conf \in \ConfPred$, we have
$\ConfSet(\judg_i) \in \ConfPred$, hence we get the thesis by induction hypothesis, because $\validInd{\WrRuleSet}{\eval{\ConfSet(\judg_i)}{\Wrong}}$ holds by hypothesis. 
\end{proofcases}
\end{proof}

Note that conditions \LPCond, \ExPCond and \AllPCond, actually ensure strong soundness, because, by \cref{lem:sr}, which is applicable since we assume \LPCond, we have that converging computations preserve indexes of the predicate.

\subsection{Conditions ensuring soundness-may}
\label{sect:bss-smc}

As discussed in \cref{sect:bss-must-may}, if we explicitly model divergence rather than stuck computations, we can only express a weaker form of soundness: at least one computation is not stuck (\emph{soundness-may}). 
Actually, we will state soundness-may in a different, but equivalent, way, which is simpler to prove, that is, a configuration that does not converge, diverges. 

As the reader can expect, to ensure this property weaker sufficient conditions are enough: namely, condition \LPCond, and another condition,
 named \emph{may-progress}, defined below. 
We write ``$\notvalidInd{\RuleSet}{\eval{\conf}{}}$''  if $\conf$ \emph{does not converge} (there is no $\res$ such that $\validInd{\RuleSet}{\eval{\conf}{\res}}$). 

\begin{definition}[May-progress \MayPCond] \label{def:sc-mayp} 
For each $\conf \in \ConfPred$, there is a rule $\rho = \inlinerule{\judg_1\ldots\judg_n}{\conf}{\res}$ such that, 
if there is a (first) $k \in 1..n$ 
such that $\notvalidInd{\RuleSet}{\judg_k}$ and, 
for all $h < k$, 
$\validInd{\RuleSet}{\judg_h}$,
then $\notvalidInd{\RuleSet}{\eval{\ConfSet(\judg_k)}{}}$. 
\end{definition}

This condition can be informally understood as follows: 
we have to show that there is an either finite or infinite computation for $\conf$. 
If we find a rule where all premises are derivable (there is no $k$), then there is a finite computation. 
Otherwise, $\conf$ cannot converge. 
In this case, we should find a rule where the configuration in the first non-derivable premise $k$ cannot converge as well. 
Indeed, by coinductive reasoning (\cf \cref{thm:sound-div}), this implies that $\conf$ diverges. 
The following proposition states that this condition is indeed a weakening of \ExPCond and \AllPCond. 

\begin{proposition} \label{prop:progress-to-may}
Conditions \ExPCond and \AllPCond imply condition \MayPCond. 
\end{proposition}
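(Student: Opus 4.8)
The plan is to fix $\conf \in \ConfPred$ and build, by an iterative process, a single rule with conclusion $\conf$ that witnesses \MayPCond. First I would invoke \ExPCond to obtain an initial rule $\rho$ with $\ConfSet(\rho) = \conf$, and then maintain throughout the following invariant: I hold a rule $\rho = \inlinerule{\judg_1\ldots\judg_n}{\conf}{\res}$ together with an index $j \le n$ such that the first $j$ premises $\judg_1,\ldots,\judg_j$ are all derivable in $\RuleSet$. Initially $j = 0$, so the invariant holds vacuously.

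At each iteration, given such $\rho$ and $j$, I would distinguish three cases. If $j = n$, then all premises are derivable, so there is no first non-derivable premise and \MayPCond holds vacuously for $\rho$; I stop and return $\rho$. Otherwise $j < n$, and I inspect the configuration $\ConfSet(\judg_{j+1})$. If it does not converge, i.e.\ $\notvalidInd{\RuleSet}{\eval{\ConfSet(\judg_{j+1})}{}}$, then $k = j+1$ is exactly the first non-derivable premise: the earlier ones are derivable by the invariant, while $\judg_{j+1}$ cannot be derivable, since $\validInd{\RuleSet}{\judg_{j+1}}$ would force $\ConfSet(\judg_{j+1})$ to converge. The required conclusion $\notvalidInd{\RuleSet}{\eval{\ConfSet(\judg_{j+1})}{}}$ holds by assumption, so again I return $\rho$. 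In the remaining case $\ConfSet(\judg_{j+1})$ converges, say $\validInd{\RuleSet}{\eval{\ConfSet(\judg_{j+1})}{\res'}}$; then, since the first $j$ premises are derivable and $\conf \in \ConfPred$, condition \AllPCond applied at index $k = j+1$ yields a rule $\rho' \sim_{j+1} \rho$ with $\ResSet(\rho', j+1) = \res'$. By \cref{def:rule-eq}, $\rho'$ has the same conclusion $\conf$ and the same first $j$ premises as $\rho$, and its $(j+1)$-th premise is $\eval{\ConfSet(\judg_{j+1})}{\res'}$, which is derivable; hence the first $j+1$ premises of $\rho'$ are derivable, and I would continue with $\rho'$ and $j+1$, re-establishing the invariant.

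The main obstacle is guaranteeing that this process terminates: each continuing step strictly increases $j$, but the underlying rule changes, so a priori the number of premises could keep growing. Here the bounded premises condition \BP is essential. Every rule produced along the way has conclusion $\conf$, because $\sim_{j+1}$ preserves the conclusion, so by \BP the number of its premises is at most $b_\conf$, and therefore $j \le b_\conf$ at every stage. Since $j$ increases by one at each continuing step and is bounded by $b_\conf$, the process must halt after finitely many iterations, necessarily in one of the two returning cases, yielding a rule that witnesses \MayPCond for $\conf$. As $\conf \in \ConfPred$ was arbitrary, this establishes the claim.
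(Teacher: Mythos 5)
Your proof is correct and follows essentially the same route as the paper's: both start from the rule given by \ExPCond, repeatedly use \AllPCond to pass to an equivalent rule whose first non-derivable premise occurs strictly later, and invoke the bounded premises condition \BP to guarantee termination. The only difference is presentational — the paper packages the iteration as a complete induction on the decreasing measure $b_\conf + 1 - nd(\rho)$, whereas you run it as a loop with an increasing counter $j$ bounded by $b_\conf$; these are the same argument.
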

\begin{proof}
For each $\conf \in \ConfSet$, let us define $b_\conf  \in \N$ as $\max \{ \#\rho  \mid \ConfSet(\rho) = \conf \}$, which is well-defined and finite by condition (\BP) in \cref{def:bss}. 
For each rule $\rho$ with $\ConfSet(\rho) = \conf$,  let us denote by $nd(\rho)$ the index of the first premise of $\rho$ which is not derivable, if any, 
otherwise set  $nd(\rho) = b_\conf + 1$. 
For each $\conf \in \ConfPred$, we first prove the following fact: 
$(\star)$ for each rule $\rho$, with $\ConfSet(\rho) = \conf$, there exists a rule $\rho'$ such that $\ConfSet(\rho') = \conf$,  $nd(\rho') \ge nd(\rho)$ and,
if $nd(\rho') \le b_\conf$, then $\notvalidInd{\RuleSet}{\eval{\ConfSet(\rho', nd(\rho'))}{ }}$. 
Note that the requirement in $(\star)$ is the same as that of condition \MayPCond. 
The proof is by complete arithmetic induction on $h(\rho) = b_\conf + 1 - nd(\rho)$. 
If $h(\rho) = 0$, hence $nd(\rho) = b_\conf + 1$, then the thesis follows by taking $\rho' = \rho$. 
Otherwise, we have two cases: 
if there is no $\res \in \ResSet$ such that $\validInd{\RuleSet}{\eval{\ConfSet(\rho, nd(\rho))}{\res}}$, then we have the thesis taking $\rho' = \rho$;
otherwise, by condition \AllPCond, there is a rule $\rho'' \sim_{nd(\rho)} \rho$ such that $\ResSet(\rho'', nd(\rho)) = \res$, hence $nd(\rho'') > nd(\rho)$. 
Then, we have $h(\rho'') < h(\rho)$, hence we get the thesis by induction hypothesis.

Now, by \ExPCond, there is a rule $\rho$ with $\ConfSet(\rho) = \conf$, and applying $(\star)$ to $\rho$ we get \MayPCond. 
\end{proof} 

We now prove the claim of soundness-may expressed by means of the divergence  construction (\cf \cref{sect:bss-div}). 

\begin{theorem}[Soundness-may]  \label{thm:sound-div}  
Let \ple{\ConfSet,\ResSet,\RuleSet} and $\Pred = \ple{\ConfPred_\idx,\ResSet_\idx}_{\idx\in\IdxSet}$ satisfy conditions \LPCond and \MayPCond. 
If $\conf \in \ConfPred$, then $\validFCo{\InfRuleSet}{\coRuleSet}{\eval{\conf}{\infres}}$, for some $\infres \in\InfResSet$. 
\end{theorem}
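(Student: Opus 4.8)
The plan is to split on whether $\conf$ converges, handling the convergent case by conservativity and the divergent case by the coinductive principle \cref{lem:div-proof}. First I would note that if $\validInd{\RuleSet}{\eval{\conf}{\res}}$ for some $\res\in\ResSet$, then \cref{thm:div-conservative} immediately yields $\validFCo{\InfRuleSet}{\coRuleSet}{\eval{\conf}{\res}}$, so the claim holds with $\infres=\res$. Hence the substantial case is that of a configuration satisfying $\ConfPred$ which does \emph{not} converge.

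For these configurations, I would apply \cref{lem:div-proof} to the set $\Spec = \{\conf\in\ConfPred \mid \notvalidInd{\RuleSet}{\eval{\conf}{}}\}$ of predicate-satisfying configurations with no converging derivation. The obligation is then to verify, for each $\conf\in\Spec$, the hypothesis of \cref{lem:div-proof}: a rule $\rho = \inlinerule{\judg_1\ldots\judg_n}{\conf}{\res}$ and an index $i\in 1..n$ with $\validInd{\RuleSet}{\judg_k}$ for all $k<i$ and $\ConfSet(\judg_i)\in\Spec$.

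The key step runs as follows. Given $\conf\in\Spec$, condition \MayPCond (applicable since $\conf\in\ConfPred$) supplies a rule $\rho = \inlinerule{\judg_1\ldots\judg_n}{\conf}{\res}$. Since $\conf$ does not converge, not all premises of $\rho$ can be derivable---otherwise $\conf$ would converge to $\res$ via $\rho$---so there is a first index $i$ with $\notvalidInd{\RuleSet}{\judg_i}$ while $\validInd{\RuleSet}{\judg_k}$ for all $k<i$; this is the first half of the hypothesis. For the second half I would combine two facts about $\ConfSet(\judg_i)$: condition \MayPCond gives $\notvalidInd{\RuleSet}{\eval{\ConfSet(\judg_i)}{}}$, i.e.\ non-convergence, while \cref{prop:preservation} (which relies on \LPCond) gives $\ConfSet(\judg_i)\in\ConfPred$, using that $\conf\in\ConfPred$ and that all earlier premises are derivable. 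Together these place $\ConfSet(\judg_i)$ in $\Spec$. Thus \cref{lem:div-proof} applies and yields $\validFCo{\InfRuleSet}{\coRuleSet}{\eval{\conf}{\divres}}$ for every $\conf\in\Spec$; combined with the convergent case, this proves the theorem with $\infres\in\InfResSet$ in all cases.

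The main obstacle I anticipate is establishing the closure property $\ConfSet(\judg_i)\in\Spec$, which is precisely what feeds the coinductive machinery of \cref{lem:div-proof}. It hinges on using \MayPCond and \cref{prop:preservation} together---the former ensuring the selected premise cannot converge, the latter ensuring the predicate propagates along the derivable prefix of premises---and on the observation that non-convergence of $\conf$ is exactly what forces the non-derivable premise $i$ mentioned by \MayPCond to exist. Once this is in place, the remainder is a routine case analysis plus an appeal to conservativity.
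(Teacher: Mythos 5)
Your proposal is correct and follows essentially the same route as the paper's own proof: reduce to the non-converging case via \cref{thm:div-conservative}, apply \cref{lem:div-proof} to the set of non-converging configurations satisfying $\ConfPred$, obtain the rule from \MayPCond, locate the first non-derivable premise, and close the set using the second clause of \MayPCond together with \cref{prop:preservation}. No gaps.
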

\begin{proof}
First note that, thanks to  \cref{thm:div-conservative}, the statement is equivalent to the following:
\begin{quote}
If $\conf \in \ConfPred$ and $\notvalidInd{\RuleSet}{\eval{\conf}{}}$, then $\validFCo{\InfRuleSet}{\coRuleSet}{\eval{\conf}{\divres}}$. 
\end{quote}
Then, the thesis follows by \cref{lem:div-proof}. 
We set $\Spec = \{ \conf\in\ConfSet  \mid \conf \in \ConfPred\mbox{ and } \notvalidInd{\RuleSet}{\eval{\conf}{}} \}$,  and show that, for all $\conf \in \Spec$, 
there are $\rho = \inlinerule{\judg_1\ldots\judg_n}{\conf}{\res}$ and $k \in 1..n$ such that, for all $h < k$, $\validInd{\RuleSet}{\judg_h}$ and $\ConfSet(\judg_k) \in \Spec$. 

Consider $\conf \in \Spec$, then, by \MayPCond (\cf \cref{def:sc-mayp}), there is $\rho = \inlinerule{\judg_1\ldots\judg_n}{\conf}{\res}$. 
By definition of $\Spec$, we have $\notvalidInd{\RuleSet}{\eval{\conf}{}}$, 
hence there exists a  (first)  $k \in 1..n+1$ such that $\notvalidInd{\RuleSet}{\judg_k}$, 
since, otherwise, we would have $\validInd{\RuleSet}{\eval{\conf}{\res}}$. 
Then, since $k$ is the  first  index with such property, for all $h < k$, we have $\validInd{\RuleSet}{\judg_h}$, hence, again by condition \MayPCond (\cf \cref{def:sc-mayp}), we have that  $\notvalidInd{\RuleSet}{\eval{\ConfSet(\judg_k)}{}}$. 
Finally, since $\conf\in\ConfPred$ and, for all $h < k$, we have $\validInd{\RuleSet}{\judg_h}$, by \cref{prop:preservation} we get  $\ConfSet(\judg_k) \in \ConfPred$,  hence $\ConfSet(\judg_k) \in \Spec$, as needed. 
\end{proof}

Note that conditions \LPCond and \MayPCond actually ensure strong soundness, because, by \cref{lem:sr}, which is applicable since we assume \LPCond, we have that converging computations preserve indexes of the predicate.

  \section{Examples of soundness proofs} 
\label{sect:bss-ex}

In this section, we show how to use the technique introduced in \cref{sect:bss-soundness} to prove soundness of a type system with respect to a big-step semantics, by several examples. 
We focus on the technique for soundness-must, as it is the usual notion of soundness for type systems. 
\cref{sect:bss-simply-typed} explains in detail how a typical soundness proof can be rephrased in terms of our technique,  by reasoning directly  on big-step rules. 
\cref{sect:bss-fjl} shows a case where  this  is advantageous, since the property to be checked is \emph{not preserved} by intermediate computation steps, whereas it holds for the whole computation.
\cref{sect:bss-iut}  considers  a more sophisticated type system, with intersection and union types. 
\cref{sect:bss-fjos} shows another example where types are  not preserved,  whereas soundness can be proved  with our technique. 
This example is intended as a preliminary step towards a more challenging case. 
In \cref{sect:bss-ifj} we show how our technique can also deal with imperative features. 

\subsection{Simply-typed $\lambda$-calculus with recursive types}
\label{sect:bss-simply-typed}

As  a first example, we take the $\lambda$-calculus with natural constants, successor,  and non-deterministic choice introduced in \cref{fig:bss-ex-lambda}. 
We consider a standard simply-typed version with (equi)recursive types, obtained by interpreting the production in the top section of \cref{fig:bss-ex-lambda-types}
coinductively.  
Introducing  recursive types makes the calculus non-normalising and  permits  
to write interesting programs such as $\Omega$ (see \cref{sect:bss-traces}).

The typing rules are recalled in the bottom section of \cref{fig:bss-ex-lambda-types} and, as usual,  they are interpreted inductively. 
Type environments, written $\Gamma$, are finite maps from variables to types, and $\SubstFun{\Gamma}{\T}{\x}$ denotes the map which returns $\T$ on $\x$  and  coincides with $\Gamma$ elsewhere.  
We write $\HasType{\es}{\e}{\T}$ for $\HasType{\emptyset}{\e}{\T}$.

\begin{figure} 
\begin{center}
\begin{small}
\begin{math} 
\begin{grammatica}
\produzione{\T}{\natType\mid\funType{\T_1}{\T_2}}{types}
\end{grammatica}
\end{math}
\HSep
\begin{math}
\begin{array}{c}
\MetaRule
{t-var}
{}
{\HasType{\Gamma}{\x}{\T}}{\Gamma(\x)=\T}
\BigSpace
\MetaRule{t-const}{}{\HasType{\Gamma}{\natconst}{\natType}}{}
\\[4ex]
\MetaRule
{t-abs}
{\HasType{\SubstFun{\Gamma}{\T'}{\x}}{\e}{\T}}
{\HasType{\Gamma}{\Lam{\x}\e}{\funType{\T'}{\T}}}{} 
\BigSpace
\MetaRule
{t-app}
{\HasType{\Gamma}{\e_1}{\funType{\T'}{\T}}\Space\HasType{\Gamma}{\e_2}{\T'}}
{\HasType{\Gamma}{\e_1\appop\e_2}{\T}}{} 
\\[4ex]
\MetaRule
{t-succ}{\HasType{\Gamma}{\e}{\natType}}
{\HasType{\Gamma}{\SuccExp{\e}}{\natType}}{}
\BigSpace
\MetaRule{t-choice}{\HasType{\Gamma}{\e_1}{\T}\Space\HasType{\Gamma}{\e_2}{\T}}{\HasType{\Gamma}{\Choice{\e_1}{\e_2}}{\T}}{}
\end{array}
\end{math}
\end{small}
\end{center}
\caption{$\lambda$-calculus: type system}\label{fig:bss-ex-lambda-types}
\end{figure}

Let \ple{\ConfSet_1,\ResSet_1,\RuleSet_1} be the big-step semantics described in \cref{fig:bss-ex-lambda} ($\ConfSet_1$ is the set of expressions and $\ResSet_1$ is the set of values), and let 
$\ConfPredOne_T = \{\e \in \ConfSet_1 \mid \HasType{\es}{\e}{\T} \}$ and 
$\ResPredOne_T = \{\val \in \ResSet_1 \mid \HasType{\es}{\val}{\T} \}$, 
where $\T$ is a type, defined in \cref{fig:bss-ex-lambda-types}, that is, 
$\ConfPredOne_\T$ and $\ResPredOne_\T$ are the sets of expressions and values of type $\T$, respectively. 
To prove the three conditions \LPCond, \ExPCond and \AllPCond  of \cref{sect:bss-sc}, we need lemmas of inversion, substitution and canonical  forms, as in the standard technique for small-step semantics. 

\begin{lemma}[Inversion]\label{lem:ilr}
The following hold:
\begin{enumerate}
\item \label{lem:ilr:1}If $\HasType{\Gamma} \x\tA$, then $\Gamma(\x)= \tA$.
\item \label{lem:ilr:2} If $\HasType{\Gamma}{\natconst}\T$, then $\T=\natType$.
\item \label{lem:ilr:3} If $\HasType\Gamma{\LambdaExp{\x}{\e}}\tA$, then $\tA=\funType{\tA_1}{\tA_2}$ and  $\HasType{\SubstFun\Gamma{\tA_1}\x} \e {\tA_2}$.
\item \label{lem:ilr:4} If $\HasType\Gamma{\e_1\appop\e_2}\tA$, then $\HasType\Gamma{\e_1} {\funType{\tA'}{\tA}}$ and $\HasType\Gamma{\e_2} {\tA'}$.
\item \label{lem:ilr:5} If $\HasType{\Gamma}{\SuccExp{\e}}{\T}$, then $\T=\natType$ and $\HasType{\Gamma}{\e}{\natType}$.
\item \label{lem:ilr:6} If $\HasType\Gamma {\e_1\oplus\e_2}\tA$, then $\HasType\Gamma{\e_i}\tA$ with $i\in1,2$.
\end{enumerate}
\end{lemma}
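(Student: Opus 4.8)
The plan is to exploit the fact that the typing system of \cref{fig:bss-ex-lambda-types} is \emph{syntax-directed}: in each rule the subject of the conclusion has a distinct top-level syntactic form---a variable for \rn{t-var}, a constant for \rn{t-const}, an abstraction for \rn{t-abs}, an application for \rn{t-app}, a successor for \rn{t-succ}, and a choice for \rn{t-choice}. Since the typing judgement $\HasType{\Gamma}{\e}{\T}$ is interpreted inductively, any derivable judgement is the conclusion of an instance of some rule whose premises are in turn derivable; and because the outermost constructor of $\e$ selects a \emph{unique} applicable rule, the last applied rule is completely determined by the shape of $\e$. Hence I would prove all six items uniformly by a single case analysis on this shape, reading off for each case the premises and side condition of the matching rule.

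Concretely, for \cref{lem:ilr:1} the only rule with a variable subject is \rn{t-var}, whose side condition immediately yields $\Gamma(\x)=\tA$; for \cref{lem:ilr:2} the only rule is \rn{t-const}, whose conclusion fixes the type to be $\natType$; for \cref{lem:ilr:3} the only rule is \rn{t-abs}, whose conclusion has shape $\funType{\tA_1}{\tA_2}$---forcing $\tA$ to be a function type---and whose single premise is exactly $\HasType{\SubstFun{\Gamma}{\tA_1}{\x}}{\e}{\tA_2}$. \Cref{lem:ilr:4,lem:ilr:5,lem:ilr:6} follow in the same way from \rn{t-app}, \rn{t-succ}, and \rn{t-choice}, respectively, simply by transcribing the premises of the matching rule (note that \rn{t-succ} also fixes $\T=\natType$). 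No induction is needed: the argument is just the observation that each expression form is handled by a single rule.

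The one point deserving a word of care is \cref{lem:ilr:3}, since types are interpreted \emph{coinductively} to accommodate recursive types, so the equation $\tA=\funType{\tA_1}{\tA_2}$ must be read as ``$\tA$ has a function type as its outermost constructor''. This is unproblematic, however, because every type generated by the grammar $\T\prodeq\natType\mid\funType{\T_1}{\T_2}$, whether finite or infinite, has a well-defined topmost constructor, so the conclusion of \rn{t-abs} still forces $\tA$ to be a function type with uniquely determined domain and codomain. I therefore expect no genuine obstacle: the whole lemma reduces to observing the syntax-directedness of the type system, and the coinductive reading of types affects only the phrasing, not the substance, of the inversion for abstractions.
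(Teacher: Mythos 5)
Your proof is correct, and it is exactly the standard argument the paper intends: the paper states this inversion lemma without proof (remarking only that such lemmas are needed "as in the standard technique for small-step semantics"), and your case analysis on the last applied rule, justified by the syntax-directedness of the inductively interpreted type system, is the canonical way to discharge it. Your side remark on item 3 is also apt: since the coinductive grammar of types still gives every type a well-defined outermost constructor and the system has no conversion or subsumption rule, the conclusion of rule \rn{t-abs} literally forces $\tA$ to be of the form $\funType{\tA_1}{\tA_2}$.
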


\begin{lemma}[Substitution]\label{lem:s}
If $\HasType{\SubstFun\Gamma{\tA'}\x} \e {\tA}$ and $\HasType\Gamma{\e'} {\tA'}$, then \mbox{$\HasType\Gamma{\subst\e{\e'}\x} {\tA}$}.
\end{lemma}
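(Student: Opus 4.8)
The plan is to prove the statement by induction on the structure of $\e$ --- equivalently, on the (inductive) derivation of the first hypothesis $\HasType{\SubstFun\Gamma{\tA'}\x}{\e}{\tA}$ --- analysing at each step the shape of the relevant typing rule by means of the Inversion Lemma (\cref{lem:ilr}). Note that, although types are equirecursive (the grammar of types being read coinductively), this does not interfere with the argument: the induction is carried out on terms and on the inductively-defined typing derivation, while types are only ever inspected, never built by recursion, so the coinductive reading of types is irrelevant here.

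Then come the routine cases. For $\e = \x$ itself, inversion (\refItem{lem:ilr}{1}) forces $\tA = \tA'$ and, since $\subst{\x}{\e'}{\x} = \e'$, the goal is exactly the second hypothesis $\HasType{\Gamma}{\e'}{\tA'}$. For a variable distinct from $\x$, substitution leaves it unchanged and its type is read off $\Gamma$ unchanged, so \rn{t-var} re-applies. For $\e = \natconst$, inversion (\refItem{lem:ilr}{2}) gives $\tA = \natType$ and \rn{t-const} concludes. The compositional cases $\e_1\appop\e_2$, $\SuccExp{\e_0}$ and $\Choice{\e_1}{\e_2}$ are all uniform: inversion (\refItem{lem:ilr}{4}, \refItem{lem:ilr}{5}, \refItem{lem:ilr}{6}) exposes the types of the immediate subterms under the \emph{same} environment $\SubstFun\Gamma{\tA'}\x$, the induction hypothesis applied to each subterm yields the corresponding judgement under $\Gamma$ for the substituted subterm (substitution commutes with these constructors), and re-applying the relevant typing rule gives the thesis.

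The main obstacle is the abstraction case $\e = \Lam{\x'}{\e_0}$, the only place where binders and the environment interact. First I would use $\alpha$-conversion to assume $\x' \neq \x$ and $\x'$ not free in $\e'$, so that $\subst{(\Lam{\x'}{\e_0})}{\e'}{\x} = \Lam{\x'}{\subst{\e_0}{\e'}{\x}}$ without capture. Inversion (\refItem{lem:ilr}{3}) gives $\tA = \funType{\tA_1}{\tA_2}$ together with $\HasType{\SubstFun{\SubstFun\Gamma{\tA'}\x}{\tA_1}{\x'}}{\e_0}{\tA_2}$. The two key ingredients are the environment identity
\[
\SubstFun{\SubstFun\Gamma{\tA'}\x}{\tA_1}{\x'} = \SubstFun{\SubstFun\Gamma{\tA_1}{\x'}}{\tA'}{\x},
\]
which holds precisely because $\x' \neq \x$, and a \emph{weakening} property --- extending the environment with the unused variable $\x'$ preserves typing --- giving $\HasType{\SubstFun\Gamma{\tA_1}{\x'}}{\e'}{\tA'}$ from the second hypothesis. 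These let me instantiate the induction hypothesis (with $\Gamma$ replaced by $\SubstFun\Gamma{\tA_1}{\x'}$) to obtain $\HasType{\SubstFun\Gamma{\tA_1}{\x'}}{\subst{\e_0}{\e'}{\x}}{\tA_2}$, whence \rn{t-abs} yields $\HasType{\Gamma}{\Lam{\x'}{\subst{\e_0}{\e'}{\x}}}{\funType{\tA_1}{\tA_2}}$, as required. The genuinely new ingredient, not already available from the stated lemmas, is the weakening property; I would either establish it by a routine preliminary induction or fold it into a strengthened induction hypothesis quantified over all environments agreeing with $\Gamma$ on the free variables of $\e$.
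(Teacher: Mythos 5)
Your proof is correct. The paper actually states \cref{lem:s} without proof, treating it as a standard auxiliary result ``as in the standard technique for small-step semantics'', so there is no argument to compare against; what you give is exactly the expected textbook structural induction, and your handling of the two genuine subtleties --- that the coinductive reading of (equi)recursive types does not disturb the induction on the inductively-defined typing derivation, and that the abstraction case needs $\alpha$-conversion, the environment-swap identity $\SubstFun{\SubstFun\Gamma{\tA'}\x}{\tA_1}{\x'} = \SubstFun{\SubstFun\Gamma{\tA_1}{\x'}}{\tA'}{\x}$, and a weakening lemma not stated in the paper (which you correctly flag as the one extra ingredient to establish, or to absorb into a strengthened induction hypothesis quantified over environments) --- is sound.
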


\begin{lemma}[Canonical Forms]\label{lem:cf}\
The following hold: 
\begin{enumerate}
\item \label{lem:cf:1}
If $\HasType{\es} \val {\funType{\tA'}{\tA}}$, then $\val=\LambdaExp{\x}{\e}$.
\item \label{lem:cf:2}
If $\HasType{\es} \val \natType$, then $\val=\natconst$.
\end{enumerate}
\end{lemma}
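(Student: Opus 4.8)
The plan is to prove each part by case analysis on the syntactic shape of the value $\val$, using the inversion lemma (\cref{lem:ilr}) to rule out the impossible case. Recall from the grammar in \cref{fig:bss-ex-lambda} that a value is either a natural constant $\natconst$ or a $\lambda$-abstraction $\Lam{\x}\e$, so in each part there are exactly two cases to consider, one of which must be excluded.

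For \cref{lem:cf:1}, I would assume $\HasType{\es}{\val}{\funType{\tA'}{\tA}}$ and split on whether $\val$ is a constant or an abstraction. If $\val = \natconst$, then by \refItem{lem:ilr}{2} we would get $\funType{\tA'}{\tA} = \natType$, which is impossible since a function type and $\natType$ are distinct type constructors (immediate from the grammar of types in \cref{fig:bss-ex-lambda-types}). Hence $\val$ must be of the form $\Lam{\x}\e$, as required. Symmetrically, for \cref{lem:cf:2} I would assume $\HasType{\es}{\val}{\natType}$: if $\val = \Lam{\x}\e$, then \refItem{lem:ilr}{3} forces $\natType$ to be a function type $\funType{\tA_1}{\tA_2}$, again a contradiction, so $\val = \natconst$.

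There is essentially no hard step here: the argument is a direct consequence of the shape of values together with inversion, and the only fact being exploited is the disjointness of the two type constructors $\natType$ and $\funType{-}{-}$. The mild point worth noting is that inversion is precisely what lets us turn a typing derivation for a value of a given shape into a constraint on its type, thereby excluding the mismatched case; thus the lemma depends only on \cref{lem:ilr} and the grammar, not on any semantic property. Consequently I would present the proof very tersely, as a two-case analysis for each item, citing the relevant clause of \cref{lem:ilr} to discharge the impossible case.
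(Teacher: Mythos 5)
Your proof is correct and is exactly the standard argument the paper has in mind; the paper in fact states \cref{lem:cf} without proof, treating it as routine. The only point worth a moment's thought is that the types of \cref{fig:bss-ex-lambda-types} are interpreted coinductively (equirecursive), but this does not affect your case analysis: a type, viewed as a possibly infinite tree, still has a uniquely determined top-level constructor, so $\natType$ and $\funType{\tA'}{\tA}$ remain disjoint and the contradictions you derive via \refItem{lem:ilr}{2} and \refItem{lem:ilr}{3} go through.
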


\begin{theorem}[Soundness]\label{thm:sd}
The big-step semantics \ple{\ConfSet_1,\ResSet_1,\RuleSet_1}  and the indexed predicate $\PredOne$ satisfy the conditions \LPCond, \ExPCond and \AllPCond  of \cref{sect:bss-sc}.
\end{theorem}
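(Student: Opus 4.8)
The plan is to verify the three conditions \LPCond, \ExPCond and \AllPCond separately, in each case arguing by a case analysis on the four meta-rules \rn{val}, \rn{app}, \rn{succ}, \rn{choice} of \cref{fig:bss-ex-lambda} and drawing on the inversion, substitution and canonical-forms lemmas (\cref{lem:ilr,lem:s,lem:cf}) exactly as in a standard small-step soundness proof. A useful structural observation is that the conditions are not independent: once \LPCond has been established, \cref{lem:sr} yields full preservation for \ple{\ConfSet_1,\ResSet_1,\RuleSet_1}, which is precisely what is needed to discharge \AllPCond. I would therefore prove \LPCond first, then \ExPCond, and finally \AllPCond, freely invoking \cref{lem:sr} in the last step.

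For \LPCond I go rule by rule. For \rn{val} there are no premises, so only clause~2 of \cref{def:sc-lp} is at stake, and it holds because a value typed at $\T$ as a configuration is, trivially, a result typed at $\T$. For \rn{succ} and \rn{choice} I pick the index of the unique premise to be $\natType$ (respectively the type $\T$ of the conclusion), and both clauses then follow directly from the corresponding items of \cref{lem:ilr}, noting for \rn{succ} that $\natconst+1$ is again of type $\natType$. The central case is \rn{app} with conclusion $\e_1\appop\e_2$ of type $\T$: inversion (\cref{lem:ilr}) yields a type $\T'$ with $\e_1 : \funType{\T'}{\T}$ and $\e_2 : \T'$, so I choose $\idx_1=\funType{\T'}{\T}$, $\idx_2=\T'$, $\idx_3=\T$. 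The clauses for the first two premises are immediate from inversion; the third premise is where the work lies: assuming $\Lam{\x}\e : \funType{\T'}{\T}$ and $\val_2 : \T'$, inversion on the abstraction gives $\HasType{\SubstFun{\es}{\T'}{\x}}{\e}{\T}$ and then the substitution lemma (\cref{lem:s}) gives $\HasType{\es}{\subst{\e}{\val_2}{\x}}{\T}$, that is, $\subst{\e}{\val_2}{\x}\in\ConfPredOne_\T$. Clause~2 is then trivial since $\idx_3=\T$.

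Condition \ExPCond is a simple case analysis on the shape of a well-typed closed $\e$: a value fires \rn{val}, an application $\e_1\appop\e_2$ fires \rn{app}, a successor fires \rn{succ}, and a choice $\Choice{\e_1}{\e_2}$ fires \rn{choice}. The variable case cannot occur, since by \cref{lem:ilr} a variable typable in the empty environment would require $\es(\x)$ to be defined, which is absurd.

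For \AllPCond I again proceed rule by rule, exploiting that every result in $\ResSet_1$ is a value. The premises whose admissible result may be an arbitrary value impose no real constraint: the second and third premises of \rn{app} and the single premise of \rn{choice} are matched by the obvious $\sim_k$-equivalent instance obtained by plugging the observed $\res'$ into the corresponding value position (the equivalence only constrains the premises up to index $k$, so the later positions may be chosen freely). The genuinely restrictive premises are the first premise of \rn{app}, which must carry a $\lambda$-abstraction, and the premise of \rn{succ}, which must carry a numeral; here I invoke preservation (\cref{lem:sr}, available because \LPCond has already been proved) to conclude that $\res'$ has arrow type (respectively $\natType$), and then canonical forms (\cref{lem:cf}) to conclude that $\res'$ is indeed a $\lambda$-abstraction (respectively a constant $\natconst$), so that a matching \rn{app} (respectively \rn{succ}) instance exists. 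The main obstacle is thus concentrated in the \rn{app} rule, where \LPCond requires threading inversion through the substitution lemma and \AllPCond requires the preservation–canonical-forms combination to fix the shape of the function value; everything else is routine.
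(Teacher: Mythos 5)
Your proposal is correct and follows essentially the same route as the paper's proof: \LPCond is discharged rule by rule via inversion (\cref{lem:ilr}) and substitution (\cref{lem:s}) with the same choice of indexes for \rn{app}, \ExPCond by the same case analysis on well-typed closed expressions, and \AllPCond by combining preservation (\cref{lem:sr}) with canonical forms (\cref{lem:cf}) for the first premise of \rn{app} and the premise of \rn{succ}, with the remaining premises handled by free instantiation of the meta-variables. Your ordering observation — prove \LPCond first so that \cref{lem:sr} is available for \AllPCond — is exactly the dependency the paper exploits.
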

\begin{proof}
Since the aim of this first example is to illustrate the proof technique, we provide a proof where we explain the reasoning  in detail.

\proofsect{Proof of \LPCond} 
We should prove this condition for each (instantiation of meta-)rule in \cref{fig:bss-ex-lambda}. 
\begin{proofcases}
\item [\rn{app}] 
Assume that $\HasType{\es}{\e_1\appop\e_2}{\T}$ holds.  
We have to find types for the premises.  
We  proceed as follows:
\begin{enumerate}
\item First premise: by \refItem{lem:ilr}{4}, $\HasType{\es}{\e_1} {\funType{\tA'}{\tA}}$.
\item Second premise:  again by \refItem{lem:ilr}{4}, $\HasType{\es}{\e_2}{\T'}$  (without needing the assumption $\HasType{\es}{\Lam{\x}\e}{\funType{\tA'}{\tA}}$).   
\item Third premise:  $\HasType{\es}{\subst{\e}{\val_2}{\x}}{\T}$ should hold (assuming $\HasType{\es}{\Lam{\x}\e}{\funType{\tA'}{\tA}}$, $\HasType{\es}{\val_2}{\T'}$).  
Since $\HasType{\es}{\Lam{\x}\e}{\funType{\tA'}{\tA}}$, by \refItem{lem:ilr}{3} we have $\HasType{\x{:}\T'} \e {\tA}$, so by \cref{lem:s} and $\HasType{\es}{\val_2}{\T'}$ we have $\HasType{\es}{\subst\e{\val_2}\x} {\tA}$. 
\end{enumerate}
Finally, we have to show $\HasType{\es}{\val}{\tA}$, assuming $\HasType{\es}{\Lam{\x}\e}{\funtype{\tA'}{\tA}}$, $\HasType{\es}{\val_2}{\tA'}$ and  $\HasType{\es}{\val}{\tA}$, which is trivial from the third assumption. 

\item [\rn{succ}] 
Assume that $\HasType{\es}{\SuccExp{\e}}{\T}$ holds. 
By \refItem{lem:ilr}{5}, $\T=\natType$, and $\HasType{\es}{\e}{\natType}$, hence we find $\natType$ as type for the premise. 
Moreover, $\HasType{\es}{\natconst+1}{\natType}$ holds by rule \rn{t-const}. 

\item [\rn{choice}] 
Assume that $\HasType{\es}{\Choice{\e_1}{\e_2}}{\T}$ holds. 
By \refItem{lem:ilr}{6}, we have $\HasType{\es}{\e_i}{\T}$, with $i\in 1,2$. 
Hence we  find  $\T$ as type for the  premise.
Finally, we have to show $\HasType{\es}{\val}{\tA}$, assuming  $\HasType{\es}{\val}{\tA}$, which is trivial. 

\item [\rn{val}] Trivial by assumption. 
\end{proofcases}

\proofsect{Proof of \ExPCond} 
We should prove that, for each configuration (here, expression $\e$) such that $\HasType{\es}{\e}{\T}$ holds for some $\T$, there is a rule with this configuration in the conclusion.  
The expression $\e$ cannot be a variable, since a variable cannot be typed in the empty environment. 
Application, successor, choice, abstraction and constants appear as consequence in the big-step rules \rn{app}, \rn{succ}, \rn{choice} and \rn{val}. 

\proofsect{Proof of \AllPCond}
We should prove this condition for each (instantiation of meta-)rule. 
\begin{proofcases}
\item [\rn{app}] 
Assuming $\HasType{\es}{\e_1\appop\e_2}{\T}$, again  by \refItem{lem:ilr}{4}  we get  ${\HasType{\es}{\e_1} {\funType{\tA'}{\tA}}}$. 
\begin{enumerate}
\item First premise: if $\eval{\e_1}{\val}$ is derivable, then there should be a rule with $\e_1\appop\e_2$ in the conclusion and $\eval{\e_1}{\val}$ as first premise. 
Since we proved \LPCond, by preservation (\cref{lem:sr})  
$\HasType{\es}{\val}{\funType{\tA'}{\tA}}$ holds. 
Then, by \refItem{lem:cf}{1}, $\val$ has shape $\Lam{\x}\e$, hence the required rule exists. 
As noted at page \pageref{page:sc-allp}, in practice checking \AllPCond for a (meta-)rule amounts to show that configurations in the premises evaluate to results which have the required shape (to be a $\lambda$-abstraction in this case).
\item Second premise: if $\eval{\e_1}{\Lam{\x}\e}$, and $\eval{\e_2}{\val}$, then there should be a rule  with $\e_1\appop\e_2$ in the conclusion and $\eval{\e_1}{\Lam{\x}\e}$, $\eval{\e_2}{\val}$ as first two premises. 
This is trivial since the meta-variable $\val_2$ can be freely instantiated in the meta-rule. 
\item Third premise: trivial as the previous one. 
\end{enumerate}

\item [\rn{succ}] 
Assuming $\HasType{\es}{\SuccExp{\e}}{\T}$, again by \refItem{lem:ilr}{5} we get $\HasType{\es}{\e}{\natType}$. 
If $\eval{\e}{\val}$ is derivable, there should be a rule with $\SuccExp{\e}$ in the conclusion and $\eval{\e}{\val}$ as first premise. 
Indeed, by preservation (\cref{lem:sr}) and \refItem{lem:cf}{2}, $\val$ has shape $\natconst$. 

\item [\rn{choice}] Trivial since the meta-variable $\val$ can be freely instantiated.

\item [\rn{val}] Empty, because there are no premises. 
\end{proofcases}
\end{proof}

An interesting remark is that, differently from the standard approach, there is  \emph{no induction}  in the proof: everything is \emph{by cases}. 
This is a consequence of the fact that, as discussed in \cref{sect:bss-sc}, the three conditions are \emph{local}, that is, they are conditions on single rules. 
Induction is ``hidden'' once and for all in the proof that  those  three conditions  are sufficient to  ensure soundness.

If we drop in \cref{fig:bss-ex-lambda} rule \rn{succ}, then condition \ExPCond fails, since  there is no longer a rule for the well-typed configuration $\SuccExp{\natconst}$. 
If we add the \rn{fool} rule $\HasType {\es}{0\appop 0}{\natType}$, then condition \AllPCond  fails for rule \rn{app}, since $\eval{0}{0}$ is derivable, but there is no rule with $0\appop 0$  in the conclusion and $\eval{0}{0}$ as first premise.\EZComm{e la prima condizione?}


\subsection{$\MiniFJLambda$}
\label{sect:bss-fjl}

In this example, the language is a subset of $\FJLambda$ \citep{BettiniBDGV18}, a calculus extending Featherweight Java ($\FJ$) with $\lambda$-abstractions and intersection types, introduced in Java 8.  
To keep the example small, we do not consider intersections and focus  on one key typing feature: $\lambda$-abstractions can only be typed when occurring in a context requiring a given type (called  the  \emph{target type}). 
In a small-step semantics, this poses a problem: reduction can move $\lambda$-abstractions into arbitrary contexts, leading to intermediate terms which would be ill-typed. 
To  maintain  subject reduction, \citet{BettiniBDGV18} decorate $\lambda$-abstractions with their initial target type. 
In a big-step semantics, there is no need of intermediate terms and annotations.

\begin{figure} 
\begin{small}
\begin{math}
\begin{grammatica}
\produzione{\e}{\x\mid\FieldAccess{\e}{\f}\mid\ConstrCall{\CC}{\e_1,\ldots,\e_n}\mid
\MethCall{\e}{\m}{\e_1,\ldots,\e_n}\mid\LambdaExp{\xs}{\e}\mid\Cast{\T}{\e}}{expression}\\
\produzione{\T}{\CC\mid\II}{type}
\end{grammatica}
\end{math}

\HSep

\begin{math}
\begin{grammatica}
\produzione{\conf}{\Conf{\env}{\e}}{configuration}\\
\produzione{\val}{\obj{\CC}{\vals}\mid\Lam{\xs}\e}{result (value)}\\
\end{grammatica}
\end{math}

\HSep

\begin{math}
\begin{array}{c}
\MetaRule{var}{}{\eval{\Conf{\env}{\x}}{\val}}{\env(\x)=\val}
\BigSpace
\MetaRule{new}{
\eval{\Conf{\env}{\e_i}}{\val_i}\Space \forall i\in 1..n}{\eval{\Conf{\env}{\ConstrCall{\CC}{\e_1,\ldots,\e_n}}}{\obj{\CC}{\val_1,\ldots,\val_n}}} 
{} 
\\[4ex]
\MetaRule{field-access}{\eval{\Conf{\env}{\e}}{\obj{\CC}{\val_1,\ldots,\val_n}}}{
\eval{\Conf{\env}{\FieldAccess{\e}{\f_i}}}{\val_i}}
{\begin{array}{l}
\fields{\CC}=\Field{\T_1}{\f_1}\ldots\Field{\T_n}{\f_n}\\
i\in 1..n\\
\end{array}
}\\[4ex]
\MetaRule{invk}{
\begin{array}{l}
\eval{\Conf{\env}{\e_0}}{\obj{\CC}{{\vals}}}\\ 
\eval{\Conf{\env}{\e_i}}{\val_i}\Space \forall i\in 1..n\\ 
\eval{\Conf{\x_1{:}\val_1,\ldots,\x_n{:}\val_n,\kwthis{:}\obj{\CC}{\vals}}{\e}}{\val}
\end{array}}
{\eval{\Conf{\env}{\MethCall{\e_0}{\m}{\e_1,\ldots,\e_n}}}{\val}}
{ \mbody{\CC}{\m}=\Pair{\x_1\ldots\x_n}{\e}  }
\\[4ex]
\MetaRule{$\lambda$-invk}{
\begin{array}{l}
\eval{\Conf{\env}{\e_0}}{\LambdaExp{\x_1\ldots\x_n}{\e}}\\ 
\eval{\Conf{\env}{\e_i}}{\val_i}\Space \forall i\in 1..n\\ 
\eval{\Conf{\x_1{:}\val_1,\ldots,\x_n{:}\val_n}{\e}}{\val}
\end{array}}
{ \eval{\Conf{\env}{\MethCall{\e_0}{\m}{\e_1,\ldots,\e_n}}}{\val}}
{}
\\[4ex]
\MetaRule{$\lambda$}{}{\eval{\Conf{\env}{\Lam{\xs}\e}}{\Lam{\xs}\e}}{}
\BigSpace
\MetaRule{upcast}{
\eval{\Conf{\env}{\e}}{\val}
}{\eval{\Conf{\env}{\Cast{\T}{\e}}}{\val}}
{}
\end{array}
\end{math}
\end{small}
\caption{$\MiniFJLambda$: syntax  and  big-step semantics}\label{fig:FJ-lambda-big-step}\label{fig:FJ-lambda-syntax}
\end{figure}

The syntax is given in the first part of \cref{fig:FJ-lambda-syntax}. 
We assume sets of \emph{variables} $\x$, \emph{class names} $\CC$, \emph{interface names} $\II$, $\JJ$, \emph{field names} $\f$, and \emph{method names} $\m$. 
As usual, we assume a special variable $\kwthis$, used in method bodies to refer to the receiver object. 
Interfaces which have \emph{exactly} one method (dubbed \emph{functional interfaces}) can be used as target 
types.
Expressions are those of $\FJ$, plus $\lambda$-abstractions, and types are class and interface names. 
Throughout this section $\xs$  and $\vals$ denote lists of variables and values, respectively. 
In $\Lam{\xs}\e$ we assume that $\xs$ is not empty and  $\e$ is not a $\lambda$-abstraction.
For simplicity, we only consider \emph{upcasts}, which have no runtime effect, but are important to allow the programmer to use $\lambda$-abstractions,
as exemplified in discussing typing rules.

To be concise, the class table is abstractly modelled  as follows: 
\FDComm{sono quelli dichiarati o anche ereditati?} \EZComm{anche quelli ereditati}
\begin{itemize}
\item $\fields{\CC}$ gives the sequence of  field declarations $\Field{\T_1}{\f_1}..\Field{\T_n}{\f_n}$ for class $\CC$  
\item $\mtype{\T}{\m}$ gives, for each method $\m$ in class or interface $\T$, the pair $\funtype{\T_1\ldots\T_n}{\T'}$ 
consisting of the parameter types and return type
\item $\mbody{\CC}{\m}$ gives, for each method $\m$ in class $\CC$, the pair %
$\Pair{\x_1\ldots\x_n}{\e}$ 
consisting of the parameters and body
\item $\leqfj$ is the reflexive and transitive closure of the union of the $\aux{extends}$ and $\aux{implements}$ relations, stating that two class or interface names are related iff they occur in the class table connected by the keywords $\aux{extends}$ or $\aux{implements}$ 
\item $\umtype{\II}$ gives, for each \emph{functional} interface $\II$, $\mtype{\II}{\m}$,  where $\m$ is the only method of $\II$. 
\end{itemize}

The big-step semantics is given in the last part of \cref{fig:FJ-lambda-big-step}. $\MiniFJLambda$ shows an example of instantiation of the framework where configurations include an auxiliary structure, rather than being just language terms.  
In this case, the structure is an \emph{environment} $\env$ (a finite map from variables to values) modelling the current stack frame. 
Furthermore, results are not particular configurations: they are either \emph{objects}, of shape $\obj{\CC}{\vals}$, or $\lambda$-abstractions. 

Rules for $\FJ$ constructs are straightforward. 
Note that, since we only consider upcasts, casts have no runtime effect. 
Indeed, they are guaranteed to succeed on well-typed expressions.
Rule \rn{$\lambda$-invk} shows that, when the receiver of a method is a $\lambda$-abstraction, the method name is not significant at runtime, and the effect is that the body of the function is evaluated as in  the  usual application.

The type system, consisting of judgements for configurations, expressions and values,  is given in \cref{fig:FJ-bss-ex-lambda-types}. 
The following assumptions formalize standard $\FJ$ typing constraints on the class table.
\begin{description}
\item[\MBodyCond] Method bodies are well-typed with respect to method types:
\begin{itemize}
\item either $\mbody{\CC}{\m}$ and $\mtype{\CC}{\m}$ are both undefined
\item or $\mbody{\CC}{\m}=\Pair{\x_1\ldots\x_n}{\e}$,  
$\mtype{\CC}{\m}=\funtype{\T_1\ldots\T_n}{\T}$, and ${\HasType{\x_1{:}\T_1,\ldots,\x_n{:}\T_n,\kwthis{:}\CC}{\e}{\T}}$. 
\end{itemize}
\item[\InhFldCond]  Fields are inherited, no field hiding:\\
if  $\T\leqfj\T'$, and $\fields{\T'}=\Field{\T_1}{\f_1}\ldots\Field{\T_n}{\f_n}$, then
\Space $\fields{\T}=\Field{\T_1}{\f_1}\ldots\Field{\T_m}{\f_m}$, $m\geq n$, and $\f_i\neq\f_j$ for $i\neq j$.
\item[\InhMethCond] Methods are inherited, no method overloading, invariant overriding:\\
if $\T\leqfj\T'$, and $\mtype{\T'}{\m}$ is defined, then $\mtype{\T}{\m}=\mtype{\T'}{\m}$.
\end{description}

\EZComm{Ho messo dei nomi simbolici, che si potrebbero anche rendere pi\`u espressivi, nel caso si riesca a fare riferimento a queste assunzioni dove servono nelle prove}

\begin{figure}
\begin{small}
\begin{math}
\begin{array}{c}
\MetaRule{t-conf}{\HasType{\es}{\val_i}{\T_i}\Space \forall i\in 1..n\Space\Space \HasType{\x_1{:}\T'_1,\ldots,\x_n{:}\T'_n}{\e}{\T}}{\HasType{\es}{\Conf{\x_1{:}\val_1,\ldots,\x_n{:}\val_n}{\e}}{\T}}
{\T_i\leqfj\T'_i\Space \forall i\in 1..n }
\\[4ex]
\MetaRule{t-var}{}{\HasType{\Gamma}{\x}{\T}}{\Gamma(\x)=\T}
\BigSpace
\MetaRule{t-upcast}{\HasType{\Gamma}{\e}{\T}}{\HasType{\Gamma}{\Cast{\T}{\e}}{\T}}{}
\\[4ex] 
\MetaRule{t-field-access}{\HasType{\Gamma}{\e}{\CC}}{\HasType{\Gamma}{\FieldAccess{\e}{\f}}{\T_i}}
{\begin{array}{l}
\fields{\CC}=\Field{\T_1}{\f_1}\ldots\Field{\T_n}{\f_n}\\
i\in 1..n
\end{array}
}\\[4ex]
\MetaRule{t-new}{\HasType{\Gamma}{\e_i}{\T_i}\Space \forall i\in 1..n}{\HasType{\Gamma}{\ConstrCall{\CC}{\e_1,\ldots,\e_n}}{\CC}}
{\begin{array}{l}
\fields{\CC}=\Field{\T_1}{\f_1}\ldots\Field{\T_n}{\f_n}\\
\end{array}
}
\\[4ex]
\MetaRule{t-invk}{\HasType{\Gamma}{\e_i}{\T_i}\Space \forall i\in 0..n}{\HasType{\Gamma}{\MethCall{\e_0}{\m}{\e_1,\ldots,\e_n}}{\T}}
{\begin{array}{l}
\e_0\ \mbox{not of shape}\ \LambdaExp{\xs}{\e}\\
\mtype{\T_0}{\m}=\funtype{\T_1\ldots\T_n}{\T}\\\
\end{array}
}
\\[4ex]
\MetaRule{t-$\lambda$}{\HasType{
\x_1{:}\T_1,\ldots,\x_n{:}\T_n}{\e}{\T}}{\HasType{\Gamma}{\LambdaExp{\x_1\ldots\x_n}{\e}}{\II}}
{\begin{array}{l}
\umtype{\II}=\funtype{\T_1\ldots\T_n}{\T}
\end{array}
}
\\[4ex]
\MetaRule{t-object}{\HasType{\Gamma}{\val_i}{\T'_i}\Space\forall i\in 1..n}{\HasType{\Gamma}{\obj{\CC}{\val_1,\ldots,\val_n}}{\CC}{}}
{\begin{array}{l}
\fields{\CC}=\Field{\T_1}{\f_1}\ldots\Field{\T_n}{\f_n}\\
\T'_i\leqfj\T_i\Space \forall i\in 1..n
\end{array}}
\\[4ex]
\MetaRule{t-sub}{\HasType{\Gamma}{\e}{\T}}{\HasType{\Gamma}{\e}{\T'}}{
\begin{array}{l}
\e\ \mbox{not of shape}\ \LambdaExp{\xs}{\e}\\
\T\leqfj\T'
\end{array}}
\end{array}
\end{math}
\end{small}
\caption{$\MiniFJLambda$: type system}\label{fig:FJ-bss-ex-lambda-types}
\end{figure}

Besides the standard typing features of $\FJ$, the $\MiniFJLambda$ type system ensures the following. 
\begin{itemize}
\item A functional interface $\II$ can be assigned as type to a $\lambda$-abstraction which has the 
functional type of the method,
see rule \rn{t-$\lambda$}.
\item A $\lambda$-abstraction should have a \emph{target type} determined by the context where the $\lambda$-abstraction occurs. 
More precisely, as described by \citet[p.~602]{GoslingEtAl14}, a $\lambda$-abstraction in our calculus can only occur as  return expression of a method or argument of constructor, method call or cast.  
Then, in some contexts a $\lambda$-abstraction cannot be typed,
in our calculus when  occurring as receiver in field access or method invocation, hence these cases should be prevented.
This is implicit in rule \rn{t-field-access}, since the type of the receiver should be a class name, whereas it is explicitly forbidden in rule \rn{t-invk}. 
Finally, 
a $\lambda$-abstraction cannot be the main expression of a program, as also in this case the target type is not well defined. 
For simplicity, this requirement is not enforced by typing rules, but it can be easily recovered as an assumption on the source program. 
\item A $\lambda$-abstraction with a given target type $\JJ$ should have type \emph{exactly} $\JJ$: a subtype $\II$ of $\JJ$ is not enough. 
Consider, for instance, the following class table:
\begin{lstlisting}
interface J {}
interface I extends J { A m(A x); }
class C { J f; } 
class D {
  D m(I y) { return new D().n(y); }
  D n(J y) { return new D(); }
}
\end{lstlisting} 
In the main expression $\MethCall{\ConstrCall{\DD}{}}{\n}{\lambda\x.\x}$, 
the $\lambda$-abstraction has target type $\JJ$, which is \emph{not} a functional interface, hence the expression is ill-typed in Java (the compiler has no functional type against which to typecheck the {$\lambda$-abstraction}). On the other hand, in the body of method $\m$, the parameter $y$ of type $\II$ can be passed, as usual, to method $\n$ expecting a supertype. For instance, the main expression $\MethCall{\ConstrCall{\DD}{}}{\m}{\lambda\x.\x}$ is well-typed, since the $\lambda$-abstraction has target type $\II$, and can be safely passed to method $\n$, since it is not used as function there. To formalise this behaviour, it is forbidden to apply subsumption to {$\lambda$-abstraction}s, see rule \rn{t-sub}.
\item However, $\lambda$-abstractions occurring as results rather than  in source code (that is, in the environment and as fields of objects) are allowed to have a subtype of the required type, see the explicit side condition in rules \rn{t-conf} and \rn{t-object}. 
For instance, in the above class table, 
the expression 
$\ConstrCall{\CC}{\Cast{\II}{\Lam{\x}\x}}$ is well-typed, whereas $\ConstrCall{\CC}{\Lam{\x}\x}$ is ill typed, since rule \rn{t-sub} cannot be applied to {$\lambda$-abstraction}s. When the expression is evaluated, the result is $\obj{\CC}{\Lam{\x}\x}$, which is well-typed.  
\end{itemize}
As mentioned at the beginning, the obvious small-step semantics would produce not typable expressions. {In the above example, } we get 
\[ \ConstrCall{\CC}{\Cast{\II}{\Lam{\x}\x}}\longrightarrow \ConstrCall{\CC}{\Lam{\x}\x}\longrightarrow\obj{\CC}{\Lam{\x}\x} \]
and $\ConstrCall{\CC}{\Lam{\x}\x}$ has no type, while $\ConstrCall{\CC}{\Cast{\II}{\Lam{\x}\x}}$ and $\obj{\CC}{\Lam{\x}\x}$ have type $\CC$.

As expected, to show soundness (\cref{thm:sdfjl}) lemmas of inversion and canonical forms are handy: they can be easily proved as usual. 
Instead, we do not need a substitution lemma, since environments associate variables \DA{with} values. 

\begin{lemma}[Inversion]\label{lem:il} 
The following hold:
\begin{enumerate}
\item\label{lem:il:1} If $\HasType{ }{\Conf{\x_1{:}\val_1,\ldots,\x_n{:}\val_n}{\e}}{\T}$, then $\HasType{\x_1{:}\T_1,\ldots,\x_n{:}\T_n}{\e}{\T}$, $\HasType{\es}{\val_i}{\T'_i}$ and  $\T'_i\leqfj\T_i$ for all $i \in 1..n$. 
\item\label{lem:il:3} If  $\HasType{\Gamma}{\x}{\T}$, then $\Gamma(\x)\leqfj\T$.
\item\label{lem:il:4} If  $\HasType{\Gamma}{\FieldAccess{\e}{\f_i}}{\T}$, then $\HasType{\Gamma}{\e}{\CC}$ and {\em $\fields{\CC}=\Field{\T_1}{\f_1}\ldots\Field{\T_n}{\f_n}$} and $\T_i\leqfj T$ where 
$i\in 1..n$.
\item\label{lem:il:5} If  $\HasType{\Gamma}{\ConstrCall{\CC}{\e_1,\ldots,\e_n}}{\T}$, then $\CC\leqfj\T$ and {\em $\fields{\CC}=\Field{\T_1}{\f_1}\ldots\Field{\T_n}{\f_n}$} and $\HasType{\Gamma}{\e_i}{\T_i}$ for all $i\in 1..n$.
\item\label{lem:il:6} If  $\HasType{\Gamma}{\MethCall{\e_0}{\m}{\e_1,\ldots,\e_n}}{\T}$, then $\e_0$ not of shape $\LambdaExp{\xs}{\e}$ and 
$\HasType{\Gamma}{\e_i}{\T_i}$ for all $i\in 0..n$ and {\em $\mtype{\T_0}{\m}=\funtype{\T_1\ldots\T_n}{\T'}$} with $\T'\leqfj\T$.
\item\label{lem:il:7} If  $\HasType{\Gamma}{\LambdaExp{\xs}{\e}}{\T}$, then $\T=\II$ and {\em $\umtype{\II}=\funtype{\T_1\ldots\T_n}{\T'}$} and $\HasType{\x_1{:}\T_1,\ldots,\x_n{:}\T_n}{\e}{\T'}$.
\item\label{lem:il:8} If  $\HasType{\Gamma}{\Cast{\T'}{\e}}{\T}$, then $\HasType{\Gamma}{\e}{\T'}$ and $\T'\leqfj \T$.
\item\label{lem:il:9} If  $\HasType{\Gamma}{\obj{\CC}{\val_1,\ldots,\val_n}}{\T}{}$, then $\CC\leqfj\T$ and {\em $\fields{\CC}=\Field{\T_1}{\f_1}\ldots\Field{\T_n}{\f_n}$} and $\HasType{\Gamma}{\val_i}{\T'_i}$ and $\T'_i\leqfj\T_i $ for all $i\in 1..n$.
\end{enumerate}
\end{lemma}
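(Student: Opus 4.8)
The plan is to prove all nine clauses simultaneously by a single induction on the typing derivation, exploiting the fact that the type system of \cref{fig:FJ-bss-ex-lambda-types} is \emph{almost} syntax-directed: for every syntactic form there is exactly one introduction rule, and the only non-syntax-directed rule is subsumption \rn{t-sub}, whose side condition forbids its application to $\lambda$-abstractions. Hence, given a derivation of $\HasType{\Gamma}{\e}{\T}$ with $\e$ of a fixed shape, the last applied rule is either the introduction rule for that shape or \rn{t-sub} (the latter only when $\e$ is not a $\lambda$-abstraction); configurations are handled separately, as they are typed only by \rn{t-conf}.

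For each clause I would case-split on this last rule. When it is the introduction rule, the desired premises are read off directly, using reflexivity of $\leqfj$ to turn an equality into the required subtyping (\eg in the field-access, constructor, invocation, cast, and object clauses). When the last rule is \rn{t-sub}, its premise is $\HasType{\Gamma}{\e}{\T''}$ with $\T''\leqfj\T$ for the \emph{same} expression $\e$; I apply the induction hypothesis to this strictly smaller derivation, obtain the structural conclusion with the intermediate type $\T''$ in place of $\T$, and then close the argument by transitivity of $\leqfj$, which upgrades $\,\cdots\leqfj\T''$ to $\,\cdots\leqfj\T$. This uniform pattern covers the variable, field-access, constructor, invocation, cast, and object clauses.

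Two clauses are immediate and do not involve subsumption at all. The configuration clause follows because \rn{t-conf} is the only rule that can type a configuration, so the premises $\HasType{\x_1{:}\T_1,\ldots,\x_n{:}\T_n}{\e}{\T}$, $\HasType{\es}{\val_i}{\T'_i}$, and $\T'_i\leqfj\T_i$ are obtained directly. The $\lambda$-abstraction clause is similarly direct: since \rn{t-sub} cannot be applied to a $\lambda$-abstraction, the only rule whose conclusion matches is \rn{t-$\lambda$}, which forces the type to be exactly a functional interface $\II$ and yields the body typing; this is precisely why the clause states an equality $\T=\II$ rather than a subtyping, in contrast with the object clause, where subsumption is permitted and only $\CC\leqfj\T$ survives.

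The only real subtlety — and the step I expect to require the most care — is the systematic treatment of iterated \rn{t-sub}: one must observe that repeated subsumptions compose, so that the induction through \rn{t-sub} always reduces to the introduction rule together with a single transitive subtyping step, and that this reduction is blocked exactly on $\lambda$-abstractions. Everything else is a routine reading-off of rule premises; notably, the class-table well-formedness assumptions (\MBodyCond, \InhFldCond, \InhMethCond) play no role here, being needed only later for preservation and progress.
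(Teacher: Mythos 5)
Your proof is correct and is exactly the standard argument the paper has in mind: it states this lemma without proof, remarking only that inversion ``can be easily proved as usual,'' i.e., by induction on the typing derivation, reading off the unique syntax-directed rule and absorbing iterated applications of \rn{t-sub} by transitivity of $\leqfj$ (with the $\lambda$-abstraction and configuration cases exempt because subsumption is inapplicable there). Your observations that this forces the equality $\T=\II$ in the $\lambda$ clause and that the class-table assumptions play no role here are both accurate.
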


\begin{lemma}[Canonical Forms]\label{lem:cfj} 
The following hold:
\begin{enumerate}
\item \label{lem:cfj:1}
If $\HasType\es \val {\CC}$, then $\val=\obj{\DD}{\vals}$ and $\DD\leqfj\CC$.
\item \label{lem:cfj:2}
If $\HasType\es \val \II$, then either $\val=\obj{\CC}{\vals}$ and $\CC\leqfj\II$ or $\val=\LambdaExp{\xs}{\e}$ and $\II$ is a functional interface.
\end{enumerate}
\end{lemma}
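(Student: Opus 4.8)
The plan is to prove both statements by a straightforward case analysis on the syntactic shape of the value $\val$, appealing only to the inversion lemma (\cref{lem:il}). Recall from the grammar in \cref{fig:FJ-lambda-syntax} that every value is either an object $\obj{\DD}{\vals}$ or a $\lambda$-abstraction $\Lam{\xs}\e$, and that the two syntactic categories of types---class names and interface names---are disjoint. This disjointness is precisely what lets us discard the impossible combinations, so no induction is required.

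For \refItem{lem:cfj}{1}, I would assume $\HasType{\es}{\val}{\CC}$ and split on the shape of $\val$. If $\val = \obj{\DD}{\vals}$, then \refItem{lem:il}{9} immediately yields $\DD\leqfj\CC$, which is exactly the desired conclusion. If instead $\val = \Lam{\xs}\e$, then \refItem{lem:il}{7} forces the type of $\val$ to be an interface name $\II$; since $\CC$ is a class name and the two categories are disjoint, this case cannot arise, and the statement holds vacuously.

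For \refItem{lem:cfj}{2}, I would assume $\HasType{\es}{\val}{\II}$ and again split on $\val$. If $\val = \obj{\CC}{\vals}$, then \refItem{lem:il}{9} gives $\CC\leqfj\II$, establishing the first alternative. If $\val = \Lam{\xs}\e$, then \refItem{lem:il}{7} tells us that $\umtype{\II}$ is defined, i.e.\ that $\II$ is a functional interface, establishing the second alternative.

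I do not expect any genuine obstacle here: the lemma is a direct corollary of inversion, and its only substantive content is the disjointness of class and interface names that rules out a $\lambda$-abstraction being assigned a class type. All the delicate work---in particular the fact that subsumption cannot be applied to $\lambda$-abstractions, which is what makes \refItem{lem:il}{7} pin the type of a $\lambda$-abstraction down to an interface \emph{exactly} (rather than up to $\leqfj$)---has already been absorbed into the statement of \cref{lem:il}.
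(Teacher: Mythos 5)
Your proof is correct: the paper itself omits the argument, remarking only that the canonical forms lemma ``can be easily proved as usual,'' and your case analysis on the shape of $\val$ via \cref{lem:il} (items \ref{lem:il:7} and \ref{lem:il:9}), with the disjointness of class and interface names ruling out a $\lambda$-abstraction of class type, is exactly that standard argument. Nothing is missing.
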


We write $\HasType{\Gamma}{\e}{\leqfj\T}$ as short for $\HasType{\Gamma}{\e}{\T'}$ and $\T'\leqfj\T$ for some $\T'$. 
In order to state soundness, set \ple{\ConfSet_2,\ResSet_2,\RuleSet_2} the big-step semantics defined in \cref{fig:FJ-lambda-big-step}, and 
let $\ConfPredTwo_\T = \{ \Conf{\env}{\e} \in \ConfSet_2 \mid \HasType{\es}{\Conf{\env}{\e}}{\leqfj\T} \}$ and 
$\ResPredTwo_\T = \{ \val \in \ResSet_2 \mid \HasType{\es}{\val}{\leqfj\T} \}$, for $\T$ defined  in \cref{fig:FJ-lambda-syntax}.  

\begin{theorem}[Soundness]\label{thm:sdfjl}
The big-step semantics \ple{\ConfSet_2,\ResSet_2,\RuleSet_2} and the indexed predicate $\PredTwo$ satisfy the conditions \LPCond, \ExPCond and \AllPCond of \cref{sect:bss-sc}.
\end{theorem}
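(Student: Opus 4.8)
The plan is to follow exactly the pattern of the proof of \cref{thm:sd}: verify the three conditions \LPCond, \ExPCond and \AllPCond by a case analysis on the seven big-step rules of \cref{fig:FJ-lambda-big-step}, with \emph{no} induction, since the conditions are local. The tools are the inversion lemma \cref{lem:il}, the canonical forms lemma \cref{lem:cfj}, the class-table assumptions \MBodyCond, \InhFldCond and \InhMethCond, and the preservation lemma \cref{lem:sr} (available once \LPCond is established, and used inside the proofs of the other two conditions, exactly as in \cref{thm:sd}). The only genuinely new ingredient with respect to the simply-typed case is that the predicate is closed \emph{up to subtyping} ($\ConfPredTwo_\T$ and $\ResPredTwo_\T$ collect things typeable with some subtype of $\T$); correspondingly every inversion step will yield a $\leqfj$-bound that must be threaded through, and we will repeatedly exploit that rules \rn{t-conf} and \rn{t-object} admit subsumption on values even though \rn{t-sub} does not apply to $\lambda$-abstractions.

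For \LPCond, I would treat each rule uniformly: from $\HasType{\es}{\Conf{\env}{\e}}{\leqfj\T}$ invert with \refItem{lem:il}{1} to obtain a typing $\HasType{\Gamma}{\e}{\T_0}$ of the underlying expression in the context $\Gamma$ induced by $\env$ (with $\T_0\leqfj\T$ and the stored values typeable at subtypes of their $\Gamma$-types), then invert again on the shape of $\e$ to read off types for its immediate sub-expressions, and finally rebuild a configuration typing via \rn{t-conf} for each premise — this supplies the indexes $\idx_1,\ldots,\idx_n$. Clause (2) is discharged by reassembling the result typing: for \rn{new} via \rn{t-object}, for \rn{var}, \rn{field-access} and \rn{upcast} by transporting the $\leqfj$-bounds (using \refItem{lem:il}{8} in the cast case), and for the invocations by the method return type. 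The crux is \rn{$\lambda$-invk}: the receiver $\e_0$ is typed by \rn{t-invk} (so $\e_0$ is not a $\lambda$ and $\mtype{\T_0}{\m}=\funtype{\T_1\ldots\T_n}{\T'}$), yet evaluates to a $\lambda$-abstraction, so by preservation and \refItem{lem:cfj}{2} its value has as type a functional interface $\II\leqfj\T_0$ whose unique method must be $\m$; \InhMethCond then forces $\umtype{\II}=\mtype{\T_0}{\m}$, and \refItem{lem:il}{7} yields $\HasType{\x_1{:}\T_1,\ldots,\x_n{:}\T_n}{\e}{\T'}$, which together with the argument values (typeable at subtypes of the $\T_i$) gives a typing of the body configuration at $\T'$ through \rn{t-conf}. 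This clause genuinely uses the conditional structure of \LPCond, since the body configuration's type depends on the runtime class and values produced by the earlier premises.

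For \ExPCond, each expression form is the conclusion of exactly one (meta-)rule family — \rn{var}, \rn{field-access}, \rn{new}, \rn{invk}/\rn{$\lambda$-invk}, \rn{$\lambda$} and \rn{upcast} — so a rule with the given configuration in the conclusion always exists; the only side condition to discharge is that in \rn{var} the environment is defined on $\x$, which follows because a typeable $\Conf{\env}{\x}$ forces $\x\in\dom(\env)$. For \AllPCond, the argument premises and the body premise carry a \emph{free} result meta-variable, so the rule itself (suitably instantiated) witnesses the condition and these sub-cases are immediate; the informative sub-cases are the receiver premises of \rn{field-access} and of the invocations. There, preservation bounds the receiver value's type and canonical forms fix its shape: in \rn{field-access} it must be an object $\obj{\CC}{\vals}$ and \InhFldCond guarantees the field $\f_i$ is present; for invocation, \refItem{lem:cfj}{2} says the value is either an object or a $\lambda$, and since the first premises of \rn{invk} and \rn{$\lambda$-invk} share the same configuration they are $\sim_1$-equivalent, so whichever value arises is matched by one of the two rules (with \InhMethCond and \MBodyCond ensuring the method-body side condition of \rn{invk} is met when the value is an object).

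The main obstacle will be the $\lambda$-invocation case of \LPCond, where one must reconcile the \emph{static} typing discipline — which refuses to type a $\lambda$-abstraction in receiver position and forbids subsumption on $\lambda$s (\rn{t-sub}) — with the \emph{dynamic} fact that a variable or field of functional-interface type evaluates to a $\lambda$. Making this precise requires identifying the unique method of the functional interface with the invoked method $\m$ and using the invariance of method types along $\leqfj$ (\InhMethCond); it is exactly the point where the up-to-subtype formulation of the predicate, permitting a $\lambda$-value to sit at a subtype in the environment via \rn{t-conf}, is indispensable, while the remaining cases are routine $\FJ$-style bookkeeping.
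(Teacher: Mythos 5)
Your proposal is correct and follows essentially the same route as the paper's proof: a rule-by-rule case analysis with no induction, driven by \cref{lem:il}, \cref{lem:cfj}, the class-table assumptions and \cref{lem:sr}, with the up-to-$\leqfj$ formulation of $\PredTwo$ threaded through each inversion. You even isolate the same crux (\rn{$\lambda$-invk}) and resolve it the same way — in fact slightly more explicitly than the paper, which silently identifies $\umtype{\II}$ with $\mtype{\T_0}{\m}$ where you justify it via \InhMethCond and the uniqueness of the functional interface's method.
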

\begin{proof}
\proofsect{Proof of \LPCond} 
The proof is by cases on instantiations of meta-rules. 
In all such cases, we have a configuration $\Conf{\env}\e$ in the conclusion, with $\env = y_1{:}\hat\val_1,\ldots,y_p{:}\hat\val_p$,  such that 
$\HasType{}{\Conf{y_1{:}\hat\val_1,\ldots,y_p{:}\hat\val_p}\e}{\leqfj\hat\T}$, hence, 
by \refItem{lem:il}{1},  we get $\HasType\es{\hat\val_\ell}{\leqfj\hat\T_\ell}$ for all $\ell\in 1..p$ and $\HasType{\Gamma}{\e}{\T}$ with $\Gamma = y_1{:}\hat\T_1,\ldots,y_p{:}\hat\T_p$ and $\T\leqfj\hat\T$, for some $\hat\T_1,\ldots,\hat\T_p$. 
\begin{proofcases}
\item [\rn{var}] 
\refItem{lem:il}{3} applied to $\HasType{\Gamma}{\x}{\T}$ implies $\x = y_i$ and $\hat\T_i\leqfj\T$ for some $i \in 1..p$. 
Then, the thesis follows by transitivity of subtyping since $\env(\x) = \hat\val_i$ and $\HasType{}{\hat\val_i}{\leqfj\hat\T_i}$. 
\item [\rn{field-access}] \refItem{lem:il}{4} applied to $\HasType{\Gamma}{\FieldAccess{\e}{\f_i}}{\T}$ implies $\HasType{\Gamma}{\e}{\DD}$ and $\fields{\DD}=\Field{\T_1}{\f_1}\ldots\Field{\T_m}{\f_m}$ and $\T_i\leqfj T$ where 
$i\in 1..m$.  
Since $\eval{\Conf{\env}\e}{\obj{\CC}{\val_1,\ldots,\val_n}}$ is a premise we assume $\HasType{}{\obj{\CC}{\val_1,\ldots,\val_n}}{\leqfj\DD}$, which implies $\CC\leqfj\DD$ and $\fields{\CC}=\Field{\T'_1}{\f'_1}\ldots\Field{\T'_n}{\f'_n}$ and $\HasType{\Gamma}{\val_j}{\leqfj\T'_j}$ for all $j\in 1..n$ by \refItem{lem:il}{9}. From $\CC\leqfj\DD$ and assumption \InhFldCond we have $m\leq n$ and $\T_j=\T'_j$ and $\f_j=\f'_j$ for all $j\in 1..m$.  
We conclude $\HasType{}{\val_i}{\leqfj\T}$. 

\item [\rn{new}] 
\refItem{lem:il}{5} applied to $\HasType{\Gamma}{\ConstrCall{\CC}{\e_1,\ldots,\e_n}}{\T}$ implies $\CC\leqfj\T$ and $\fields{\CC}=\Field{\T_1}{\f_1}\ldots\Field{\T_n}{\f_n}$ and $\HasType{\Gamma}{\e_i}{\T_i}$ for all $i\in 1..n$. 
Since $\eval{\Conf{\env}{\e_i}}{\val_i}$ is a premise we assume $\HasType{}{\val_i}{\leqfj\T_i}$  for all $i\in 1..n$. 
Using rule \rn{t-object} we derive $\HasType{}{\obj{\CC}{\val_1,\ldots,\val_n}}{\leqfj\T}$. 

\item [\rn{invk}] 
\refItem{lem:il}{6} applied to $\HasType{\Gamma}{\MethCall{\e_0}{\m}{\e_1,\ldots,\e_n}}{\T}$ implies $\e_0$ not of shape $\LambdaExp{\xs}{\e}$ and 
$\HasType{\Gamma}{\e_i}{\T_i}$ for all $i\in 0..n$ and $\mtype{\T_0}{\m}=\funtype{\T_1\ldots\T_n}{\T'}$ with $\T'\leqfj\T$. 
Since  $\eval{\Conf{\env}{\e_0}}{\obj{\CC}{\vals'}}$ is a premise we assume $\HasType{}{\obj{\CC}{\vals'}}{\leqfj\T_0}$, which implies $\CC\leqfj\T_0$ by \refItem{lem:il}{9}. 
Since $\eval{\Conf{\env}{\e_i}}{\val_i}$ is a premise we assume 
$\HasType{}{\val_i}{\leqfj\T_i}$ for all $i\in 1..n$. 
We have $\mtype{\CC}{\m}=\funtype{\T_1\ldots\T_n}{\T'}$ since $\mtype{\T_0}{\m}=\funtype{\T_1\ldots\T_n}{\T'}$ and $\CC\leqfj\T_0$ by assumption \InhMethCond. 
By assumption \MBodyCond, $\HasType{x_1{:}\T_1,\ldots,\x_n{:}\T_n,\kwthis{:}\CC}{\e}{\T'}$. 
Therefore, by rule \rn{t-conf} and since $\T'\leqfj \T$,  we can derive $\HasType\es{\Conf{\x_1{:}\val_1,\ldots,\x_n{:}\val_n,\kwthis{:}\obj{\CC}{\vals'}}{\e}}{\leqfj\T}$. 

\item [\rn{$\lambda$-invk}] 
\refItem{lem:il}{6} applied to $\HasType{\Gamma}{\MethCall{\e_0}{\m}{\e_1,\ldots,\e_n}}{\T}$ implies 
$\HasType{\Gamma}{\e_i}{\T_i}$ for all $i\in 0..n$ and $\mtype{\T_0}{\m}=\funtype{\T_1\ldots\T_n}{\T'}$ with $\T'\leqfj\T$. 
Since $\eval{\Conf{\env}{\e_0}}{\LambdaExp{\xs}{\e}}$ is a premise we assume $\HasType{}{\LambdaExp{\xs}{\e}}{\leqfj\T_0}$, which implies $\II\leqfj\T_0$ and $\umtype{\II}=\funtype{\T_1\ldots\T_n}{\T'}$ and $\HasType{\x_1{:}\T_1,\ldots,\x_n{:}\T_n}{\e}{\T'}$ by \refItem{lem:il}{7}. 
Since  $\eval{\Conf{\env}{\e_i}}{\val_i}$ is a premise we assume 
$\HasType{}{\val_i}{\leqfj\T_i}$ for all $i\in 1..n$.  
Therefore we derive $\HasType\es{\Conf{\x_1{:}\val_1,\ldots,\x_n{:}\val_n}{\e}}{\leqfj\T}$.

\item [\rn{$\lambda$}] 
The thesis is trivial as the configuration and the final result are the same. 

\item [\rn{upcast}] 
\refItem{lem:il}{8} applied to $\HasType{\Gamma}{\Cast{\T'}{\e}}{\T}$ implies $\HasType{\Gamma}{\e}{\leqfj \T}$. 
From $\eval{\Conf{\env}\e}\val$ we conclude $\HasType\es\val{\leqfj\T}$. 
\end{proofcases}

\proofsect{Proof of \ExPCond}  
It is easy to verify that if $\HasType{\es}{\Conf{\env}{\e}}{\leqfj\T}$, then there is a rule in  \cref{fig:FJ-lambda-big-step}, whose conclusion is $\Conf\env\e$, just because for every syntactic construct there is a corresponding rule and side conditions in typing rules imply those of big-step rules. 
The only less trivial case is that of variables: 
if $\HasType{\es}{\Conf{\env}{\x}}{\leqfj\T}$, then by \cref{lem:il}~(\ref{lem:il:1},\ref{lem:il:3}), $\x \in \dom(\env)$, hence rule \rn{var} is applicable, as the side condition is satisfied. 

\proofsect{Proof of \AllPCond}  
Rule \rn{field-access} requires that $\Conf{\env}{\e}$ reduces to an object with a field $\f_i$, and this is assured by the typing rule \rn{t-field-access}, which prescribes a class type for the expression $\e$ with the field $\f_i$, together with the validity of condition \LPCond (which assures type preservation by \cref{lem:sr}) and  \refItem{lem:cfj}{1}. 
For a well-typed method call $\MethCall{\e_0}{\m}{\e_1,\ldots,\e_n}$ the configuration $\Conf{\env}{\e_0}$ can reduce either to an object or to a $\lambda$-expression. 
In the first case we can apply rule \rn{invk} and in the second case rule \rn{$\lambda$-invk}. 
In both cases the typing assures that the arguments are in the right number, while the condition is trivial for the last premise. 
\end{proof}

\subsection{Intersection and union types}
\label{sect:bss-iut}

We enrich the type system of \cref{fig:bss-ex-lambda-types} by adding intersection and union type constructors and the corresponding typing rules, see \cref{fig:iutypes}. 
Intersection types for the $\lambda$-calculus have been widely studied, \eg by \citet{BarendregtDS13}. 
Union types naturally model conditionals \citep{Grudzinski00} and non-deterministic choice \citep{DezaniLP98}. 

The production in the top section of \cref{fig:iutypes} is again interpreted coinductively to allow possibly infinite types, but, 
as usual with recursive types, we only consider \emph{contractive} types \cite{Pierce02}, that is, we require an infinite number of arrows in each infinite path in a type (viewed as a tree). 
On the other hand, typing rules are still interpreted inductively. 

\begin{figure}
\begin{small}
\begin{math}
\begin{grammatica}
\production{\tA}{\natType\mid \tA_1\to \tA_2\mid \tA_1\wedge \tA_2\mid \tA_1\vee \tA_2}{type}
\end{grammatica}
\end{math}
\HSep
\begin{math}
\begin{array}{c}
\MetaRule
{$\wedge$ I}
{\HasType{\Gamma}{\e}{\T}\Space\HasType{\Gamma}{\e}{\Ts}}
{\HasType{\Gamma}{\e}{\T\wedge\Ts}}{}
\BigSpace
\MetaRule{$\wedge$ E}{\HasType{\Gamma}{\e}{\T\wedge\Ts}}{\HasType{\Gamma}{\e}{\T}}{}\BigSpace
\MetaRule{$\wedge$ E}{\HasType{\Gamma}{\e}{\T\wedge\Ts}}{\HasType{\Gamma}{\e}{\Ts}}{}\\[3ex]
\MetaRule{$\vee$ I}{\HasType{\Gamma}{\e}{\T}}{\HasType{\Gamma}{\e}{\T\vee\Ts}}{}\BigSpace
\MetaRule{$\vee$ I}{\HasType{\Gamma}{\e}{\Ts}}{\HasType{\Gamma}{\e}{\T\vee\Ts}}{}
\end{array}
\end{math}
\end{small}
\caption{Intersection and union types: syntax and typing rules}\label{fig:iutypes}
\end{figure}

The typing rules for the introduction and the elimination of intersection and union are standard,  except  for the absence of the union elimination rule:
\[
\MetaRule{$\vee E$} {\HasType{\SubstFun\Gamma\tA\x}
\e\tC\Space \HasType{\SubstFun\Gamma\tB\x} \e\tC\Space
\HasType\Gamma {\e'}{\tA\vee \tB}}{\HasType\Gamma 
{\subst\e{\e'}\x}\tC}{} 
\]
As a matter of fact, rule \rn{$\vee E$} is unsound for $\oplus$. 
For example, let split the type $\natType$ into $\even$ and $\odd$ and add the expected typings for natural numbers. 
The prefix addition $\mathtt+$ has type
$(\funtype{\funtype\even\even}\even)\wedge(\funtype{\funtype\odd\odd}\even)$ 
and we  derive
\[\footnotesize
\Infer{
  \HasTypeNarrow{x{:}\even}{+\,x\,x}{\even}
  \Space 
  \HasTypeNarrow{x{:}\odd}{+\,x\,x}{\even}
  \Space 
  \Infer{ 
    \Infer{ 
      \HasType{}1\odd
    }{ \HasType{}1{\even\vee\odd} }{} 
    \Space 
    \Infer{ 
      \HasType{}2\even
    }{ \HasType{}2{\even\vee\odd} }{}
  }{ \HasType{}{(1\oplus2)}{\even\vee\odd} }{} 
}{ \HasType{}{{+}(1\oplus2)(1\oplus2)}\even }{} 
\]

We cannot assign the type $\even$ to $3$, which is a possible result, so strong soundness is lost. 
In addition, in the small-step approach, we cannot assign $\even$ to the intermediate term ${+}\,1\,2$, so subject reduction fails.  
In the big-step approach, there is no such intermediate term; however, condition \LPCond fails for the big-step rule for $+$. 
Indeed, considering the following instantiation of the rule:
\[
\MetaRule{$+$}{\eval{1\oplus2}{1}\Space \eval{1\oplus2}{2} }{ \eval{{+}(1\oplus2)(1\oplus2)}{3}}{}
\]
and the type $\even$ for the conclusion: we cannot assign this type to the final result as required by \LPCond (\cf \refItem{def:sc-lp}{2}). 

Intersection types allow to derive meaningful types also for expressions containing variables applied to themselves, for example we can derive 
$\HasType{}{\Lam{\x}\x\appop\x}{\funType{(\funType{\tA}{\tB})\wedge\tA}{\tB}}$. 
With union types all non-deterministic choices between typable expressions can be typed too, since we can derive $\HasType\Gamma {\e_1\oplus\e_2}{\T_1\vee\T_2}$ from $\HasType\Gamma {\e_1}{\T_1}$ and  $\HasType\Gamma {\e_2}{\T_2}$.

We now state standard lemmas for the type system, which are handy towards the soundness proof. 
We first define the \emph{subtyping} relation $\tA\leq \tB$ as the smallest preorder such that:
\begin{itemize}
\item $\tB\leq \tA_1$ and $\tB\leq \tA_2$ imply $\tB\leq \tA_1\wedge \tA_2$;
\item $\tA\wedge \tB\leq \tA$ and $\tA\wedge \tB\leq \tB$;
\item $\tA\leq \tA \vee \tB$ and $\tA\leq \tB\vee \tA$.
\end{itemize}
It is easy to verify that $\tA\leq\tB$ iff $\HasType{\Gamma,\x{:}\tA}{\x}{\tB}$ for an arbitrary variable $\x$, using rules  \rn{$\wedge$I}, \rn{$\wedge$E} and \rn{$\vee$I}.

\begin{lemma}[Inversion]\label{lem:iliu}
The following hold: 
\begin{enumerate}
\item \label{lem:iliu:1}If $\HasType\Gamma\x\tA$, then $\Gamma(\x)\leq \tA$.
\item \label{lem:iliu:2} If $\HasType{\Gamma}{\natconst}\T$, then $\natType\leq\T$.
\item \label{lem:iliu:3} If $\HasType\Gamma{\LambdaExp{\x}{\e}}\T$, then $\HasType{\SubstFun\Gamma{\tB_i}\x}\e {\tC_i}$ for $i\in 1..m$ and
$\bigwedge_{i\in 1..m}(\funType{\tB_i} {\tC_i})\leq \tA$.
\item \label{lem:iliu:4} If $\HasType\Gamma{\e_1\appop\e_2}\tA$, then $\HasType\Gamma  {\e_1} {\funType{\tB_i}{\tC_i}}$  and $\HasType\Gamma{\e_2}{\tB_i}$    for $i\in 1..m$ and $\bigwedge_{i\in 1..m}\tC_i\leq \tA$.
\item \label{lem:iliu:5} If $\HasType{\Gamma}{\SuccExp{\e}}\T$, then $\natType\leq\T$ and $\HasType{\Gamma}{\e}{\natType}$.
\item \label{lem:iliu:6} If $\HasType\Gamma {\e_1\oplus\e_2}\tA$, then $\HasType\Gamma{\e_i}\tA'$ with $\tA' \leq\tA$ and $i\in1..2$.
\end{enumerate}
\end{lemma}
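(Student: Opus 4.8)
The plan is to prove each of the six items by induction on the typing derivation of the hypothesis. The essential difficulty, absent in the syntax-directed system of \cref{sect:bss-simply-typed}, is that the rules \rn{$\wedge$ I}, \rn{$\wedge$ E} and \rn{$\vee$ I} of \cref{fig:iutypes} apply to an expression of \emph{any} shape, so the last rule of a derivation of, e.g., $\HasType\Gamma\x\tA$ need not be \rn{t-var}. Hence in every induction I must treat, besides the syntax-directed rule introducing the relevant constructor (the base case), these four non-syntax-directed rules, and I would dispatch them uniformly. Before starting, I would record a few facts about the preorder $\leq$, all immediate from its defining clauses plus transitivity: $\bigwedge_{i\in 1..m}\tA_i\leq\tA_j$ for each $j$; if $P\leq\tA_i$ for all $i$ then $P\leq\bigwedge_{i\in 1..m}\tA_i$; and consequently $\bigwedge_{i\in 1..m+k}\tA_i\leq\bigwedge_{i\in 1..m}\tA_i$. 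These are exactly what absorbs the intersection and union rules.

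For the items with a single subtyping conclusion, \refItem{lem:iliu}{1}, \refItem{lem:iliu}{2} and \refItem{lem:iliu}{5}, the non-syntax-directed cases are routine. A \rn{$\wedge$ E} step from $\T\wedge\Ts$ to $\T$ turns an inductive bound $\leq\T\wedge\Ts$ into one $\leq\T$ by transitivity with $\T\wedge\Ts\leq\T$; a \rn{$\vee$ I} step is handled symmetrically using $\T\leq\T\vee\Ts$; and a \rn{$\wedge$ I} step combines two inductive bounds $\leq\T$ and $\leq\Ts$ on the \emph{same} left-hand side ($\Gamma(\x)$ for item~1, $\natType$ for items~2 and~5) into a bound $\leq\T\wedge\Ts$ via the meet clause. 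The base cases are instances of reflexivity, and item~5 also carries along its premise $\HasType\Gamma\e\natType$, which the extra rules leave untouched.

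Items \refItem{lem:iliu}{3} and \refItem{lem:iliu}{4} are deliberately shaped as a \emph{family} of premises together with an intersection bound, precisely so as to be stable under \rn{$\wedge$ I}: there the two inductive hypotheses supply two families (say indexed by $1..m_1$ and $1..m_2$) whose intersection bounds lie below $\T$ and below $\Ts$; I would take their disjoint union, observe that all premises remain derivable, and use the recorded facts to place the combined intersection below both $\T$ and $\Ts$, hence below $\T\wedge\Ts$. The steps \rn{$\wedge$ E} and \rn{$\vee$ I} simply reuse the family and adjust the bound by transitivity, while \rn{t-abs} (resp.\ \rn{t-app}) yields a singleton family with a reflexive bound. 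Here both subexpressions ($\e$ in item~3, $\e_1$ and $\e_2$ in item~4) occur in every member of each family, so the merging is unambiguous.

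The step I expect to be the genuine obstacle is \refItem{lem:iliu}{6}: its conclusion existentially selects \emph{one} branch $\e_i$, and this shape does not survive \rn{$\wedge$ I}, where the two inductive hypotheses may select different branches $i_1\neq i_2$, leaving no common branch typable by a type $\leq\T\wedge\Ts$. I would circumvent this by strengthening the statement carried through the induction to the \emph{universal} form ``for each $i\in 1..2$ there is $\tA'_i$ with $\HasType\Gamma{\e_i}{\tA'_i}$ and $\tA'_i\leq\tA$'', from which item~6 follows a fortiori. With this strengthening the \rn{$\wedge$ I} case closes branchwise: for each fixed $i$ the two hypotheses give $\tA'_i\leq\T$ and $\tA''_i\leq\Ts$, so $\HasType\Gamma{\e_i}{\tA'_i\wedge\tA''_i}$ by \rn{$\wedge$ I} and $\tA'_i\wedge\tA''_i\leq\T\wedge\Ts$ by the meet clause; the base case \rn{t-choice} gives both branches type $\tA$ directly. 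Getting this invariant right, and keeping the index bookkeeping in items~3 and~4 coherent, is the only real work; everything else is transitivity and the two clauses of $\leq$.
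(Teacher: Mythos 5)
Your proof is correct. Note that the paper itself states \cref{lem:iliu} without proof, treating it as a standard lemma, so there is nothing to compare against line by line; but your argument---induction on the typing derivation, with the three non-syntax-directed rules \rn{$\wedge$ I}, \rn{$\wedge$ E}, \rn{$\vee$ I} absorbed by the projection, meet and injection clauses of $\leq$ plus transitivity---is exactly the standard route, and your handling of the family-shaped items \refItem{lem:iliu}{3} and \refItem{lem:iliu}{4} by disjoint union of index sets under \rn{$\wedge$ I} is the right bookkeeping. Your one substantive observation, that \refItem{lem:iliu}{6} must be read (or strengthened to) the per-branch universal form ``for each $i\in 1..2$ there is $\tA'_i\leq\tA$ with $\HasType{\Gamma}{\e_i}{\tA'_i}$'' to survive the \rn{$\wedge$ I} case, is well taken: the existential reading is not a stable induction invariant, and the universal reading is in any case what the soundness argument needs, since the big-step meta-rule \rn{choice} has one instantiation per branch and condition \LPCond{} must supply a type for the premise of each. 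The only cosmetic point worth flagging is that the statement of \refItem{lem:iliu}{3} mixes $\T$ and $\tA$ for the type of the abstraction (a typo inherited from the paper), which your proof silently and correctly reads as the same type.
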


\begin{lemma}[Substitution]\label{lem:siu}
If $\HasType{\SubstFun\Gamma{\tA'}\x} \e {\tA}$ and $\HasType\Gamma{\e'} {\tA'}$, then $\HasType\Gamma{\subst\e{\e'}\x} {\tA}$.
\end{lemma}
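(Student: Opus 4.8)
The plan is to prove the statement by induction on the derivation of $\HasType{\SubstFun\Gamma{\tA'}\x}{\e}{\tA}$, keeping the expression $\e'$ and the second hypothesis $\HasType{\Gamma}{\e'}{\tA'}$ fixed throughout. Induction on the \emph{derivation} (rather than on the structure of $\e$) is essential here, because the type system of \cref{fig:iutypes} is not syntax-directed: rules \rn{$\wedge$ I}, \rn{$\wedge$ E}, and \rn{$\vee$ I} apply to an \emph{arbitrary} expression and leave the subject unchanged, so a term-based induction would not decrease. These intersection/union cases are in fact the easiest: in each of them one applies the induction hypothesis to the premise(s), which type the same expression $\e$ in the same environment $\SubstFun\Gamma{\tA'}\x$, and then reapplies the very same rule to $\subst{\e}{\e'}\x$; the term is never touched, so no difficulty arises. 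The same pattern handles \rn{t-app}, \rn{t-succ}, and \rn{t-choice}: apply the induction hypothesis to each immediate subexpression and use that substitution commutes with the term constructors, e.g. $\subst{(\e_1\appop\e_2)}{\e'}\x = (\subst{\e_1}{\e'}\x)\appop(\subst{\e_2}{\e'}\x)$.

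The base cases are \rn{t-const} and \rn{t-var}. For \rn{t-const} the subject is a numeral, substitution is the identity, and the same axiom retypes it. For \rn{t-var}, say $\e = y$, I would distinguish whether $y=\x$. If $y \neq \x$, then $\subst{y}{\e'}\x = y$ and $(\SubstFun\Gamma{\tA'}\x)(y) = \Gamma(y)$, so \rn{t-var} applies directly in $\Gamma$. If $y = \x$, then $\tA = (\SubstFun\Gamma{\tA'}\x)(\x) = \tA'$ and $\subst{\x}{\e'}\x = \e'$, so the goal $\HasType{\Gamma}{\e'}{\tA'}$ is exactly the second hypothesis.

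The only delicate case is \rn{t-abs}, where $\e = \Lam{y}{\e_0}$ and $\tA = \funType{\T'}{\T}$, and I expect this to be the main obstacle, for the usual reasons tied to binders. Working up to $\alpha$-equivalence, I would first rename the bound variable so that $y \neq \x$ and $y$ does not occur free in $\e'$. The premise then reads $\HasType{\SubstFun{(\SubstFun\Gamma{\tA'}\x)}{\T'}{y}}{\e_0}{\T}$, and since $y \neq \x$ the two environment updates commute, giving $\HasType{\SubstFun{(\SubstFun\Gamma{\T'}{y})}{\tA'}\x}{\e_0}{\T}$. To invoke the induction hypothesis I need $\HasType{\SubstFun\Gamma{\T'}{y}}{\e'}{\tA'}$, which follows from $\HasType{\Gamma}{\e'}{\tA'}$ by a \emph{weakening} property (typing is preserved when the environment is extended on a variable fresh for the subject), itself established by a routine separate induction. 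The induction hypothesis then yields $\HasType{\SubstFun\Gamma{\T'}{y}}{\subst{\e_0}{\e'}\x}{\T}$, and \rn{t-abs} gives $\HasType{\Gamma}{\Lam{y}{\subst{\e_0}{\e'}\x}}{\funType{\T'}{\T}}$; finally, since $y \neq \x$ and $y$ is fresh for $\e'$, we have $\subst{(\Lam{y}{\e_0})}{\e'}\x = \Lam{y}{\subst{\e_0}{\e'}\x}$, closing the case. Apart from the bookkeeping about binders and the weakening lemma, no idea beyond the standard substitution argument is required.
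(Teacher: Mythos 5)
Your proof is correct. The paper states \cref{lem:siu} without proof (it is introduced as one of the ``standard lemmas'' for the type system), so there is no official argument to compare against; your induction on the typing derivation is the standard and appropriate one, and you correctly identify the two points that matter here: the induction must be on the derivation rather than on the term because the rules \rn{$\wedge$ I}, \rn{$\wedge$ E} and \rn{$\vee$ I} of \cref{fig:iutypes} are not syntax-directed, and the \rn{t-abs} case needs $\alpha$-renaming plus a weakening lemma for variables not free in $\e'$.
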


\begin{lemma}[Canonical Forms]\label{lem:cfiu}
The following hold: 
\begin{enumerate}
\item \label{lem:cfiu:1}
If $\HasType\es \val {\funType{\tA'}{\tA}}$, then $\val=\LambdaExp{\x}{\e}$.
\item \label{lem:cfiu:2}
If $\HasType\es \val \natType$, then $\val=\natconst$.
\end{enumerate}
\end{lemma}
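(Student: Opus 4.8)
The plan is to prove both items by case analysis on the shape of the value $\val$, which by the grammar of \cref{fig:bss-ex-lambda} is either a numeral $\natconst$ or an abstraction $\Lam{\x}\e$, using \cref{lem:iliu} (inversion) to rule out the wrong shape. For \refItem{lem:cfiu}{1}, assume $\HasType\es\val{\funType{\tA'}{\tA}}$ and suppose $\val=\natconst$: then \refItem{lem:iliu}{2} gives $\natType\leq\funType{\tA'}{\tA}$, which I will show is impossible, so $\val=\Lam{\x}\e$. Dually, for \refItem{lem:cfiu}{2}, assume $\HasType\es\val\natType$ and suppose $\val=\Lam{\x}\e$: then \refItem{lem:iliu}{3} yields types $\tB_i,\tC_i$ for $i\in 1..m$, with $m\geq 1$ (an abstraction always receives at least one arrow type), such that $\bigwedge_{i\in 1..m}(\funType{\tB_i}{\tC_i})\leq\natType$, which I will also show is impossible, so $\val=\natconst$. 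Thus everything reduces to two facts about subtyping: $\natType\not\leq\funType{\tA'}{\tA}$, and no finite non-empty intersection of arrow types is below $\natType$.

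To establish these facts for the relation $\leq$, defined as the least preorder closed under the intersection and union axioms over possibly infinite contractive types, I would argue semantically. Fix the two-element set $\{n,f\}$ and interpret each type as a subset by $I(\natType)=\{n\}$, $I(\funType{\tB}{\tC})=\{f\}$ (not recurring into $\tB$ and $\tC$), $I(\tB\wedge\tC)=I(\tB)\cap I(\tC)$, and $I(\tB\vee\tC)=I(\tB)\cup I(\tC)$. This is well-defined even on infinite types: by contractivity every infinite path carries infinitely many arrows, so no infinite path is made of $\wedge$/$\vee$ nodes only; hence along every branch the first $\natType$-or-arrow node is reached after finitely many $\wedge$/$\vee$ steps, and $I$ is given by a well-founded recursion on this finite $\wedge$/$\vee$ prefix. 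A routine induction on the generation of $\leq$ then shows that $\tB\leq\tC$ implies $I(\tB)\subseteq I(\tC)$: reflexivity and transitivity are preserved by $\subseteq$, and each of the three axioms matches exactly a valid containment for the set-theoretic meet and join.

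Instantiating $I$ settles both facts: $I(\natType)=\{n\}$ and $I(\funType{\tA'}{\tA})=\{f\}$ are incomparable, so $\natType\not\leq\funType{\tA'}{\tA}$; and since $I(\funType{\tB_i}{\tC_i})=\{f\}$ for each $i$, an $m$-fold intersection with $m\geq 1$ has interpretation $\{f\}$, whence $\{f\}\not\subseteq\{n\}=I(\natType)$ gives $\bigwedge_{i\in 1..m}(\funType{\tB_i}{\tC_i})\not\leq\natType$. The two case analyses above then close the proof. The main obstacle I expect is not the case analysis, which is immediate once the inversion lemma is in hand, but justifying the well-definedness of $I$ on infinite contractive types: one must check that contractivity indeed forbids infinite $\wedge$/$\vee$-only paths and that the resulting recursion is well-founded. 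Everything else, namely the monotonicity induction and the final instantiation, is routine.
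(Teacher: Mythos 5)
Your proof is correct. The paper states this lemma without proof (it is presented as one of the ``standard lemmas'' for the type system), so there is no official argument to compare against; the implicit intended argument is exactly your outer structure, namely a case analysis on the two possible shapes of $\val$ combined with inversion, reducing everything to the facts that $\natType\not\leq\funType{\tA'}{\tA}$ and that a non-empty intersection of arrow types is not below $\natType$. Where you add genuine value is in actually discharging these two subtyping facts: the two-element model $I$ with $I(\natType)=\{n\}$ and $I(\funType{\tB}{\tC})=\{f\}$ is sound for the given $\leq$ (which has no structural rule for arrows, so not recursing into arrow components is harmless), and your appeal to contractivity to show that the $\wedge$/$\vee$-prefix of any coinductively defined type is finite (no infinite arrow-free path, hence by K\"onig's lemma a finite prefix) is precisely the point that a careless ``induction on types'' would get wrong here. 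The only reading you should flag explicitly is that $m\geq 1$ in \refItem{lem:iliu}{3}: the inversion statement does not say so literally, but it is forced since rule \rn{t-abs} is the only way to introduce a type for an abstraction and it yields an arrow, so your parenthetical justification is the right one.
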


In order to state soundness,  let
$\ConfPredThr_\T = \{ \e \in \ConfSet_1 \mid \HasType{\es}{\e}{\T} \}$ and 
$\ResPredThr_\T = \{ \val \in \ResSet_1 \mid \HasType{\es}{\val}{\T} \}$, for $\T$  defined in \cref{fig:iutypes}. 

\begin{theorem}[Soundness]\label{thm:sdiu}
The big-step semantics \ple{\ConfSet_1,\ResSet_1,\RuleSet_1}  and the indexed predicate $\PredThr$ satisfy the conditions \LPCond, \ExPCond and \AllPCond of \cref{sect:bss-sc}.
\end{theorem}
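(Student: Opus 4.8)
The plan is to mirror the proof of \cref{thm:sd}, checking the three local conditions \LPCond, \ExPCond and \AllPCond by cases on the big-step (meta-)rules \rn{val}, \rn{succ}, \rn{choice} and \rn{app}, now driving the case analysis with the inversion, substitution and canonical-forms lemmas for the richer system (\cref{lem:iliu,lem:siu,lem:cfiu}) and with preservation (\cref{lem:sr}), which becomes available as soon as \LPCond is established. The only genuinely new ingredient with respect to the simply-typed case is that the inversion lemma no longer returns a single type but a finite family of arrow types together with a subtyping bound; accordingly, whenever I need to exhibit a type for a premise I would assemble the relevant conjuncts with rule \rn{$\wedge$I}, and whenever a derived type is only a subtype of the required one I would close the gap with the derived subsumption principle ``$\HasType{\Gamma}{\e}{\tA}$ and $\tA\leq\tB$ imply $\HasType{\Gamma}{\e}{\tB}$'', which is immediate from \rn{$\wedge$I}, \rn{$\wedge$E} and \rn{$\vee$I} (\cf the remark after the definition of $\leq$ in \cref{sect:bss-iut}).

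For \ExPCond and \AllPCond little changes. Since intersection and union only add typing rules and no new expressions, $\ConfSet_1$ is unchanged and the argument for \ExPCond is verbatim that of \cref{thm:sd}: a configuration typable in the empty environment is not a variable, and every other expression form is the conclusion of some big-step rule. For \AllPCond I would again combine preservation with canonical forms: in rule \rn{app}, if $\e_1$ converges then by \cref{lem:sr} its value carries each arrow type $\funType{\tB_i}{\tC_i}$, hence by \refItem{lem:cfiu}{1} it is a $\lambda$-abstraction and \rn{app} applies; the remaining premises of \rn{app}, the case \rn{succ} (using \refItem{lem:cfiu}{2}) and the case \rn{choice} (where the value meta-variable is unconstrained) are as before, and \rn{val} is vacuous.

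The substance of the proof is \LPCond for \rn{app}. Assuming $\HasType{\es}{\e_1\appop\e_2}{\T}$, \refItem{lem:iliu}{4} yields arrow types with $\HasType{\es}{\e_1}{\funType{\tB_i}{\tC_i}}$ and $\HasType{\es}{\e_2}{\tB_i}$ for $i\in 1..m$ and $\bigwedge_{i\in 1..m}\tC_i\leq\T$. I would choose as premise types $\bigwedge_i\funType{\tB_i}{\tC_i}$, $\bigwedge_i\tB_i$ and $\bigwedge_i\tC_i$: the first two typings hold by \rn{$\wedge$I} and require no assumption, discharging \refItem{def:sc-lp}{1} for the first two premises. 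For the third premise I use the assumptions that the values of $\e_1$ and $\e_2$ carry these types, so $\HasType{\es}{\Lam{\x}\e}{\funType{\tB_i}{\tC_i}}$ and $\HasType{\es}{\val_2}{\tB_i}$ for each $i$ (by \rn{$\wedge$E}); inversion on the abstraction (\refItem{lem:iliu}{3}) plus \cref{lem:siu} then gives $\HasType{\es}{\subst{\e}{\val_2}{\x}}{\tC_i}$, whence $\HasType{\es}{\subst{\e}{\val_2}{\x}}{\bigwedge_i\tC_i}$ by \rn{$\wedge$I}; finally $\bigwedge_i\tC_i\leq\T$ and subsumption give \refItem{def:sc-lp}{2}. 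The cases \rn{succ}, \rn{choice} and \rn{val} are routine variations using \refItem{lem:iliu}{5}, \refItem{lem:iliu}{6} and reflexivity, always lifting the inverted type back up to $\T$ by subsumption.

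The step I expect to be the main obstacle is extracting a usable body typing from \refItem{lem:iliu}{3} inside the \rn{app} case: inversion delivers only a bound $\bigwedge_j\funType{\tB'_j}{\tC'_j}\leq\funType{\tB_i}{\tC_i}$, and to obtain $\HasType{\x{:}\tB_i}{\e}{\tC_i}$ I need the fact that the subtyping preorder of \cref{sect:bss-iut} has no structural rule for arrows, so that such a subtyping can only project out a conjunct that is \emph{equal} to $\funType{\tB_i}{\tC_i}$. I would isolate this as an auxiliary ``shape of subtyping'' lemma, proved by induction on the derivation of $\leq$ and compatible with the coinductive, contractive reading of types, since it only inspects the top-level arrow constructor. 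Once this lemma is in place the \rn{app} case closes as above and the theorem follows.
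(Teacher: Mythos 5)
Your proof follows essentially the same route as the paper's: the same choice of intersection types for the premises of \rn{app}, the same use of \cref{lem:iliu}, \cref{lem:siu} and \cref{lem:cfiu} together with \cref{lem:sr}, and the same closing step via the subsumption principle derived from \rn{$\wedge$I}, \rn{$\wedge$E} and \rn{$\vee$I}. Your only addition is the auxiliary ``shape of subtyping'' lemma; the paper's sketch silently assumes exactly that fact when it reads $\HasType{\x{:}\tB_i}{\e}{\tC_i}$ directly off \refItem{lem:iliu}{3}, so making it explicit is a correct and welcome refinement rather than a different approach.
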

\begin{proofsketch} 
We prove conditions only for rule \rn{app}, the other cases are similar (\cf proof of \cref{thm:sd}). 

\proofsect{Proof of \LPCond} 
The proof is by cases on instantiations of  meta-rules. 
For rule \rn{app} \refItem{lem:iliu}{4}  applied to $\HasType\es{\e_1\appop\e_2}\tA$ implies $\HasType{\es}  {\e_1} {\funType{\tB_i}{\tC_i}}$  and $\HasType{\es}{\e_2}{\tB_i}$  for $i\in 1..m$ and $\bigwedge_{i\in 1..m}\tC_i\leq \tA$. 
Now, from assumptions of \LPCond, we get 
$\HasType{\es}{\Lam{\x}\e}{\funType{\tB_i}{\tC_i}}$ and $\HasType{\es}{\val_2} {\tB_i}$ for $i\in 1.. m$. 
\refItem{lem:iliu}{3} implies $\HasType{\x {\ : \ } \tB_i} \e {\tC_i}$, so by \cref{lem:siu} we have $\HasType{\es}{\subst\e{\val_2}\x} {\tC_i}$ for $i\in 1.. m$. 
We can derive $\HasType{\es}{\subst\e{\val_2}\x} {\tA}$ using rules \rn{$\wedge$I}, \rn{$\wedge$E} and \rn{$\vee$I}.

\proofsect{Proof of \ExPCond} The proof is as in  \cref{thm:sd}.

\proofsect{Proof of \AllPCond}  
The proof is by cases on instantiations of  meta-rules. 
For rule \rn{app} \refItem{lem:iliu}{4}  applied to $\HasType{\es}{\e_1\appop\e_2}\tA$ implies $\HasType{\es}  {\e_1} {\funType{\tB_i}{\tC_i}}$  for $i\in 1..m$. 
If $\eval{\e_1}\val$ we get $\HasType{\es}  {\val} {\funType{\tB_i}{\tC_i}}$  for $i\in 1..m$ by \LPCond and \cref{lem:sr}. 
\refItem{lem:cfiu}{1} applied to $\HasType{\es}{\val} {\funType{\tB_i}{\tC_i}}$ implies $\val=\LambdaExp{\x}{\e}$ as needed. 
\end{proofsketch}

\subsection{$\MiniFJOS$}
\label{sect:bss-fjos}

A well-known example in  which  
proving soundness with respect to small-step semantics is extremely challenging is  the standard type system with intersection and union types \citep{BarbaneraDL95}
w.r.t. the pure $\lambda$-calculus with full reduction.  
Indeed, the standard subject reduction technique fails\footnote{For this reason, \citet{BarbaneraDL95} prove soundness by an  ad-hoc  
technique, that is, by considering parallel reduction and an equivalent type system \`a la Gentzen, which enjoys the cut elimination property. }, 
since, for instance, we can derive  the type 
\[ (\tA\to \tA\to \tC)\wedge (\tB\to \tB\to \tC)\to (\tD\to \tA \vee \tB)\to \tD \to \tC \] 
for both 
$\Lam{x} \Lam{y} \Lam{z} x\appop((\Lam{t}t)\appop(y\appop z))\appop((\Lam{t}t)\appop(y\appop z))$ 
and
$\Lam{x} \Lam{y} \Lam{z} x\appop(y\appop z)\appop (y\appop z)$, 
 but the intermediate  expressions  
 $\Lam{x} \Lam{y} \Lam{z} x\appop((\Lam{t}t)\appop(y\appop z))\appop(y\appop z)$ and $\Lam{x} \Lam{y} \Lam{z} x\appop(y\appop z)\appop((\Lam{t}t)\appop(y\appop z))$ do not have this type. 

As the example shows, the key problem is that rule \rn{$\vee$E} can be applied to  expression  
$\e$ where the same  subexpression  
$\e'$ occurs more than once. 
In the non-deterministic case, as shown by the example in the previous section, this is unsound, since $\e'$ can reduce to different values. 
In the deterministic case, instead, this is sound, but cannot be proved by subject reduction. 
Since using big-step semantics there are no intermediate steps to be typed, our approach seems very promising to investigate an alternative 
proof of soundness.
Whereas we leave this challenging problem to future work, here as first step we describe a calculus with a much simpler version of the problematic feature.

The calculus is a variant of $\FJ\vee$, introduced by \citet{IgarashiN07}, an extension of $\FJ$  \citep{IgarashiPW01} with union types. 
As discussed more extensively by \citet{IgarashiN07}, this gives  the ability to define a supertype even after a class hierarchy is fixed, grouping independently developed classes with similar interfaces. 
In fact, given some types, their union type can be viewed as an interface type that ``factors out'' their common features.
With respect to $\FJ\vee$, we do not consider cast and type-case constructs and, more importantly, in the typing rules we handle differently union types, taking inspiration directly from rule \rn{$\vee$E} of the $\lambda$-calculus. 
With this approach, we enhance the expressivity of the type system, since it becomes possible to eliminate unions simultaneously for an arbitrary number of arguments, including the receiver,  in a method invocation, provided that they are all equal to each other.  
We dub this calculus $\MiniFJOS$. 

\begin{figure} 
\begin{small}
\begin{math}
\begin{grammatica}
\produzione{\e}{\x \mid \FieldAccess{\e}{\f} \mid \ConstrCall{\CC}{\e_1,\ldots,\e_n} \mid
\MethCall{\e}{\m}{\e_1,\ldots,\e_n} }{expression}\\
\seguitoproduzione{ \ifte{\e}{\e_1}{\e_2} \mid \kwtt \mid \kwff }{}\\
\produzione{\val}{\ConstrCall{\CC}{\val_1,\ldots,\val_n}\mid\kwtt\mid\kwff}{value}\\
\produzione{\T}{\CC\mid\boolType\mid \T_1\vee\T_2}{type}\\
\produzione{\ectx}{\FieldAccess{[]}{\f} \mid \MethCall{[]}{\m}{[],\ldots,[],\e_1,\ldots,\e_n}}{elimination context} 
\end{grammatica}
\end{math}
\HSep
\begin{math}
\begin{array}{c}
\MetaRule{field}{\eval{\e}{\ConstrCall{\CC}{\val_1,\ldots,\val_n}}}
{ \eval{\FieldAccess{\e}{\f_i}}{\val_i}}
{ \fields{\CC}=\Field{\T_1}{\f_1}\ldots\Field{\T_n}{\f_n}\\
  i\in 1..n  }
\\[4ex]
\MetaRule{new}{
\eval{\e_i}{\val_i}\Space \forall i\in 1..n}{\eval{\ConstrCall{\CC}{\e_1,\ldots,\e_n}}{\ConstrCall{\CC}{\val_1,\ldots,\val_n}}}
{}\\[4ex]
\MetaRule{invk}{
\begin{array}{l}
\eval{\e_0}{\ConstrCall{\CC}{\vals'}}\\
\eval{\e_i}{\val_i}\Space \forall i\in 1..n\\
\eval{\subst{\subst{\subst\e{\val_1}{\x_1}\ldots}{\val_n}{\x_n}}{\ConstrCall{\CC}{\vals'}}{\kwthis}}{\val}
\end{array}}
{ \eval{\MethCall{\e_0}{\m}{\e_1,\ldots,\e_n}}{\val} }
{ \mbody{\CC}{\m}=\Pair{\x_1\ldots\x_n}{\e} }
\\[4ex]
\MetaRule{true}{}{\eval{\kwtt}{\kwtt}}{} 
\BigSpace
\MetaRule{false}{}{\eval{\kwff}{\kwff}}{} 
\\[4ex]
\MetaRule{if-t}{
  \eval{\e}{\kwtt} \Space 
  \eval{\e_1}{\val} 
}{ \eval{\ifte{\e}{\e_1}{\e_2}}{\val} }{}
\BigSpace 
\MetaRule{if-f}{
  \eval{\e}{\kwff} \Space 
  \eval{\e_2}{\val} 
}{ \eval{\ifte{\e}{\e_1}{\e_2}}{\val} }{}
\end{array}
\end{math}

\HSep

\begin{math}
\begin{array}{c}
\MetaRule{t-var}{}{\HasType{\Gamma}{\x}{\T}}{\Gamma(\x)=\T}
\BigSpace
\MetaRule{t-bool}{}{\HasType{\Gamma}{\bl}{\boolType}}{ \bl\in\{\kwtt,\kwff\} }
\\[4ex]
\MetaRule{t-fld}{\HasType{\Gamma}{\e}{\CC}}{\HasType{\Gamma}{\FieldAccess{\e}{\f_i}}{\T_i}}
{ \fields{\CC}=\Field{\T_1}{\f_1}\ldots\Field{\T_n}{\f_n}\\
  i\in 1..n }
\\[4ex]
\MetaRule{t-new}{\HasType{\Gamma}{\e_i}{\T_i}\Space \forall i\in 1..n}{\HasType{\Gamma}{\ConstrCall{\CC}{\e_1,\ldots,\e_n}}{\CC}}
{ \fields{\CC}=\Field{\T_1}{\f_1}\ldots\Field{\T_n}{\f_n}\\ }
\\[4ex]
\MetaRule{t-invk}{
  \HasType{\Gamma}{\e}{\CC} \Space 
  \HasType{\Gamma}{\e_i}{\T_i} \quad \forall i\in 1..n   
}{\HasType{\Gamma}{\MethCall{\e}{\m}{\e_1,\ldots,\e_n}}{\T}}  
{ \mtype{\CC}{\m} = \funtype{\T_1\ldots\T_n}{\T}    
}
\\[4ex]
\MetaRule{t-if}{\HasType{\Gamma}{\e}{\boolType}\Space\HasType{\Gamma}{\e_1}{\T}\Space\HasType{\Gamma}{\e_2}{\T}}{\HasType{\Gamma}{\ifte\e{\e_1}{\e_2}}{\T}}{}\BigSpace
\MetaRule{t-sub}{\HasType{\Gamma}{\e}{\T}}{\HasType{\Gamma}{\e}{\T'}}
{ \T\leqfj\T' }
\\[4ex] 
\MetaRule{t-$\vee$-elim}{
  \HasType{\Gamma}{\e}{\bigvee_{i\in 1..m} \CC_i} \Space 
  \HasType{\Gamma,\x{:}\CC_i}{\ectx[x]}{\T} \quad \forall i\in 1..m 
}{ \HasType{\Gamma}{\ectx[\e]}{\T} }{ \x\mbox{ fresh} } 
\end{array}
\end{math}
\end{small}

\caption{$\MiniFJOS$: syntax, big-step semantics and type system}\label{fig:FJOS-typesystem}
\end{figure}

\cref{fig:FJOS-typesystem} gives the syntax, big-step semantics and typing rules of $\MiniFJOS$.  
The subtyping relation $\leqfj$ is the reflexive and transitive closure of the union of the $\aux{extends}$ relation and the standard rules for union:
\[ \Rule{}{\T_1\leqfj\T_1\vee\T_2} \BigSpace
   \Rule{}{\T_2\leqfj\T_1\vee\T_2} \BigSpace 
   \Rule{\T_1\leqfj\T\Space\T_2\leqfj\T}{\T_1\vee\T_2\leqfj\T}  
\]

%
The functions \aux{mtype}, \aux{fields} and \aux{mbody} are defined as for $\MiniFJLambda$, apart that here fields, method parameters and return types can be union types as well, still assuming the conditions on the class table \MBodyCond, \InhFldCond, and \InhMethCond. 

Clearly rule \rn{t-$\vee$-elim} is inspired by rule \rn{$\vee$E}, but restricted only to some specific contexts, named \emph{(union) elimination contexts}. 
Elimination contexts are field access and method invocation, where the latter has $n>0$ holes corresponding to the receiver and (for simplicity the first)  $n-1$ parameters. 
Thanks to this restriction,  we are able to prove a standard inversion lemma, which is not known for the general rule in the $\lambda$-calculus. 

Given an elimination context $\ectx$, we denote by $\ectx[\e]$ the expression obtained by filling all holes of $\ectx$ by $\e$. 

This rule allows us to make the type system more ``structural'', with respect to $\FJ$, similarly to what happens in $\FJ\vee$.  
Let us consider the following classes:
\begin{lstlisting}
class C { 
  A f; Object g; 
  C update(A x) {...}
  Bool eq(C x) {..} 
}
class D { 
  A f; 
  D update(A x) {...}
  Bool eq(D x) {...} 
}
\end{lstlisting}
They share a common structure, but they are not related by inheritance (there is no common superclass abstracting shared features), hence in standard $\FJ$ they cannot be handled uniformly. 
By means of \rn{t-$\vee$-elim} this is possible: for instance, we can write a wrapper class that, in a sense, provides the common interface of $\CC$ and $\DD$ ``ex-post''
\begin{lstlisting}
class CorD {
  C $\vee$ D el; 
  A getf() { this.el.f } 
  CorD update(A x) { new CorD(this.el.update(x)) } 
}
\end{lstlisting}
Bodies of methods \lstinline{getf} and \lstinline{update} in class \lstinline{CorD} are well-typed thanks to rule \rn{t-$\vee$-elim}, 
as shown by the following derivation for \lstinline{update}, where $\Gamma = x{:}\aux{A},\kwthis{:}\aux{CorD}$. 
\[\footnotesize 
\Infer{
  \Infer{
    \Infer{}{ \HasType{\Gamma}{\FieldAccess{\kwthis}{\aux{el}}}{\CC\vee\DD} }{} 
    \Space 
    \Infer{
      \HasType{\Gamma,y{:}\CC}{ \MethCall{y}{\aux{update}}{x} }{\CC} 
    }{ \HasType{\Gamma,y{:}\CC}{ \MethCall{y}{\aux{update}}{x} }{\CC\vee\DD} }{} 
    \Space 
    \Infer{
      \HasType{\Gamma,y{:}\DD}{ \MethCall{y}{\aux{update}}{x} }{\DD} 
    }{ \HasType{\Gamma,y{:}\DD}{ \MethCall{y}{\aux{update}}{x} }{\CC\vee\DD} }{} 
  }{ \HasType{\Gamma}{ \MethCall{\FieldAccess{\kwthis}{\aux{el}}}{\aux{update}}{x} }{\CC\vee\DD} }{}
}{ \HasType{\Gamma}{ \ConstrCall{\aux{CorD}}{ \MethCall{\FieldAccess{\kwthis}{\aux{el}}}{\aux{update}}{x}} }{\aux{CorD}} }{}
\]

The above example can be typed in $\FJ\vee$ as well, even though with a different technique.\footnote{When the receiver of a method call has a union type, look-up (function \aux{mtype}) is directly performed and gives a set of method signatures; arguments should comply all parameter types and the type of the call is the union of return types.}
On the other hand, with our more uniform approach inspired by rule \rn{$\vee$E}, we can type examples where the same subexpression having a union type occurs more than once, and soundness relies on the determinism of evaluation, as in the example at the beginning of this section.  

To illustrate this, 
let us consider an example.
Assuming the above class table, 
consider the expression  $\e = \ifte{\kwff}{\ConstrCall{\CC}{\ldots}}{\ConstrCall{\DD}{\ldots}}$. 
By rule \rn{t-if}, the expression $\e$ has type $\CC\vee\DD$, and, by rule \rn{t-$\vee$-elim}, the expression $\MethCall{\e}{\aux{eq}}{\e}$ has type $\boolType$, as shown by the following derivation:
\[
\Infer{
  \Infer{}{\HasType{\es}{\e}{\CC\vee\DD}}{} \Space 
  \Infer{}{\HasType{\x{:}\CC}{\MethCall{\x}{\aux{eq}}{\x}}{\boolType} }{} 
  \Infer{}{\HasType{\x{:}\DD}{\MethCall{\x}{\aux{eq}}{\x}}{\boolType} }{}
}{ \HasType{\es}{\MethCall{\e}{\aux{eq}}{\e}}{\boolType} }{}
\]
This expression cannot be typed in $\FJ\vee$, because there is no way to eliminate the union type assigned to $\e$ when it occurs as an argument. 

Quite surprisingly, subject reduction fails for the expected small-step semantics, even if there are no intersection types, which are the source, together with the \rn{$\vee$E} rules, of the problems in the $\lambda$-calculus. 
Indeed, we have the following small-step reduction:
\[
\MethCall{\e}{\aux{eq}}{\e} \longrightarrow \MethCall{\ConstrCall{\DD}{\ldots}}{\aux{eq}}{\e} \longrightarrow \MethCall{\ConstrCall{\DD}{\ldots}}{\aux{eq}}{\ConstrCall{\DD}{\ldots}} 
\]
where the intermediate expression cannot be typed, because $\e$ has a union type. 
This happens because intersection types are in a sense hidden in the class table: the method \aux{eq} occurs in two different classes with different types, hence, roughly, we could assign it the intersection type
$(\funtype{\CC\,\CC}{\boolType})\wedge(\funtype{\DD\,\DD}{\boolType})$. 

As in previous examples, the soundness proof uses an inversion lemma and a substitution lemma. 
The canonical forms lemma is trivial since the only values 
of type $\CC$ are objects (constructor calls with values as arguments) instances of a subclass. 
In addition, we need a lemma (dubbed ``key'') which assures that a value typed by a union of classes can also be typed by one of these classes. 
The proof of this lemma is straightforward, since values having class types are just new constructors, as shown by canonical forms. 

\begin{lemma}[Substitution]\label{lem:sos}
If $\HasType{\SubstFun\Gamma{\tA'}\x} \e {\tA}$ and $\HasType\Gamma{\e'} {\tA'}$, then $\HasType\Gamma{\subst\e{\e'}\x} {\tA'}$.
\end{lemma}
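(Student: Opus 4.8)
The plan is to argue by induction on the typing derivation of $\HasType{\SubstFun\Gamma{\tA'}\x}{\e}{\tA}$, proving $\HasType{\Gamma}{\subst\e{\e'}\x}{\tA}$ (the conclusion preserves the type $\tA$ of $\e$, matching \cref{lem:s,lem:siu}; the prime on the type in the displayed statement is a typo). Induction on the derivation, rather than on the structure of $\e$, is the right choice because the rules \rn{t-sub} and \rn{t-$\vee$-elim} are not syntax-directed. The only auxiliary fact needed besides the induction hypothesis is a routine weakening property, which I would assume: if $\HasType{\Gamma}{\e'}{\tA'}$ and $y\notin\dom(\Gamma)$, then $\HasType{\Gamma,y{:}\CC}{\e'}{\tA'}$, proved by a straightforward induction on the typing derivation.

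Most cases are bureaucratic. For \rn{t-var} with $\e=y$ I split on whether $y=\x$: if so, then $\tA=\tA'$ and $\subst\e{\e'}\x=\e'$, so the second hypothesis is exactly the thesis; if not, the variable is unchanged and still typed by $\Gamma$. Rule \rn{t-bool} is immediate since constants are typable in any environment. For the composite syntax-directed rules \rn{t-fld}, \rn{t-new}, \rn{t-invk} and \rn{t-if}, substitution commutes with the constructor, so I apply the induction hypothesis to each premise and reassemble the derivation with the same rule and the same side conditions; \rn{t-sub} is equally direct, applying the induction hypothesis to its single premise and reusing the same instance of $\leqfj$.

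The interesting case is \rn{t-$\vee$-elim}, where $\e=\ectx[\e_0]$ with premises $\HasType{\SubstFun\Gamma{\tA'}\x}{\e_0}{\bigvee_{i\in 1..m}\CC_i}$ and $\HasType{\SubstFun\Gamma{\tA'}\x,y{:}\CC_i}{\ectx[y]}{\tA}$ for each $i$. I would first record that substitution distributes over elimination contexts: since the holes of $\ectx$ are not variables and $\ectx$ binds nothing, $\subst{\ectx[\e'']}{\e'}\x=(\subst{\ectx}{\e'}\x)[\subst{\e''}{\e'}\x]$, where $\subst{\ectx}{\e'}\x$ denotes applying the substitution to the non-hole subexpressions of $\ectx$. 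Renaming the fresh variable of the rule so that $y$ is also distinct from $\x$ and not free in $\e'$, the induction hypothesis on the first premise gives $\HasType{\Gamma}{\subst{\e_0}{\e'}\x}{\bigvee_{i\in 1..m}\CC_i}$. For each branch I read $\SubstFun\Gamma{\tA'}\x,y{:}\CC_i$ as $\SubstFun{(\Gamma,y{:}\CC_i)}{\tA'}{\x}$ (using $y\neq\x$), obtain $\HasType{\Gamma,y{:}\CC_i}{\e'}{\tA'}$ by weakening, and apply the induction hypothesis to get $\HasType{\Gamma,y{:}\CC_i}{(\subst{\ectx}{\e'}\x)[y]}{\tA}$, using $\subst{y}{\e'}\x=y$. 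A final application of \rn{t-$\vee$-elim} with context $\subst{\ectx}{\e'}\x$ and scrutinee $\subst{\e_0}{\e'}\x$ yields $\HasType{\Gamma}{\subst{\ectx[\e_0]}{\e'}\x}{\tA}$, which is the thesis.

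The main obstacle is exactly this last case: making the freshness bookkeeping precise and verifying the substitution-through-context identity, together with checking that weakening lines up with the environment shifted by $y{:}\CC_i$. Everything else follows the standard pattern, and I expect no difficulty there.
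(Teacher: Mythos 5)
Your proof is correct. The paper states \cref{lem:sos} without any proof, treating it as a routine auxiliary result, so there is no authorial argument to compare against; your induction on the typing derivation is exactly the standard argument the authors evidently have in mind, and you are also right that the $\tA'$ in the displayed conclusion is a typo for $\tA$ (compare \cref{lem:s,lem:siu,lem:sosimp}). The only case that genuinely needs care is \rn{t-$\vee$-elim}, and you handle it properly: the substitution-through-context identity holds because the holes of an elimination context are positions rather than variables and are all filled by the same expression (so the context $\subst{\ectx}{\e'}{\x}$ is still an elimination context of the same shape), and the freshness renaming plus weakening you describe is enough to instantiate the induction hypothesis at the shifted environment $\Gamma,y{:}\CC_i$ in each branch premise.
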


\begin{lemma}[Canonical forms] \label{lem:cfos}
The following hold:
\begin{enumerate}
\item\label{lem:cfos:1} If $\HasType{\Gamma}{\val}{\boolType}$, then $\val = \kwtt$ or $\val = \kwff$.
\item\label{lem:cfos:2} If $\HasType{\Gamma}{\val}{\CC}$, then $\val = \ConstrCall{\DD}{\val_1,\ldots,\val_n}$ and $\DD\leqfj\CC$. 
\end{enumerate}
\end{lemma}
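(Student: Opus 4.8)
The plan is to reduce the statement to a simple classification of which typing rules can apply to a value, together with an easy fact about subtyping. A value $\val$ is, by the grammar in \cref{fig:FJOS-typesystem}, either $\kwtt$, $\kwff$, or an object $\ConstrCall{\CC}{\val_1,\ldots,\val_n}$. The first thing I would note is that none of these has the form $\ectx[\e]$, since $\ectx[\e]$ is always either a field access or a method invocation; hence rule \rn{t-$\vee$-elim} can never be the last rule of a derivation whose subject is a value. Since \rn{t-var}, \rn{t-fld}, \rn{t-invk} and \rn{t-if} also have non-value subjects, the only rules that can conclude $\HasType{\Gamma}{\val}{\T}$ are \rn{t-bool}, \rn{t-new}, and \rn{t-sub}. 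This is the crucial point that keeps the lemma simple despite the powerful union-elimination rule: the syntactic restriction of \rn{t-$\vee$-elim} to elimination contexts prevents it from assigning types to values.

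Building on this, I would first prove an auxiliary \emph{shape} statement by induction on the derivation of $\HasType{\Gamma}{\val}{\T}$: either $\val \in \{\kwtt,\kwff\}$ and $\boolType \leqfj \T$, or $\val = \ConstrCall{\DD}{\val_1,\ldots,\val_n}$ and $\DD \leqfj \T$. The base cases are immediate: \rn{t-bool} gives $\val \in \{\kwtt,\kwff\}$ with $\T = \boolType$, and \rn{t-new} gives $\val = \ConstrCall{\DD}{\val_1,\ldots,\val_n}$ with $\T = \DD$, in both cases using reflexivity of $\leqfj$. In the remaining case \rn{t-sub}, the premise $\HasType{\Gamma}{\val}{\T''}$ satisfies the statement by the induction hypothesis, giving $\boolType \leqfj \T''$ or $\DD \leqfj \T''$; since the side condition provides $\T'' \leqfj \T$, transitivity of $\leqfj$ yields the same conclusion for $\T$.

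The two claims then follow by instantiating the shape statement, using the subtyping fact discussed below. For the first claim, with $\T = \boolType$, the object alternative would require $\DD \leqfj \boolType$; ruling this out leaves $\val \in \{\kwtt,\kwff\}$. For the second claim, with $\T = \CC$, the boolean alternative would require $\boolType \leqfj \CC$; ruling this out leaves $\val = \ConstrCall{\DD}{\val_1,\ldots,\val_n}$ with $\DD \leqfj \CC$, as required.

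The only ingredient that is not purely about the typing rules, and the one place requiring a separate argument, is the subtyping fact that a class name and $\boolType$ are $\leqfj$-incomparable, i.e.\ neither $\CC \leqfj \boolType$ nor $\boolType \leqfj \CC$ holds. I expect this to be the main (though still routine) obstacle, because $\leqfj$ is a transitive closure, so a naive case analysis on the generating rules does not immediately suffice. I would settle it by interpreting each type $\T$ as the set $\sem{\T}$ of atoms below it, with $\sem{\boolType} = \{\boolType\}$, with $\sem{\CC}$ the set of class and interface names that are below $\CC$ in the reflexive--transitive $\aux{extends}$ order, and with $\sem{\T_1 \vee \T_2} = \sem{\T_1} \cup \sem{\T_2}$. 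One then checks that $\T \leqfj \T'$ implies $\sem{\T} \subseteq \sem{\T'}$: this holds for the $\aux{extends}$ order and for the three union subtyping rules, and it is preserved by reflexivity and transitivity since set inclusion is a preorder. Finally, $\boolType \in \sem{\boolType}$ but $\boolType \notin \sem{\CC}$, and $\CC \in \sem{\CC}$ but $\CC \notin \sem{\boolType}$, which gives the two required non-derivabilities and completes the argument.
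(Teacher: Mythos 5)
Your proof is correct, and it fleshes out exactly the justification the paper gives: the paper states this lemma without a proof, remarking only that it is ``trivial since the only values of type $\CC$ are objects \ldots{} instances of a subclass'', which rests precisely on your key observation that \rn{t-$\vee$-elim} (and the other non-value rules) can never conclude a typing for a value, leaving only \rn{t-bool}, \rn{t-new} and \rn{t-sub}. Your explicit treatment of the $\leqfj$-incomparability of $\boolType$ and class names via a monotone interpretation of types is a detail the paper leaves implicit, and it is sound.
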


\begin{lemma}[Inversion]\label{lem:ilos} 
The following hold: 
\begin{enumerate}
\item\label{lem:ilos:1} If  $\HasType{\Gamma}{\x}{\T}$, then $\Gamma(\x)\leqfj\T$.
\item\label{lem:ilos:2} If  $\HasType{\Gamma}{\FieldAccess{\e}{\f}}{\T}$, then $\HasType{\Gamma}{\e}{\bigvee_{i\in 1..m} \CC_i}$ and, for all $i\in 1..m$,  
$\fields{\CC_i} = \Field{\\T_{i1}}{\f_{i1}}\ldots\Field{\T_{in_i}}{\f_{in_i}}$ and $\f = \f_{ik_i}$ and   $\T_{ik_i}\leqfj T$ for some 
$k_i\in 1..n_i$.
\item\label{lem:ilos:3} If  $\HasType{\Gamma}{\ConstrCall{\CC}{\e_1,\ldots,\e_n}}{\T}$, then $\CC\leqfj\T$ and $\fields{\CC}=\Field{\T_1}{\f_1}\ldots\Field{\T_n}{\f_n}$ and $\HasType{\Gamma}{\e_i}{\T_i}$ for all $i\in 1..n$.
\item\label{lem:ilos:4} If  $\HasType{\Gamma}{\MethCall{\e_0}{\m}{\e_1,\ldots,\e_n}}{\T}$, then $\HasType{\Gamma}{\e_0}{\bigvee_{i\in 1..m} \CC_i}$ and, 
there is $p \in 0..n$ such that $\e_0 = \ldots = \e_p$ and, 
for all $i \in 1..m$, 
\begin{itemize} 
\item $\mtype{\CC_i}{\m} = \funtype{\T_{i1}\ldots\T_{in}}{\T_i}$, and  
\item for all $k\in 1..p$, $\CC_i \leqfj \T_{ik}$, and 
\item for all $k\in p+1..n$, $\HasType{\Gamma}{\e_k}{\T_{ik}}$, and 
\item $\T_i\leqfj \T$. 
\end{itemize}
\item\label{lem:ilos:5} If   $\HasType{\Gamma}{\ifte\e{\e_1}{\e_2}}{\T}$, then $\HasType{\Gamma}{\e}{\boolType}$  and  $\HasType{\Gamma}{\e_i}{\T'}$ with $\T'\leqfj\T$ and $i \in 1..2$. 
\end{enumerate}
\end{lemma}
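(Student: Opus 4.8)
The plan is to prove the five items by induction on the typing derivation, splitting on the last rule applied. For a fixed expression form this rule is either the syntax-directed rule for that form, or \rn{t-sub}, or---precisely when the subject is an elimination context, that is, a field access or a method invocation---the rule \rn{t-$\vee$-elim}. So \refItem{lem:ilos}{1}, \refItem{lem:ilos}{3} and \refItem{lem:ilos}{5} involve only the first two possibilities, whereas \refItem{lem:ilos}{2} and \refItem{lem:ilos}{4} must also cover \rn{t-$\vee$-elim}. The variable inversion \refItem{lem:ilos}{1} is self-contained---its derivation can only end in \rn{t-var} or \rn{t-sub}, both recursing on \refItem{lem:ilos}{1} itself---so I would prove it first and then use it as a standalone lemma, including on judgements produced by the induction hypothesis rather than on genuine subderivations.

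The routine cases are immediate. When the last rule is syntax-directed the conclusion is read off its premises: in the union items one takes the single class provided by the rule (so $m=1$), and the output inequalities ($\T_{ik_i}\leqfj\T$, $\T_i\leqfj\T$) hold by reflexivity. When the last rule is \rn{t-sub}, its premise types the same expression at some $\T''\leqfj\T$; I would apply the induction hypothesis and then compose with $\T''\leqfj\T$, so the only change is a final appeal to transitivity of $\leqfj$ in each output inequality.

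The real work is the \rn{t-$\vee$-elim} case. Take a field access $\FieldAccess{\e}{\f}=\ectx[\e]$, $\ectx=\FieldAccess{[]}{\f}$, whose premises are $\HasType{\Gamma}{\e}{\bigvee_{i\in 1..m}\CC_i}$ and, for each $i$, $\HasType{\Gamma,\x{:}\CC_i}{\FieldAccess{\x}{\f}}{\T}$. Applying the induction hypothesis \refItem{lem:ilos}{2} to the $i$-th latter derivation gives $\HasType{\Gamma,\x{:}\CC_i}{\x}{\bigvee_j\DD_{ij}}$ together with a field $\f$ of type $\leqfj\T$ in every $\DD_{ij}$, and by \refItem{lem:ilos}{1} this yields $\CC_i\leqfj\bigvee_j\DD_{ij}$. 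The clean way to conclude is not to keep the $\CC_i$ as witnesses but to pass to the flattened union $\bigvee_{i,j}\DD_{ij}$: from $\CC_i\leqfj\bigvee_j\DD_{ij}$ for all $i$ one derives $\bigvee_i\CC_i\leqfj\bigvee_{i,j}\DD_{ij}$ using only the union subtyping rules, and then \rn{t-sub} applied to the first premise gives $\HasType{\Gamma}{\e}{\bigvee_{i,j}\DD_{ij}}$; the classes $\DD_{ij}$ each carry the required field, so they witness the claim. The method case \refItem{lem:ilos}{4} runs the same way, with $\ectx=\MethCall{[]}{\m}{[],\ldots,[],\e_{p+1},\ldots,\e_n}$ abstracting the receiver and the first $p$ arguments simultaneously: the induction hypothesis supplies, for each $\DD_{ij}$, the signature $\mtype{\DD_{ij}}{\m}$, the subtyping $\DD_{ij}\leqfj\T_{ijk}$ at the eliminated positions $k\le p$, the typings of $\e_{p+1},\ldots,\e_n$, and the bound $\T_{ij}\leqfj\T$ on the return type; these transfer verbatim to the flattened witnesses, and the fixed index $p$ is determined by the shape of the context. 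One small extra ingredient is a strengthening lemma to drop the fresh variable $\x$ from the context of the typings of $\e_{p+1},\ldots,\e_n$, which is harmless since $\x$ does not occur in those arguments.

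I expect the main obstacle to lie in getting the \rn{t-$\vee$-elim} case exactly right: the union-flattening step (and the underlying routine fact that an atomic type below a union of classes stays below the flattened union), the careful reindexing of the doubly-indexed witnesses $\DD_{ij}$, and the bookkeeping of the shared elimination index $p$ for the multi-hole method context. An alternative that keeps the original classes $\CC_i$ as witnesses is possible but strictly harder, as it would additionally require a subtyping-inversion lemma---that a class name below a union of classes is below one of the disjuncts---together with the inheritance assumptions \InhFldCond and \InhMethCond to move field and method signatures downwards along $\leqfj$; the flattening route avoids both.
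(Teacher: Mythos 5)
Your overall skeleton (induction on the typing derivation, routine syntax-directed and \rn{t-sub} cases, all the work in \rn{t-$\vee$-elim}) matches the paper's, and the flattening idea does work for \refItem{lem:ilos}{2}: there the conclusion asserts only positive facts about each witness class (it has a field $\f$ of type $\leqfj\T$), these hold of the $\DD_{ij}$ directly, and retyping $\e$ at $\bigvee_{i,j}\DD_{ij}$ by subsumption is legitimate. The gap is in \refItem{lem:ilos}{4}. There the conclusion also bounds each witness class \emph{from above} by the parameter types at the eliminated positions ($\CC_i\leqfj\T_{ik}$ for $k\le p$), and that constraint does not transfer upwards from $\CC_i$ to the $\DD_{ij}$. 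Concretely, the index returned by the induction hypothesis for the inner judgement $\HasType{\Gamma,\x{:}\CC_i}{\ectx[\x]}{\T}$ is some $p_i$ that can be strictly smaller than the number $p$ of holes (e.g.\ $p_i=0$ when that judgement ends in \rn{t-invk}); for positions $k$ with $p_i<k\le p$ the induction hypothesis only gives a typing $\HasType{\Gamma,\x{:}\CC_i}{\x}{\T_{ijk}}$, hence $\CC_i\leqfj\T_{ijk}$ by \refItem{lem:ilos}{1}, whereas your flattened witness needs $\DD_{ij}\leqfj\T_{ijk}$ --- and $\CC_i$ lies \emph{below} $\DD_{ij}$, so nothing follows. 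A two-line counterexample: let $\aux{B}$ extend $\aux{A}$ with $\mtype{\aux{A}}{\m}=\funtype{\aux{B}}{\aux{A}}$, and derive $\HasType{\Gamma}{\MethCall{\e}{\m}{\e}}{\aux{A}}$ by \rn{t-$\vee$-elim} from $\HasType{\Gamma}{\e}{\aux{B}}$ and $\HasType{\Gamma,\x{:}\aux{B}}{\MethCall{\x}{\m}{\x}}{\aux{A}}$, the latter by \rn{t-invk} with the receiver upcast to $\aux{A}$. Your flattened witness is $\aux{A}$, and the clause for $k=1\le p=1$ demands $\aux{A}\leqfj\aux{B}$, which is false; the witness $\aux{B}$ prescribed by the statement works.

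This is precisely why the paper keeps the original $\CC_i$ as witnesses and takes the route you dismiss as strictly harder: from $\CC_i\leqfj\bigvee_j\DD_{ij}$ it extracts $\CC_i\leqfj\DD_{ij_i}$ for a single disjunct (the class-below-union inversion you wanted to avoid), pushes \aux{mtype} and \aux{fields} down along $\leqfj$ using \InhMethCond and \InhFldCond, and then discharges the positions $k\in p_i+1..p$ via $\CC_i\leqfj\T_{ik}$, which sits on the correct side of the inequality. So those two ingredients are not optional conveniences; the statement of item \refItem{lem:ilos}{4} is engineered around the lower witnesses, and some form of them is needed. (Your remark that ``the fixed index $p$ is determined by the shape of the context'' is true of the outer conclusion but conflates it with the inner $p_i$, which is exactly where the missing cases hide.)
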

\begin{proofsketch}
We prove only points \ref{lem:ilos:2} and \ref{lem:ilos:4}. 
\begin{enumerate} 
\setcounter{enumi}{1}
\item The proof is by induction on the derivation of $\HasType{\Gamma}{\FieldAccess{\e}{\f}}{\T}$. 
For rule \rn{t-fld}, we have $\HasType{\Gamma}{\e}{\CC}$, $\fields{\CC} = \Field{\T_1}{\f_1}\ldots\Field{\T_n}{\f_n}$, $\f_i = \f$ and $\T_i = \T$, for some $i \in 1..n$. 
For rule \rn{t-sub}, the thesis is immediate by induction hypothesis. 
For rule \rn{t-$\vee$-elim}, we have $\ectx = \FieldAccess{[]}{\f}$, $\HasType{\Gamma}{\e}{\bigvee_{i\in 1..m} \CC_i}$ and $\HasType{\Gamma,\x{:}\CC_i}{\ectx[\x]}{\T}$, for all $i\in 1..m$, then, by induction hypothesis, 
for all $i \in 1..m$, we get $\HasType{\Gamma,\x{:}\CC_i}{\x}{\bigvee_{j\in 1..m_i} \DD_{ij}}$ and, for all $j\in 1..m_i$, 
$\fields{\DD_{ij}} = \Field{\T_{j11}}{\f_{j1}}\ldots\Field{\T_{jn_j}}{\f_{jn_j}}$ and $\T_{jk_j}\leqfj \T$, for some $k_j \in 1..n_j$. 
Since $\HasType{\Gamma,\x{:}\CC_i}{\x}{\bigvee_{j\in 1..m_j}\DD_{ij}}$, we have $\CC_i \leqfj \bigvee_{j\in 1..m_j} \DD_{ij}$\EZComm{per il punto 1 no?}, hence 
$\CC_i \leqfj \DD_{ij_i}$, for some $j_i \in 1..m_i$, by definition of subtyping. 
Then the thesis follows easily by assumption \InhFldCond. 

\setcounter{enumi}{3}
\item The proof is by induction  on the derivation of ${\HasType{\Gamma}{\MethCall{\e_0}{\m}{\e_1,\ldots,\e_n}}{\T}}$. 
For rule \rn{t-invk}, we have $\HasType{\Gamma}{\e_0}{\CC_0}$, $p = 0$,  ${\mtype{\CC_0}{\m} = \funtype{\T_1\ldots\T_n}{\T}}$, and, 
for all $k \in 1..n$, $\HasType{\Gamma}{\e_k}{\T_k}$. 
For rule \rn{t-sub}, the thesis is immediate by induction hypothesis. 
For rule \rn{t-$\vee$-elim}, we have $\ectx = \MethCall{[]}{\m}{[],\ldots,[],\e_{p+1},\ldots,\e_n}$, hence $p$ is the number of holes in $\ectx$ and $\e_0 = \ldots = \e_p$, 
and $\HasType{\Gamma}{\e_0}{\bigvee_{i\in 1..m} \CC_i}$ and, for all $i \in 1..m$, $\HasType{\Gamma,\x{:}\CC_i}{\ectx[\x]}{\T}$, with $\x$ fresh. 
By induction hypothesis, we know that, for all $i \in 1..m$, 
$\HasType{\Gamma,\x{:}\CC_i}{\x}{\bigvee_{j \in 1..m_i} \DD_{ij}}$ and there is $p_i \in 1..n$ such that 
the first $p_i$ arguments of $\ectx[\x]$ are equal to the receiver, namely $\x$ and this implies $p_i \le p$ because $\x$ is fresh.
Let $i \in 1..m$. 
Since $\HasType{\Gamma,\x{:}\CC_i}{\x}{\bigvee_{j \in 1..m_i} \DD_{ij}}$, we get $\CC_i \leqfj \bigvee_{j\in 1..m_j} \DD_{ij}$\EZComm{per il punto 1 no?}, thus $\CC_i \leqfj \DD_{ij_i}$, for some $j_i \in 1..m_i$, by definition of subtyping. 
Therefore, by induction hypothesis and assumption \InhMethCond, we get 
$\mtype{\CC_i}{\m} = \funtype{\T_{i1}\ldots\T_{in}}{\T_i}$ and, 
for all $k \in 1..p_i$, $\DD_{ij_i} \leqfj \T_{ik}$, hence $\CC_i\leqfj \T_{ik}$, and, 
for all $k \in p_i+1..p$, $\HasType{\Gamma,\x{:}\CC_i}{\x}{\T_{ik}}$, hence \EZComm{per il punto 1 no?} $\CC_i \leqfj \T_{ik}$ and, 
for all $k \in p+1..n$, $\HasType{\Gamma,\x{:}\CC_i}{\e_k}{\T_{ik}}$, hence, because $\x$ does not occur in $\e_k$ as it is fresh, by contraction we get $\HasType{\Gamma}{\e_k}{\T_{ik}}$, 
and, finally,  $\T_i \leqfj \T$. 
\end{enumerate}
\end{proofsketch}

\begin{lemma}[Key]\label{lem:key}
If $\HasType{\Gamma}{\val}{\bigvee_{1\leq i\leq n}\CC_i}$, then $\HasType{\Gamma}{\val}{\CC_i}$ for some $i\in 1\ldots n$. 
\end{lemma}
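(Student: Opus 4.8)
The plan is to reduce the claim to a subtyping inversion principle for unions of class names, and then combine it with the shape of values. \emph{Subtyping inversion:} for an \emph{atomic} type $a$ (a class name or $\boolType$), if $a\leqfj\bigvee_{i\in 1..n}\CC_i$ then $a\leqfj\CC_j$ for some $j\in 1..n$. Since $\bigvee_{i\in 1..n}\CC_i$ is a nesting of binary unions, it suffices to treat the binary case $a\leqfj\T_1\vee\T_2\Rightarrow a\leqfj\T_1 \text{ or } a\leqfj\T_2$ and iterate on the disjuncts. The hard part will be precisely this step, because $\leqfj$ is the reflexive--transitive closure of $\aux{extends}$ together with the three rules for union, so a derivation of $a\leqfj\T_1\vee\T_2$ may route through arbitrary intermediate types via transitivity.

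To tame transitivity I would characterise $\leqfj$ through atoms. Setting $\mathsf{atoms}(\CC)=\{\CC\}$, $\mathsf{atoms}(\boolType)=\{\boolType\}$ and $\mathsf{atoms}(\T_1\vee\T_2)=\mathsf{atoms}(\T_1)\cup\mathsf{atoms}(\T_2)$, one proves by induction on subtyping derivations that $\T\leqfj\U$ holds iff every atom of $\T$ is below some atom of $\U$ in the reflexive--transitive closure of $\aux{extends}$ (which relates only class names). For $a=\DD$ this immediately yields $\DD\leqfj\CC_j$ for some $j$; for $a=\boolType$ it shows $\boolType\not\leqfj\bigvee_{i\in 1..n}\CC_i$, since $\boolType$ is an $\aux{extends}$-ancestor of $\boolType$ only, never of a class name.

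With this in hand, I would finish by analysing the shape of $\val$. A value is $\kwtt$, $\kwff$, or an object $\ConstrCall{\DD}{\vals}$; in particular it is neither a variable nor of the form $\ectx[\e]$, so the last rule of any derivation of $\HasType{\Gamma}{\val}{\T}$ is \rn{t-bool}, \rn{t-new}, or \rn{t-sub}, the elimination rules \rn{t-fld}, \rn{t-invk}, \rn{t-$\vee$-elim} and \rn{t-var} being inapplicable. A routine induction on typing then shows that a boolean admits only types $\T$ with $\boolType\leqfj\T$, so by the corollary above $\val$ cannot be a boolean when $\HasType{\Gamma}{\val}{\bigvee_{i\in 1..n}\CC_i}$. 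Hence $\val=\ConstrCall{\DD}{\vals}$, and inversion \refItem{lem:ilos}{3} gives $\DD\leqfj\bigvee_{i\in 1..n}\CC_i$ together with well-typedness of the fields of $\val$; re-applying \rn{t-new} yields $\HasType{\Gamma}{\val}{\DD}$, subtyping inversion gives $\DD\leqfj\CC_j$ for some $j$, and one use of \rn{t-sub} delivers $\HasType{\Gamma}{\val}{\CC_j}$, as required. The only genuinely delicate ingredient is the atom characterisation of $\leqfj$; everything else is inversion and re-derivation.
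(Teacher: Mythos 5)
Your proof is correct and follows essentially the same route the paper indicates: the paper gives no formal proof, only the remark that the lemma is ``straightforward, since values having class types are just new constructors,'' and your argument is a rigorous elaboration of exactly that idea (inversion on the constructor-call shape of $\val$, plus a subtyping inversion to select a disjunct). The atom-based characterisation of $\leqfj$ that you introduce to justify the step from $\DD\leqfj\bigvee_i\CC_i$ to $\DD\leqfj\CC_j$ is the detail the paper leaves implicit, and it is sound for the subtyping relation as defined.
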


In order to state soundness,  
let \ple{\ConfSet_4,\ResSet_4,\RuleSet_4} be the big-step semantics defined in \cref{fig:FJOS-typesystem} ($\ConfSet_4$ is the set of expressions and $\ResSet_4$ is the set of values), and let $\ConfPredFour_\T = \{ \e \in \ConfSet_4 \mid \HasType{\es}{\e}{\T} \}$ and 
$\ResPredFour_\T = \{ \val \in \ResSet_4 \mid \HasType{\es}{\val}{\T} \}$, for $\T$  defined  in \cref{fig:FJOS-typesystem}. 
We need a last lemma to prove soundness:

\begin{lemma}[Determinism] \label{lem:dos}
If $\validInd{\RuleSet_4}{\eval{\e}{\val_1}}$ and $\validInd{\RuleSet_4}{\eval{\e}{\val_2}}$, then ${\val_1 = \val_2}$. 
\end{lemma}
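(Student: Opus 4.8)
The plan is to prove the statement by a straightforward induction on the structure of the derivation of $\validInd{\RuleSet_4}{\eval{\e}{\val_1}}$, with a case analysis on the last rule applied, exploiting the fact that the big-step rules of \cref{fig:FJOS-typesystem} are essentially syntax-directed. First I would observe that the top-level syntactic form of $\e$ already fixes which rule can conclude $\eval{\e}{\val_2}$: a field access $\FieldAccess{\e'}{\f}$ can only be derived by \rn{field}, a constructor call only by \rn{new}, a method call only by \rn{invk}, the constants $\kwtt$ and $\kwff$ only by \rn{true} and \rn{false}, and a conditional $\ifte{\e'}{\e_1}{\e_2}$ only by \rn{if-t} or \rn{if-f}. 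Since every premise of each rule is a subderivation of the first derivation, the induction hypothesis applies to it.

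Then I would go case by case. For \rn{new}, \rn{field} and \rn{invk}, both derivations decompose the same expression into premises about the same subexpressions, so the induction hypothesis on each premise forces the intermediate results to coincide. In the \rn{field} case, the induction hypothesis on $\e'$ gives the same receiver object, whose class $\CC$ together with the distinctness of field names within a class (from assumption \InhFldCond) fixes the accessed index, so $\val_1 = \val_2$. In the \rn{invk} case, the induction hypothesis on the receiver $\e_0$ yields the same object value, hence the same class $\CC$ and thus the same $\mbody{\CC}{\m} = \Pair{\x_1\ldots\x_n}{\e}$; the induction hypothesis on each argument $\e_i$ then gives the same $\val_i$, so the third premises evaluate the very same substituted expression, and a final application of the induction hypothesis delivers $\val_1 = \val_2$. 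The cases \rn{true} and \rn{false} are immediate.

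The only genuinely delicate point, which I expect to be the crux of the argument, is the conditional, as it is the sole source of apparent non-determinism: \rn{if-t} and \rn{if-f} have overlapping conclusions $\ifte{\e'}{\e_1}{\e_2}$. Here I would use the induction hypothesis on the guard $\e'$: if the first derivation ends with \rn{if-t} (so $\eval{\e'}{\kwtt}$) while the second ends with \rn{if-f} (so $\eval{\e'}{\kwff}$), the induction hypothesis would force $\kwtt = \kwff$, a contradiction; hence both derivations must apply the same conditional rule, after which the problem reduces to the induction hypothesis on the selected branch $\e_1$ (or $\e_2$). Note that neither typing nor coinduction is needed: the result is purely about the inductively defined relation $\RuleSet_4$ and follows because each rule is deterministic once the results of its premises are fixed, the overlap between the two conditional rules being ruled out precisely by the determinism of the guard established inductively.
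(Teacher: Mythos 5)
Your proof is correct and follows essentially the same route as the paper, which disposes of the lemma in one line as "straightforward induction on rules in $\RuleSet_4$." In fact your write-up is slightly more careful than the paper's justification: the paper appeals to "every syntactic construct has a unique big-step meta-rule," which is not literally true for the conditional (there are two overlapping rules, \rn{if-t} and \rn{if-f}), and your use of the induction hypothesis on the guard to rule out $\kwtt=\kwff$ is exactly the argument needed to close that one non-syntax-directed case.
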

\begin{proof}
Straightforward induction on rules in $\RuleSet_4$, because every syntactic construct has a unique big-step meta-rule. 
\end{proof}

\begin{theorem}[Soundness]\label{thm:sdos}
The big-step semantics \ple{\ConfSet_4,\ResSet_4,\RuleSet_4}  and the indexed predicate $\PredFour$ satisfy the conditions \LPCond, \ExPCond and \AllPCond of \cref{sect:bss-sc}.
\end{theorem}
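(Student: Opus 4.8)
The plan is to verify the three conditions by cases on the big-step meta-rules of \cref{fig:FJOS-typesystem}, exactly as in the proof of \cref{thm:sdfjl}. The cases \rn{true}, \rn{false}, \rn{new}, \rn{if-t} and \rn{if-f} are routine: they are handled by the inversion lemma (\cref{lem:ilos}), the canonical forms lemma (\cref{lem:cfos}) and, for the result in the conclusion, subsumption (rule \rn{t-sub}), mirroring the corresponding cases of \cref{thm:sdfjl}. As usual, once \LPCond is established the preservation lemma (\cref{lem:sr}) becomes available, and it is used freely in the proofs of \ExPCond and \AllPCond to know that the results of already-evaluated premises keep the types assigned to the corresponding configurations. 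The genuinely new ingredients of this example are the union-elimination rule \rn{t-$\vee$-elim}, whose effect is absorbed into the strengthened inversion lemma (\cref{lem:ilos}, points \refItem{lem:ilos}{2} and \refItem{lem:ilos}{4}), and the Key lemma (\cref{lem:key}) together with determinism (\cref{lem:dos}).

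For \LPCond the interesting case is \rn{invk}. Fix an instance of the rule, so the receiver value $\obj{\CC}{\vals}$ and the argument values are fixed, and assume $\HasType{\es}{\MethCall{\e_0}{\m}{\e_1,\ldots,\e_n}}{\T}$. By \refItem{lem:ilos}{4} we obtain $\HasType{\es}{\e_0}{\bigvee_{i\in 1..m}\CC_i}$, an index $p$ with $\e_0=\cdots=\e_p$, and, for each $i$, $\mtype{\CC_i}{\m}=\funtype{\T_{i1}\ldots\T_{in}}{\T_i}$ with $\CC_i\leqfj\T_{ik}$ for $k\le p$, $\HasType{\es}{\e_k}{\T_{ik}}$ for $k>p$, and $\T_i\leqfj\T$. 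I take $\bigvee_{i}\CC_i$ as the type of the first premise; under the assumptions of \LPCond, preservation (\cref{lem:sr}) then gives $\HasType{\es}{\obj{\CC}{\vals}}{\bigvee_i\CC_i}$, so by the Key lemma (\cref{lem:key}) there is a single $i_0$ with $\HasType{\es}{\obj{\CC}{\vals}}{\CC_{i_0}}$, and hence $\CC\leqfj\CC_{i_0}$ by \refItem{lem:cfos}{2}. The crucial point is that, since $\e_0=\cdots=\e_p$, determinism (\cref{lem:dos}) forces the first $p$ argument values to coincide with $\obj{\CC}{\vals}$; together with $\CC\leqfj\CC_{i_0}\leqfj\T_{i_0 k}$ this lets the \emph{single} choice $i_0$ type all these repeated occurrences. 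For the remaining arguments I use $\T_{i_0 k}$, justified by $\HasType{\es}{\e_k}{\T_{i_0 k}}$ and preservation. Finally, from $\CC\leqfj\CC_{i_0}$ and \InhMethCond we get $\mtype{\CC}{\m}=\funtype{\T_{i_0 1}\ldots\T_{i_0 n}}{\T_{i_0}}$, so by \MBodyCond the body is well-typed, the substitution lemma (\cref{lem:sos}) gives type $\T_{i_0}$ for the last premise, and \rn{t-sub} with $\T_{i_0}\leqfj\T$ yields $\T$ for the result. The case \rn{field} is the same pattern with the receiver only (no determinism needed), using \refItem{lem:ilos}{2} and \InhFldCond.

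For \ExPCond I argue as in \cref{thm:sdfjl}: a closed well-typed expression cannot be a variable, and every other syntactic construct is the conclusion of some ground rule, the side conditions being guaranteed by the typing constraints (and \InhFldCond, \InhMethCond). For \AllPCond the key observation is again that canonical forms pin down the shape of the value a well-typed premise reduces to: in \rn{field} and \rn{invk} the receiver has a union-of-classes type, hence reduces to an object (\refItem{lem:cfos}{2}) admitting the required field or method; in \rn{if-t}/\rn{if-f} the condition has type $\boolType$, hence reduces to $\kwtt$ or $\kwff$ (\refItem{lem:cfos}{1}), selecting the appropriate $\sim_1$-equivalent rule; the remaining premises range over freely-instantiable meta-variables and are immediate.

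I expect the main obstacle to be the \LPCond case for \rn{invk}. The difficulty is reconciling the several class components $\CC_i$ produced by inverting \rn{t-$\vee$-elim} with the single runtime class $\CC$ of the receiver, and in particular ensuring that the \emph{same} index $i_0$ can be used for the receiver and for every argument equal to it. This is exactly where determinism (\cref{lem:dos}) is indispensable: it guarantees that all syntactically-equal subexpressions evaluate to the same value, so that the unsoundness observed for the non-deterministic union-elimination rule (\cf \cref{sect:bss-iut}) does not arise here. A secondary subtlety is the bookkeeping of the order in which the premise types $\idx_k$ of \LPCond are chosen: the index $i_0$ depends on the (fixed) receiver value of the rule instance, and its existence is only guaranteed under the local-preservation assumptions, so the choice must be made with this in mind.
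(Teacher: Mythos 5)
Your proposal is correct and follows essentially the same route as the paper's proof: inversion via \cref{lem:ilos}, the Key lemma plus canonical forms to pick a single class component $\CC_{i_0}$ for the receiver, determinism (\cref{lem:dos}) to identify the first $p$ argument values with the receiver value, and then \InhMethCond, \MBodyCond, the substitution lemma and subsumption. One small remark: inside the verification of \LPCond the typing of the receiver's result is already available as a hypothesis of \refItem{def:sc-lp}{2}, so appealing to preservation (\cref{lem:sr}) there is unnecessary (and would be circular if taken literally, since \cref{lem:sr} presupposes \LPCond); the paper simply assumes it directly.
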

\begin{proofsketch}
We sketch the proof only of \LPCond for rule \rn{invk}, other cases and conditions are similar to previous proofs. 

For rule \rn{invk}, \refItem{lem:ilos}{4}  applied to $\HasType\es{\MethCall{\e_0}{\m}{\e_1,\ldots,\e_n}}{\T}$ implies 
$\HasType{\es}{\e_0}{\bigvee_{i\in 1..m} \CC_i}$ and, there is $p \in 0..n$ such that 
$\e_0 = \ldots = \e_p$ and, for all $i \in 1..m$, 
$\mtype{\CC_i}{\m} = \funtype{\T_{i1}\ldots\T_{in}}{\T_i}$, and
for all $k \in 1..p$, $\CC_i \leqfj \T_{ik}$, and 
for all $k \in p+1..n$, $\HasType\es{\e_i}{\T_{ik}}$, and 
$\T_i \leqfj \T$. 
Assuming 
$\HasType{\es}{\ConstrCall{\CC}{\vals}}{\bigvee_{i\in 1..m} \CC_i }$, by \cref{lem:key} and \cref{lem:cfos}, we get 
$\CC \leqfj \CC_i$ for some $i \in 1..m$. 
Since $\mtype{\CC_i}{\m} = \funtype{\T_{i1}\ldots\T_{in}}{\T_i}$ and $\mbody{\CC}{\m} = \Pair{\x_1\ldots\x_n}{\e}$, {by assumption \InhMethCond and \MBodyCond,}
$\HasType{\kwthis{:}\CC,\x_1{:}\T_{i1},\ldots,\x_n{:}\T_{in}}{\e}{\T_i}$. 
Assume, for all $k \in 1..p$, $\HasType{\es}{\val_k}{\bigvee_{i\in 1..m} \CC_i}$ and, 
for all $k \in p+1..n$, $\HasType{\es}{\val_k}{\T_{ik}}$, then, 
since $\e_0 = \ldots = \e_p$, by \cref{lem:dos}, we get $\val_1 = \ldots = \val_p = \ConstrCall{\CC}{\vals}$, hence $\HasType{\es}{\val_k}{\T_{ik}}$, for all $k \in 1..p$, because $\CC \leqfj \CC_i \leqfj \T_{ik}$ for all $k \in 1..p$. 
\cref{lem:sos} gives $\HasType{\es}{\subst{\subst{\subst\e{\val_1}{\x_1}\ldots}{\val_n}{\x_n}}{\ConstrCall{\CC}{\vals}}{\kwthis}}{\T_i}$. 
Finally, we can conclude $\HasType{\es}{\val}{\T}$ by rule \rn{t-sub}, as $\T_i \leqfj \T$. 
\end{proofsketch}

\subsection{Imperative $\FJ$}
\label{sect:bss-ifj}

We show here how our technique behaves in an imperative setting. 
In \cref{fig:FJ-imp} and \cref{fig:FJ-impT} we show a minimal imperative extension of $\FJ$.  
We assume a well-typed class table and we use the notations introduced in \cref{sect:bss-fjl}. 
Expressions are enriched with field assignment and \emph{object identifiers} $\oid$, which only occur in runtime expressions.  
A \emph{memory} $\mem$ maps object identifiers to \emph{object states}, which are expressions of shape $\ConstrCall{\CC}{\oid_1,\ldots\oid_n}$. 
Results are configurations of shape $\Conf{\mem}{\oid}$. 
We denote by $\UpdateMem{\mem}{\oid}{i}{\oid'}$ the memory obtained from $\mem$ by replacing by $\iota'$ the $i$-th field of the object state associated \DA{with} $\oid$. 
The \emph{type assignment} $\Sigma$ maps object identifiers into types (class names). 
We write $\HasType{\Sigma}{\e}{\CC}$ for $\HasTypeMem{\emptyset}{\Sigma}{\e}{\CC}$. 

\begin{figure}
\begin{small}
\begin{math}
\begin{grammatica}
\produzione{\e}{\x\mid\FieldAccess{\e}{\f}\mid\ConstrCall{\CC}{\e_1,\ldots,\e_n}\mid
\MethCall{\e}{\m}{\e_1,\ldots,\e_n}\mid\FieldAssign{\e}{\f}{\e'}\mid\oid}{expressions}\\
\produzione{\conf}{\Conf{\mem}{\e}}{configurations}\\ 
\produzione{\res}{\Conf{\mem}{\oid}}{results} 
\end{grammatica}
\end{math}

\HSep

\begin{math}
\begin{array}{c}
\MetaRule{obj}{}{\eval{\Conf{\mem}{\oid}}{\Conf{\mem}{\oid}}}{} \BigSpace 
\MetaRule{fld}{\eval{\Conf{\mem}{\e}}{\Conf{\mem'}{\oid}}}
{ \eval{\Conf{\mem}{\FieldAccess{\e}{\f_i}}}{\Conf{\mem'}{\oid_i}} }
{ \mem'(\oid)=\ConstrCall{\CC}{\oid_1,\ldots,\oid_n}\\
  \fields{\CC}=\Field{\CC_1}{\f_1}\ldots\Field{\CC_n}{\f_n}\\
  i\in 1..n }
\\[2ex]
\MetaRule{new}{
  \eval{\Conf{\mem_i}{\e_i}}{\Conf{\mem_{i+1}}{\oid_i}}\Space \forall i\in 1..n
}{ \eval{\Conf{\mem}{\ConstrCall{\CC}{\e_1,\ldots,\e_n}}}{\Conf{\mem'}{\oid}} }
{ \mem_1=\mem\\
  \mem'=\SubstFun{\mem_{n+1}}{\ConstrCall{\CC}{\oid_1,\ldots,\oid_n}}{\oid}\\
  \oid\ \mbox{fresh} }
\\[4ex]
\MetaRule{invk}{
  \begin{array}{l}
  \eval{\Conf{\mem_i}{\e_i}}{\Conf{\mem_{i+1}}{\oid_i}}\Space \forall i\in 0..n\\
  \eval{\Conf{\mem_{n+1}}{\subst{\subst{\subst{\e}{\oid_1}{\x_1}\ldots}{\oid_n}{\x_n}}{\oid_0}{\kwthis}}}{\Conf{\mem'}{\oid}}
  \end{array}
}{ \eval{\Conf{\mem}{\MethCall{\e_0}{\m}{\e_1,\ldots,\e_n}}}{\Conf{\mem'}{\oid}} }
{ \mem_0=\mem\\
  \mem_1(\oid_0) =\ConstrCall{\CC}{\_}\\
  \mbody{\CC}{\m}=\Pair{\x_1\ldots\x_n}{\e} } 
\\[5ex]
\MetaRule{fld-up}{
  \eval{\Conf{\mem}{\e}}{\Conf{\mem'}{\oid}}\Space\eval{\Conf{\mem'}{\e'}}{\Conf{\mem''}{\oid'}}
}{ \eval{\Conf{\mem}{\FieldAssign{\e}{\f_i}{\e'}}}{\Conf{\UpdateMem{\mem''}{\oid}{i}{\oid'}}{\oid'}} }
{ \mem(\oid)=\ConstrCall{\CC}{\oid_1,\ldots,\oid_n}\\
  \fields{\CC}=\Field{\CC_1}{\f_1}\ldots\Field{\CC_n}{\f_n}\\
  i\in 1..n }
\end{array}
\end{math}

\HSep

\end{small}
\caption{Imperative $\FJ$: syntax and big-step semantics}\label{fig:FJ-imp}
\end{figure}

\begin{figure}
\begin{small}
\begin{math}
\begin{array}{c}
\MetaRule{t-conf}{
  \HasType{\Sigma}{\mem(\oid)}{\Sigma(\oid)}\ \forall\oid\in\dom(\mem)\Space 
  \HasType{\Sigma}{\e}{\CC}
}{ \HasType{\Sigma}{\Conf{\mem}{\e}}{\CC} }
{ \dom(\Sigma)=\dom(\mem) }
\\[3ex]
\MetaRule{t-var}{}{\HasTypeMem{\Gamma}{\Sigma}{\x}{\CC}}{\Gamma(\x)=\CC}
\\[3ex]
\MetaRule{t-fld}{
  \HasTypeMem{\Gamma}{\Sigma}{\e}{\CC}
}{ \HasTypeMem{\Gamma}{\Sigma}{\FieldAccess{\e}{\f_i}}{\CC_i} }
{ \fields{\CC}=\Field{\CC_1}{\f_1}\ldots\Field{\CC_n}{\f_n}\\
  i\in 1..n } 
\\[4ex]
\MetaRule{t-new}{
  \HasTypeMem{\Gamma}{\Sigma}{\e_i}{\CC_i}\Space \forall i\in 1..n
}{ \HasTypeMem{\Gamma}{\Sigma}{\ConstrCall{\CC}{\e_1,\ldots,\e_n}}{\CC} }
{ \fields{\CC}=\Field{\CC_1}{\f_1}\ldots\Field{\CC_n}{\f_n} }
\\[3ex]
\MetaRule{t-invk}{
  \HasTypeMem{\Gamma}{\Sigma}{\e_i}{\CC_i}\Space \forall i\in 0..n
}{\HasTypeMem{\Gamma}{\Sigma}{\MethCall{\e_0}{\m}{\e_1,\ldots,\e_n}}{\CC}}
{ \mtype{\CC_0}{\m}=\funtype{\CC_1\ldots\CC_n}{\CC} }
\\[3ex]
\MetaRule{t-fld-up}{
  \begin{array}{l}
  \HasTypeMem{\Gamma}{\Sigma}{\e}{\CC}\\
  \HasTypeMem{\Gamma}{\Sigma}{\e'}{\CC_i}
  \end{array}
}{\HasTypeMem{\Gamma}{\Sigma}{\FieldAssign{\e}{\f_i}{\e'}}{\CC_i}}
{ \fields{\CC}=\Field{\CC_1}{\f_1}\ldots\Field{\CC_n}{\f_n}\\
  i\in 1..n }
\\[4ex]
\MetaRule{t-oid}{}{\HasTypeMem{\Gamma}{\Sigma}{\oid}{\CC}}{\Sigma(\oid)=\CC}
\BigSpace
\MetaRule{t-sub}{\HasTypeMem{\Gamma}{\Sigma}{\e}{\CC}}{\HasTypeMem{\Gamma}{\Sigma}{\e}{\CC'}}
{\CC\leqfj\CC'}
\end{array}
\end{math}
\end{small}
\caption{Imperative $\FJ$: typing rules}\label{fig:FJ-impT}
\end{figure}

As for the other examples, to prove soundness we need some standard properties of the typing rules: inversion and substitution lemmas. 

\begin{lemma}[Inversion]\label{lem:ilimp}\
The following hold: 
\begin{enumerate}
\item\label{lem:ilimp:0} If $\HasType{\Sigma}{\Conf{\mem}{\e}}{\CC}$, then $\HasType{\Sigma}{\mem(\oid)}{\Sigma(\oid)}$ for all $\oid\in\dom(\mem)$ and  $\HasType{\Sigma}{\e}{\CC}$ and $\dom(\Sigma)=\dom(\mem)$.
\item\label{lem:ilimp:1} If  $\HasTypeMem{\Gamma}{\Sigma}{\x}{\CC}$, then $\Gamma(\x)\leqfj\CC$.
\item\label{lem:ilimp:2} If  $\HasTypeMem{\Gamma}{\Sigma}{\FieldAccess{\e}{\f_i}}{\CC}$, then $\HasTypeMem{\Gamma}{\Sigma}{\e}{\DD}$ and {\em $\fields{\DD}=\Field{\CC_1}{\f_1}\ldots\Field{\CC_n}{\f_n}$} and $\CC_i\leqfj \CC$ where 
$i\in 1..n$.
\item\label{lem:ilimp:3} If  $\HasTypeMem{\Gamma}{\Sigma}{\ConstrCall{\CC}{\e_1,\ldots,\e_n}}{\DD}$, then $\CC\leqfj\DD$ and {\em $\fields{\CC}=\Field{\CC_1}{\f_1}\ldots\Field{\CC_n}{\f_n}$} and $\HasTypeMem{\Gamma}{\Sigma}{\e_i}{\CC_i}$ for all $i\in 1..n$.
\item\label{lem:ilimp:4} If  $\HasTypeMem{\Gamma}{\Sigma}{\MethCall{\e_0}{\m}{\e_1,\ldots,\e_n}}{\CC}$, then  
$\HasTypeMem{\Gamma}{\Sigma}{\e_i}{\CC_i}$ for all $i\in 0..n$ and  {\em $\mtype{\CC_0}{\m}=\funtype{\CC_1\ldots\CC_n}{\DD}$} with $\DD\leqfj\CC$.
\item\label{lem:ilimp:5} If  $\HasTypeMem{\Gamma}{\Sigma}{\FieldAssign{\e}{\f_i}{\e'}}{\CC}$, then $\HasTypeMem{\Gamma}{\Sigma}{\e}{\DD}$ and $\fields{\DD}=\Field{\CC_1}{\f_1}\ldots\Field{\CC_n}{\f_n}$, with $i \in 1..n$, and $\HasTypeMem{\Gamma}{\Sigma}{\e'}{\CC_i}$  and $\CC_i\leqfj\CC$.
\item\label{lem:ilimp:6} If  $\HasTypeMem{\Gamma}{\Sigma}{\oid}{\CC}$, then $\Sigma(\oid)\leqfj\CC$.
\end{enumerate}
\end{lemma}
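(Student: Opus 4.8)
The plan is to prove each of the seven items by induction on the typing derivation, taking advantage of the fact that the type system in \cref{fig:FJ-impT} is syntax-directed apart from the subsumption rule \rn{t-sub}. The configuration case is the simplest: the judgement $\HasType{\Sigma}{\Conf{\mem}{\e}}{\CC}$ is a distinct judgement form, derivable only by rule \rn{t-conf} (since \rn{t-sub} concerns expressions, not configurations), so I would simply read off its premises and side condition, with no induction needed.

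For the six expression cases, I would fix the shape of the expression and observe that exactly two rules can have that expression in the conclusion: the corresponding structural rule (\rn{t-var}, \rn{t-fld}, \rn{t-new}, \rn{t-invk}, \rn{t-fld-up}, or \rn{t-oid}) and \rn{t-sub}. In the structural case the premises and side conditions deliver the required facts directly; the only adjustment is that the rule yields an \emph{equality} between the conclusion type and the stated type (for instance \rn{t-fld} types $\FieldAccess{\e}{\f_i}$ at exactly $\CC_i$), which I would weaken to the subtyping statement $\CC_i \leqfj \CC$ demanded by the lemma using reflexivity of $\leqfj$. In the \rn{t-sub} case the immediate subderivation types the \emph{same} expression at some type $\DD$ with $\DD \leqfj \CC$; I would apply the induction hypothesis to this subderivation and then recover the target type by transitivity of $\leqfj$, composing the inequality it produces with $\DD \leqfj \CC$.

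The argument involves no genuine difficulty, but the point requiring care — and the reason one cannot merely inspect the last applied rule — is precisely the presence of \rn{t-sub}, which may be applied an arbitrary number of times between structural steps. Organising the proof as an induction on the derivation, rather than a flat case analysis, is what handles these interleaved subsumptions uniformly; the only properties of subtyping needed are reflexivity and transitivity, which hold since $\leqfj$ is defined as a reflexive and transitive closure. The variable and object-identifier cases are essentially identical, as are the bookkeeping details across the remaining structural cases, so after a representative treatment of, say, field access and method invocation the rest is routine.
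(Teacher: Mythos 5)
Your proof is correct. The paper itself gives no proof of this lemma, treating it (like the inversion lemmas for the other calculi) as a standard property of the typing rules; the argument it implicitly relies on is exactly the one you give, namely induction on the typing derivation with a case split between the unique structural rule and \rn{t-sub}, discharging the latter by the induction hypothesis together with transitivity of $\leqfj$ and the former by reflexivity. Your observation that the configuration case needs no induction because \rn{t-sub} applies only to expression judgements is also accurate.
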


\begin{lemma}[Substitution]\label{lem:sosimp}
If $\HasTypeMem{\SubstFun\Gamma{\CC'}\x}{\Sigma} \e {\CC}$ and $\HasTypeMem{\Gamma}{\Sigma}{\e'} {\CC'}$, then $\HasTypeMem{\Gamma}{\Sigma}{\subst\e{\e'}\x} {\CC}$.
\end{lemma}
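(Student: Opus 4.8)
The plan is to argue by induction on the derivation of $\HasTypeMem{\SubstFun{\Gamma}{\CC'}{\x}}{\Sigma}{\e}{\CC}$, casing on the last typing rule applied. Proceeding on the derivation rather than on the structure of $\e$ lets me treat subsumption (rule \rn{t-sub}) uniformly, since it leaves the shape of the expression unchanged. The recurring observation is that the type assignment $\Sigma$ is irrelevant to the argument: since $\subst{\e}{\e'}{\x}$ substitutes an expression for a \emph{term} variable, it never touches object identifiers, so $\Sigma$ is identical in every judgement involved, and all side conditions --- which mention only the class table through \aux{fields}, \aux{mtype}, and \aux{mbody} --- are preserved automatically.

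The base cases are \rn{t-var} and \rn{t-oid}. For \rn{t-oid} the expression is an object identifier $\oid$, hence $\subst{\oid}{\e'}{\x} = \oid$ and the typing is untouched, as it depends only on $\Sigma(\oid)$. For \rn{t-var} the expression $\e$ is a variable with $(\SubstFun{\Gamma}{\CC'}{\x})(\e) = \CC$, and I split on whether $\e = \x$. If $\e = \x$, then $\CC = \CC'$ and $\subst{\x}{\e'}{\x} = \e'$, so the required judgement $\HasTypeMem{\Gamma}{\Sigma}{\e'}{\CC'}$ is precisely the second hypothesis. If $\e \neq \x$, then $\subst{\e}{\e'}{\x} = \e$ and $\Gamma(\e) = (\SubstFun{\Gamma}{\CC'}{\x})(\e) = \CC$, so $\HasTypeMem{\Gamma}{\Sigma}{\e}{\CC}$ follows by \rn{t-var} again.

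For the structural cases \rn{t-fld}, \rn{t-new}, \rn{t-invk}, and \rn{t-fld-up}, substitution commutes with the constructor, e.g. $\subst{(\FieldAccess{\e_1}{\f_i})}{\e'}{\x} = \FieldAccess{(\subst{\e_1}{\e'}{\x})}{\f_i}$, and analogously for the others. I apply the induction hypothesis to each immediate subexpression --- each typed under $\SubstFun{\Gamma}{\CC'}{\x}$ with the same $\Sigma$, against the same second hypothesis --- to obtain its typing under $\Gamma$, and then reapply the same rule, whose side conditions are unaffected. Rule \rn{t-sub} is handled in the same way: if $\CC$ is derived from some $\DD$ with $\DD \leqfj \CC$, then by the induction hypothesis $\subst{\e}{\e'}{\x}$ has type $\DD$, and reapplying \rn{t-sub} yields type $\CC$.

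I expect no genuine obstacle: the statement is routine and the imperative features do not interfere, because substitution operates solely on term variables while the memory typing is carried by $\Sigma$, which remains fixed throughout. The single point demanding care is the variable case, where one must note that $\CC = \CC'$ holds exactly when $\e = \x$; every remaining case is a mechanical commutation of substitution past a typing rule.
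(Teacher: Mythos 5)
Your proof is correct. The paper states \cref{lem:sosimp} without proof, treating it as a standard property of the typing rules, and your argument---induction on the typing derivation, splitting the variable case on whether the variable is $\x$, handling \rn{t-oid} and the structural rules by commuting substitution past the constructor, and threading \rn{t-sub} through the induction hypothesis---is exactly the routine argument the authors are implicitly relying on. Your observation that $\Sigma$ is untouched because substitution acts only on term variables (never on object identifiers) is the one point specific to the imperative setting, and you handle it correctly.
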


Let \ple{\ConfSet_5,\ResSet_5,\RuleSet_5} be the big-step semantics defined in \cref{fig:FJ-imp}. 
We can prove the soundness of the indexed predicate $\PredFive$ defined by: 
$\ConfPredFive_\Pair{\Sigma}{\CC} = \{ \Conf{\mem}{\e} \in \ConfSet_5 \mid  \HasType{\Sigma'}{\Conf{\mem}{\e}}{\CC} \mbox{ for some $\Sigma'$ s.t. $\Sigma\subseteq\Sigma'$} \}$ and 
$\ResPredFive_\Pair{\Sigma}{\CC} = {\ResSet_5 \cap \ConfPredFive_\Pair{\Sigma}{\CC}}$.  
The type assignment $\Sigma'$
is needed, since memory can grow during evaluation. 

\begin{theorem}[Soundness]\label{thm:sdimp}
The big-step semantics \ple{\ConfSet_5,\ResSet_5,\RuleSet_5}  and the indexed predicate $\PredFive$ satisfy the conditions \LPCond, \ExPCond and \AllPCond of \cref{sect:bss-sc}.
\end{theorem}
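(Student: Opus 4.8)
The plan is to verify the three conditions \LPCond, \ExPCond and \AllPCond separately, following the pattern of \cref{thm:sdfjl,thm:sdos}, but with the extra bookkeeping forced by the fact that evaluation grows the store and hence the type assignment. Throughout I would use the inversion lemma (\cref{lem:ilimp}) and the substitution lemma (\cref{lem:sosimp}), together with two standard auxiliary facts that are not stated explicitly but follow immediately from the shape of the rules in \cref{fig:FJ-impT}: weakening under type-assignment extension (if $\HasTypeMem{\Gamma}{\Sigma}{\e}{\CC}$ and $\Sigma \subseteq \Sigma'$ then $\HasTypeMem{\Gamma}{\Sigma'}{\e}{\CC}$, trivially since no rule depends negatively on $\Sigma$), and the fact that a well-typed memory assigns to each identifier an object of its declared class, so that canonical forms hold for memory contents.

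For \LPCond, fix an instantiated rule $\rho$ whose conclusion configuration $\Conf{\mem}{\e}$ lies in $\ConfPredFive_{\Pair{\Sigma}{\CC}}$; by definition there is $\Sigma_0 \supseteq \Sigma$ with $\HasType{\Sigma_0}{\Conf{\mem}{\e}}{\CC}$, so by \refItem{lem:ilimp}{0} the memory $\mem$ is well-typed w.r.t. $\Sigma_0$ and $\HasType{\Sigma_0}{\e}{\CC}$. The key idea is to take, as index for every premise, a pair whose type-assignment component is the fixed $\Sigma_0$, letting the predicate's built-in freedom to extend $\Sigma_0$ absorb the store growth. Concretely, for the rules that evaluate subterms in sequence (\rn{new}, \rn{invk}, \rn{fld-up}) I would read off class types $\CC_0,\ldots,\CC_n$ for the subterms by inversion, assign premise $k$ the index $\Pair{\Sigma_0}{\CC_k}$, and then thread the assignment: assuming the previous premises evaluate to results of their expected types supplies, via \refItem{lem:ilimp}{0}, an extended assignment $\Sigma_k \supseteq \Sigma_0$ under which the store reached so far is well-typed, and weakening re-types the next subterm under $\Sigma_k$, discharging condition~\refItem{def:sc-lp}{1}. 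For the final result I would use the inversion clauses and subtyping transitivity; the only case needing real work is \rn{invk}, where \InhMethCond (applied to the runtime class of the receiver, a subtype of its static class $\CC_0$ by canonical forms) and \MBodyCond give a typing of the method body with the right return type, and then \cref{lem:sosimp} applied to the substitution of the actuals and $\kwthis$ yields the body's type, whose result is upcast by \rn{t-sub} to establish condition~\refItem{def:sc-lp}{2}. For \rn{fld-up} I would additionally note that overwriting the $i$-th field with a value of that field's type preserves memory well-typedness, and for \rn{new} that extending $\Sigma_{n+1}$ with the fresh identifier mapped to $\CC$ keeps the memory well-typed.

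\ExPCond and \AllPCond are comparatively routine. For \ExPCond I would argue, as in \cref{thm:sdfjl}, that every expression form has a matching big-step meta-rule and that well-typedness guarantees the side conditions can be met: inversion forces field and method names to be genuinely present in a suitable class, and results $\Conf{\mem}{\oid}$ are covered by \rn{obj}. For \AllPCond I would invoke preservation (\cref{lem:sr}, available since \LPCond holds) together with canonical forms for memory contents: when the receiver configuration of a field access or method call evaluates to some $\Conf{\mem'}{\oid}$, preservation forces $\Sigma'(\oid)$ to be the expected class type, and memory well-typedness then makes $\mem'(\oid)$ an object of that class equipped with the needed field or method, so the continuation rule exists.

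The main obstacle I anticipate is precisely the interaction between the sequential evaluation of premises and the monotonic growth of the store. The indexed predicate here is parametric in a type assignment $\Sigma$, yet \LPCond requires the premise indices to be fixed in advance, before the intermediate memories are known; making this work hinges on keeping the lower bound $\Sigma_0$ fixed while recovering the actual witnessing assignment afterwards from the well-typedness of each intermediate result, and on having weakening available to re-type later subterms and the method body under the enlarged assignment. Getting the \rn{invk} case right --- reconciling the static receiver type $\CC_0$ with its runtime class via \InhMethCond and \MBodyCond, and tracking that the store updated by the argument evaluations still types the receiver at a subclass of $\CC_0$ --- is where the bookkeeping is heaviest.
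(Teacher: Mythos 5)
Your proposal is correct and follows essentially the same route as the paper: case analysis on the instantiated meta-rules for \LPCond using the inversion and substitution lemmas, with the growth of the store absorbed by the existential quantification over $\Sigma'\supseteq\Sigma$ built into $\ConfPredFive_{\Pair{\Sigma}{\CC}}$, the assumptions \MBodyCond and \InhMethCond discharging the \rn{invk} case, and comparatively short arguments for \ExPCond and \AllPCond. The paper likewise threads a chain $\Sigma\subseteq\Sigma_1\subseteq\cdots\subseteq\Sigma_n$ through the premises and implicitly uses the weakening property you make explicit, so the bookkeeping you flag as the main obstacle is handled exactly as you describe.
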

\begin{proof}
We prove separately the three conditions. 
The most interesting aspect here is that the presence of a memory induces a dependency between subsequent premises in each big-step rule and the hypotheses provided by the soundness conditions are essential to handle such a dependency. 

\proofsect{Proof of \LPCond} 
The proof is by cases on instantiations of meta-rules. 
\begin{proofcases}
\item [\rn{obj}] Trivial from the hypothesis. 

\item[\rn{fld}]  
\refItem{lem:ilimp}{0} applied to  $\HasType{\Sigma}{\Conf{\mem}{\FieldAccess{\e}{\f_i}}}{\CC}$ implies $\HasType{\Sigma}{\mem(\oid)}{\Sigma(\oid)}$ for all $\oid\in\dom(\mem)$ and  $\HasType{\Sigma}{\FieldAccess{\e}{\f_i}}{\CC}$ and $\dom(\Sigma)=\dom(\mem)$.
\refItem{lem:ilimp}{2} applied to $\HasType{\Sigma}{\FieldAccess{\e}{\f_i}}{\CC}$ implies $\HasType{\Sigma}{\e}{\DD}$ and $\fields{\DD}=\Field{\CC_1}{\f_1}\ldots\Field{\CC_n}{\f_n}$ and $\CC_i\leqfj \CC$ where 
$i\in 1..n$.  
Since $\eval{\Conf{\mem}{\e}}{\Conf{\mem'}{\oid}}$ is a premise we assume $\HasType{\Sigma'}{\Conf{\mem'}{\oid}}\DD$ with $\Sigma\subseteq\Sigma'$. 
\refItem{lem:ilimp}{0} and \refItem{lem:ilimp}{6} imply
$\Sigma'(\oid)\leqfj\DD$. 
\refItem{lem:ilimp}{3} allows us to get 
$\mem'(\oid)=\ConstrCall{\CC'}{\oid_1,\ldots\oid_m}$ with $n\leq m$ and $\CC'\leqfj\DD$ and  $\HasType{\Sigma'}{\oid_i}{\CC_i}$. 
So we conclude $\HasType{\Sigma'}{\Conf{\mem'}{\oid_i}}\CC$ by rules \rn{t-sub} and \rn{t-conf}.

\item [\rn{new}] 
\refItem{lem:ilimp}{0} applied to  $\HasType{\Sigma}{\Conf{\mem}{\ConstrCall{\CC}{\e_1,\ldots,\e_n}}}{\DD}$ implies $\HasType{\Sigma}{\mem(\oid)}{\Sigma(\oid)}$ for all $\oid\in\dom(\mem)$ and  $\HasType{\Sigma}{\ConstrCall{\CC}{\e_1,\ldots,\e_n}}{\DD}$ and $\dom(\Sigma)=\dom(\mem)$.
\refItem{lem:ilimp}{3} applied to $\HasType{\Sigma}{\ConstrCall{\CC}{\e_1,\ldots,\e_n}}{\DD}$ implies $\CC\leqfj\DD$ and $\fields{\CC}=\Field{\CC_1}{\f_1}\ldots\Field{\CC_n}{\f_n}$ and $\HasType{\Sigma}{\e_i}{\CC_i}$ for all $i\in 1..n$. 
Since $\eval{\Conf{\mem}{\e_i}}{\Conf{\mem_{i+1}}{\oid_i}}$ is a premise we assume 
$\HasType{\Sigma_i}{\Conf{\mem_{i+1}}{\oid_i}}{\CC_i}$ for all $i\in 1..n$ with $\Sigma\subseteq\Sigma_1\subseteq\cdots\subseteq\Sigma_n$.
\refItem{lem:ilimp}{0} and \refItem{lem:ilimp}{6} imply
$\Sigma_i(\oid_i)\leqfj\CC_i$  for all $i\in 1..n$. 
Using rules \rn{t-oid}, \rn{t-new} and \rn{t-sub} we derive $\HasType{\Sigma_n}{\ConstrCall{\CC}{\oid_1,\ldots,\oid_n}}{\DD}$. 
We then conclude $\HasType{\Sigma_n,\oid:\DD}{\Conf{\mem_{n+1}}\oid}{\DD}$ by rules \rn{t-oid} and \rn{t-conf}.

\item [\rn{invk}] 
\refItem{lem:ilimp}{0} applied to  $\HasType{\Sigma_0}{\Conf{\mem_0}{\MethCall{\e_0}{\m}{\e_1,\ldots,\e_n}}}{\CC}$ implies $\HasType{\Sigma_0}{\mem_0(\oid)}{\Sigma_0(\oid)}$ for all $\oid\in\dom(\mem_0)$ and  $\HasType{\Sigma_0}{\MethCall{\e_0}{\m}{\e_1,\ldots,\e_n}}{\CC}$ and $\dom(\Sigma_0)=\dom(\mem_0)$.
\refItem{lem:ilimp}{4} applied to $\HasType{\Sigma_0}{\MethCall{\e_0}{\m}{\e_1,\ldots,\e_n}}{\CC}$ implies  
$\HasType{\Sigma_i}{\e_i}{\CC_i}$ for all $i\in 0..n$ and  $\mtype{\CC_0}{\m}=\funtype{\CC_1\ldots\CC_n}{\DD}$ with $\DD\leqfj\CC$. Since $\eval{\Conf{\mem_i}{\e_i}}{\Conf{\mem_{i+1}}{\oid_i}}$ is a premise we assume $\HasType{\Sigma_i}{\Conf{\mem_{i+1}}{\oid_i}}{\CC_i}$ for all $i\in 0..n$ with $\Sigma_0\subseteq\cdots\subseteq\Sigma_n$. \refItem{lem:ilimp}{0} gives $\HasType{\Sigma_i}{\oid_i}{\CC_i}$ for all $i\in 0..n$. 
The typing of the class table implies ${\HasType{\x_1{:}\CC_1,\ldots,\x_n{:}\CC_n,\kwthis{:}\CC_0}{\e}{\DD}}$.
\cref{lem:sosimp} gives $\HasType{\Sigma_n}{\e'}\DD$ where $\e'=\subst{\subst{\subst{\e}{\oid_1}{\x_1}\ldots}{\oid_n}{\x_n}}{\oid_0}{\kwthis}$. Using rules \rn{t-sub} and 
\rn{t-conf} we derive $\HasType{\Sigma_n}{\Conf{\mem_{n+1}}{\e'}}\CC$.
Since $\eval{\Conf{\mem_{n+1}}{\e'}}{\Conf{\mem'}{\oid}}$ is  a premise we conclude $\HasType{\Sigma'}{\Conf{\mem'}{\oid}}\CC$ with $\Sigma_n\subseteq\Sigma'$.

\item [\rn{fld-up}] 
\refItem{lem:ilimp}{0} applied to  $\HasType{\Sigma}{\Conf{\mem}{\FieldAssign{\e}{\f_i}{\e'}}}{\CC}$ implies $\HasType{\Sigma}{\mem(\oid)}{\Sigma(\oid)}$ for all $\oid\in\dom(\mem)$ and  $\HasType{\Sigma}{\FieldAssign{\e}{\f_i}{\e'}}{\CC}$ and $\dom(\Sigma)=\dom(\mem)$. \refItem{lem:ilimp}{5} applied to $\HasType{\Sigma}{\FieldAssign{\e}{\f_i}{\e'}}{\CC}$ implies $\HasType{\Sigma}{\e}{\DD}$ and $\fields{\DD}=\Field{\CC_1}{\f_1}\ldots\Field{\CC_n}{\f_n}$ and $\HasType{\Sigma}{\e'}{\CC_i}$  and $\CC_i\leqfj\CC$. Since $\eval{\Conf{\mem}{\e}}{\Conf{\mem'}{\oid}}$ and $\eval{\Conf{\mem'}{\e'}}{\Conf{\mem''}{\oid'}}$ are premises we assume  $\HasType{\Sigma'}{\Conf{\mem'}{\oid}}\DD$  and $\HasType{\Sigma''}{\Conf{\mem''}{\oid'}}\CC_i$, with $\Sigma\subseteq\Sigma'\subseteq\Sigma''$. 
Notice that $\mem''(\oid)$ and $\UpdateMem{\mem''}{\oid}{i}{\oid'}(\oid)$ have the same types for all $\oid$ by construction. We conclude $\HasType{\Sigma''}{\Conf{\UpdateMem{\mem''}{\oid}{i}{\oid'}}{\oid'}}\CC_i$.
\end{proofcases}

\proofsect{Proof of \ExPCond} All the closed expressions appear as conclusions in the reduction rules. 

\proofsect{Proof of \AllPCond}  
Since the only values are configurations with object identifiers it is easy to verify that the premises of the reduction rules are satisfied, being the conditions on memory and object identifiers assured by the typing rules. 
\end{proof}

  \section{Concluding discussions}
\label{sect:conclu} 

The big-step style can be useful for abstracting details or 
directly deriving the implementation of an interpreter. 
However, reasoning on properties involving infinite computations, such as the soundness of a type system, is non-trivial, 
because standard big-step semantics is able only to capture finite computations, hence it cannot distinguish between stuck and infinite \bez ones. \eez

In this paper, we address this problem, providing a systematic analysis of big-step semantics. 
The first, and fundamental, methodological feature of our analysis is that we want to be \emph{independent from specific languages}, developing an abstract study of big-step semantics in itself. 
Therefore, we provide a definition of what a big-step semantics is, so our results  will be applicable, as we show by several examples, to all concrete big-step semantics matching our definition. 

A second important building block of our approach is that we take seriously the fact that big-step rules implicitly define an \emph{evaluation algorithm}.
Indeed, we make such intuition formal by showing that \bez starting from the \eez rules we can define a transition relation on incomplete derivations, abstractly modeling such evaluation algorithm. 
Relying on this transition relation, we are able to define computations in the big-step semantics in the usual way, as possibly infinite sequences of transition steps; 
thus we can distinguish converging, diverging and stuck computations, even \bez though \eez big-step rules only define convergence. 
This shows that diverging and stuck computations are, in a sense, implicit in standard big-step rules, and the transition relation makes them explicit. 

Finally, the third feature of our approach is that we provide \emph{constructions} that, starting from a usual big-step semantics, produce an extended one where the distinction between diverging and stuck computation is explicit. 
Such constructions show that we can distinguish stuckness and divergence directly by a big-step semantics, without resorting to a transition relation: 
we rely on the above described transition relation on incomplete derivations only to prove that the constructions are correct. 
Corules are crucial to define extended big-step semantics precisely modelling divergence just as a special result, thus avoiding the redundancy introduced by traces. 

Building on this systematic study,
we show how one can reason about soundness of a predicate directly on a big-step semantics. 
To this end, we design proof techniques for two flavours of soundness, based on sufficient conditions on big-step rules. 


  \subsection{Related work}

The research presented in this paper 
follows a stream of work dating back to \citet{CousotC92}, who proposed a stratified approach, investigated by \citet{LeroyG09} as well, with a separate judgment for divergence, defined coinductively. 
In this way, however, there is no unique formal definition of the behaviour of the modelled system. 
An alternative possibility, also investigated by \citet{LeroyG09}, is to interpret coinductively the standard big-step rules (\emph{coevaluation}).
Unfortunately, coevaluation is non-deterministic, allowing the derivation of spurious judgements,  and, thus, may fail to correctly
capture the infinite behavior of a configuration: a diverging term, such as $\Omega$, evaluates to any value, hence it cannot be properly distinguished from  converging terms.
Furthermore, in coevaluation there are still configurations, such as $\Omega\appop(0\appop 0)$, for which no judgment can be derived, here because  no judgment can be derived for the subterm $0\appop 0$;  
basically, this is due to the fact that divergence of a premise should be propagated and this cannot be correctly handled by coevaluation as divergence is not explicitly modelled. 

\emph{Pretty big-step semantics} by  \citet{Chargueraud13} handles the issue of duplication of meta-rules by a
unified judgment with a unique set of (meta-)rules and divergence modelled by a special value. 
Rules are interpreted coinductively, hence they allow the derivation of spurious judgements, but, thanks to the use of a special value for divergence and the particular structure of rules, they can solve most of the issues of coevaluation. 
However, this particular structure of rules is not as natural as usual big-step rules and, more importantly, 
it requires the introduction of new specific syntactic forms representing intermediate computation steps, as in small-step semantics, hence making the big-step semantics less abstract. 
This may be a problem, for instance, when proving soundness of a type system, as such intermediate configurations may be ill-typed. 

\citet{PoulsenM17} subsequently present \emph{flag-based big-step semantics},
which further streamlines the approach by combining it with the M-SOS technique
(modular structural operational semantics), thereby reducing
the number of (meta-)rules and premises, avoiding the need for intermediate configurations. 
The key idea is to extend configurations and results by flags explicitly modelling convergence and divergence, used to properly handle divergence propagation. 
To model divergence, they interpret rules coinductively, hence they allow the derivation of spurious judgements. 

Differently from all the previously cited papers, which consider specific examples, the work by \citet{Ager04} shares with us the aim of providing a generic construction to model non-termination, basing on an arbitrary big-step semantics. 
Ager considers a big-step judgement of shape $\rho\vdash t \Downarrow v$ where $\rho$ is an environment, $t$ a syntactic term and $v$ a final value, and 
values, environments and the signature for terms are left unspecified. 
Then, given a big-step semantics, he describes a method to extract an abstract machine from it, which models a proof-search algorithm. 
In this way, converging, diverging and stuck computations are distinguished. 
This approach is somehow similar to our transition relation on partial evaluation trees, even tough a different style is used: 
we have no syntactic components and 
the transition system we propose is directly defined on evaluation trees and corresponds to a partial order on them, modelling refinement. 
Moreover, Ager's notion of big-step semantics is not fully formal, in particular, it is not clear whether he works with plain rules or meta-rules. 

Another piece of work whose aim is to define a general framework for operational semantics specification is the one by \citet{BodinGJS19} on \emph{skeletal semantics}. 
Here the key idea is to specify the semantics of a language by a set of skeletons, one for each syntactic construct, which describe how to evaluate each of them. 
Skeletons are very much like big-step rules, indeed they can be regarded as an ad-hoc syntax for specifying them. 
This syntax is quite unusual, but probably better suited for the Coq implementation which the framework comes with. 
This approach is not specifically tailored for big-step semantics: 
a skeletal specification can give rise to semantics in different styles, such as big-step, small-step or abstract machines. 
However, given the similarity between skeletons and big-step rules, it may be possible to adapt the proof technique we propose to this setting, but this is matter for future work. 

\citet{AnconaDZ@oopsla17} firstly show that with corules one can define a unified big-step judgment with a unique set of rules avoiding spurious evaluations. 
This can be seen as \emph{constrained coevaluation}. 
Indeed, corules add constraints on the infinite derivations to filter out spurious results, so that,  
for diverging terms, it is only possible to get $\divres$ as result. 
This is extended to include observations as traces by \citet{AnconaDZ@ecoop18}. 
A further step is done by \citet{AnconaDRZ20}, where observations are modelled by an arbitrary monoid and a variant of the construction described in \cref{sect:bss-div} is considered. 

Other proposals, by \citet{OwensMKT16,AminRompf17},
are inspired by \emph{definitional interpreters} \citep{Reynolds72}, based on a step-indexed approach (a.k.a.``fuel''-based semantics)
where computations are approximated to some finite amount of steps (typically with a counter);
in this way divergence can be modeled by induction.
\citet{OwensMKT16} investigates functional big-step semantics for proving by induction compiler
correctness.
\citet{AminRompf17} explore inductive proof strategies for type soundness properties 
for the polymorphic type systems $F_{<:}$, and equivalence with small-step semantics.
An inductive proof of type soundness for the big-step semantics of a Java-like language is proposed by \citet{Ancona14}.

Coinductive trace semantics in big-step style have been studied by \citet{NakataU09,NakataU10,NakataU10a}.
Their investigation started with the semantics of an imperative While language  with no I/O \citep{NakataU09},
where traces are possibly infinite sequences of states; semantic rules are all coinductive and define two mutually dependent judgments.
Based on such a semantics, they define a Hoare logic \citep{NakataU10}. 
They provide a constructive theory and metatheory, together with a Coq formalization of their results. 
Differently from our approach, 
weak bisimilarity between traces is needed 
for proving that programs exhibit equivalent observable behaviors. 
This is due to the fact that ``silent effects'' (that is, non-observable internal steps) must be explicitly represented
to guarantee guardedness conditions which ensure productivity of corecursive definitions. 
This is a natural consequence of having computable definitions. 
By using corules, we can avoid bisimilarity, accepting an approach which is not fully constructive. 

This semantics has been subsequently extended with interactive I/O \citep{NakataU10a}, by exploiting
the notion of resumption monad: a tree representing possible runs of a program to model its non-deterministic behavior due to input values.
Also in this case a big-step trace semantics is defined with two mutually recursive coinductive judgments, and
weak bisimilarity is needed; however, the definition of the observational equivalence is more involved,
since it requires nesting inductive definitions in coinductive ones.
A generalised notion of resumption has been introduced later by \citet{PirogG14}  in a category-theoretic and coalgebraic context.

\citet{Danielsson12}, inspired by \citet{LeroyG09},
relying on the coinductive partiality monad, defines  big-step semantics for
$\lambda$-calculi and virtual machines as total, computable functions able to capture divergence.
 

The resumption monad of  \citet{NakataU10a} and the partiality monad of \citet{Danielsson12} are inspired by the seminal work of \citet{Capretta05}
on the \emph{delay monad}, where coinductive types are exploited to model infinite computations
by means of a type constructor for partial elements, which allows the formal definition of convergence and divergence
and a type-theoretic representation of general recursive functions; this type constructor
is proved to constitute a strong monad, upon which subsequent related 
papers \citep{AbelC14,McBride15,ChapmanUV19} elaborated to define other monads for managing divergence.
In particular, \citet{McBride15} has proposed a more general approach based on a free monad
for which the delay monad is an instantiation obtained through a monad morphism. 
All these proposals are based on the step-indexed approach.

More recently, interaction trees (ITrees) \citep{itrees20} have been presented as a coinductive variant of free monads with the main aim
of defining the denotational semantics for effectful and possibly nonterminating computations,  to allow
compositional reasoning for mutually recursive components of an interactive system, with fully mechanized proofs in Coq.
Interaction trees are coinductively defined trees which directly support a more general fixpoint combinator
which does need a step-indexed approach, as happens for the general monad of McBride.
A \lstinline{Tau} constructor is introduced to represent a silent step of computation, 
to express silently diverging computations without violating Coq's guardedness condition;
as a consequence, \bez a \eez generic definition of weak bisimulation on ITrees is required to
remove any finite number of \lstinline{Tau}s, similarly as what happens in the approach of Nakata and Uustalu.


  \subsection{Future work} 
\label{sect:bss-future} 

There are several directions for further research. 
A first direction is to study other approaches to model divergence in big-step semantics using our general meta-theory, that is, 
defining yet other constructions, such as adding a counter and timeout, as done by  \citet{OwensMKT16,AminRompf17}, or adding flags, as done by \citet{PoulsenM17}. 
This would provide a general account of these approaches, allowing to study their properties in general, abstracting away particular features of concrete languages. 
A further direction is to consider other computational models such as probabilistic computations, which are quite difficult to model in big-step style, as shown by \citet{DalLagoZ12}. 

Concerning proof techniques for soundness, 
we also plan to compare our proof technique with the standard one for small-step semantics: 
if a predicate satisfies progress and subject reduction with respect to a small-step semantics, does it satisfy our soundness conditions  with respect to  an equivalent big-step semantics? 
To formally prove such a statement, the first step will be to express equivalence between small-step and big-step  semantics, and such equivalence has to be expressed  at the level of big-step rules, as it needs to be extendible to stuck and infinite computations. 
Note that, as a by-product, this will provide us with a proof technique to show equivalence between small-step and big-step semantics. 
\citet{AnconaDRZ20} make a first attempt to express such an equivalence for a more restrictive class of big-step semantics. 
On the other hand, 
the converse does not hold, as  shown 
by the examples in \cref{sect:bss-fjl} and \cref{sect:bss-fjos}. 

Furthermore, it would be interesting to extend such techniques for soundness to big-step semantics with observations, taking inspiration from type and effect systems \citep{MarinoM09,Tate13}. 


Last but not least, to support reasoning by our framework on concrete examples, such as those in \cref{sect:bss-ex}, 
it is desirable to have a mechanisation of our meta-theory and related techniques. 
A necessary preliminary step in this direction is to provide support for corules in proof assistants. 
An Agda library supporting (generalised) inference systems is described by \citet{CicconeDZ21} and can be found at 
\url{https://github.com/LcicC/inference-systems-agda}. 
Moreover, in the paper we lazily relied on the usual setting of classical logic (even though we try not to abuse of it), 
however, towards a formalisation, we will have to carefully rearrange definitions and proofs to fit the logic of the choosen proof assistant.

\begin{acks}
Special thanks go to Elena Zucca, Mariangiola Dezani-Ciancaglini and Viviana Bono for collaborating on this work with many interesting discussions and useful suggestions, which have greatly improved the paper. 
\end{acks}

\bibliographystyle{ACM-Reference-Format}
\bibliography{biblio}



\end{document}
\endinput